\newcommand{\grayscaleText}[2] {#2} 
\newcommand{\printable}{\grayscaleText{-printable}{}}
\newcommand{\grayscale}{\grayscaleText{grayscale}{}}
\newtheorem{thm}{Theorem}[section]
\newtheorem{lemma}[thm]{Lemma}
\newtheorem{definition}[thm]{Definition}
\newtheorem{cor}[thm]{Corollary}
\theoremstyle{definition}
\renewcommand{\subsubsection}[1] {\smallskip \noindent {\bf #1.}}
\newcommand{\reffig}[1] {\textsc{\ref{#1}}}
\newcommand{\cA}{\mathcal{A}}
\newcommand{\cC}{\mathcal{C}}
\newcommand{\cH}{\mathcal{H}}
\newcommand{\cS}{\mathcal{S}}
\newcommand{\cG}{\mathcal{G}}
\newcommand{\R}{\mathbb{R}}
\newcommand{\Z}{\mathbb{Z}}
\newcommand{\E}{\mathbb{E}}
\renewcommand{\Pr}{\mathbb{P}}
\newcommand{\cE}{\mathcal{E}}
\newcommand{\1}{\mathbbm{1}}
\newcommand{\intB}{\partial_{\bullet}}
\newcommand{\extB}{\partial_{\circ}}
\newcommand{\intextB}{\partial_{\ins \out}}
\newcommand{\ins}{\bullet}
\newcommand{\out}{\circ}
\newcommand{\outtwo}{\minuso} 
\newcommand{\up}{\,\uparrow\,}
\newcommand{\down}{\,\downarrow\,}
\newcommand{\Col}{\cC}
\newcommand{\FS}{\mathcal{K}}
\newcommand{\SO}{\mathcal{S}}
\newcommand\anomaly{*}
\DeclareMathOperator\rev{rev}
\newcommand{\Even}{\mathrm{Even}}
\newcommand{\Odd}{\mathrm{Odd}}
\DeclareMathOperator\MC{MC}
\DeclareMathOperator\diam{diam}
\DeclareMathOperator\dist{dist}
\newcommand{\distTV}{\mathrm{d_{TV}}}
\newcommand{\Unknown}[2] {A_{#1 \parallel #2}}
\newcommand{\DilemmaOdd}[2] {A_{\ins #1 #2}}
\newcommand{\DilemmaEven}[2] {A_{\out #1 #2}}
\newcommand{\DilemmaEvenTwo}[2] {A_{\outtwo #1 #2}}
\newcommand\refFSpartition{(\ref{def:four-section}\ref{it:def-FS-1})}
\newcommand\refFSodd{(\ref{def:four-section}\ref{it:def-FS-3})}
\newcommand\refFSisolated{(\ref{def:four-section}\ref{it:def-FS-5})}
\newcommand\refAFSvalues{(\ref{def:adapted-fs}a)}
\newcommand\refAFSneighbor{(\ref{def:adapted-fs}b)}
\newcommand\refFA[1]{(\ref{def:four-approx}\ref{it:FA#1})}
\newcommand\refLAAseparate{(\ref{def:level-1-approx}\ref{it:def-level-1-approx-separate})}
\newcommand\refLAAmanyRevealed{(\ref{def:level-1-approx}\ref{it:def-level-1-approx-many-revealed})}
\newcommand\refLAAsize{(\ref{def:level-1-approx}\ref{it:def-level-1-approx-size})}
\newcommand\refLBAsubset{(\ref{def:level-2-approx}\ref{it:def-weak-FA-1})}
\newcommand\refLBAanomaly{(\ref{def:level-2-approx}\ref{it:def-weak-FA-anomaly})}
\newcommand\refLBAmanyAnomalies{(\ref{def:level-2-approx}\ref{it:def-weak-FA-many-anomalies})}
\newcommand\refLBAseesManyi{(\ref{def:level-2-approx}\ref{it:def-weak-FA-sees-many-i})}
\newcommand\refLBAsize{(\ref{def:level-2-approx}\ref{it:def-weak-FA-size})}
\newcommand\refLCAextendsLBA{(\ref{def:level-3-approx}\ref{it:def-weak-FA-extends-rough-FA})}
\newcommand\refLCAsqrtdTight{(\ref{def:level-3-approx}\ref{it:def-weak-FA-sqrt-d-approx})}
\newcommand\refLDAextendsLCA{(\ref{def:strong-FA}\ref{it:def-strong-FA-extends-weak-FA})}
\newcommand\refLDAodd{(\ref{def:strong-FA}\ref{it:def-strong-FA-odd})}
\newcommand\refLDAanomalyZeroThree{(\ref{def:strong-FA}\ref{it:def-strong-FA-anomaly03})}
\newcommand\refLDAanomalyOneTwo{(\ref{def:strong-FA}\ref{it:def-strong-FA-anomaly12})}
\newcommand*\rel@kern[1]{\kern#1\dimexpr\macc@kerna}
\newcommand*\widebar[1]{%
  \begingroup
  \def\mathaccent##1##2{%
    \rel@kern{0.8}%
    \overline{\rel@kern{-0.8}\macc@nucleus\rel@kern{0.2}}%
    \rel@kern{-0.2}%
  }%
  \macc@depth\@ne
  \let\math@bgroup\@empty \let\math@egroup\macc@set@skewchar
  \mathsurround\z@ \frozen@everymath{\mathgroup\macc@group\relax}%
  \macc@set@skewchar\relax
  \let\mathaccentV\macc@nested@a
  \macc@nested@a\relax111{#1}%
  \endgroup
}
\newcommand{\comment}[1] {}
\definecolor{fsc0}{RGB}{227,208,119}
\definecolor{fsc1}{RGB}{222,67,54}
\definecolor{fsc2}{rgb}{0.35,0.5,0.85}
\definecolor{fsc3}{RGB}{152,46,122}
\definecolor{fsc0grayscale}{gray}{0.91}
\definecolor{fsc1grayscale}{gray}{0.75}
\definecolor{fsc2grayscale}{gray}{0.60}
\definecolor{fsc3grayscale}{gray}{0.45}
\definecolor{val0color}{RGB}{240,240,240}
\definecolor{val1color}{RGB}{205,235,190}
\definecolor{val2color}{RGB}{180,200,245}
\definecolor{val0colorgrayscale}{gray}{0.9}
\definecolor{val1colorgrayscale}{gray}{0.75}
\definecolor{val2colorgrayscale}{gray}{0.6}
\definecolor{anomaly-color}{rgb}{1,0.5,0}
\definecolor{out-of-phase-bdry-color}{rgb}{0.28,0.68,0.36}
\definecolor{out-of-phase-inner-color}{rgb}{0,0.4,0.1}
\definecolor{anomaly-colorgrayscale}{gray}{0.3}
\definecolor{out-of-phase-bdry-colorgrayscale}{gray}{0.75}
\definecolor{out-of-phase-inner-colorgrayscale}{gray}{0.55}
\title{Long-range order in the 3-state antiferromagnetic Potts model in high dimensions}
\date{\today}
\author{Ohad N. Feldheim}
\address{Ohad N. Feldheim\hfill\break
    Hebrew University of Jerusalem\\
    Einstein Institute of Mathematics\\
    Givat Ram, Jerusalem, Israel.}
\email{ohad.feldheim@mail.huji.ac.il}
\thanks{Research of O.F. was conducted at Tel-Aviv University and the IMA. Supported in part by the Institute for Mathematics and its Applications with funds provided by the National Science
Foundation}
\author{Yinon Spinka}
\address{Yinon Spinka\hfill\break
    Tel Aviv University\\
    School of Mathematical Sciences\\
    Tel Aviv, 69978, Israel.}
\email{yinonspi@post.tau.ac.il}
\urladdr{http://www.math.tau.ac.il/~yinonspi}
\thanks{Research of Y.S. supported by Israeli Science Foundation grant 1048/11, Marie Sk\l{}odowska-Curie IRG grant SPTRF, and the Adams Fellowship Program of the Israel Academy of Sciences and Humanities}
\begin{document}

\begin{abstract}
We prove the existence of long-range order for the $3$-state Potts antiferromagnet at low temperature on $\Z^d$ for sufficiently large $d$. In particular, we show the existence of six extremal and ergodic infinite-volume Gibbs measures, which exhibit spontaneous magnetization in the sense that vertices in one bipartition class have a much higher probability to be in one state than in either of the other two states. This settles the high-dimensional case of the Koteck{\'y} conjecture.
\end{abstract}

\maketitle

\section{Introduction}

The $q$-state Potts model is a classical model in statistical mechanics, which generalizes the Ising model by allowing more than two states. A special case was first considered by Ashkin and Teller in 1943 \cite{ashkin1943statis}, while the general model was proposed by Domb and published by Potts in 1951 \cite{Potts1952somege}.
Since the late 1970's, the model has drawn substantial attention from mathematicians and physicists alike, largely because it proved to be very rich, displaying a much wider spectrum of phenomena than the simpler Ising model. For an extensive survey of classical results on the Potts model, see Wu~\cite{Wu1982thepot}.

While the ferromagnetic regime of the model is by now relatively well understood, the picture for the antiferromagnetic regime is far from complete.
Here we consider the $3$-state antiferromagnetic (AF) Potts model on the integer lattice $\Z^d$.
The model assigns a random value $f(v) \in\{0,1,2\}$ to each vertex $v$ in some domain $\Lambda \subset \Z^d$, favoring different states on adjacent vertices.
The probability of any given configuration $f$ is proportional to $\exp(-\beta H_f)$, where $\beta \ge 0$ is a real parameter and
\[ H_f:=\sum_{\{u,v\}} \1_{f(u)=f(v)} ,\]
where the sum is taken over all pairs of nearest neighbors.
In other words, $f$ is distributed according to the Boltzmann distribution with the Hamiltonian $H_f$ at inverse temperature $\beta$.

At infinite temperature ($\beta=0$), the values assigned to different vertices are independent, and the model is completely disordered. The Dobrushin uniqueness condition \cite{Dobrushin1968TheDe} guarantees that, in any dimension, disorder persists at sufficiently high temperature (small $\beta$).
A fundamental question is whether at low temperature (large $\beta$) the model remains disordered or, instead, undergoes a phase transition into an ordered phase.
In the latter case, it is also desirable to understand the structure of a typical ordered configuration.
Such a phase transition does not occur in every dimension (e.g., for $d=1$), and it is interesting to determine in which dimensions (if any) it does.


Following an earlier debate in the physical community (see e.g.~\cite{Banavar1980Order} where a continuous transition was conjectured),
Koteck{\'y} conjectured circa 1985 (implied in~\cite{kotecky1985long}, also mentioned in~\cite{galvin2012phase}) that in high dimensions, possibly already in three dimensions, the $3$-state AF Potts model indeed undergoes a phase transition, and that at sufficiently low temperature, a configuration typically follows one of six patterns. To understand the nature of these patterns, note first that the graph $\Z^d$ is bipartite, and that each bipartition class forms a sublattice. In a typical large-volume disordered configuration, each value is assigned to roughly one-third of the vertices in each of the two sublattices.
In contrast, in high dimensions it is conjectured that at sufficiently low temperature the model adheres to one of
six phases, sometimes called broken-sublattice-symmetry (BSS) phases. These phases are characterized by having each state assigned to more than one-third of one sublattice and to less than one-third of the other.
These conjectures were later supported by Monte-Carlo simulations~\cite{wang1990three} and by mean-field arguments~\cite{Itakura1996Mean}.

The Koteck{\'y} conjecture has proven difficult to verify.
The difficulty originates partly from the fact that the model exhibits non-vanishing \emph{residual entropy}, meaning that in the extreme case of $\beta=\infty$ (i.e., at zero-temperature), the number of configurations is exponentially large in the size of the domain.
In fact, even in this case, in which the model consists of a uniformly chosen proper $3$-coloring of the domain, demonstrating the existence of six BSS phases is highly non-trivial.
Already in~\cite{kotecky1985long}, Koteck{\'y} observed that the problem resists standard Peierls arguments, and suggested looking at correspondences with other models as a possible approach for tackling it.
The existence of six BSS phases was recently verified in the zero-temperature case in high dimensions by Peled~\cite{peled2010high} and, independently, by Galvin, Kahn, Randall and Sorkin~\cite{galvin2012phase}. Both groups obtained their results through highly sophisticated contour methods. Their techniques, however, rely heavily on the special topological structure of proper $3$-colorings on $\Z^d$ and cannot be used to show the BSS structure at positive temperature.

In this work, we prove the high-dimensional case of the Koteck{\'y} conjecture, showing that for sufficiently high $d$ there exists a positive temperature below which the model exhibits six BSS phases. Our methods also improve the quantitative sublattice bias estimates obtained in \cite{peled2010high} and \cite{galvin2012phase}
for the zero-temperature case.

\subsection{Main result}\label{sec:mainres}

We begin by defining the finite-volume model with boundary conditions.
For a finite set $\Lambda \subset \Z^d$ and a map $\tau \colon \Z^d \to \{0,1,2\}$, write $\cC_\Lambda^\tau$ for the collection of maps $f\colon \Z^d\to\{0,1,2\}$ which agree with $\tau$ on $\Z^d \setminus \Lambda$.
The $3$-state AF Potts model
in volume $\Lambda$ with boundary conditions $\tau$ at inverse-temperature $\beta \ge 0$ is the probability measure on $\cC_\Lambda^\tau$ defined by
\[ \mu^\tau_{\Lambda,\beta}(f) := \frac{e^{-\beta H_\Lambda(f)}}{Z^\tau_{\Lambda,\beta}}, \quad f \in \cC_\Lambda^\tau, \]
where
\[ H_\Lambda(f) := \sum_{\substack{ \{u,v\} \in E(\Z^d) \\ \{u,v\} \cap \Lambda \neq \emptyset}} \1_{f(u)=f(v)} \]
and $Z^\tau_{\Lambda,\beta}$, \emph{the partition function} of the model, is a normalizing constant.
Thinking of a configuration as a coloring, we also refer to states as colors, and say that an edge is \emph{proper} if its endpoints take different colors, and that it is \emph{improper} otherwise. Thus, the Hamiltonian counts the number of improper edges touching $\Lambda$.

We call a vertex in $\Z^d$ \emph{even} (odd) if its graph-distance from the origin is even (odd). We denote the set of even and odd vertices by $\Even$ and $\Odd$, respectively. We say that a subset $\Lambda \subset \Z^d$ is a \emph{domain} if it is finite, non-empty, connected and its complement is connected (e.g., a box $\{-N,\dots,N\}^d$).
We write $\extB \Lambda$ for the external boundary of $\Lambda$, i.e., the set of vertices outside $\Lambda$ which are adjacent to a vertex in $\Lambda$.
The reader should note that the marginal distribution of $\mu^\tau_{\Lambda,\beta}$ on $\Lambda$ depends on $\tau$ only through $\tau|_{\extB \Lambda}$.

We shall use the term \emph{even-$i$ (odd-$i$) boundary conditions}, for $i \in \{0,1,2\}$, to describe any pair $(\Lambda,\tau)$ such that $\Lambda$ is domain, $\extB \Lambda$ consists only of even (odd) vertices and $\tau^{-1}(i)$ is precisely the set $\Even$ ($\Odd$).
To simplify the statements, all our results are stated for even-$0$ boundary conditions, although, by symmetry, they hold in all six cases.

Our first result is a verification of the existence of the conjectured long-range order at sufficiently low temperature.
A macroscopic analogue concerning the empirical percentage of vertices taking each color is given in Corollary~\ref{cor:bias} below.

\begin{thm}\label{thm:main}
There exist constants $C,c,d_0>0$ such that for any $d \ge d_0$ and $\beta \ge C \log d$ the following holds.
Let $(\Lambda,\tau)$ be even-0 boundary conditions and let $u \in \Z^d$ be even and $v \sim u$. Then
\begin{enumerate}[\qquad(a)]
	\item $\mu^\tau_{\Lambda,\beta}(f(u) \neq 0) \le e^{-cd}$.
	\item $\mu^\tau_{\Lambda,\beta}(f(v) = 0) \le e^{-cd^2} + e^{-\beta d}$.
	\item $\mu^\tau_{\Lambda,\beta}(f(u)=f(v)) \le e^{-cd-\beta}$.
\end{enumerate}
\end{thm}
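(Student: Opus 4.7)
The proof is a Peierls-type argument tailored to the residual entropy of the $3$-state Potts model. Under even-$0$ boundary conditions the dominant pattern near infinity assigns color $0$ to $\Even$ and uses $\{1,2\}$ on $\Odd$, so a deviation at an even vertex $u$ (part (a)), a color-$0$ assignment at an odd vertex $v$ (part (b)), or an improper edge $\{u,v\}$ (part (c)) should each require a sizable contour whose Boltzmann weight is exponentially small in $d$. The main difficulty, already emphasized by Koteck{\'y}, is that at $\beta=\infty$ the naive contour---the set of improper edges---is empty, so the contour must instead track which of the six BSS phases $f$ locally resembles. The macros in the preamble suggest the construction proceeds in several stages: first a \emph{four-section} partitioning $\Z^d$ into regions according to the four BSS phases compatible with even-$0$ data, then a sequence of refinements (a four-approx followed by level-$1$, level-$2$, level-$3$ approximations and a strong four-approx) that successively track anomalies, odd-vertex violations, and a $\sqrt{d}$-scale interface.

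At each level~$k$ I would establish a Peierls bound of the shape $\Pr(\text{level-}k\text{ approximation}=A)\le e^{-cd\cdot \size(A)}$, proved by a shifting/switching map that, given $f$ whose approximation is $A$, produces a much larger family of configurations whose approximations have strictly smaller size and complexity, while altering at most $O(\size(A))$ edges and hence paying at most $e^{O(\beta\cdot\size(A))}$ in Boltzmann weight. The factor of $d$ in the exponent comes from the high-dimensional isoperimetry of $\Z^d$: each unit of interface between two phase regions admits $\Omega(d)$ distinct local shifts back into the dominant phase, furnishing an entropy bonus of $d$ per boundary edge. The hypothesis $\beta \ge C\log d$ is calibrated precisely so that this entropy bonus dominates the $\beta$-cost of the $O(1)$ new improper edges created by each shift.

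Once the hierarchy and its Peierls bounds are in place, parts (a)--(c) reduce to union bounds over contours through the target vertex or edge. For (a), $f(u)\ne 0$ for even $u$ forces the contour to pass through $u$ with size $\Omega(d)$, giving $e^{-cd}$. For (b), $f(v)=0$ for odd $v$ is strictly rarer: either $v$ lies inside a macroscopic off-phase region whose boundary carries size $\Omega(d^2)$ (contributing $e^{-cd^2}$), or the local BSS phase agrees with the boundary data and then a linear-in-$d$ share of $v$'s $2d$ neighbors must form improper edges at $v$, contributing $e^{-\beta d}$. For (c), the improper edge $\{u,v\}$ contributes $e^{-\beta}$ directly via the Hamiltonian and its endpoint $u$ still forces a contour of size $\Omega(d)$, yielding $e^{-cd-\beta}$. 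The central obstacle---essentially the body of the paper---is the multi-scale contour construction together with the verification that each level admits a shift-based Peierls bound; once the hierarchy is established, the three parts of the theorem follow by routine summation at the appropriate level.
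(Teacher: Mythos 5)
Your outline correctly guesses the broad architecture (a breakup of $\Z^d$ into phase regions, approximations of it, and a one-to-many shift map giving an entropy gain), but two of its quantitative/structural claims are genuine gaps rather than routine details. First, the asserted Peierls bound $\Pr(\text{approximation}=A)\le e^{-cd\cdot\size(A)}$, justified by ``$\Omega(d)$ distinct local shifts per boundary edge,'' is not what the construction can deliver and is not how the paper's numbers work. The shift-and-permute map frees only about one vertex per $2d$ regular boundary edges (Lemma~\ref{cl:size-of-boundary}), so the entropy gain is of order $L/d$, not $dL$; the singular boundary contributes $e^{\beta M}$ through the energy reduction. The probability of a given four-approximation is therefore only $\exp(-cL/d-\beta M+CM)$ (Lemma~\ref{lem:prob-of-approx-enhanced}), and the $e^{-cd}$, $e^{-cd^2}$ rates in (a), (b) come not from a per-edge factor $e^{-cd}$ but from isoperimetry of odd sets: a non-trivial breakup has $L+M\ge d^2$ or $M\ge 2d$, and for $f(v)=0$ at odd $v$ one must further show $L\ge d^3/2$ or $M\ge 3d/2$. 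Moreover, within a fixed approximation class a plain union bound over breakups (equivalently, over preimages of the shifted configuration) still fails; one needs the weighted flow, biased by which free vertices receive color $2$, together with the minimal-vertex-cover estimate (Lemmas~\ref{lem:flow}, \ref{lem:flow-bound}, \ref{lem:sum-over-minimal-covers}), to control the preimage multiplicity. Your sketch does not address this multiplicity problem, which is the central technical obstacle.

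Second, your treatment of part (c) — ``the improper edge contributes $e^{-\beta}$ directly via the Hamiltonian and its endpoint still forces a contour of size $\Omega(d)$, yielding $e^{-cd-\beta}$'' — is exactly the step the paper flags as having no known elementary argument. Once you condition on (or sum over) breakups, the Boltzmann factor of the improper edge is already entangled with the weights used in the contour estimate, so you cannot simply multiply $e^{-cd}$ by $e^{-\beta}$. The case $M\ge 1$ does give $e^{-cd-\beta}$ from the singular-boundary energy gain, but when the breakup has no singular boundary ($M=0$) the improper edge $\{u,v\}$ lies inside a phase and the transformation does not remove it; the paper handles this with the refined second estimate in Lemma~\ref{lem:prob-of-approx-enhanced}, which compares against the event that a translated edge is improper, and then closes the argument with the self-consistency inequality $\alpha\le e^{-cd-\beta}+e^{-cd}(2d+1)\alpha$ for $\alpha=\sup_{u\sim v}\Pr(f(u)=f(v))$. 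Without some substitute for this mechanism, your proposal proves at best $\Pr(f(u)=f(v))\le e^{-cd}$, not the claimed $e^{-cd-\beta}$.
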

%
%

The first two parts of Theorem~\ref{thm:main} show that under even-$0$ boundary conditions the random coloring is rather rigid, tending to follow a particular pattern, which we call the even-$0$ pattern.
Specifically, an even vertex is likely to be colored $0$, while an odd vertex is likely to be colored either $1$ or $2$ (see Figure~\reffig{fig:samples}).
Moreover, the probability that any bounded number of vertices
conforms to this pattern tends to one as the dimension tends to infinity.
The third part of the theorem is of a slightly different nature as it is concerned with the unlikeliness of improper edges.
Observe that on the event $\{f(u)=f(v)\}$, either $u$ or $v$ violates the even-$0$ pattern. Thus, the first two parts of the theorem imply that in high dimensions it is unlikely for a given edge to be improper (as a function of $d$). Since every improper edge reduces the weight of a configuration by a factor of $e^{-\beta}$, it is no surprise that this is also unlikely as a function of $\beta$, but we do not know an \emph{elementary} argument for showing this.
We remark that one may show that the bounds in Theorem~\ref{thm:main} are tight up to the constants in the exponents.

In fact, our methods allow us to show that in high enough dimensions, violations of the even-0 pattern do not percolate (see again Figure~\reffig{fig:samples}).
Denote by $T(f)$ the set of vertices which violate the even-$0$ pattern, i.e., even vertices $u$ and odd vertices $v$ such that $f(u) \neq 0$ and $f(v)=0$.
Observe that the \emph{singularities} of $f$, i.e., the endpoints of improper edges, are all contained in $T(f) \cup \extB T(f)$.
Let $B(f,v)$ be the connected component of $v$ in 
$T(f) \cup \extB T(f)$.
The \emph{diameter} of a finite connected set $U \subset \Z^d$, denoted by $\diam U$, is the maximum graph-distance between two vertices in $U$.

\begin{thm}\label{thm:main2}
	There exist constants $C,c,d_0>0$ such that for any $d \ge d_0$ and $\beta \ge C \log d$ the following holds.
	Let $(\Lambda,\tau)$ be even-0 boundary conditions and let $v \in \Z^d$. Then, for any $k \ge 1$,
	\[ \mu^\tau_{\Lambda,\beta}(|B(f,v)| \ge k) \le e^{-c k^{1-1/d}} \quad\text{ and }\quad \mu^\tau_{\Lambda,\beta}(\diam B(f,v) \ge k) \le e^{-c d k} .\]
\end{thm}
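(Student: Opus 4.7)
The plan is a union-bound argument driven by the per-vertex rigidity estimates of Theorem~\ref{thm:main}. Setting $S(f) := T(f) \cup \extB T(f)$, so that $B(f,v)$ is the connected component of $v$ in $S(f)$, the starting point is a per-vertex estimate: since $u \in \extB T(f)$ forces some neighbor of $u$ to be a violator,
\[ \mu^\tau_{\Lambda,\beta}(u \in S(f)) \le \mu^\tau_{\Lambda,\beta}(u \in T(f)) + \sum_{w \sim u} \mu^\tau_{\Lambda,\beta}(w \in T(f)) \le e^{-c'd}, \]
for some $c' > 0$ at large $d$, using Theorem~\ref{thm:main}(a)--(b) together with $\beta \ge C \log d$ (to suppress the $e^{-\beta d}$ term coming from odd violators).

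For the diameter bound, I would enumerate simple paths $P = (v_0, v_1, \dots, v_k)$ in $\Z^d$ starting at $v_0 = v$; there are at most $(2d)^k$ of them, and if $\diam B(f,v) \ge k$ then some such $P$ lies entirely in $S(f)$. The key input is a per-path estimate of the form $\mu^\tau_{\Lambda,\beta}(P \subseteq S(f)) \le e^{-c''d k}$, which I would aim to establish by a Peierls-type modification argument: locally restoring the even-$0$ pattern at the violators in $N[P]$ yields, through a disjointness/injection argument, an effective product bound along the path (exploiting triangle-freeness of $\Z^d$ so that a single $T$-vertex can dominate at most $2d+1$ path vertices). Combined with the $(2d)^k$ enumeration factor, this gives the stated bound for $d$ sufficiently large.

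For the size bound, I would enumerate connected subsets $U \ni v$ of size $k$; by the classical lattice-animal bound there are at most $(Cd)^k$ such subsets. For each fixed $U$, a contour-style Peierls estimate anchored on the edge boundary $\partial U$ should give $\mu^\tau_{\Lambda,\beta}(U \subseteq S(f)) \le e^{-c|\partial U|}$. Since $|\partial U| \ge \Omega(k^{1-1/d})$ by isoperimetry on $\Z^d$, this yields the stated $e^{-ck^{1-1/d}}$ bound after absorbing the enumeration factor (and, in fact, is consistent with an even stronger linear-in-$k$ bound should one extract a spanning path of $U$ and reuse the diameter argument).

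The main obstacle is the per-witness probability estimate: the events $\{v_i \in S(f)\}_i$ along a path, or the joint event $U \subseteq S(f)$, are not a priori independent, so a pointwise product bound is not immediate from the per-vertex estimate. I expect the resolution to reuse the contour framework developed for Theorem~\ref{thm:main} (the four-sections and their refinements), which should precisely encode the independence needed to convert per-vertex bounds into multiplicative estimates along an arbitrary combinatorial structure; without this machinery, a direct Peierls modification is liable to be obstructed by the six competing BSS phases, which make the notion of ``restoring the pattern'' ambiguous.
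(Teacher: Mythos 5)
Your plan has a genuine gap: the per-witness estimates it rests on are either false or too weak to survive the union bound. The hoped-for per-path bound $\mu^\tau_{\Lambda,\beta}(P \subseteq T(f) \cup \extB T(f)) \le e^{-c''dk}$ for a simple path of length $k$ fails, because a path can fold inside a single large droplet of a competing phase (say an odd-$1$ region of radius $R$): such a droplet contains no improper edges, its probability cost is only of surface order $e^{-cR^{d-1}}$, yet it contains simple paths of length of order $R^d$. The probability of a pattern-violating region is governed by its boundary, not by how many vertices or path steps it contains. For the same reason the size bound cannot be run as a union over lattice animals: the per-set estimate is at best of order $e^{-c|\partial U|}$ with $|\partial U|\asymp d\,k^{1-1/d}$, which cannot absorb the volume-order entropy $(Cd)^k$ once $k$ is large; and your aside that a linear-in-$k$ bound should hold is in fact false --- the exponent $k^{1-1/d}$ in Theorem~\ref{thm:main2} is essentially optimal, again because droplets cost only their surface. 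Finally, ``reusing the contour framework'' is not automatic: the breakup $K(f,v)$ of Theorem~\ref{thm:main} need not contain $B(f,v)$ at all (e.g.\ when $f\equiv 0$ on a large ball around $v$), so some new construction is genuinely required before the machinery applies.

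The paper's route avoids witness enumeration entirely. It introduces a modified breakup $K(f,V)$ around the set $V$ of violations in the cluster, defined without co-connecting the $0$-phase, and shows (Lemma~\ref{lem:join-of-breakups}) that it is a four-section adapted to $f$, anchored at $\rho$, with $B(f,\rho)\subset \bar K_{123}\cup \bar K^{\anomaly}$. It then converts $|B(f,\rho)|\ge k$ or $\diam B(f,\rho)\ge k$ into lower bounds on the regular/singular boundary sizes $L,M$, via the isoperimetric inequality (Lemma~\ref{lem:isoperimetry}) and the bound $|\partial A|\ge (d-1)^2\diam A$ for odd connected sets (Lemma~\ref{lem:boundary-size-via-diameter}, packaged as Lemma~\ref{lem:minimal-breakup-by-diameter}), and finally applies the already-proved anchored estimate \eqref{eq:prob-of-breakup-with-LM-or-more}. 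The crucial point is that there the entropy of possible breakups with boundary sizes $L,M$ is itself only of boundary order, $\exp(CLd^{-3/2}\log^{3/2}d+CM\log d)$, thanks to the approximation-and-flow machinery, and the spatial union bound is built into the anchored approximation family rather than into an enumeration of paths or animals. Your proposal correctly senses that dependence is the obstacle, but the resolution is not a multiplicative per-vertex estimate along a combinatorial witness --- no such estimate is available --- so the argument as proposed does not go through.
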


\begin{figure}
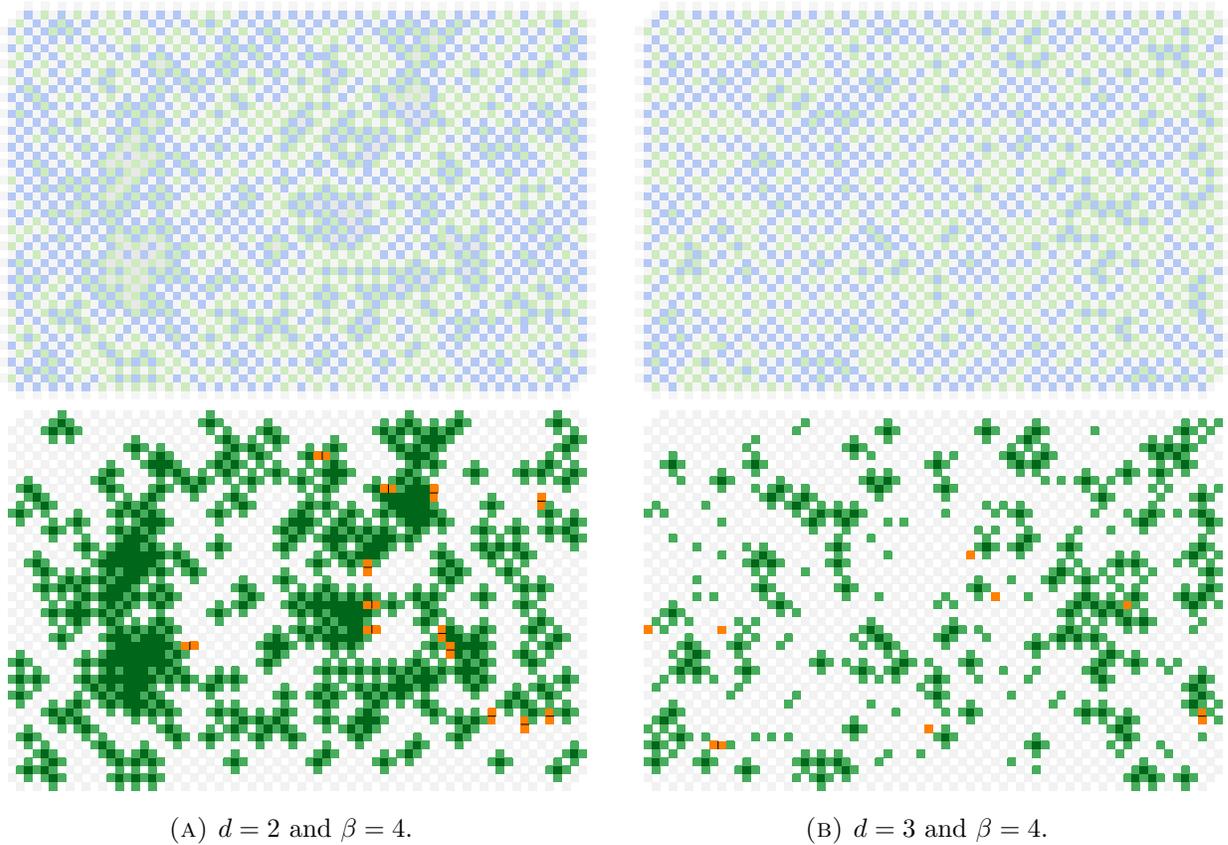

	\centering
	\captionsetup{width=0.86\textwidth}
	\begin{subfigure}[t]{.45\textwidth}
		\includegraphics[scale=0.11]{coloring-sample1-2d-beta=4-values\printable.pdf}
		\includegraphics[scale=0.11]{coloring-sample1-2d-beta=4\printable.pdf}
		\caption{$d=2$ and $\beta=4$.}
		\label{fig:sample-2d}
	\end{subfigure}%
	\begin{subfigure}{20pt}
		\quad
	\end{subfigure}%
	\begin{subfigure}[t]{.45\textwidth}
		\includegraphics[scale=0.11]{coloring-sample1-3d-beta=4-values\printable.pdf}
		\includegraphics[scale=0.11]{coloring-sample1-3d-beta=4\printable.pdf}
		\caption{$d=3$ and $\beta=4$.}
		\label{fig:sample-3d}
	\end{subfigure}
	\caption{Two samples of random $3$-colorings in dimensions two and three. On the left, a sample on a $72 \times 48$ grid. On the right, the middle slab of a sample on a $72 \times 48 \times 72$ grid. Colorings are sampled under even-$0$ boundary conditions via Glauber dynamics for $10$ billion iterations started from a random pure coloring ($0$ at even vertices and $1$ or $2$ at odd vertices). On the top, the sampled colorings are depicted ({\color{val0color\grayscale}$\blacksquare$}=0, {\color{val1color\grayscale}$\blacksquare$}=1, {\color{val2color\grayscale}$\blacksquare$}=2). On the bottom, an encoding of the coloring above it is shown: {\color{anomaly-color\grayscale}$\blacksquare$} represents singularities (vertices having a neighbor with the same color), {\color{out-of-phase-inner-color\grayscale}$\blacksquare$} represents the set $T(f)$ of vertices violating the even-$0$ pattern, and {\color{out-of-phase-bdry-color\grayscale}$\blacksquare$} represents their neighbors $\extB T(f)$. Theorem~\ref{thm:main} shows that in high dimensions it is unlikely for any given vertex to violate the $0$-pattern. Moreover, Theorem~\ref{thm:main2} shows that the connected component of any given vertex in $T(f) \cup \extB T(f)$ is unlikely to be large.}
	\label{fig:samples}
\end{figure}

\subsection{Gibbs states}\label{sec:gibbs}
The $3$-state AF Potts model can be extended to the entire lattice through \emph{infinite-volume Gibbs measures}.
Here, we recall the definition of these measures, state a convergence result for the finite-volume measures, and use ergodicity arguments combined with Theorem~\ref{thm:main} to show the existence of six BSS Gibbs measures.

Gibbs measures are defined through the Dobrushin--Lanford--Ruelle conditions (see, e.g.,~\cite{georgii2011gibbs}) as follows.
Let $\mu$ be a probability measure on $\cC_{\Z^d} := \{0,1,2\}^{\Z^d}$ and let $f$ be sampled according to $\mu$. The measure $\mu$ is said to be an \emph{(infinite-volume) Gibbs measure} for the $3$-state
AF Potts model at inverse temperature $\beta \in (0,\infty]$
if for any domain $\Lambda$ and $\mu$-almost every $\tau$, the distribution of the coloring $f$, conditioned that $f\in\cC_\Lambda^\tau$, is given by $\mu_{\Lambda,\beta}^\tau$.
For a discussion of the $\beta=\infty$ case, see Section~\ref{sec:rem-and-open} below.

 General compactness arguments show that at least one Gibbs measure always exists. A simple recipe for producing such a measure is to pick a sequence of boundary conditions $(\Lambda_n,\tau_n)$ such that $(\Lambda_n)_{n=1}^\infty$ increases to $\Z^d$. Any (weak) subsequential limit of $(\mu^{\tau_n}_{\Lambda_n,\beta})_{n=1}^{\infty}$ would converge to a Gibbs measure. To show that there are multiple Gibbs measures, one may demonstrate that different choices of $(\Lambda_n,\tau_n)$ result in different limiting measures. The fact that this is the case in the low-temperature high-dimensional $3$-state AF Potts model is a direct consequence of Theorem~\ref{thm:main}.

In fact, we show that it is not necessary to take subsequential limits in order to obtain convergence, as long as all the boundary conditions $(\Lambda_n,\tau_n)$ are of some fixed type (e.g., even-0).
Moreover, we are able to show that the limiting Gibbs measures in this case are strongly mixing with respect to every parity-preserving translation of $\Z^d$, i.e., translations preserving the even and odd sublattices.

A \emph{cylinder event} is a set of the form $A \times \{0,1,2\}^{\Z^d \setminus U}$ for some finite $U \subset \Z^d$ and $A \subset \{0,1,2\}^U$.
Let $\mu$ be a probability measure on $\cC_{\Z^d}$ and let $T\colon\Z^d\to\Z^d$ be a \emph{measure-preserving} transformation, i.e., $\mu(T^{-1}A)=\mu(A)$ for all cylinder events $A$.
The measure $\mu$ is said to be \emph{strongly mixing} with respect to $T$ if for every two cylinder events $A$ and $B$, we have
\begin{equation}\label{eq:mix}
 \lim_{n\to\infty}\mu(A\cap T^{-n}B)=\mu(A)\mu(B).
\end{equation}

\begin{thm}\label{thm:convergence-of-finite-volume-measure}
There exist constants $C,d_0>0$ such that for any $d \ge d_0$ and $\beta \ge C \log d$ the following holds.
Let $(\Lambda_n,\tau_n)$ be a sequence of even-0 boundary conditions and assume that $(\Lambda_n)$ increases to $\Z^d$.
Then the measures $\mu^{\tau_n}_{\Lambda_n,\beta}$ converge (weakly) as $n \to \infty$ to an infinite-volume Gibbs measure $\mu^{0,0}_{\Z^d,\beta}$, which is an extreme point of the set of all infinite-volume Gibbs measures and which is strongly mixing with respect to any parity-preserving translation.
\end{thm}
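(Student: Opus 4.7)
The theorem has three parts to establish: weak convergence of $\mu^{\tau_n}_{\Lambda_n,\beta}$ to a unique limit $\mu^{0,0}_{\Z^d,\beta}$, its extremality among Gibbs measures, and strong mixing under every parity-preserving translation. My plan is to derive all three from Theorem~\ref{thm:main2}: the exponential decay of the hulls $B(f,v)$ makes the restriction of $f$ to any finite set effectively local, and this locality yields convergence, decay of correlations, and the Markov-type independence needed for extremality and mixing.

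\textbf{Convergence.} By compactness the sequence has subsequential weak limits, each of which is an infinite-volume Gibbs measure by the standard argument that DLR passes to weak limits of local conditional probabilities, and each inherits Theorems~\ref{thm:main} and~\ref{thm:main2} via the portmanteau theorem. The work is to show that the subsequential limit is unique. Fix a finite $U\subset\Z^d$ and a cylinder event $A$ depending only on $f|_U$, and let $(\Lambda,\tau)$ be even-$0$ boundary conditions with $\Lambda\supset U$. For any Gibbs measure $\mu$ obtained as a subsequential limit the DLR equation gives
\[
\mu(A) \;=\; \int \mu^{\xi}_{\Lambda,\beta}(A)\, d\mu\bigl(f|_{\extB \Lambda} = \xi\bigr).
\]
Theorem~\ref{thm:main2} applied to $\mu$ shows that with probability $1-o(1)$ in $\dist(U,\extB\Lambda)$, no component of $T(f)\cup\extB T(f)$ crosses the annulus $\Lambda\setminus U$, so $f|_{\extB\Lambda}$ lies in a typical set $\mathcal{T}_\Lambda$ of configurations that follow the even-$0$ pattern up to isolated local defects. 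For $\xi\in\mathcal{T}_\Lambda$ I would couple $\mu^{\xi}_{\Lambda,\beta}$ to $\mu^{\tau}_{\Lambda,\beta}$ by a disagreement-percolation procedure in which the disagreement set is contained in clusters of $T\cup\extB T$ emanating from the defects of $\xi$; Theorem~\ref{thm:main2} bounds the diameter of each such cluster by an exponential tail, so with high probability none reaches $U$. Thus $\mu^{\xi}_{\Lambda,\beta}(A)=\mu^{\tau}_{\Lambda,\beta}(A)+o(1)$ uniformly in $\xi\in\mathcal{T}_\Lambda$, which forces $\mu(A)=\mu^{\tau}_{\Lambda,\beta}(A)+o(1)$ -- a quantity depending only on $(\Lambda,\tau)$, not on the chosen subsequential limit.

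\textbf{Extremality and mixing.} For a finite $W\subset\Z^d$ let $R_W(f)$ denote the union of $W\cup\extB W$ with every connected component of $T(f)\cup\extB T(f)$ meeting $W\cup\extB W$; by Theorem~\ref{thm:main2} its diameter has an exponential tail. For cylinders $A$ on $U$ and $B$ on $V$, if $R_U$ and $R_V$ are disjoint then the DLR decomposition separates $f|_{U}$ from $f|_{V}$ via pure even-$0$ boundary data lying between them, so
\[
\bigl|\mu^{0,0}_{\Z^d,\beta}(A\cap B) - \mu^{0,0}_{\Z^d,\beta}(A)\mu^{0,0}_{\Z^d,\beta}(B)\bigr| \;\le\; 2\,\mu^{0,0}_{\Z^d,\beta}\bigl(R_U\cap R_V\neq\emptyset\bigr) \;\le\; C\,e^{-cd\cdot\dist(U,V)/2}.
\]
Correlation decay for cylinders at arbitrary large distance implies triviality of the tail $\sigma$-algebra, which is equivalent to extremality in the set of Gibbs measures. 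For a parity-preserving translation $T$, $T_\ast\mu^{0,0}_{\Z^d,\beta}$ is itself the limit along the even-$0$ sequence $(T\Lambda_n,T\tau_n)$, so uniqueness gives $T$-invariance; the mixing relation~\eqref{eq:mix} is then the correlation decay applied to $U$ and $T^{-n}V$, whose distance diverges with $n$. The most delicate step is the disagreement-coupling claim: without FKG for the Potts antiferromagnet one must build the coupling by hand and argue that the joint disagreement cluster has the same exponential tail as the single-measure cluster of Theorem~\ref{thm:main2}, which I expect to rely on the contour/block machinery already developed in earlier sections to prove Theorem~\ref{thm:main2}.
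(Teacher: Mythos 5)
Your high-level plan (locality of the even-$0$ pattern from Theorem~\ref{thm:main2} $\Rightarrow$ convergence, correlation decay, mixing, tail triviality) matches the paper's, but the step you yourself flag as ``the most delicate'' is a genuine gap, and it is precisely where the paper has to do new work. Theorem~\ref{thm:main2} bounds violation clusters of a \emph{single} sample under \emph{even-$0$} boundary conditions; it says nothing about the disagreement set of a coupling of $\mu^{\xi}_{\Lambda,\beta}$ (with $\xi$ merely ``typical'', i.e.\ with defects) and $\mu^{\tau}_{\Lambda,\beta}$, nor does it bound the cluster obtained by superposing the violation sets of \emph{two independent} samples. The paper's route is not a disagreement-percolation coupling at all: it conditions on the maximal odd set $S$ between $U$ and the boundary on whose external boundary \emph{both} samples are $0$ (Lemma~\ref{lem:marginal-distribution-given-agreement}, via the boundary semi-lattice and the domain Markov property), so that the two conditional laws on $S$ coincide exactly; the failure event is that the \emph{joint} cluster $B(f,f',u)$ of the two independent samples crosses the annulus. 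Bounding that joint cluster (Lemma~\ref{lem:prob-of-joint-breakup-core}) does not follow formally from Theorem~\ref{thm:main2} -- a union of two exponentially decaying cluster fields needs a chaining argument -- and requires extending the whole breakup/approximation machinery: breakups around sets rather than a single vertex (Lemma~\ref{lem:join-of-breakups}), approximation families for many base points with the $2^{|V|}$ cost (Lemma~\ref{lem:family-of-level-1-approx-many-points}, Lemma~\ref{lem:prob-of-breakup-many-points}), and the path/diameter decomposition (Lemma~\ref{lem:diam-witness}). Your proposal asserts the needed estimate (``Theorem~\ref{thm:main2} bounds the diameter of each such cluster'') rather than deriving it, so the convergence argument, and with it everything downstream, is not closed.

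Two further points. First, even granting disjoint shields, your correlation bound $|\mu(A\cap B)-\mu(A)\mu(B)|\le 2\,\mu(R_U\cap R_V\neq\emptyset)$ is not justified as written: conditioning on the shield around $U$ gives the law $\mu^{0}_{S,\beta}|_U$ for a \emph{random} odd domain $S$, and to factorize you still need these laws to be uniformly close for different $S$ -- which is exactly the two-boundary-condition comparison (the paper's Lemma~\ref{lem:convergence}), again resting on the two-sample cluster bound. Second, for extremality, covariance decay between pairs of distant \emph{cylinder} events does not in general imply tail triviality; one needs approximate independence of $f|_U$ from the full $\sigma$-algebra generated by $f|_{V^c}$ (the paper's Lemma~\ref{lem:almost-independence-of-colorings}), which your $R_U$, $R_V$ construction could in principle deliver but which you neither state nor prove in that strength.
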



It follows that $\mu^{0,0}_{\Z^d,\beta}$ does not depend on the specific choice of domains $(\Lambda_n)$ as one may interleave two such sequences to obtain another convergent sequence.
By symmetry, for any $i \in \{0,1,2\}$ and $j \in \{0,1\}$, taking even-$i$ or odd-$i$ boundary conditions, according to whether $j=0$ or $j=1$, one obtains a limiting measure $\mu^{i,j}_{\Z^d,\beta}$.
Consequently, it follows that $\mu^{i,j}_{\Z^d,\beta}$ is invariant under parity-preserving automorphisms of $\Z^d$.
However, as Theorem~\ref{thm:main} implies that these measures are distinct for different $(i,j)$, they are not invariant under {\em all} automorphisms.
In the course of the proof, we further obtain that the rate of convergence in~\eqref{eq:mix}
is in fact exponential. In the special case of spin-spin correlations, for example, this means that the covariance between any two events of the form $\{f(u)=a\}$ and $\{f(v)=b\}$ decays exponentially in the distance between $u$ and $v$ (see Lemma~\ref{lem:almost-independence-of-colorings} for more details).



It is well-known that strong mixing implies ergodicity.
Thus, the measures $\mu^{i,j}_{\Z^d,\beta}$ are ergodic with respect to parity-preserving translations.
In physical terminology (see, e.g., \cite{georgii2011gibbs}), such a Gibbs measure is often called a \emph{pure state}.
It would be interesting to determine whether the six measures $\{\mu^{i,j}_{\Z^d,\beta}\}_{i \in \{0,1,2\},j \in \{0,1\}}$ are the only pure states.

For a positive integer $n$, denote $\Lambda_n:=\{-n,\dots,n\}^d$. Birkhoff's pointwise ergodic theorem (see, e.g., \cite[Theorem 2.1.5]{keller1998equilibrium}) implies that if $\mu$ is an ergodic probability measure on $\cC_{\Z^d}$ (with respect to parity-preserving translations) then, $\mu$-almost surely, the percentage of odd (even) vertices in $\Lambda_n$ colored $i$ converges to $\mu(f(v)=i)$ as $n \to \infty$, where $v$ is any odd (even) vertex. In particular, Theorem~\ref{thm:main} and Theorem~\ref{thm:convergence-of-finite-volume-measure} imply the following result.

\begin{cor}\label{cor:bias}
There exist constants $C,d_0>0$ such that for any $d \ge d_0$ and $\beta \ge C \log d$ the following holds.
Let $f$ be sampled according to $\mu^{0,0}_{\Z^d,\beta}$.  Then, almost surely,
\begin{align*}
\lim_{n\to\infty} \frac{|\Even \cap \Lambda_n \cap f^{-1}(0)|}{|\Even \cap \Lambda_n|} &~>~ \frac{1}{3} ~>~ \lim_{n\to\infty} \frac{|\Odd \cap \Lambda_n \cap f^{-1}(0)|}{|\Odd \cap \Lambda_n|},\\
\lim_{n\to\infty} \frac{|\Even \cap \Lambda_n \cap f^{-1}(i)|}{|\Even \cap \Lambda_n|} &~<~ \frac{1}{3} ~<~ \lim_{n\to\infty} \frac{|\Odd \cap \Lambda_n \cap f^{-1}(i)|}{|\Odd \cap \Lambda_n|}, \quad i \in \{1,2\}.
\end{align*}
\end{cor}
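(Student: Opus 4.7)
The plan is to combine the ergodicity of $\mu^{0,0}_{\Z^d,\beta}$ established in Theorem~\ref{thm:convergence-of-finite-volume-measure} with the single-site probability bounds of Theorem~\ref{thm:main} and a symmetry between the colors $1$ and $2$. Since strong mixing with respect to every parity-preserving translation implies ergodicity of the action by the index-$2$ sublattice $L \subset \Z^d$ of parity-preserving translations, and since $L$ acts transitively on each of $\Even$ and $\Odd$, the multidimensional pointwise ergodic theorem applied to the cube F\o{}lner sequence $(\Lambda_n)$ and to the indicator $g(f)=\1_{f(v_0)=i}$ (with $v_0\in\Even$ fixed) gives, almost surely,
\[ \lim_{n\to\infty} \frac{|\Even\cap\Lambda_n\cap f^{-1}(i)|}{|\Even\cap\Lambda_n|} \;=\; \mu^{0,0}_{\Z^d,\beta}(f(v_0)=i), \]
and the analogous statement with $\Odd$ in place of $\Even$. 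It therefore suffices to establish the corresponding strict inequalities for the single-site probabilities themselves.

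Next, I transfer the estimates of Theorem~\ref{thm:main} from the finite-volume measures to $\mu^{0,0}_{\Z^d,\beta}$. Those bounds are uniform over all even-$0$ boundary conditions, and the events $\{f(u)\neq 0\}$ and $\{f(v)=0\}$ are cylinder events, so the weak convergence asserted in Theorem~\ref{thm:convergence-of-finite-volume-measure} yields
\[ \mu^{0,0}_{\Z^d,\beta}(f(u)\neq 0) \;\le\; e^{-cd}, \qquad \mu^{0,0}_{\Z^d,\beta}(f(v)=0) \;\le\; e^{-cd^2}+e^{-\beta d}, \]
for any even $u$ and odd $v$. The last ingredient is a color-swap symmetry: every finite-volume measure $\mu^{\tau_n}_{\Lambda_n,\beta}$ with even-$0$ boundary conditions is invariant under the global exchange of colors $1$ and $2$, because the Hamiltonian is symmetric in the three colors and the prescribed boundary values are all $0$. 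This symmetry passes to the weak limit, so $\mu^{0,0}_{\Z^d,\beta}(f(w)=1)=\mu^{0,0}_{\Z^d,\beta}(f(w)=2)$ for every vertex $w$.

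Assembling these, for $d$ large and $\beta\ge C\log d$ one has $\mu^{0,0}_{\Z^d,\beta}(f(u)=0)\ge 1-e^{-cd}>\tfrac13$ at an even $u$, and the $1\leftrightarrow 2$ symmetry gives $\mu^{0,0}_{\Z^d,\beta}(f(u)=i)\le \tfrac12 e^{-cd}<\tfrac13$ for $i\in\{1,2\}$. Similarly $\mu^{0,0}_{\Z^d,\beta}(f(v)=0)\le e^{-cd^2}+e^{-\beta d}<\tfrac13$ at an odd $v$ (using $\beta d\ge Cd\log d$ to absorb the second term), and symmetry forces $\mu^{0,0}_{\Z^d,\beta}(f(v)=i)=\tfrac12\bigl(1-\mu^{0,0}_{\Z^d,\beta}(f(v)=0)\bigr)>\tfrac13$ for $i\in\{1,2\}$. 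Feeding these four strict inequalities into the ergodic limits from the first paragraph produces exactly the four displayed relations of the corollary. I do not anticipate a serious obstacle; the whole argument is a routine assembly of Theorem~\ref{thm:main}, Theorem~\ref{thm:convergence-of-finite-volume-measure}, and Birkhoff's theorem, with the only minor care points being the applicability of the multidimensional ergodic theorem along the F\o{}lner sequence $(\Lambda_n)$ and the transfer of cylinder-event probabilities through weak convergence.
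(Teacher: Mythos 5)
Your argument is correct and follows essentially the same route as the paper: ergodicity of $\mu^{0,0}_{\Z^d,\beta}$ from the strong mixing in Theorem~\ref{thm:convergence-of-finite-volume-measure}, the pointwise ergodic theorem along the cubes $\Lambda_n$ to identify the limits with single-site marginals, and the bounds of Theorem~\ref{thm:main} transferred to the limit measure via weak convergence of cylinder-event probabilities. The only addition is that you make explicit the $1\leftrightarrow2$ color-swap symmetry (needed to conclude that each of $\mu^{0,0}_{\Z^d,\beta}(f(v)=1)$ and $\mu^{0,0}_{\Z^d,\beta}(f(v)=2)$ exceeds $\tfrac13$ at odd $v$), a point the paper leaves implicit; note that the relevant symmetry holds because the marginal on $\Lambda$ depends on $\tau$ only through $\tau|_{\extB\Lambda}\equiv 0$, since $\tau$ itself takes values $1,2$ on odd vertices.
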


Thus, Corollary~\ref{cor:bias} verifies the existence of six BSS Gibbs measures.

\subsection{Remarks and open problems.}\label{sec:rem-and-open}
In this section, we make a couple of remarks which tie our work to previous works, and offer directions for future research.

\medbreak
\noindent{\bf The zero-temperature case.} Formally, the $\beta=\infty$ model is defined by
\begin{equation}\label{eq:beta-infty-def}
 \mu^\tau_{\Lambda,\infty}(f) :=
\lim_{\beta\to\infty} \mu^\tau_{\Lambda,\beta}(f) .
\end{equation}
Thus, $\mu^\tau_{\Lambda,\infty}$ is the uniform distribution on colorings $f \in \cC_\Lambda^\tau$ that minimize $H_\Lambda(f)$.
Note that for some $\tau$ this minimum energy might be strictly positive, i.e.,
there might not exist proper $3$-colorings of $\Lambda$ that are consistent with the boundary conditions $\tau$.

All of our results hold also for $\beta=\infty$.
Theorem~\ref{thm:main} offers a quantitative improvement in this case in comparison to the results
of~\cite{peled2010high} and~\cite{galvin2012phase}, improving by a factor of $\log d$ in the exponent obtained in~\cite[Theorem~2.10]{peled2010high} (or by a factor of $d$ in the exponent compared to the result in~\cite[Theorem~1.1]{galvin2012phase}).
Theorem~\ref{thm:convergence-of-finite-volume-measure} is novel also for $\beta=\infty$.

\medbreak
\noindent{\bf The critical inverse-temperature.}
Let $\beta_c(d)$ be the supremum of those inverse-temperatures $\beta$ for which the 3-state AF Potts model admits a unique Gibbs state.
Our results show that $\beta_c(d) \le C \log d$ for $d$ high enough.
It is plausible that the model admits a unique Gibbs state for all $\beta < \beta_c(d)$. However, in the absence of a monotonicity argument, this remains unknown.
We therefore define $\beta'_c(d) \le \beta_c(d)$ to be the infimum of those inverse-temperatures $\beta$ for which the model admits multiple Gibbs states. The Dobrushin uniqueness condition~\cite{Dobrushin1968TheDe} implies that $\beta'_c(d) \ge c/d$ for all $d \ge 1$.
It very well may be the case that this lower bound is tight and that $\beta_c(d) = \Theta(1/d)$.

%
%

\medbreak
\noindent{\bf Other boundary conditions.}
In \cite{galvin2007torpid},~\cite{feldheim2013rigidity} and~\cite{galvin2012phase}, the authors consider the $\beta=\infty$ case under periodic boundary conditions, i.e., proper $3$-colorings of a bounded even torus.
This is appealing since it is a very symmetric setting in which there is no need to fix boundary conditions, and yet, it is much more accessible than the general setting of free boundary conditions.
We may also consider the $\beta<\infty$ case under periodic boundary conditions.
A superficial inspection of this setting seems to indicate that
our methods could be extended to obtain a result similar to Theorem~\ref{thm:main}, namely
that in high dimensions and at sufficiently low temperature, typically one sublattice of the torus is dominated by one color, while the other sublattice is dominated by the two remaining colors.
However, we do not pursue this direction here and the details remain to be verified.

Other interesting boundary conditions are the so-called Dobrushin boundary conditions (first introduced in~\cite{dobrushin1973gibbs} for the Ising model).
Let $\Pi := \{ x \in \Z^d : x_1 \ge 0 \}$ be a half-space. Consider domains $\Lambda'_n \supset \{-n,\dots,n\}^d$ such that
$\extB\Lambda'_n \cap \Pi \subset \Even$ and $\extB\Lambda'_n \setminus \Pi \subset \Odd$. Fix the boundary conditions $\tau$ to be $0$ on $\Pi$ and $1$ on $\Z^d\setminus \Pi$.
We conjecture that when $d$ and $\beta$ are large enough,
the measures $\mu^\tau_{\Lambda'_n,\beta}$ converge as $n \to \infty$ to a Gibbs measure $\mu^\tau_{\Z^d,\beta}$,
which is \emph{not} invariant with respect to parity-preserving translations that do not preserve $\Pi$.
Roughly speaking, this measure should induce an interface between an even-$0$ phase and an odd-$1$ phase,
which fluctuates to a bounded distance from most points on the boundary of $\Pi$.
An analogous phenomenon is known to occur in the three-dimensional Ising model (see~\cite{van1975interface} for a short proof) and in the ferromagnetic Potts model~\cite{gielis2002rigidity}.
Showing that this holds for the AF Potts model is an
interesting open problem. It seems that our methods are currently insufficient to tackle the problem,
but potentially more powerful extensions could be devised.

\medbreak
\noindent{\bf Glauber dynamics.}
The Boltzmann distribution is the stationary distibution of Glauber dynamics.
This is the name associated to several natural Markov chains on graph colorings, which involve resampling the
color of a random vertex, taking the rest of the coloring as boundary conditions.
Glauber dynamics are of interest both as a simulation of the physical dynamics of the spin system
and as a technical tool for estimating the partition function of the model~\cite{jerrum1986random}. For a survey of results on
Glauber dynamics on the $3$-state AF Potts model at zero-temperature, see~\cite{galvin2012phase}.

It is conjectured that on a cubic domain of side-length $n$ in high dimensions,
at high temperature, under periodic boundary conditions, the mixing time of Glauber dynamics for the $3$-state AF Potts model
is polynomial in $n^d$, while at low temperature it should behave like $\exp(n^{d-1})$.
Indeed, mixing time of order $\exp(n^{d-1})$ at zero-temperature was obtained in~\cite{galvin2007torpid}, by showing that it is difficult to transition from any typical even-0 phase sample to a typical odd-0 phase sample.
We believe that by applying our methods to periodic boundary conditions, it should be possible to obtain a similar result at positive temperature.
An interesting challenge is to understand Glauber dynamics under even-$0$ boundary conditions.
In this setting, it is possible that the mixing time is polynomial at any temperature.

\medbreak
\noindent{\bf Four or more states.}
The Koteck{\'y} conjecture has a natural extension to $q \ge 4$ states. Namely, it is believed that in sufficiently
high dimensions and sufficiently low temperature, a typical sample of the $q$-state AF Potts model has one bipartition class of $\Z^d$ mostly populated by
$\lfloor q/2\rfloor$ states and the other by the remaining $\lceil q/2\rceil$ states. This conjecture has been
posed also in~\cite{galvin2012phase} and in~\cite{feldheim2013rigidity} and is perhaps the most important open problem
regarding the AF Potts model, unresolved even for the zero-temperature case of proper $q$-colorings.
Although our methods use the special structure of $3$-colorings less than their predecessors,
they do not seem to directly apply even to the case $q=4$.
Assuming that this is the case, one may further try to determine the minimal dimension $d_0(q)$ for which such a phase transition occurs. It follows from the Dobrushin uniqueness condition~\cite{Dobrushin1968TheDe} that $d_0(q) \ge cq$.

\subsection{Discussion.}

The $3$-state AF Potts model has been a subject of interest in mathematics and mathematical physics since the late 60's.
In 1967, Lieb~\cite{lieb1967residual} calculated the entropy constant of the model on $\Z^2$ at $\beta=\infty$ (also known as the square-ice model).
In 1982, Baxter~\cite{baxter2007exactly} was able to show critical behavior in this case by mapping the model to a staggered six-vertex model which admits an exact (albeit non-rigorous) solution.
Following Baxter's results and the introduction of the Koteck{\'y} conjecture, an effort was made to better understand the behavior of the model in two and three
dimensions. Building upon Baxter's work, Saleur~\cite{Saleur1990Zeros} obtained the phase diagram for the AF Potts model in two dimensions, while the efficient Wang--Swendsen--Koteck{\'y} cluster-flip Monte Carlo algorithm \cite{wang1990three} was used to verify the conjecture empirically, as well as other critical exponent predictions \cite{salas1998three,ferreira1999antiferromagnetic}.
While these remarkable developments advanced the understanding of the model,
a mathematically rigorous proof for the Koteck{\'y} conjecture seemed out of reach.

For the sake of the discussion, we limit ourselves to box-shaped domains of side-length $m$ in dimension $d$.
We are interested in the \emph{rigidity} of the $3$-state AF Potts model, manifested by the fact that a typical coloring predominantly follows a pattern. The Koteck{\'y} conjecture is concerned with rigidity of this model in the \emph{thermodynamic limit}, i.e., when $d$ is fixed and $m$ tends to infinity. One may also consider rigidity in the \emph{hypercube} setting, in which $m$ is fixed
and $d$ tends to infinity. While formally rigidity results in this setting have no implications for the thermodynamic limit, and in particular, tell us nothing about phase coexistence, they do provide some insight into the model and may suggest rigidity also in the stronger sense.

Rigidity has been an object of study in many related models, two of which are \emph{homomorphism height functions (HHFs)} and
the \emph{hard-core model}. These models are intimately related to uniformly chosen proper $3$-colorings (the zero-temperature $3$-state AF Potts model).
An HHF is a graph homomorphism from a finite bipartite graph to $\Z$.
The study of a uniformly chosen HHF was initiated by Benjamini, H\"aggstr\"om and Mossel~\cite{Benjamini2000},
who proved several correlation inequalities and posed conjectures about this model. In fact, on a simply connected domain,
under suitable boundary conditions, a uniformly chosen HHF is equivalent to a uniformly chosen proper $3$-coloring of the domain (see e.g.~\cite{peled2010high}).
The hard-core model consists of a random independent set $I$ in a finite graph (i.e., a set containing no two adjacent vertices).
The probability of each such $I$ is proportional to $\lambda^{|I|}$, where $\lambda>0$ is a parameter called \emph{fugacity} or \emph{activity}.
In the special case $\lambda=1$, the model reduces to a uniformly chosen independent set. Note that in a proper coloring, the set of vertices
taking any given color is an independent set. Although there is no direct relation between this model and proper 3-coloring, several results on the hard-core model preceded and inspired counterparts
for the zero-temperature $3$-state AF Potts model.

The rigid structure of a uniformly chosen independent set on the hypercube (with $m=2$)
follows already from the work of Korshunov and Sapozhenko~\cite{Korshunov1983Th}. Rigidity of the hard-core model on the hypercube was shown by Kahn~\cite{kahn2001entropy} (tight bounds were later obtained by Galvin~\cite{galvin2011threshold} using more advanced methods).
Kahn's methods rely on entropy considerations which are less involved and easier to generalize than the methods used for showing rigidity in the thermodynamic limit.
Using these techniques Kahn showed in~\cite{Kahn2001hypercube} that the probability that a uniformly chosen HHF on the hypercube takes more than $C$ values decays exponentially in $d$ (for some $C$ large enough). Using more sophisticated techniques, Galvin later showed in~\cite{Galvin2003hammingcube} that $C=5$ suffices and obtained the asymptotic distribution of the image size. The usage of entropy methods culminated in the work of Engbers and Galvin~\cite{engbers2012h2,engbers2012h1} who provided rigidity results in the hypercube setting for graph homomorphisms to any finite graph $H$. In particular, their results imply rigidity for uniformly chosen proper $q$-colorings (for any $q \ge 3$) and for the hard-core model in the hypercube setting (for any $m$). Entropy techniques, however, appear to be too weak to obtain results for the thermodynamic limit.

The first result concerning the thermodynamic limit of the aforementioned models was obtained in a paper by Galvin and Kahn~\cite{galvin2004phase} from 2004. In this paper, they showed rigidity for the hard-core model in high dimensions using advanced contour arguments (their bounds were later improved by Peled and Samotij~\cite{peled2014odd}). Using related ideas, two groups obtained similar results for the zero-temperature 3-state AF Potts model. In~\cite{peled2010high}, Peled showed rigidity for uniformly chosen HHFs on $\Z^d$ (or on $\Z^2 \times \{0,1\}^d$) for high $d$, and as a byproduct, obtained rigidity for proper $3$-colorings in high dimensions. Independently, Galvin and Randall~\cite{galvin2007torpid} investigated Glauber dynamics on proper $3$-colorings in high dimensions under periodic boundary conditions, showed torpid mixing of this Markov chain and obtained a coarse form of rigidity. Together with Kahn and Sorkin~\cite{galvin2012phase}, they later refined their methods to obtain results parallel to those of \cite{peled2010high} directly on proper $3$-colorings, although with somewhat weaker bounds. Feldheim and Peled~\cite{feldheim2013rigidity} were later able to use topological arguments to obtain a relation between HHFs on the torus $\Z^d/m\Z^d$ and proper $3$-colorings of the torus, and extend Peled's bounds to periodic boundary conditions.

Our work is the first to show rigidity for the AF Potts model on $\Z^d$ at positive temperature. The positive temperature case is challenging as one cannot exploit the linear structure of HHFs to tackle it. To overcome this, we must diverge from
the traditional view of a contour, replacing it with a more general partition of the domain which we call a \emph{breakup}.
For an example in a different context of an alternative to contours, see the work of Duminil-Copin, Peled, Samotij and Spinka on the loop $O(n)$ model~\cite{duminil2014exponential}.

While our work focuses on the AF Potts model on $\Z^d$, it may be worthwhile to mention several works in other settings.
Rough estimates for the number of zero-temperature configurations (and for more general objects) were obtained in \cite{galvin2004weighted,meyerovitch2014independence}.
In a recent work, Koteck{\'y}, Sokal and Swart~\cite{kotecky2014entropy} were able to prove a positive temperature phase transition for various asymmetric lattices in two dimensions. The asymmetric structure of the lattice reduces the number of low temperature Gibbs measures from six to three,
and allows the usage of more classical Peierls arguments. In other related works, Peled, Samotij and Yehudayoff showed rigidity for HHFs (and Lipschitz functions) on bipartite expander graphs~\cite{peled2012expanders} and on grounded trees~\cite{peled2013grounded}.

\subsection{Proof outline}
\label{sec:outline}

In this section, we provide a sketch of the proof of Theorem~1.1.

\smallskip
\noindent{\bf Phases and the breakup.}
Our goal is to show that in a random coloring with even-$0$ boundary conditions $(\Lambda,\tau)$, any given vertex $\rho$ is likely to follow the even-$0$ pattern.
Recall that the even-$0$ pattern is the pattern in which even vertices are colored $0$ and odd vertices are colored $1$ or $2$.
Consider the maximal connected set of vertices which follows the even-$0$ pattern and contains $\Lambda^c$.
We will show that it is likely that this set contains $\rho$.
If $\rho$ does not belong to this set, then it is separated from infinity by an interface between the even-$0$ pattern and other patterns. Thus, our goal is to show that such an interface is unlikely.

As we are concerned with interfaces between sets, there is no need to require the sets to strictly follow a pattern, but rather, it is enough to constrain the coloring near their boundaries. This leads us to the following definition of a \emph{phase}.
For a set of vertices $U$, we write $\intB U$ for the internal boundary of $U$, i.e., the set of vertices in $U$ which are adjacent to a vertex outside $U$, and say that $U$ is even (odd) if $\intB U$ consists solely of even (odd) vertices.
We say that $U$ belongs to the even-$0$ phase if $U$ is even and $\intB U$ is entirely colored $0$. Similarly, five other phases are defined according to the parity of the set (even/odd) and the constant color on the boundary.
Equipped with this definition, given a coloring, we may partition $\Z^d$ into these six phases.
We remark that this partition is not unique. For instance, we may have to decide whether a certain region consisting of even vertices colored $0$ and odd vertices colored $1$ is part of the even-0 phase or the odd-1 phase.
In order for such a partition to be of use to us, we require it to satisfy certain properties. We further elaborate on this point after reviewing the role that
phases play in our proof.

A special feature of \emph{proper} 3-colorings (i.e., samples in the zero-temperature model) is that when $\rho$ does not follow the even-0 pattern it is always possible to define the partition in such a way that only two regions exist; a region in the even-$0$ phase containing $\Lambda^c$, and a region in either the odd-$1$ phase or the odd-$2$ phase containing $\rho$.
This feature is related to the circular graph structure formed by the possible interfaces between phases; see Figure~\reffig{fig:phase-diagram-a}.
In fact, by lifting this circular structure, one may obtain a nested structure of interfaces.
In particular, an interface always separates two phases.
This special structure was exploited in \cite{peled2010high} explicitly (via a correspondence with integer height functions), and in \cite{galvin2012phase} implicitly, allowing the application of contour arguments to the problem.
In the positive temperature case, however, this structure breaks down completely.
As adjacent vertices may take the same color, additional possibilities of interfaces arise (at the cost of introducing improper edges); these are depicted in Figure~\reffig{fig:phase-diagram-b}.
The core difficulty created by this more complex interface structure is manifested by the fact that it is no longer always possible to construct a simple partition into two regions. Instead, a much more elaborate partition might be required, involving several or even all phases.

\begin{figure}
	\centering
	\begin{subfigure}[t]{.33\textwidth}
		\centering
		\includegraphics[scale=0.9]{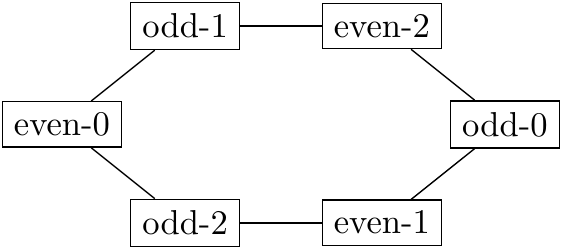}
		\caption{$\beta=\infty$.}
		\label{fig:phase-diagram-a}
	\end{subfigure}%
	\begin{subfigure}{10pt}
		\quad
	\end{subfigure}%
	\begin{subfigure}[t]{.33\textwidth}
		\centering
		\includegraphics[scale=0.9]{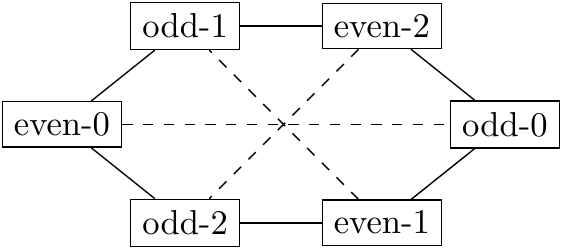}
		\caption{$\beta<\infty$.}
		\label{fig:phase-diagram-b}
	\end{subfigure}%
	\begin{subfigure}{5pt}
		\quad
	\end{subfigure}%
	\begin{subfigure}[t]{.33\textwidth}
		\centering
		\includegraphics[scale=0.9]{phase-diagram-c3\printable.pdf}
		\caption{The simplified phases.}
		\label{fig:phase-diagram-c}
	\end{subfigure}
	\caption{Possible interfaces between phases. Solid lines denote proper transitions, whereas dashed lines denote improper transitions. At zero-temperature ($\beta=\infty$), the structure of possible interfaces is circular. At positive temperature ($\beta<\infty$), the structure is more complicated due to the improper interfaces. By uniting the odd-1 phase with the even-2 phase and the odd-2 with the even-1, this structure is slightly simplified. The colors of the four phases in~(\textsc{c}) are used in figures throughout the paper.}
	\label{fig:phase-diagrams}
\end{figure}

Nevertheless, we are able to introduce a simplification.
We unite the odd-1 phase with the even-2 phase (and similarly, the odd-2 phase with the even-1 phase), treating them as a single phase. We call this the 1-phase (respectively, the 2-phase), referring to the even-$0$ phase as the 0-phase and to the odd-$0$ phase as the 3-phase.
Thus, our partition will consist of four phases rather than six, with neighboring relations depicted in Figure~\reffig{fig:phase-diagram-c}.
Notice that this structure preserves the property that an interface between two phases consists either solely of proper edges or solely of improper edges.
For instance, an edge between the 0-phase and the 1-phase is always a proper edge (as the endpoints are colored 0 and 1), while an edge between the 0-phase and 3-phase is always an improper edge (as the endpoints are both colored 0).
The distinction between these two types of interfaces plays a major role throughout the paper.
Working with these four phases comes, however, at the expense of losing the property that all phases have a well-defined parity (namely, only the 0-phase and 3-phase have such a parity).

For every coloring $f$, we define such a partition $K(f)=(K_0,K_1,K_2,K_3)$ into four phases, which we call the \emph{breakup}.
The breakup is defined through a prescribed process, illustrated in Figure~\reffig{fig:proof-illustration}.
This is done in a manner which guarantees that if $\rho$ does not follow the even-0 pattern then it does not belong to the 0-phase (i.e., $\rho \notin K_0$) and that the complement of each phase (i.e., $\Z^d \setminus K_i$) is connected.
Moreover, the breakup is defined so that the interfaces between the four different phases are connected in the sense that $\intB K_0 \cup \intB K_1 \cup \intB K_2 \cup \intB K_3$ is a connected set.
Conceptually, we view the breakup as the innermost obstruction for the persistence of the 0-phase towards $\rho$. The precise definition of the breakup and details of its properties are given in Section~\ref{sec:breakup+high-level-proof}.

\smallskip
\noindent{\bf An energy reducing and entropy gaining transformation.}
Equipped with the definition of a breakup, our goal is to show that the existence of a large breakup is unlikely. Together with a bound on the minimal size of a breakup, this will yield the desired result.

The first step towards this goal is to show that any particular (non-trivial) breakup $K$ is unlikely. To this end, we construct a one-to-many transformation ${\sf T}_K$, mapping each coloring $f$ having breakup $K$ to multiple colorings $g$ (different for every $f$) satisfying $\mu^\tau_{\Lambda,\beta}(g) \ge \mu^\tau_{\Lambda,\beta}(f)$. In the terminology of statistical mechanics, the increase of probability resulting from this transformation is seen as an \emph{energy reduction}, while the increase in the number of colorings (i.e., the number of images of each $f$) is seen as an \emph{entropy gain}.

Recall that the interfaces between the four phases may be either proper or improper. In fact, an edge on these interfaces (i.e., an edge whose endpoints belong to different phases) is proper in $f$ if and only if it has exactly one endpoint in $K_{12} := K_1 \cup K_2$ (see Figure~\reffig{fig:phase-diagram-c}).
We call the proper edges on these interfaces the \emph{regular boundary} of $K$ and the improper ones the \emph{singular boundary} of $K$.
We denote the number of edges of the regular boundary by $L$ and that of the singular boundary by $M$.

The energy reduction is related to the singular boundary. Specifically, it is the result of permuting the colors in the 2-phase and the 3-phase, so that all improper interfaces become proper (for vertices in the 2-phase, we permute colors 1 and 2, while for vertices in the 3-phase, we apply the cyclic permutation of colors $0 \to 1 \to 2 \to 0$).
Consequently, the energy of each coloring $f$ is reduced by $M$, yielding a probability gain of $e^{\beta M}$.
On the other hand, the entropy gain is related to the regular boundary and is obtained by shifting the coloring on $K_{12}$ in some cardinal direction (called down) and applying the cyclic permutation of colors $0 \to 2 \to 1 \to 0$. This shift and permutation ensure that every vertex in $\intB^{\down} K_{12}$, the upper internal boundary of $K_{12}$, is surrounded by the same color in all directions. Thus, these vertices are ``free'' to take either of the remaining two colors. Consequently, the entropy gain is exponential in the size of $\intB^{\down} K_{12}$, which is in turn proportional to $L/d$.
See Figure~\reffig{fig:transformation} for an illustration of this transformation and Lemma~\ref{lem:existence-of-transformation} for details.

The mere existence of ${\sf T}_K$ shows that the breakup $K$ is unlikely.
It is natural to try to show that the existence of any breakup at all is unlikely by applying a union bound over all possible choices of $K$.
However, this approach fails as the number of breakups is too large in comparison to our bound on the probability of a given breakup. Instead, we use a more sophisticated technique of \emph{approximations and flows} to show this. A version of this technique was used already in~\cite{galvin2004phase}.

\smallskip
\noindent{\bf Approximations.}
Roughly speaking, the large number of breakups is due to the number of potential perturbations in the interfaces between the phases. In particular, ``smoothing out'' these perturbations should significantly decrease the number of breakups. This idea is made precise through the notion of a four-approximation.

A four-approximation is an object maintaining information about the structure of a breakup $K$.
Away from the boundary of $K$, the exact partition into the four phases is known, but near the boundary, only partial information is maintained (see Figure~\ref{fig:approx} for an illustration of a four-approximation). The construction of four-approximations is somewhat involved; see Definition~\ref{def:four-approx} and Section~\ref{sec:approx} for details.
The number of four-approximations required to approximate all breakups will be small enough as to allow us to ultimately take a union bound over them (see Lemma~\ref{lem:family-of-FA}). We are therefore left with the goal of showing that it is unlikely that the breakup is approximated by a given four-approximation $A$. For this, it is convenient to classify breakups according to the entropy gain and energy reduction they yield under the above transformation ${\sf T}_K$.
Thus, we consider the set $\Col_{L,M}(A)$ of colorings $f$ whose breakup $K(f)$ has regular boundary of size $L$, singular boundary of size $M$ and is approximated by $A$.
To obtain a good bound on the probability of $\Col_{L,M}(A)$ under $\mu_{\Lambda,\beta}^\tau$, we extend ${\sf T}_K$ to $\Col_{L,M}(A)$ in a weighted manner.
We do so using the method of flows.

\smallskip
\noindent{\bf The flow.}
We give a probabilistic description of the method of flows in our context.
First, think of the transformation ${\sf T}_K$ as mapping every coloring $f$ having breakup $K$ to a probability measure, giving every possible image equal probability. We refer to this measure here as the $\sf T$-measure of $f$. Next, consider the following two-step procedure: pick a random coloring $f$ according to $\mu^\tau_{\Lambda,\beta}$, and then, if it belongs to $\Col_{L,M}(A)$, pick another random coloring $g$ according to the $\sf T$-measure of $f$ (otherwise set $g=\emptyset$). Evidently, if $\Pr(g=f_0) \le \epsilon \cdot \Pr(f=f_0)$ for every coloring $f_0$, then $\mu^\tau_{\Lambda,\beta}(\Col_{L,M}(A)) \le \epsilon$.

As there may exist colorings which are obtained by $f$ and by $g$ with comparable probabilities, this will not give a sufficiently small bound. In fact, one may check that the bound obtained in this manner is governed by the maximum number of $\sf T$-preimages in $\Col_{L,M}(A)$ among colorings $g$, which may even be as large as $|\Col_{L,M}(A)|$. Thus, this approach is essentially equivalent to a union bound.
The number of such preimages greatly varies among different $g$. In particular, even when the maximum number of preimages is large, the average number of preimages is much smaller.
Taking this into account, we may improve the flow by modifying the $\sf T$-measure of each coloring.

Recall that the multitude of images under ${\sf T}_K$ is the result of assigning one of two colors to the free vertices $\intB^{\down} K_{12}$. A key observation is that the level of difficulty of recovering the breakup (given the four-approximation) is not the same for all choices of colors for these vertices. Certain choices allow deterministic local recovery of the breakup, while others do not.
The nature of this variation is further explained in Section~\ref{sec:flow} (see Figure~\ref{fig:flow-pairs} for an illustration of a simple case of local recovery). By modifying the $\sf T$-measure so that it is biased towards choices which simplify the recovery of the breakup, we are able to improve the bound on $\mu^\tau_{\Lambda,\beta}(\Col_{L,M}(A))$, showing that it is exponentially small in $L$ and $M$. Finally, we will conclude the proof by taking a union bound over $L$, $M$ and $A$.

\subsection{Organization of the article}
The rest of the article is structured as follows.
In Section~\ref{sec:preliminaries}, definitions and preliminary results which will be needed throughout the paper are given.
The definition of the breakup and the proof of the main theorem are given in Section~\ref{sec:breakup+high-level-proof}. This section also introduces the two key lemmas; Lemma~\ref{lem:family-of-FA} regarding the existence of a small family of four-approximations, and Lemma~\ref{lem:prob-of-approx-enhanced} regarding the unlikeliness of a breakup with a given four-approximation. Section~\ref{sec:transformation} is devoted to the proof of Lemma~\ref{lem:prob-of-approx-enhanced}, while Section~\ref{sec:approx} is devoted to the proof of Lemma~\ref{lem:family-of-FA}.
In Section~\ref{sec:pattern-violations+gibbs}, we prove Theorem~\ref{thm:main2} and Theorem~\ref{thm:convergence-of-finite-volume-measure}.

\subsection{Acknowledgments}
We would like to thank Ron Peled for introducing us to the subject, for much encouragement and for many useful discussions.
We would also like thank Wojciech Samotij and Peleg Michaeli for helpful comments.

%
\section{Preliminaries}
\label{sec:preliminaries}
%

\subsection{Notation}\label{sec:notation}

Let $G=(V,E)$ be a graph.
For vertices $u,v \in V$ such that $\{u,v\} \in E$, we say that $u$ and $v$ are {\em adjacent} and write $u \sim v$.
For a subset $U \subset V$, denote by $N(U)$ the {\em neighbors} of $U$, i.e., vertices in $V$ adjacent to some vertex in $U$, and define for $t>0$,
\[ N_t(U) := \{ v \in V : |N(v) \cap U| \ge t \} .\]
In particular, $N_1(U)=N(U)$.
Denote the {\em external boundary} of $U$ by $\extB U := N(U) \setminus U$ and the {\em internal boundary} of $U$ by $\intB U := \extB U^c$.
Denote $U^+ := U \cup \extB U$ and $v^+ := \{v\}^+$.
The set of edges between two sets $U$ and $W$ is denoted by $\partial(U,W):=\{ \{u,w\} \in E : u \in U, w \in W \}$.
The {\em edge-boundary} of $U$ is denoted by $\partial U := \partial(U,U^c)$, and we write $\partial v := \partial \{v\}$ for shorthand.
We denote the graph-distance between $u$ and $v$ by $\text{dist}(u,v)$.
The \emph{diameter} of a finite connected set $U \subset V$, denoted by $\diam U$, is the maximum graph-distance between two vertices in $U$, where we follow the convention that the diameter of the empty set is zero.
For two non-empty sets $U,W \subset V$, we denote by $\dist(U,W)$ the minimum graph-distance between a vertex in $U$ and a vertex in $W$.

We consider the graph $\Z^d$ with nearest-neighbor adjacency, i.e., the edge set $E(\Z^d)$ is the set of $\{u,v\}$ such that $u$ and $v$ differ by one in exactly one coordinate.
A vertex of $\Z^d$ is called {\em even (odd)} if it is at even (odd) graph-distance from the origin.
We denote the set of even and odd vertices of $\Z^d$ by $\Even$ and $\Odd$ respectively.
For a unit vector $s \in \Z^d$, write $v^s := v+s$ for the translation of $v$ by $s$ and define $U^s := \{ v^s : v \in U \}$.
The internal boundary of $U$ \emph{in direction $s$} is then defined to be $\intB^s U := U \setminus U^s$.

{\bf Policy on constants:} In the rest of the paper, we employ the following policy on constants. We write $C,c,C',c'$ for positive absolute constants, whose values may change from line to line. Specifically, the values of $C,C'$ may increase and the values of $c,c'$ may decrease from line to line.

\subsection{Odd sets}

\begin{figure}
	\centering
	\begin{subfigure}[t]{.28\textwidth}
		\centering
		\includegraphics[scale=0.45]{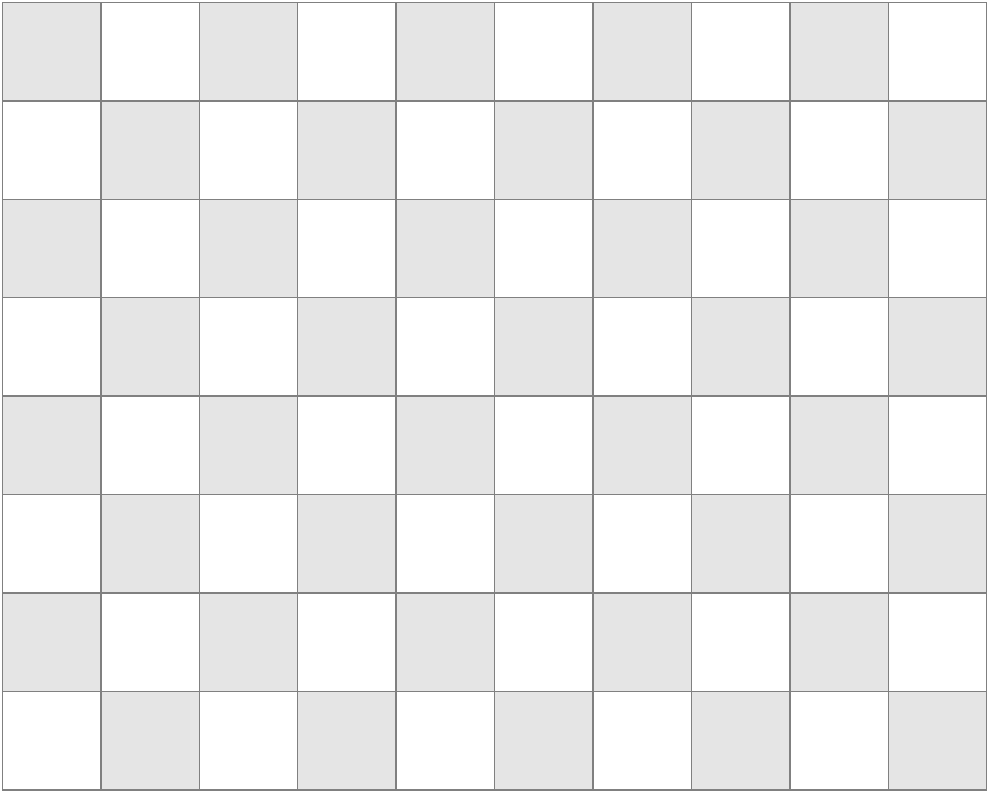}
		\caption{The even and odd bipartition classes of $\Z^d$.}
		\label{fig:even-odd-bipartition}
	\end{subfigure}%
	\begin{subfigure}{20pt}
		\quad
	\end{subfigure}%
	\begin{subfigure}[t]{.28\textwidth}
		\centering
		\includegraphics[scale=0.45]{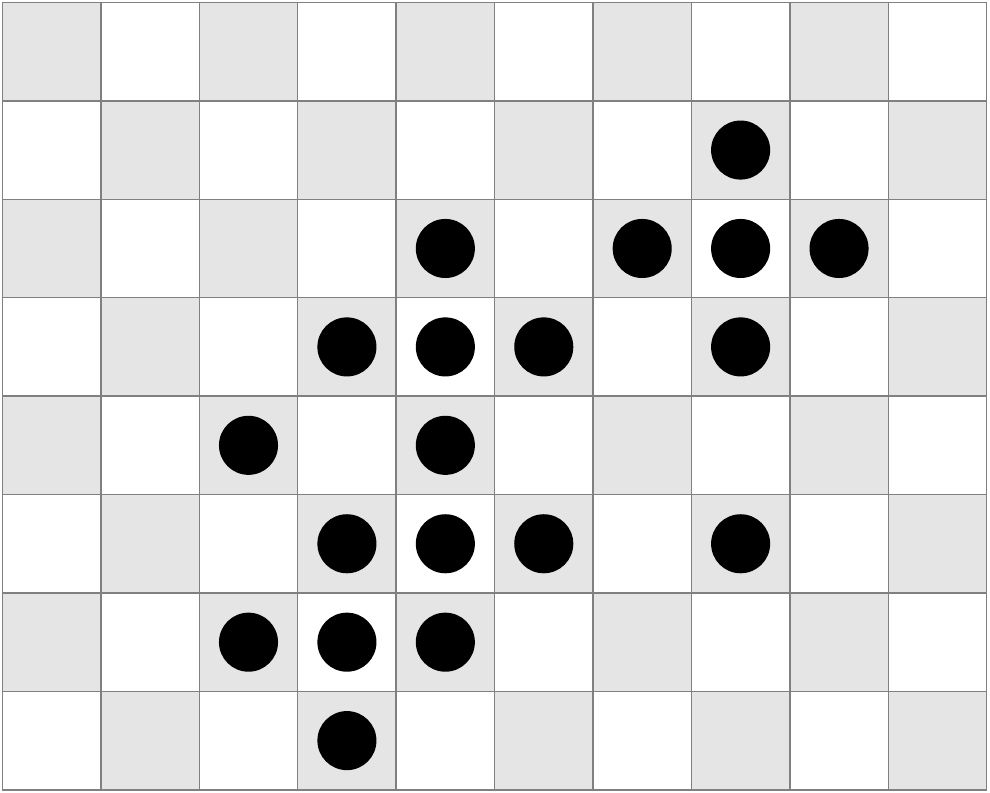}
		\caption{An odd set.}
		\label{fig:odd-set}
	\end{subfigure}%
	\begin{subfigure}{20pt}
		\quad
	\end{subfigure}%
	\begin{subfigure}[t]{.28\textwidth}
		\centering
		\includegraphics[scale=0.45]{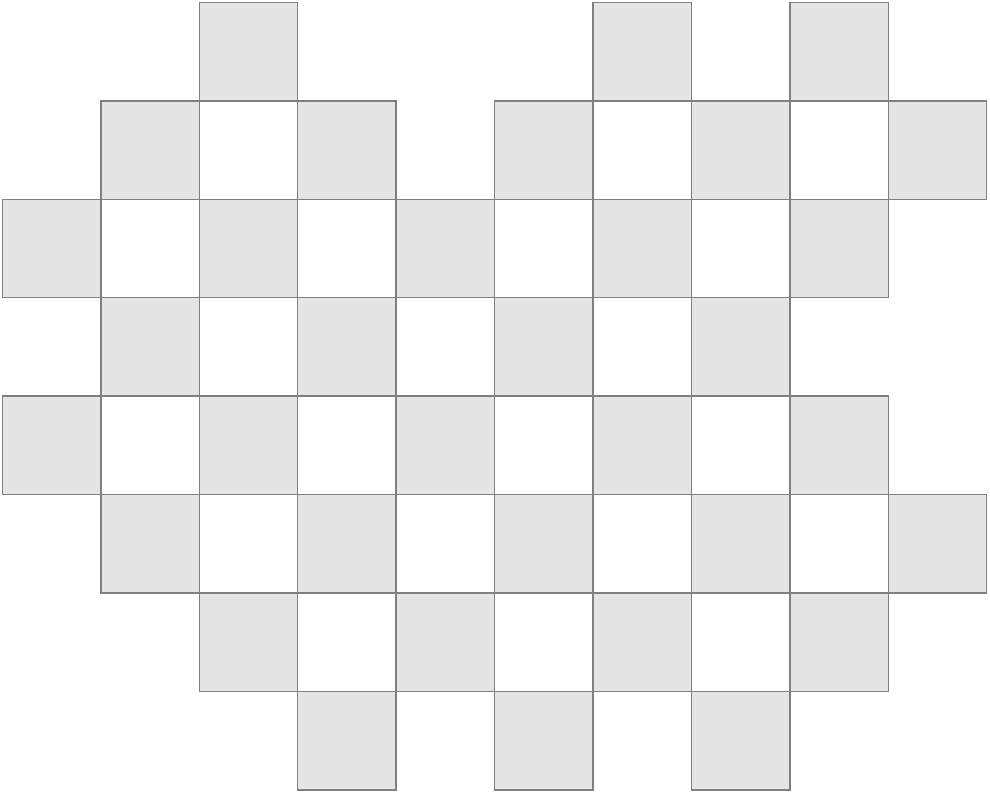}
		\caption{An odd domain.}
		\label{fig:odd-domain}
	\end{subfigure}
	\caption{Odd sets and domains. The vertices of $\Z^d$ are depicted as squares, with the even vertices in white and the odd vertices in gray. An odd set is a set whose internal boundary consists solely of odd vertices. A domain is a finite, non-empty, connected, co-connected set.}
	\label{fig:odd-sets-and-domains}
\end{figure}

We say that a set $U \subset \Z^d$ is {\em odd} if its internal boundary consists solely of odd vertices, i.e., if $\intB U \subset \Odd$. The complement of an odd set is called {\em even}.
Observe that a set $U \subset \Z^d$ is odd if and only if $(\Even \cap U)^+ \subset U$.
See Figure~\reffig{fig:odd-set}.

A nice property of odd sets, which appeared already in~\cite{borgs1999torpid}, is that the size of their boundary is the same in every direction.

\begin{lemma}\label{lem:odd-set-boundary-size}
	Let $U \subset \Z^d$ be finite and odd.
	Then, for any unit vector $s \in \Z^d$, we have
	\[ |\intB^s U| = |\Odd \cap U| - |\Even \cap U| = \tfrac{|\partial U|}{2d} .\]
\end{lemma}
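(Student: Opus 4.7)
The plan is to use a single structural fact repeatedly: the oddness of $U$ is equivalent to the statement that every even vertex $v \in U$ has all $2d$ of its neighbors inside $U$, because if some neighbor of $v$ lay outside $U$ then $v$ itself would belong to $\intB U$ despite being even. Both equalities of the lemma will follow from this observation together with simple bijective counting.

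For the first equality, I would first note that $\intB^s U \subset \Odd$ for every unit vector $s$: if an even $v$ satisfied $v \in U$ and $v - s \notin U$, the key fact would be violated at $v$. Then, using the translation bijection $v \mapsto v - s$ between $\Odd$ and $\Even$, I partition $\Odd \cap U$ according to whether $v - s$ lies in $U$, giving
\[ |\Odd \cap U| \;=\; |\{ v \in \Odd \cap U : v - s \in U \}| \;+\; |\intB^s U|. \]
The first term on the right equals $|\Even \cap U|$: for $w := v - s \in \Even$, the key fact forces $w + s \in U$ whenever $w \in \Even \cap U$, so $w \mapsto w + s$ is a bijection from $\Even \cap U$ onto $\{v \in \Odd \cap U : v - s \in U\}$. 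Rearranging yields the first equality.

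For the second equality, I would decompose $\partial U$ disjointly over the $2d$ unit vectors: every boundary edge $\{u, u + s\}$ with $u \in U$ and $u + s \notin U$ contributes exactly one element to $\intB^{-s} U$, giving $|\partial U| = \sum_{s} |\intB^s U|$ where the sum is over the $2d$ unit vectors. Combined with the first equality, whose right-hand side is manifestly independent of $s$, this gives $|\partial U| = 2d\,(|\Odd \cap U| - |\Even \cap U|)$ and finishes the proof. I do not anticipate a real obstacle here; the whole argument is a clean double-count resting on the observation that even vertices of an odd set are interior in the strongest possible sense, and this same observation will later drive many of the paper's boundary estimates.
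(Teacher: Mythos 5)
Your proof is correct and is essentially the same argument as the paper's: the bijection $w \mapsto w+s$ from $\Even \cap U$ onto $\{v \in \Odd \cap U : v-s \in U\}$ is exactly the paper's chain of equalities starting from $|\Even \cap U| = |\Odd \cap U^s| = |\Odd \cap U^s \cap U|$ (using that even vertices of an odd set are interior), and the second equality via $|\partial U| = \sum_{s'} |\intB^{s'} U| = 2d\,|\intB^s U|$ is identical.
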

\begin{proof}
We have
\begin{align*}
|\Even \cap U| &= |\Odd \cap U^s| = |\Odd \cap U^s \cap U| = |\Odd \cap U| - |\Odd \cap U \setminus U^s| \\&= |\Odd \cap U| - |U \setminus U^s| = |\Odd \cap U| - |\intB^s U| .
\end{align*}
Thus, we have established the first equality.
The second equality now follows from the first, since $|\partial U| = \sum_{s'} |\intB^{s'} U| = 2d \cdot |\intB^s U|$.
\end{proof}

Odd sets arise naturally in the context of proper $3$-colorings of $\Z^d$, and thus, played an important role in previous works \cite{peled2010high,galvin2012phase}.
In Section~\ref{sec:approx}, we introduce \emph{semi-odd pairs}, a variant of this notion.

\subsection{Isoperimetry}

The following isoperimetric inequality follows from a corresponding inequality of Bollob{\'a}s and Leader~\cite{bollobas1991edge}. Nonetheless, we provide a short proof.

\begin{lemma}\label{lem:isoperimetry}
	Let $A \subset \Z^d$ be finite. Then $|\partial A| \ge 2d \cdot |A|^{1-1/d}$.
\end{lemma}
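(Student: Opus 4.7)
The plan is to derive this from the Loomis--Whitney inequality applied to the projections of $A$ onto the $d$ coordinate hyperplanes. For each $i \in \{1,\dots,d\}$, let $\pi_i \colon \Z^d \to \Z^{d-1}$ denote the projection that drops the $i$-th coordinate, and set $A_i := \pi_i(A)$. The Loomis--Whitney inequality gives
\[
	|A|^{d-1} \le \prod_{i=1}^d |A_i|,
\]
and the AM--GM inequality then yields $\sum_{i=1}^d |A_i| \ge d\bigl(\prod_i |A_i|\bigr)^{1/d} \ge d \cdot |A|^{1-1/d}$.

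Next I would estimate $|\partial A|$ from below by counting, separately in each coordinate direction, the edges of $\partial A$ lying along lines parallel to the $i$-th axis. For a fixed $i$ and a fixed point $y \in A_i$, the fiber $\pi_i^{-1}(y) \cap A$ is a non-empty finite subset of a copy of $\Z$, hence has both a maximum and a minimum element; the two edges immediately above the maximum and immediately below the minimum in direction $i$ lie in $\partial A$. Therefore the number of edges of $\partial A$ in direction $i$ is at least $2|A_i|$, and summing over $i$ gives
\[
	|\partial A| \ge 2 \sum_{i=1}^d |A_i| \ge 2d \cdot |A|^{1-1/d},
\]
as desired.

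There is no serious obstacle here, since Loomis--Whitney is a standard ingredient and the projection/fiber counting argument for the boundary is elementary. The only minor point to be careful about is the finiteness of $A$, which is what guarantees that each non-empty fiber has a well-defined top and bottom and thus contributes at least two boundary edges in the given direction.
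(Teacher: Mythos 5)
Your proof is correct, but it follows a genuinely different route from the paper. The paper fattens $A$ to the continuum set $S = A + [-1/2,1/2]^d$, applies the Brunn--Minkowski inequality to compare $\lambda(S+B_\epsilon)$ with the corresponding cube of equal volume, and recovers $|\partial A|$ as the surface area of $S$ via a derivative computation in $\epsilon$. You instead stay entirely in the discrete setting: Loomis--Whitney bounds $|A|^{d-1}$ by the product of the coordinate projections, AM--GM converts this into a lower bound on $\sum_i |\pi_i(A)|$, and the fiber argument (each non-empty fiber in direction $i$ contributes at least the two boundary edges just above its maximum and just below its minimum, with no double counting since an edge in direction $i$ lies in a single fiber) gives $|\partial A| \ge 2\sum_i |\pi_i(A)|$. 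Both arguments yield exactly the constant $2d$. What your approach buys is a purely combinatorial proof with no passage to Lebesgue measure or limiting surface-area computation, at the cost of invoking Loomis--Whitney rather than Brunn--Minkowski; the paper's version is essentially the continuous counterpart of the same idea (indeed Loomis--Whitney is often deduced from, or proved alongside, Brunn--Minkowski-type arguments), so the choice between them is a matter of which standard inequality one prefers to cite.
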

\begin{proof}
	The proof uses the Brunn-Minkowski inequality in $\R^d$ (see, e.g.,~\cite{schneider2013convex}).
	Denote by $\lambda$ the Lebesgue measure on $\R^d$.
	For $r>0$, write $B_r := [-r/2,r/2]^d$ and note that $\lambda(B_r)=r^d$.
	Define $S := A+B_1$, where $X+Y:=\{x+y : x\in X, y\in Y\}$ for sets $X,Y \subset \R^d$, and note that $\lambda(S)=|A|$.
	The classical Brunn-Minkowski inequality applied to $S$ gives
	\[ \lambda(S + B_\epsilon) \ge \lambda(B_s + B_\epsilon) = \lambda(B_{s+\epsilon}) = (s + \epsilon)^d , \quad \epsilon>0 ,\]
	where $s := |A|^{1/d}$ so that $\lambda(B_s)=\lambda(S)$.
	Therefore,
	\[ \lim_{\epsilon \downarrow 0} \frac{\lambda(S+B_\epsilon)-\lambda(S)}{\epsilon/2} \ge \lim_{\epsilon \downarrow 0} \frac{(s+\epsilon)^d-s^d}{\epsilon/2} = 2d \cdot s^{d-1} = 2d \cdot |A|^{1-1/d} .\]
	Thus, observing that the left-hand side is equal to the surface area of $S$, which in turn equals $|\partial A|$, the lemma follows.
\end{proof}

\begin{cor}\label{cor:isoperimetry}
	Let $A \subset \Z^d$ be finite.
	\begin{enumerate}[\qquad(a)]
		\item\label{it:isoperimetry-small} If $|A| \le d$ then $|\partial A| \ge d |A|$.
		\item\label{it:isoperimetry-large} If $|A| \ge d$ then $|\partial A| \ge d^2$.
		\item\label{it:isoperimetry-regular-odd} If $A$ is odd and contains an even vertex then $|\partial A| \ge d^2$.
	\end{enumerate}
\end{cor}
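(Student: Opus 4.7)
The plan is to prove the three parts in turn: (a) by a simple edge-counting argument, (b) as a direct consequence of Lemma~\ref{lem:isoperimetry}, and (c) by reducing to (b) via the structural property of odd sets.

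For (a), I would start with the elementary identity
\[ |\partial A| = 2d|A| - 2e(A), \]
where $e(A)$ denotes the number of edges of $\Z^d$ with both endpoints in $A$. The trivial bound $e(A) \le \binom{|A|}{2}$ then yields $|\partial A| \ge |A|(2d - |A| + 1)$. Since $|A| \le d$, the second factor is at least $d+1$, so $|\partial A| \ge (d+1)|A| > d|A|$, which establishes (a). Note that one could not replace this pigeonhole-style bound by a direct appeal to Lemma~\ref{lem:isoperimetry} in the small-set regime, since $2d|A|^{1-1/d}$ is weaker than $d|A|$ when $|A|$ is close to $d$.

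For (b), I would invoke Lemma~\ref{lem:isoperimetry} to get $|\partial A| \ge 2d|A|^{1-1/d}$, which for $|A| \ge d$ is at least $2d^{2-1/d}$. The remaining inequality $2d^{2-1/d} \ge d^2$ is equivalent to $d^{1/d} \le 2$, which holds for every $d \ge 1$ since $d^{1/d}$ attains its maximum at $d=e$ with value $e^{1/e} < 2$.

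For (c), suppose $v \in A$ is even and $A$ is odd. Since $\intB A \subset \Odd$ we have $v \notin \intB A$; equivalently, using the characterization $(\Even \cap A)^+ \subset A$ of odd sets recalled in the excerpt, we have $v^+ \subset A$. In particular $|A| \ge 2d+1 \ge d$, so part (c) follows from part (b). None of the three parts poses a real obstacle; the only step requiring even mild care is verifying $d^{1/d} \le 2$ in (b), and the slight asymmetry between the small- and large-$|A|$ regimes in (a) versus (b).
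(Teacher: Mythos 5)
Your argument is correct in all three parts, and for (b) and (c) it coincides with the paper's proof: (b) is Lemma~\ref{lem:isoperimetry} plus $d^{1/d}\le e^{1/e}\le 2$, and (c) is exactly the observation $v^+\subset A$, hence $|A|\ge 2d+1$, reducing to (b). For (a) you take a genuinely different, more elementary route: the handshake identity $|\partial A|=2d|A|-2e(A)$ together with $e(A)\le\binom{|A|}{2}$ gives $|\partial A|\ge |A|(2d-|A|+1)\ge (d+1)|A|$, which is fine and even slightly stronger than the stated bound. However, your parenthetical claim that one \emph{could not} deduce (a) directly from Lemma~\ref{lem:isoperimetry} is wrong, and this is precisely what the paper does: $2d|A|^{1-1/d}\ge d|A|$ is equivalent to $|A|^{1/d}\le 2$, i.e.\ $|A|\le 2^d$, and for $|A|\le d$ one has $|A|^{1/d}\le |A|^{1/|A|}\le e^{1/e}<2$. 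So in the regime $|A|$ close to $d$ the isoperimetric bound is not weaker than $d|A|$ but in fact larger by a factor $2|A|^{-1/d}\ge 2e^{-1/e}\approx 1.38$. In short: your proof stands, your edge-counting argument for (a) buys independence from the Brunn--Minkowski-based Lemma~\ref{lem:isoperimetry} (and a marginally better constant), while the paper's single appeal to that lemma handles (a) and (b) uniformly via the elementary inequality $x^{1/x}\le 2$.
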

\begin{proof}
	Parts~\eqref{it:isoperimetry-small} and~\eqref{it:isoperimetry-large} follow from Lemma~\ref{lem:isoperimetry} since $2 \ge e^{1/e} \ge x^{1/x}$ for all $x>0$.
	Part~\eqref{it:isoperimetry-regular-odd} follows from part~\eqref{it:isoperimetry-large}, since, if $v \in A$ is an even vertex, then $v^+ \subset A$ so that $|A| \ge |v^+| = 2d+1$.
\end{proof}

Corollary~\ref{cor:isoperimetry}\ref{it:isoperimetry-regular-odd} may be extended to the following.

\begin{lemma}\label{lem:boundary-size-via-diameter}
	Let $A \subset \Z^d$ be non-empty, finite, odd and connected. Then $|\partial A| \ge (d-1)^2 \diam A$.
\end{lemma}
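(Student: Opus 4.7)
The plan is to slice $A$ orthogonally to a well-chosen coordinate direction and apply Corollary~\ref{cor:isoperimetry}\ref{it:isoperimetry-regular-odd} to each ``thick'' slice. First, I would dispose of the trivial case $|A|=1$ (in which $\diam A = 0$). For $|A|\ge 2$, I would first establish the structural identity $A = E^+$, where $E := \Even \cap A$ is non-empty: oddness of $A$ gives $E^+\subseteq A$, while connectedness of $A$ together with bipartiteness of $\Z^d$ forces every odd $v\in A$ to have an even $A$-neighbour, so $A\subseteq E^+$. The consequence I need is that a full ``plus sign'' lies in $A$ around every vertex of $E$.

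For each direction $i\in\{1,\dots,d\}$ and each $k\in\Z$, let $A^{(i)}_k := A\cap\{x:x_i=k\}$, viewed as a subset of $\Z^{d-1}$. The core observation is this: if $A^{(i)}_k$ contains a $\Z^d$-even vertex $(k,w)$, then oddness of $A$ forces $(k, w\pm e_j)\in A$ for every $j\ne i$, so the $\Z^{d-1}$-plus sign $w^+_{\Z^{d-1}}$ sits inside $A^{(i)}_k$. Checking parities, $A^{(i)}_k$ is either an odd subset of $\Z^{d-1}$ containing a $\Z^{d-1}$-even vertex (when $k$ has the parity making $\Z^d$-even vertices at level $k$ correspond to $\Z^{d-1}$-even $w$) or, in the complementary case, an ``even'' set (its $\Z^{d-1}$-complement is odd) containing a $\Z^{d-1}$-odd vertex with full $\Z^{d-1}$-neighbourhood inside. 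In either case, $|A^{(i)}_k|\ge 2d-1$, and Corollary~\ref{cor:isoperimetry}\ref{it:isoperimetry-regular-odd} applied in $\Z^{d-1}$ (possibly via its parity-swapped analogue) yields
\[
|\partial_{\Z^{d-1}} A^{(i)}_k| \;\ge\; (d-1)^2
\]
for every level $k$ with $A^{(i)}_k \cap \Even \ne \emptyset$.

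Summing these bounds over $k$ and invoking Lemma~\ref{lem:odd-set-boundary-size} — which identifies $\sum_k |\partial_{\Z^{d-1}} A^{(i)}_k|$ as the count of boundary edges of $A$ in the $d-1$ directions orthogonal to $e_i$, namely $\tfrac{d-1}{d}|\partial A|$ — gives
\[
|\partial A| \;\ge\; d(d-1)\cdot n_i, \qquad n_i := \bigl|\{k : A^{(i)}_k\cap \Even\ne\emptyset\}\bigr|,
\]
and maximizing over $i$ yields $|\partial A| \ge d(d-1)\,\max_i n_i$.

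The main obstacle is the final step: bounding $\max_i n_i \ge \frac{d-1}{d}D$ so as to conclude $|\partial A| \ge (d-1)^2 D$. I would fix a diameter-realising pair $u,v\in A$ and, after possibly shifting each to an even $A$-neighbour (losing at most $2$ in the diameter), assume $u,v\in E$. Since $A$ is connected, $E$ is connected in the distance-$2$ graph $H$ on $\Even$. Choosing an $H$-path from $u$ to $v$ and analysing how many distinct $e_i$-values the path visits, together with a pigeonhole argument over the $d$ directions (exploiting that $\sum_i|u_i-v_i|=D$), should isolate a direction $i$ in which $E$ spans enough distinct coordinate values to furnish the required bound on $n_i$. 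If the $H$-path-based count falls short for some configurations, a refined per-slice bound going beyond Corollary~\ref{cor:isoperimetry}\ref{it:isoperimetry-regular-odd} — using the full isoperimetric inequality of Lemma~\ref{lem:isoperimetry} in $\Z^{d-1}$ when the slice is larger than a single plus sign — compensates. This is the most delicate part of the argument.
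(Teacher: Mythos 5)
Your slicing setup is sound up to the bound $|\partial A|\ge d(d-1)\,n_i$: a slice through a $\Z^d$-even vertex does contain a full $(d-1)$-dimensional plus, is odd (or parity-swapped odd) as a subset of $\Z^{d-1}$, and Corollary~\ref{cor:isoperimetry} together with Lemma~\ref{lem:odd-set-boundary-size} gives exactly what you state. The genuine gap is the final step, and it is not merely unfinished: the inequality you need there, $\max_i n_i\ge \tfrac{d-1}{d}\diam A$, is false. Take $A=\bigcup_{t=0}^{m}(w+2te_1)^+$, a tube of plus signs with even centers spaced two apart along $e_1$: then $\diam A=2m+2$, but even vertices occur only on the $m+1$ center levels, so $n_1\approx\tfrac12\diam A$ and $n_i=1$ for $i\ge2$, and your bound yields only about $\tfrac12 d(d-1)\diam A<(d-1)^2\diam A$. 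Worse, if the centers advance diagonally, cycling through all coordinate directions ($w_{t+1}=w_t+e_{j(t)}+e_{j'(t)}$), the extent of $A$ in every direction is about $\diam A/d$, so $\max_i n_i\approx\diam A/d$ and the slicing bound degrades to roughly $(d-1)\diam A$, off by a factor of $d$, while the lemma is nearly tight for this set. The proposed rescue via Lemma~\ref{lem:isoperimetry} per slice is not a proof and cannot be automatic: a slice containing an even vertex is only guaranteed to have $2d-1$ vertices, so per-slice isoperimetry gives boundary of order $d^2$, whereas in the diagonal example each of the $\approx\diam A/d$ relevant slices would have to contribute boundary of order $d^3$; you would first have to show the slices are large, which is precisely the quantitative information the slicing decomposition does not supply. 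Summing your per-direction inequality over $i$ does not help either: it gives $|\partial A|\ge(d-1)\sum_i n_i$, and in the tube example $\sum_i n_i\approx\tfrac12\diam A+d$.

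The paper avoids this obstruction by projecting (collapsing) a coordinate rather than slicing. It fixes $u,v$ realizing the diameter $k$, a shortest path $p$ inside $A$, and the coordinate $i$ with \emph{smallest} displacement $|u_i-v_i|\le k/d$; then $\pi_i(p)$ has $r+1\ge(1-1/d)k+1$ distinct vertices, each even vertex of $p$ projects to a point whose full $(d-1)$-dimensional plus lies in $\pi_i(A)$ (at least $r/2$ such points), and since no vertex has more than two neighbours among them, $|\pi_i(A)|\ge(d-1)r/2$; finally $|\partial A|=2d\,|\intB^{i}A|\ge 2d\,|\pi_i(A)|$ by Lemma~\ref{lem:odd-set-boundary-size}. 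Collapsing aggregates all your slices into a single $(d-1)$-dimensional set, so the ``few but fat slices'' problem that defeats the slicing route never arises; to salvage your approach you would effectively need a per-slice estimate tracking how much of the diameter-realizing path each slice carries, which amounts to redoing this projection argument.
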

\begin{proof}
	We may assume that $d \ge 2$.
	Let $u,v \in A$ be such that $k := \dist(u,v) = \diam A$ and let $p$ be a shortest-path in $A$ between $u$ and $v$.
	Denote by $\pi_i \colon \Z^d \to \Z^{d-1}$ the projection $\pi_i(x) := (x_1,\dots,x_{i-1},x_{i+1},\dots,x_d)$. Since $k = \dist(u,v)$, there exists a coordinate $1 \le i \le d$ such that $r := |\pi_i(p)|-1 \ge \dist(\pi_i(u),\pi_i(v)) \ge k (1-1/d)$. Let $B$ denote the set of vertices $w$ in $\pi_i(p)$ such that $w^+ \subset \pi_i(A)$ and note that $|B| \ge r/2$. Since no vertex has more than two neighbors in $B$, it follows that $|\pi_i(A)| \ge (2d-2)|B|/2 = (d-1)|B|$ (see Lemma~\ref{lem:sizes} below). Finally, since $|\intB^{i} A| \ge |\pi_i(A)|$, Lemma~\ref{lem:odd-set-boundary-size} implies that $|\partial A| \ge d(d-1)r \ge (d-1)^2 k$.
\end{proof}

\subsection{Co-connected sets}
\label{sec:co-connected-sets}

In this section, we fix an arbitrary connected graph $G=(V,E)$.
A set $U \subset V$ is called {\em co-connected} if its complement $V \setminus U$ is connected.
For a set $U \subset V$ and a vertex $v \in V$, we define the {\em co-connected closure} of $U$ with respect to $v$ to be the complement of the connected component of $V \setminus U$ containing $v$, where it is understood that this results in $V$ when $v \in U$.
We say that a set $U' \subset V$ is a co-connected closure of a set $U \subset V$ if it is its co-connected closure with respect to some $v \in V$.
Evidently, every co-connected closure of a set $U$ is co-connected and contains $U$.
An illustration of this notion is given in Figure~\reffig{fig:co-connect}.

\begin{figure}
    \centering
    \begin{subfigure}[t]{.28\textwidth}
        \centering
        \includegraphics[scale=0.45]{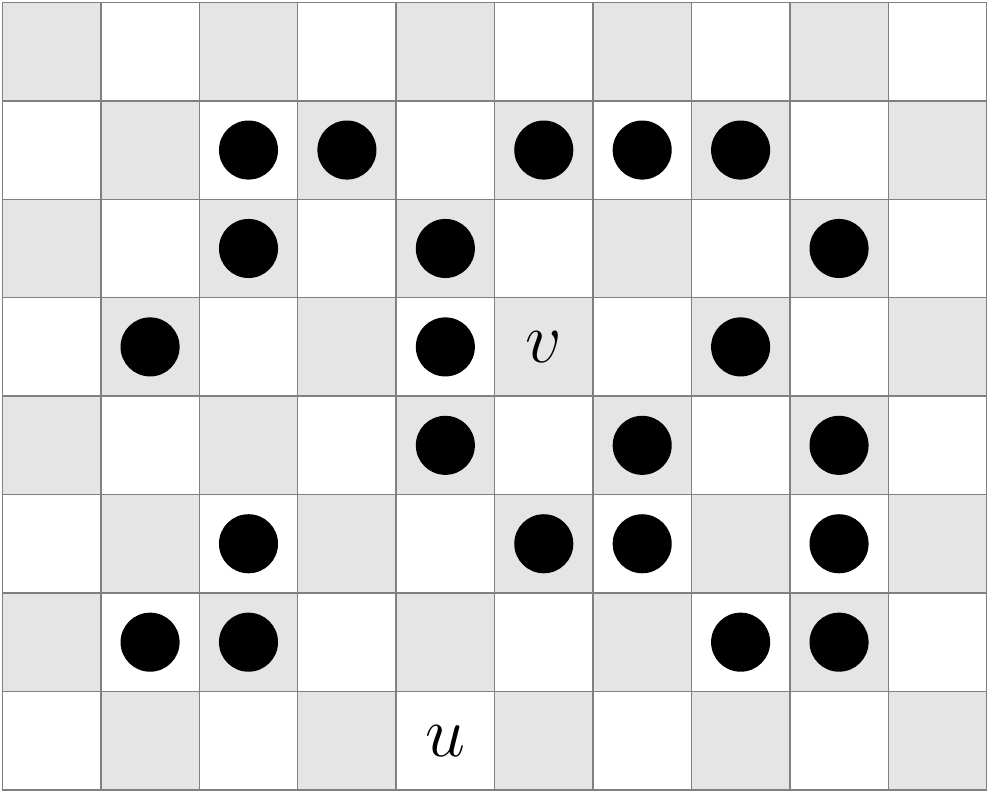}
        \caption{A non-co-connected set $U$.}
        \label{fig:co-connect-1}
    \end{subfigure}%
    \begin{subfigure}{20pt}
        \quad
    \end{subfigure}%
    \begin{subfigure}[t]{.28\textwidth}
        \centering
        \includegraphics[scale=0.45]{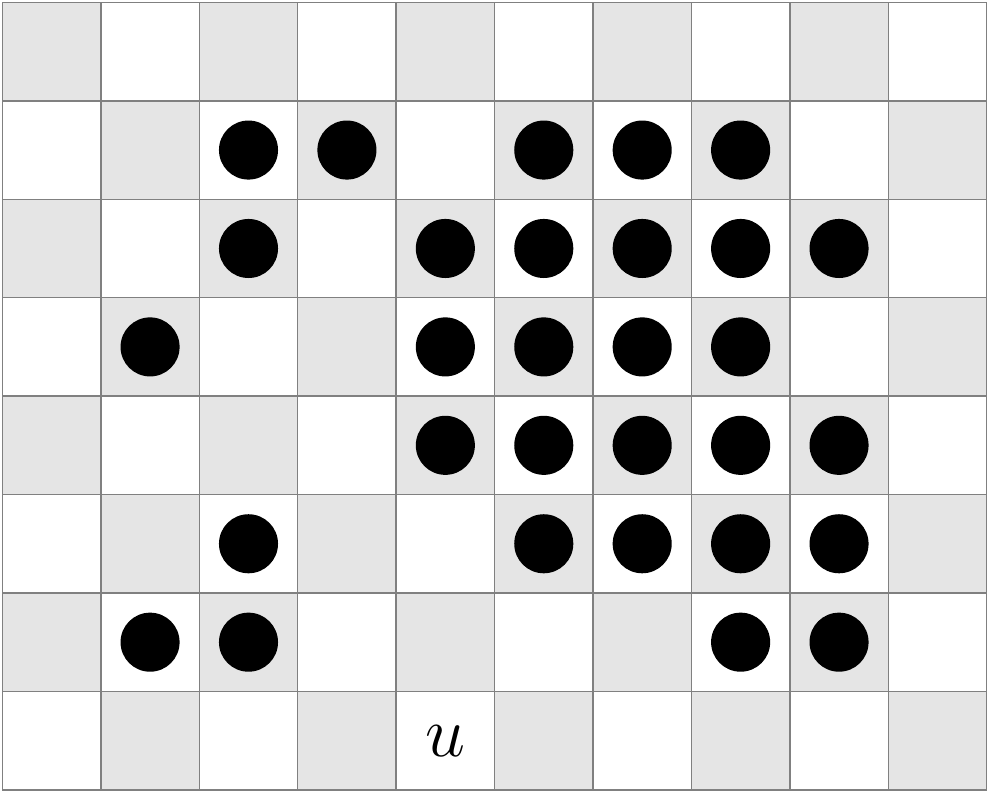}
        \caption{The co-connected closure of $U$ with respect to $u$.}
        \label{fig:co-connect-2}
    \end{subfigure}%
    \begin{subfigure}{20pt}
        \quad
    \end{subfigure}%
    \begin{subfigure}[t]{.28\textwidth}
        \centering
        \includegraphics[scale=0.45]{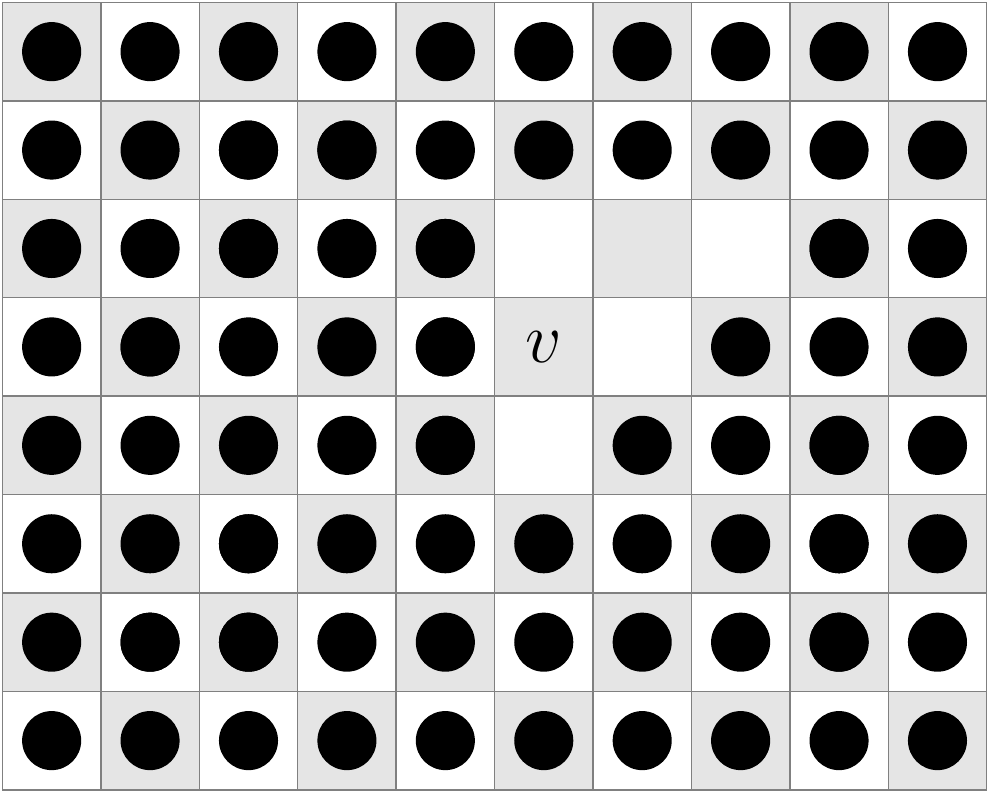}
        \caption{The co-connected closure of $U$ with respect to $v$.}
        \label{fig:co-connect-3}
    \end{subfigure}
    \caption{Co-connected closures.}
    \label{fig:co-connect}
\end{figure}

\begin{lemma}\label{lem:co-connect-properties}
Let $A,B \subset V$ be disjoint and let $A'$ be a co-connected closure of $A$. Then
\begin{enumerate}[\qquad(a)]
\item \label{it:co-connect-reduces-boundary} $\intB A' \subset \intB A$, $\extB A' \subset \extB A$ and $\partial A' \subset \partial A$.
\item \label{it:co-connect-reduces-boundary2} $\intB (B \setminus A') \subset \intB B$.
\item \label{it:co-connected-minus-set-is-co-connected} If $B$ is co-connected then $B \setminus A'$ is also co-connected.
\item \label{it:co-connect-kills-components} 
If $B$ is connected then either $B \subset A'$ or $B \cap A' = \emptyset$.
\end{enumerate}
\end{lemma}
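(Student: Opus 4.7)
The plan is to first unpack the definition of $A'$: write $C$ for the connected component of $V \setminus A$ with respect to which $A'$ is formed, so that $A' = V \setminus C$, and note that the components of $V \setminus A$ are therefore $C$ together with a (possibly empty) family $\{C'_i\}$ of other components, each satisfying $C'_i \subset A'$. The edge case $v \in A$ gives $A' = V$ and makes the entire lemma trivial. The key recurring observation will be that $A \cap B = \emptyset$ forces $B \subset V \setminus A = C \sqcup \bigsqcup_i C'_i$.

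For part~(a), all three inclusions unwind together. If $u \in \intB A'$, then $u \in A'$ has some neighbor $w \in C$; the possibility $u \in V \setminus A$ would place $u$ in the same component of $V \setminus A$ as $w$, namely $C$, contradicting $u \in A'$, so $u \in A$ and $u \in \intB A$. The symmetric argument with $v \in V \setminus A'$ and its neighbor in $A'$ handles $\extB A' \subset \extB A$, and applying both inclusions to the two endpoints of any edge in $\partial A'$ gives $\partial A' \subset \partial A$.

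For part~(b), take $u \in \intB(B \setminus A')$ with offending neighbor $w \in B^c \cup A'$. The case $w \in B^c$ immediately yields $u \in \intB B$. If instead $w \in A' \cap B$, then $w \notin A$ (by disjointness), which places $w$ in some component $C'$ of $V \setminus A$; but $w \in A'$ rules out $C' = C$, while $u \in B \setminus A' \subset C$ together with the edge $uw$ inside $V \setminus A$ forces $u$ and $w$ to lie in the same component, a contradiction.

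The main content is part~(c), from which part~(d) will drop out cheaply. For~(c), I will show that every vertex of $V \setminus (B \setminus A') = A' \cup (V \setminus B)$ is connected inside this set to the (given connected) set $V \setminus B$; the only non-trivial vertices are those in $A' \cap B$. Each such $x$ lies in some $C'_i$, which satisfies $C'_i \subset A'$ and $\extB C'_i \subset A \subset V \setminus B$. Hence $C'_i \cup \extB C'_i$ is connected, is contained in $A' \cup (V \setminus B)$, and meets $V \setminus B$, giving the required path from $x$. Part~(d) then follows since $B \subset V \setminus A$ connected means $B$ is contained in a single component of $V \setminus A$: either $C$ (yielding $B \cap A' = \emptyset$) or some $C'_i \subset A'$ (yielding $B \subset A'$). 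The most delicate step is~(c), where the auxiliary path through $C'_i$ must remain inside $A' \cup (V \setminus B)$; this depends precisely on $A \cap B = \emptyset$ to push $\extB C'_i$ into $V \setminus B$.
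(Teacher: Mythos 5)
Your proof is correct and rests on the same key observation as the paper's: $V\setminus A'$ is the connected component $C$ of $V\setminus A$ containing the reference vertex, while the remaining components $C'_i$ lie in $A'$ with $\extB C'_i\subset A$. The only difference is organizational — the paper proves (d) first and deduces (a), (b) from it and (c) from (a), whereas you argue each part directly from the component decomposition — but the substance is the same.
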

\begin{proof}
We begin by proving part \eqref{it:co-connect-kills-components}.
Note that $(A')^c$ is either empty or is a connected component of $A^c$. Thus, since $B \subset A^c$ is connected, it is either contained in $(A')^c$ or disjoint from it.

Towards showing part~\eqref{it:co-connect-reduces-boundary}, let $v \in A'$ and let $u \notin A'$ be adjacent to $v$. We must show that $v \in A$ and $u \notin A$.
It is immediate that $u \notin A$ since $A \subset A'$.
Furthermore, $v \in A$, since otherwise, applying part~\eqref{it:co-connect-kills-components} to the connected set $\{v,u\}$ would result in a contradiction.

For part~\eqref{it:co-connected-minus-set-is-co-connected}, we must show that $A' \cup B^c$ is connected. Since $B^c$ is connected by assumption, it suffices to show that every connected component of $A'$ intersects $B^c$.
Indeed, by part~\eqref{it:co-connect-reduces-boundary}, every connected component of $A'$ contains an element of $A \subset B^c$.

Finally, we prove part \eqref{it:co-connect-reduces-boundary2}.
Let $B_1,\dots,B_k$ be the connected components of $B$.
By part~\eqref{it:co-connect-kills-components}, for every $1 \le i \le k$, either $B_i \subset A'$ or $B_i \cap A' = \emptyset$.
Thus, for every $1 \le i \le k$, either $\intB (B_i \setminus A') = \emptyset$ or $\intB (B_i \setminus A') = \intB B_i$.
Hence, $\intB (B \setminus A') = \cup_i \intB (B_i \setminus A') \subset \cup_i \intB B_i = \intB B$.
\end{proof}

The following lemma, based on ideas of Tim{\'a}r \cite{timar2013boundary}, establishes the connectivity of the boundary of subsets of $\Z^d$ which are both connected and co-connected.
For $U \subset \Z^d$, denote $\partial_{\ins \out} U := \intB U \cup \extB U$.

\begin{lemma}[{\cite[Proposition~3.1]{feldheim2013rigidity}}] \label{lem:int+ext-boundary-is-connected}
Let $A \subset \Z^d$ be connected and co-connected. Then $\intextB A$ is connected.
\end{lemma}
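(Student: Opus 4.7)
The plan is to adapt Tim\'ar's boundary-connectivity argument~\cite{timar2013boundary} to the present setting.

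First, I reduce the claim to an edge-pair statement. Every vertex of $\intextB A$ is incident to some edge $e \in \partial A$, and both endpoints of any such $e$ are adjacent in $\Z^d$ and both lie in $\intextB A$. Hence it suffices to prove: for any two edges $e_1 = \{u_1,v_1\}$ and $e_2 = \{u_2,v_2\}$ of $\partial A$ with $u_i \in A$ and $v_i \in A^c$, the inner endpoints $u_1, u_2$ lie in the same connected component of $\intextB A$.

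Next, I induct on $K(e_1,e_2) := \dist_A(u_1,u_2) + \dist_{A^c}(v_1,v_2)$, where the distances are measured inside $A$ and $A^c$ respectively; these are finite by connectivity of $A$ and co-connectivity of $A$. The base case $K=0$ is trivial. For the inductive step, fix shortest paths $\pi_A$ in $A$ and $\pi_{A^c}$ in $A^c$ realizing these distances. I handle first the easy case: if some internal vertex $a$ of $\pi_A$ has a neighbor $w \in A^c$, then $a \in \intB A$ and $e' := \{a,w\} \in \partial A$; the pairs $(e_1,e')$ and $(e',e_2)$ have strictly smaller $K$-value, since $a$ is internal to a shortest path in $A$, so applying the inductive hypothesis twice chains $u_1$ to $a$ and $a$ to $u_2$ through $\intextB A$. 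A symmetric argument works when some internal vertex of $\pi_{A^c}$ lies in $\extB A$.

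The remaining case --- and the main obstacle --- is when every internal vertex of $\pi_A$ is strictly interior to $A$ (all $2d$ neighbors in $A$) and every internal vertex of $\pi_{A^c}$ is strictly interior to $A^c$. Here a local geometric surgery is required. The key observation is that under the strict-interior assumption, translates of $\pi_A$ by the unit vector $s = v_1 - u_1$ have nearly all their vertices in $A$, producing an auxiliary edge $e' = \{u'_1, v'_1\} \in \partial A$ at a unit shift of $e_1$: specifically, the vertex $a_1^s$ (the translate of the first interior vertex of $\pi_A$) lies in $\intB A$ because it is a neighbor of $v_1 \in A^c$, while its complementary translate lies in $\extB A$. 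The two-step path $u_1 \to v_1 \to a_1^s$ lies in $\intextB A$ and connects $u_1$ to the inner endpoint of $e'$, and by choosing $e'$ carefully (exploiting both the strict-interior hypothesis and the shortest-path property of $\pi_A, \pi_{A^c}$), one ensures that $K(e', e_2) < K(e_1, e_2)$. The inductive hypothesis then closes the argument. The hard part is carrying out this surgery without losing strict monotonicity of $K$ --- it is precisely here that both the connectivity of $A$ and the connectivity of $A^c$ must enter, as the construction breaks down if either hypothesis is dropped.
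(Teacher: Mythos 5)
You should first be aware that the paper contains no proof of this lemma: it is imported wholesale from \cite{feldheim2013rigidity} (Proposition~3.1), whose argument, following Tim\'ar \cite{timar2013boundary}, is not a distance induction at all. The mechanism there is a parity argument over the cycle space: the $4$-cycles generate the cycle space of $\Z^d$; every $4$-cycle meets $\partial A$ in an even number of edges, and two boundary edges lying in a common $4$-cycle have all their relevant endpoints in $\intextB A$ and connected within it. If the boundary edges split into two classes not linked by any common small cycle, the indicator of one class is orthogonal $\bmod\ 2$ to every generating $4$-cycle, hence to every cycle, contradicting that it has odd intersection with the explicit cycle formed by $e_1$, a path in $A$ from $u_1$ to $u_2$, $e_2$, and a path in $A^c$ from $v_2$ to $v_1$ (these paths exist precisely because $A$ is connected and co-connected). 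No monotone potential is needed. Your proposal is therefore a genuinely different route and must stand on its own.

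It does not, because the strict decrease of $K(e_1,e_2)=\dist(u_1,u_2)_{A}+\dist(v_1,v_2)_{A^c}$ (distances inside $A$ and $A^c$) is never actually established. In your ``easy case'' you take $e'=\{a,w\}$ with $a$ internal to $\pi_A$ and claim both $K(e_1,e')<K(e_1,e_2)$ and $K(e',e_2)<K(e_1,e_2)$ ``since $a$ is internal to a shortest path in $A$''; but this only controls the $A$-summand. Nothing bounds $\dist(v_1,w)_{A^c}$: the vertex $w$ can sit at the bottom of a deep narrow indentation of $A^c$ into $A$, so that every path in $A^c$ from $v_1$ to $w$ is far longer than $\dist(v_1,v_2)_{A^c}$, and $K(e_1,e')$ can exceed $K(e_1,e_2)$ by an arbitrary amount ($A$ and $A^c$ both remain connected in such examples). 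So the two invocations of the inductive hypothesis are unjustified and the induction does not close. The same unresolved issue is the whole content of your ``hard case'': the assertion that $e'$ can be chosen with $K(e',e_2)<K(e_1,e_2)$ is exactly the crux and is only asserted; in the translation construction you sketch, $a_1^s$ is a neighbor of $a_1$, so one only gets $\dist(a_1^s,u_2)_A\le\dist(u_1,u_2)_A$ (equality is possible), and the $A^c$-distance from the new outer endpoint to $v_2$ is again uncontrolled. This failure of local moves to decrease a global distance functional is precisely why the known proofs pass through the cycle-space/parity mechanism; your reduction to pairs of boundary edges and the finiteness of the two distances are fine, but as written the argument is not a proof.
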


\begin{cor}\label{cor:boundary-of-partition-is-connected}
Let $A_1,\dots,A_k \subset \Z^d$ be a partition of $\Z^d$ into co-connected sets. Then $\cup_{i=1}^k \intB A_i$ is connected.
\end{cor}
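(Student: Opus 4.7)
The plan is to reduce Corollary~\ref{cor:boundary-of-partition-is-connected} to Lemma~\ref{lem:int+ext-boundary-is-connected} by decomposing each $A_i$ into its connected components. The key observation is that while each $A_i$ is only assumed co-connected, every connected component of $A_i$ is \emph{both} connected and co-connected, so Lemma~\ref{lem:int+ext-boundary-is-connected} applies directly to it.

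First I would verify the co-connectedness of each such component. Let $C$ be a connected component of some $A_i$; then $C^c = A_i^c \cup (A_i \setminus C)$. By hypothesis $A_i^c$ is connected, and any other connected component $C'$ of $A_i$ satisfies $\extB C' \subset A_i^c$ by maximality (otherwise adjoining a neighbor in $A_i$ would enlarge $C'$). In particular, if $A_i$ has more than one component then $A_i \neq \Z^d$ and $A_i^c$ is non-empty, so $C^c$ is the union of the connected set $A_i^c$ with a family of connected sets each touching $A_i^c$, hence connected. Lemma~\ref{lem:int+ext-boundary-is-connected} then yields that $\intextB C$ is connected.

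Set $B := \bigcup_{i=1}^k \intB A_i$. Next I would check that $\intextB C \subset B$ for every such $C$: clearly $\intB C \subset \intB A_i$, and any $w \in \extB C$ lies in some $A_j$ with $j \neq i$ (since $\extB C \subset A_i^c$) and has a neighbor in $C \subset A_i$, so $w \in \intB A_j \subset B$. To glue the pieces together, I would introduce the auxiliary graph $\Gamma$ whose vertices are the connected components of all the $A_i$'s, with an edge between two components whenever there is an edge of $\Z^d$ with one endpoint in each. Any path in $\Z^d$ between two given vertices traverses a sequence of components whose consecutive entries are either equal or $\Gamma$-adjacent, so $\Gamma$ is connected. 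Whenever components $C$ and $D$ are $\Gamma$-adjacent, a witnessing edge $\{c,d\}$ with $c \in C$, $d \in D$ lies in both $\intextB C$ and $\intextB D$, placing these two connected subsets of $B$ in a common connected component of $B$.

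Combining connectedness of $\Gamma$ with the previous step, all sets $\intextB C$ lie in a single connected component $B_0 \subset B$. Since every $v \in B$ belongs to $\intB A_i$ for some $i$, and hence to $\intB C \subset \intextB C$ for the component $C$ of $A_i$ containing $v$, we get $v \in B_0$; therefore $B = B_0$ is connected. The main obstacle I anticipate is the co-connectedness step for individual components, which hinges on the clean fact that distinct components of $A_i$ only see $A_i^c$ across their external boundary; once that is in hand, the rest is bookkeeping around Lemma~\ref{lem:int+ext-boundary-is-connected} together with the straightforward gluing via $\Gamma$.
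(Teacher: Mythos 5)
Your proof is correct and follows essentially the same route as the paper: decompose the $A_i$ into connected components, note that each component of a co-connected set is itself co-connected, apply Lemma~\ref{lem:int+ext-boundary-is-connected} to each component, and glue the resulting connected sets $\intextB C$ along adjacencies. The only difference is that you spell out in detail the component-co-connectedness fact and the gluing (via the auxiliary graph $\Gamma$), which the paper asserts more briefly.
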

\begin{proof}
Let $B_1,\dots,B_n$ be the collection of connected components of $A_1,\dots,A_k$, and observe that $\cup_i \intB A_i = \cup_i \intextB A_i = \cup_j \intextB B_j$.
Thus, it suffices to show that $\intextB B_j \cup \intextB B_{j'}$ is connected whenever $\dist(B_j,B_{j'}) \le 1$.
Indeed, since every connected component of a co-connected set is co-connected, this follows immediately from Lemma~\ref{lem:int+ext-boundary-is-connected}.
\end{proof}

\subsection{Graph properties}

In this section, we gather some elementary combinatorial facts about graphs.
Here, we fix an arbitrary graph $G=(V,E)$ of maximum degree $\Delta$.

\begin{lemma}\label{lem:sizes}
	Let $A\subset V$ be finite and let $t>0$. Then
	\[ |N_t(A)| \le \frac{\Delta}{t} \cdot |A| .\]
\end{lemma}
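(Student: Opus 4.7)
The plan is to prove this by a straightforward double counting of edges between $A$ and $N_t(A)$.

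First I would count the number of edges in $\partial(A, N_t(A))$ in two ways. On the one hand, each vertex $v \in N_t(A)$ contributes, by definition of $N_t(A)$, at least $t$ edges to $\partial(A, N_t(A))$, since it has at least $t$ neighbors in $A$. Thus
\[ |\partial(A, N_t(A))| \geq t \cdot |N_t(A)|. \]
On the other hand, each vertex $a \in A$ has at most $\Delta$ neighbors in $G$, so it contributes at most $\Delta$ edges to $\partial(A, N_t(A))$, yielding
\[ |\partial(A, N_t(A))| \leq \Delta \cdot |A|. \]
Combining these two bounds and dividing by $t$ gives the desired inequality $|N_t(A)| \leq \frac{\Delta}{t} |A|$.

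There is no real obstacle here; the only minor point to be careful about is whether vertices in $A \cap N_t(A)$ are handled correctly, but the edge-counting argument treats such vertices uniformly — each edge incident to such a vertex is counted from both endpoints exactly as in the general case, and the two inequalities above remain valid. The proof is essentially a one-line application of handshaking.
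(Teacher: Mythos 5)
Your edge-counting version has a genuine (though easily repairable) flaw in the first inequality. From ``each $v \in N_t(A)$ has at least $t$ neighbors in $A$'' you conclude $|\partial(A,N_t(A))| \ge t\,|N_t(A)|$, but $\partial(A,N_t(A))$ is a \emph{set of edges}, and an edge $\{u,v\}$ with both endpoints in $A \cap N_t(A)$ is contributed by both of its endpoints while appearing only once in the set. So the per-vertex contributions can overlap, and the inequality can simply be false: take $G$ a single edge $\{u,v\}$, $A=\{u,v\}$, $t=1$ (so $\Delta=1$). Then $N_t(A)=\{u,v\}$ and $|\partial(A,N_t(A))|=1 < 2 = t\,|N_t(A)|$, even though the lemma itself holds (with equality). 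The same failure occurs for $A=V(K_n)$ with $t=n-1$. Your closing remark that vertices of $A\cap N_t(A)$ are ``handled correctly because each edge is counted from both endpoints'' is exactly where the argument breaks: counting from both endpoints is consistent only if the quantity being bounded is the number of \emph{ordered} pairs (incidences), not the number of edges.

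The fix is to run the double count on ordered pairs, which is what the paper does: let $P:=\{(u,v): u\in A,\ v\in N_t(A),\ u\sim v\}$. Counting by the second coordinate, $|P|=\sum_{v\in N_t(A)}|N(v)\cap A| \ge t\,|N_t(A)|$; counting by the first coordinate, $|P|=\sum_{u\in A}|N(u)\cap N_t(A)| \le \Delta\,|A|$. Since both bounds refer to the same set $P$, the conclusion $|N_t(A)| \le \frac{\Delta}{t}|A|$ follows, with no issue about vertices lying in both $A$ and $N_t(A)$. (Alternatively, one can keep your edge formulation at the cost of a factor $2$ in the lower bound, but then the stated constant is lost, so the incidence count is the right move.)
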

\begin{proof}
	This follows from a simple double counting argument.
	\[ t |N_t(A)|
		\le \sum_{v \in N_t(A)} |N(v) \cap A|
		= \sum_{u \in A} \sum_{v \in N_t(A)} \1_{N(u)}(v)
		= \sum_{u \in A} |N(u) \cap N_t(A)|
		\le \Delta |A| . \qedhere \]
\end{proof}

\begin{lemma}\label{lem:sizes2}
Let $W \subset U \subset V$ and let $r>t>0$. Assume that $W \subset N_r(V)$ and $W \cap N_t(U) = \emptyset$.
Then $|\partial(W,U)| \le \frac{t}{r-t} |\partial(W,U^c)|$.
\end{lemma}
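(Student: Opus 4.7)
The plan is to establish a pointwise inequality at each vertex of $W$ and then sum, using a direct double-counting identification of the two sides with edge-boundaries.

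For each $v \in W$, set $a(v) := |N(v) \cap U|$ and $b(v) := |N(v) \cap U^c|$. The hypothesis $W \subset N_r(V)$ gives $a(v) + b(v) = |N(v)| \ge r$, while $W \cap N_t(U) = \emptyset$ gives $a(v) < t$. The key algebraic step is the pointwise inequality
\[
(r - t)\, a(v) \;\le\; t\, b(v) \qquad \text{for every } v \in W.
\]
This is verified in one line: on the one hand $(r-t)\, a(v) \le (r-t)\cdot t = tr - t^2$, and on the other hand $t\, b(v) \ge t(r - a(v)) \ge t(r - t) = tr - t^2$.

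Summing over $v \in W$ yields
\[
(r-t) \sum_{v \in W} a(v) \;\le\; t \sum_{v \in W} b(v).
\]
It remains to identify these two sums with edge-boundaries. Since $W \cap U^c = \emptyset$, every edge from $W$ to $U^c$ is counted exactly once in $\sum_v b(v)$, so $\sum_{v \in W} b(v) = |\partial(W, U^c)|$. For the other sum, since $W \subset U$, an edge contributing to $\partial(W, U)$ either lies inside $W$ (contributing twice to $\sum_v a(v)$) or runs between $W$ and $U \setminus W$ (contributing once); in either case we get $\sum_{v \in W} a(v) \ge |\partial(W, U)|$. Plugging these into the displayed inequality gives $(r - t)\,|\partial(W, U)| \le t\,|\partial(W, U^c)|$, which rearranges to the claim.

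There is no real obstacle: this is a short double-counting argument, and the only subtle point is to notice that $\sum_{v \in W} a(v)$ overcounts edges inside $W$ by a factor of two but this only helps the inequality. I would write this proof as essentially a single displayed computation followed by the summation remark.
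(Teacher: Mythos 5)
Your proof is correct and is essentially the paper's argument: both use the per-vertex bounds $|N(v)\cap U|\le t$ and $|N(v)\cap U^c|\ge r-t$ together with the double-counting identification of the sums with $|\partial(W,U)|$ and $|\partial(W,U^c)|$. The paper simply routes both bounds through $|W|$ (namely $|\partial(W,U)|\le t|W|$ and $|\partial(W,U^c)|\ge (r-t)|W|$) rather than stating the pointwise inequality, which is an inessential difference.
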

\begin{proof}
On the one hand, $|\partial(W,U)| \le \sum_{w \in W} |N(w) \cap U| \le t |W|$, and on the other hand, $|\partial(W,U^c)| = \sum_{w \in W} |N(w) \cap U^c| = \sum_{w \in W} (|N(w)| - |N(w) \cap U|) \ge (r-t)|W|$.
\end{proof}

The next lemma follows from a classical result by Lov{\'a}sz~\cite[Corollary~2]{lovasz1975ratio}.

\begin{lemma}\label{lem:existence-of-covering2}
Let $A \subset V$ be finite and let $t \ge 1$. Then there exists a set $B \subset A$ of size $|B| \le \frac{1+\log \Delta}{t} |A|$ such that $N_t(A) \subset N(B)$.
\end{lemma}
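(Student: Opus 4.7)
The plan is to reformulate the statement as a set cover problem so that Lovász's ratio bound applies directly. Take $U := N_t(A)$ as the universe to be covered, and for each $a \in A$ consider the set $T_a := N(a) \cap N_t(A)$. A vertex $v \in N_t(A)$ lies in $N(B)$ iff some $b \in B$ satisfies $v \in N(b)$, i.e.\ iff $v \in T_b$. Hence the condition $N_t(A) \subset N(B)$ is exactly the condition that $\{T_b\}_{b \in B}$ covers $U$, and the task reduces to producing a small covering subfamily of $\{T_a\}_{a \in A}$.

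Next I would verify that the resulting cover system has the two quantitative features needed by Lovász's theorem. On one side, since $G$ has maximum degree $\Delta$, every set $T_a$ has cardinality at most $\Delta$. On the other side, the definition of $N_t(A)$ says precisely that every $v \in U$ satisfies $|N(v) \cap A| \ge t$, which means $v$ belongs to at least $t$ of the sets $T_a$. Therefore assigning weight $1/t$ to each $T_a$ (with $a \in A$) produces a fractional cover of $U$ of total weight $|A|/t$, so the fractional cover number $\tau^*$ of this system satisfies $\tau^* \le |A|/t$.

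Finally I would invoke Lovász's result (the corollary cited in the statement): for any set cover in which every set has size at most $d$, the integer cover number $\tau$ is bounded by $(1+\ln d)\,\tau^*$. Applied with $d = \Delta$, this yields an integer cover $B \subset A$ of cardinality at most $(1+\log \Delta)\cdot|A|/t$, which is exactly the claimed bound.

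The only real work is the translation step, since the quantitative content is handed to us by the cited theorem. The one place to be careful is the fractional cover estimate: the bound $\tau^* \le |A|/t$ must be extracted from the covering multiplicity condition ``every $v$ is in $\ge t$ sets'' rather than from the set sizes, and it is this feature (enforced by the very definition of $N_t(A)$) that drives the factor $1/t$ in the final bound.
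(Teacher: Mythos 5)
Your proof is correct and takes essentially the same route as the paper: the paper's justification for this lemma is simply the citation of Lov\'asz's integral-versus-fractional cover bound, and your reduction (universe $N_t(A)$, sets $N(a)\cap N_t(A)$ for $a\in A$, fractional weights $1/t$ giving $\tau^*\le |A|/t$, then $\tau\le(1+\log\Delta)\tau^*$) is exactly the intended translation.
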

%

The following standard lemma gives a bound on the number of connected subsets of a graph.
\begin{lemma}[{\cite[Chapter~45]{Bol06}}]\label{lem:number-of-connected-graphs}
The number of connected subsets of $V$ of size $k+1$ which contain a given vertex is at most $(e(\Delta-1))^k$.
\end{lemma}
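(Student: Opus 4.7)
The plan is to encode each connected subset by a rooted spanning tree and then encode the spanning tree by a depth-first-search walk in $G$, reducing the enumeration to a walk counting problem. Fix a vertex $v$ and let $S \ni v$ be a connected subset of size $k+1$. Pick, via some deterministic tie-breaking rule (e.g.\ a fixed total ordering on the edges of $G$), a canonical spanning tree $T_S$ of the subgraph induced on $S$, rooted at $v$; the map $S \mapsto T_S$ is then injective, so it suffices to bound the number of possible $T_S$. A depth-first traversal of $T_S$ (with children visited in some canonical order) produces a closed walk $v = w_0, w_1, \ldots, w_{2k} = v$ in $G$ that traverses each edge of $T_S$ exactly twice, once descending and once ascending, and from which $T_S$ can be read off.

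Such a walk is determined by two pieces of data: a balanced sequence of $k$ ``down'' and $k$ ``up'' moves (with the usual requirement that at every prefix the count of ups does not exceed that of downs), and, at each down move, the choice of child among the at most $\Delta - 1$ neighbors in $G$ of the current vertex that are distinct from its parent in the walk. The number of balanced sequences is the Catalan number $C_k \le 4^k/(k+1)$, and there are at most $\Delta(\Delta - 1)^{k-1}$ ways to pick the children, giving the crude bound $C_k \cdot \Delta(\Delta - 1)^{k-1}$, i.e.\ $(4(\Delta-1))^k$ up to lower-order factors.

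The hard part --- and the reason the statement cites Bollob\'as rather than giving a one-line proof --- is sharpening the base from $4$ down to $e$. This is done by replacing the raw Catalan bound on all balanced sequences with a Fuss--Catalan-type enumeration of rooted plane trees in which every non-root vertex has at most $\Delta - 1$ children, and then applying the binomial inequality $\binom{n}{k} \le (en/k)^k$ to the resulting closed-form count. The saving of a factor essentially $4/e$ in the base reflects that most balanced sequences do not correspond to actual embedded trees once one tracks the degree constraint properly. Since this refined estimate is classical and already packaged in the cited reference, I would invoke it directly rather than reproducing the Fuss--Catalan computation, particularly because nothing later in the paper requires the base to be smaller than some absolute constant multiple of $\Delta$.
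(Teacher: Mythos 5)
The paper offers no proof of this lemma at all---it is quoted verbatim from the cited reference---so there is nothing to compare step by step, and your proposal is in fact more informative than the paper's treatment. Your encoding is sound: fixing a canonical spanning tree makes $S \mapsto T_S$ injective, the depth-first closed walk of length $2k$ determines $T_S$, and the walk itself is determined by its Dyck shape together with the neighbor chosen at each descent (at most $\Delta$ choices at the root and $\Delta-1$ elsewhere), giving at most the $k$-th Catalan number times $\Delta(\Delta-1)^{k-1}$, i.e.\ a bound of the form $C(4(\Delta-1))^k$. Your remark that this weaker base already suffices is also accurate: the lemma is used only in the proof of Lemma~\ref{lem:family-of-level-1-approx}, where the ambient graph is $(\Z^d)^{\otimes 5}$ of maximum degree at most $Cd^5$ and the resulting constant is swallowed into $C$, so any bound $(C'\Delta)^k$ would serve. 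For the sharp base $e$, the route you indicate (inject rooted subtrees of $G$ into subtrees of the infinite rooted tree in which the root has $\Delta$ children and every other vertex at most $\Delta-1$, count the latter by the Fuss--Catalan number, and bound it via $\binom{Dn}{n}\le (eD)^n$) is indeed the classical argument, and deferring to the reference for it mirrors exactly what the paper does. The one small quibble is your heuristic for where the saving from $4$ to $e$ comes from: it is not that most balanced sequences fail to embed in $G$, but that the product (Dyck shape)$\times$(independent neighbor choices) overcounts the degree-constrained plane trees, which the Fuss--Catalan formula counts exactly; since you invoke the reference for this step anyway, the point is immaterial to correctness.
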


For a positive integer $r$, we denote by $G^{\otimes r}$ the graph on $V$ in which two vertices are adjacent if their distance in $G$ is at most $r$.
The next simple lemma was first introduced by Sapozhenko~\cite{sapozhenko1987onthen}.

\begin{lemma}[{\cite[Lemma 2.1]{sapozhenko1987onthen}}]\label{lem:r-connected-sets}
Let $S,T \subset V$ and let $a,b$ be positive integers.
Assume that $S$ is connected in $G^{\otimes a}$, $\dist(s,T) \le b$ for all $s \in S$ and $\dist(S,t) \le b$ for all $t \in T$. Then $T$ is connected in $G^{\otimes (a+2b)}$.
\end{lemma}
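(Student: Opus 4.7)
The plan is to prove connectivity of $T$ in $G^{\otimes(a+2b)}$ directly by lifting an arbitrary path in $S$ (under $G^{\otimes a}$) to a nearby path in $T$, using the two distance-$b$ hypotheses as a ``shadowing'' device and the triangle inequality to bound the new step lengths.

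Fix any two vertices $t, t' \in T$; the goal is to exhibit a sequence in $T$ starting at $t$ and ending at $t'$ whose consecutive distances in $G$ are at most $a+2b$. First I would invoke the hypothesis $\dist(S,t) \le b$ (applied to each of $t$ and $t'$) to pick $s, s' \in S$ with $\dist(s,t) \le b$ and $\dist(s',t') \le b$. Next, using that $S$ is connected in $G^{\otimes a}$, I would choose a sequence $s = u_0, u_1, \dots, u_n = s'$ of vertices of $S$ with $\dist(u_i, u_{i+1}) \le a$ for every $i$.

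Then I would shadow this sequence by a sequence in $T$: using $\dist(s, T) \le b$ for each $s \in S$, pick for every intermediate index $i$ a vertex $v_i \in T$ with $\dist(u_i, v_i) \le b$, while setting $v_0 := t$ and $v_n := t'$ (both satisfy the distance bound to the corresponding $u_i$ by the choice of $s,s'$). The triangle inequality then gives
\[
\dist(v_i, v_{i+1}) \le \dist(v_i, u_i) + \dist(u_i, u_{i+1}) + \dist(u_{i+1}, v_{i+1}) \le b + a + b = a + 2b,
\]
so $v_0, v_1, \dots, v_n$ is a walk in $T$ whose consecutive pairs are adjacent in $G^{\otimes(a+2b)}$, witnessing that $t$ and $t'$ lie in the same connected component of $T$ in $G^{\otimes(a+2b)}$. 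Since $t, t' \in T$ were arbitrary, this concludes the argument.

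There is essentially no obstacle: the lemma is a pure triangle-inequality statement once one thinks of the shadowing construction. The only thing to be slightly careful about is to anchor the endpoints of the shadow sequence at the prescribed vertices $t$ and $t'$ rather than at arbitrary $v_0, v_n \in T$, which is why I explicitly set $v_0 := t$ and $v_n := t'$ above; every other vertex of the shadow sequence may be chosen freely subject to the distance-$b$ constraint.
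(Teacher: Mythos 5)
Your argument is correct, and it is the standard proof of this fact: shadow a $G^{\otimes a}$-path in $S$ by nearby vertices of $T$, anchor the two endpoints at the prescribed $t,t'$, and apply the triangle inequality to get steps of length at most $b+a+b$. The paper itself gives no proof here --- it cites the lemma from Sapozhenko --- so there is nothing to compare beyond noting that your triangle-inequality/shadowing argument is exactly the intended one (with the degenerate case $T=\emptyset$ being trivial, and $S=\emptyset$ forcing $T=\emptyset$ via the hypothesis $\dist(S,t)\le b$).
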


%
\section{The breakup and the proof of the main theorem}
\label{sec:breakup+high-level-proof}
%

In this section, we prove Theorem~\ref{thm:main}, formalizing the ideas presented in Section~\ref{sec:outline}.
In the course of the proof, we state two key lemmas, Lemma~\ref{lem:family-of-FA} and Lemma~\ref{lem:prob-of-approx-enhanced}, whose proofs comprise most of the remaining paper.

Let $(\Lambda,\tau)$ be even-0 boundary conditions, i.e., $\Lambda$ is an odd domain and $\tau^{-1}(0)=\Even$.
Denote $\Col := \cC_\Lambda^\tau$ and let $f \in \Col$. Our first aim is to identify the different phases within $f$.
To this end, we label each vertex $v \in \Z^d$ with one of four labels $\kappa(v) \in \{0,1,2,3\}$.
\begin{equation}\label{eq:def-K-coloring}
\kappa(v) := \begin{cases}
	0 		&\text{if $f(v)=0$ and $v$ is even}\\
	3 		&\text{if $f(v)=0$ and $v$ is odd}\\
	f(v) 	&\text{if $f(v) \neq 0$ and $v$ is odd} \\
	3-f(v)	&\text{if $f(v) \neq 0$ and $v$ is even}
\end{cases} .
\end{equation}
Recall the definition of the co-connected closure of a set with respect to a vertex from Section~\ref{sec:co-connected-sets}.
The co-connected closure of a set $U \subset \Lambda$ with respect to infinity is its co-connected closure with respect to any vertex in $\Lambda^c$ (this does not depend on the vertex).
Let $\rho \in \Z^d$ be an arbitrary vertex, which is henceforth fixed throughout the proof.
Let $K'_0$ be the co-connected closure of $\kappa^{-1}(0)$ with respect to $\rho$.
Let $K'_3$ be the co-connected closure of $\kappa^{-1}(3) \setminus K'_0$ with respect to infinity, let $K'_2$ be the co-connected closure of $\kappa^{-1}(2) \setminus (K'_0 \cup K'_3)$ with respect to infinity and let $K'_1$ be the co-connected closure of $\kappa^{-1}(1) \setminus (K'_0 \cup K'_3 \cup K'_2)$ with respect to infinity.
Finally, define the \emph{breakup of $f$ around $\rho$} to be $K(f) = K(f,\rho)$, where $K(f) := (K_0(f), K_1(f), K_2(f), K_3(f))$ and $K_1(f) := K'_1$, $K_2(f) := K'_2 \setminus K'_1$, $K_3(f) := K'_3 \setminus (K'_1 \cup K'_2)$ and $K_0(f) := K'_0 \setminus (K'_1 \cup K'_2 \cup K'_3)$.
See Figure~\reffig{fig:proof-illustration} for an illustration of these definitions.
Observe that, by definition,
\begin{equation}\label{eq:K-0-is-co-finite}
\Z^d \setminus K_0(f) \subset \Lambda
\end{equation}
and
\begin{equation}\label{eq:K-0-does-not-contain-rho}
\rho \in K_0(f) \quad\iff\quad K_0(f) = \Z^d \quad\iff\quad  \rho \text{ is even and } f(\rho)=0 .
\end{equation}
We say that the breakup is \emph{trivial} if $|\Z^d \setminus K_0(f)| \le 1$ and $K_3(f)=\emptyset$.
By~\eqref{eq:K-0-does-not-contain-rho}, we have
\begin{equation}\label{eq:origin-non-trivial-four-section}
\begin{aligned}
 &\text{For even $\rho$:}\quad
 	&&K(f) \text{ is trivial}
 \quad\iff\quad
 	K_0(f) = \Z^d
 \quad\iff\quad
 f(\rho)=0
 \\
 &\text{For odd $\rho$:}\quad
 	&&K(f) \text{ is trivial}
 \quad\iff\quad
 	\substack{K_0(f)=\Z^d\setminus\{\rho\},\\ K_1(f) \cup K_2(f) = \{\rho\}}
 \quad\iff\quad
 \substack{
 f(\rho) \neq 0,\\ f|_{N(\rho)}=0}
\end{aligned}
\end{equation}
Thus, we can show parts~(a) and~(b) of Theorem~\ref{thm:main} by providing suitable bounds on the probability that the breakup of $f$ is non-trivial.

%
%

\begin{figure}
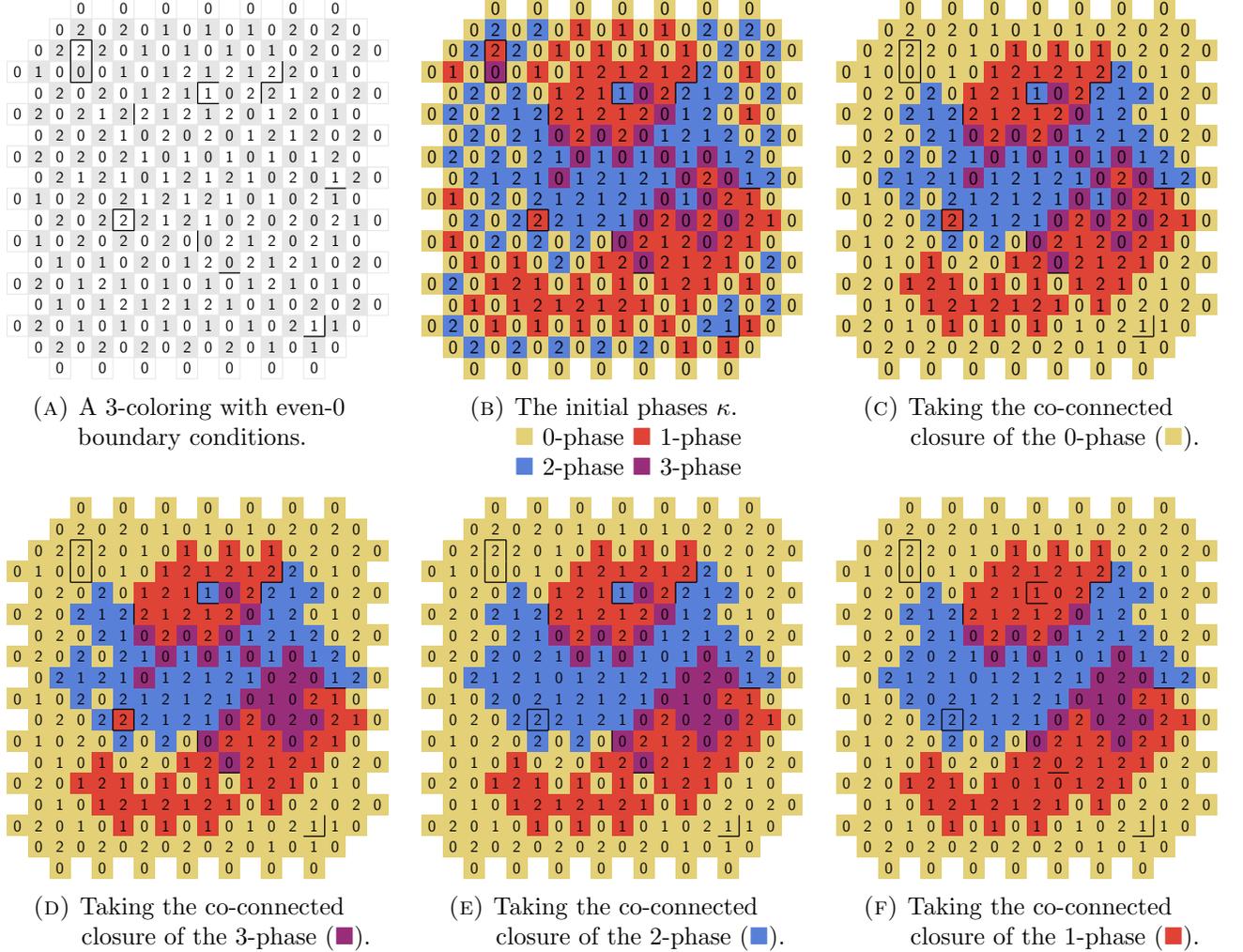

	\captionsetup{width=0.89\textwidth}
	\captionsetup[subfigure]{justification=centering}
	\begin{subfigure}[t]{.3\textwidth}
		\includegraphics[scale=0.3]{coloring-sample3a\printable.pdf}
		\caption{A $3$-coloring with even-$0$ \phantom{.}boundary conditions.}
		\label{fig:proof-illustration-1}
	\end{subfigure}%
	\begin{subfigure}{20pt}
		\quad
	\end{subfigure}%
	\begin{subfigure}[t]{.3\textwidth}
		\includegraphics[scale=0.3]{coloring-sample3b\printable.pdf}
		\caption{The initial phases $\kappa$.\\
		\qquad{\color{fsc0\grayscale}$\blacksquare$} 0-phase
		{\color{fsc1\grayscale}$\blacksquare$} 1-phase \\
		\qquad{\color{fsc2\grayscale}$\blacksquare$} 2-phase
		{\color{fsc3\grayscale}$\blacksquare$} 3-phase}
		\label{fig:proof-illustration-2}
	\end{subfigure}%
	\begin{subfigure}{20pt}
		\quad
	\end{subfigure}%
	\begin{subfigure}[t]{.3\textwidth}
		\includegraphics[scale=0.3]{coloring-sample3c\printable.pdf}
		\caption{Taking the co-connected\\ \quad\qquad closure of the $0$-phase ({\color{fsc0\grayscale}$\blacksquare$}).}
		\label{fig:proof-illustration-3}
	\end{subfigure}
	\medbreak
	\begin{subfigure}[t]{.3\textwidth}
		\includegraphics[scale=0.3]{coloring-sample3d\printable.pdf}
		\caption{Taking the co-connected\\ \quad\qquad closure of the $3$-phase ({\color{fsc3\grayscale}$\blacksquare$}).}
		\label{fig:proof-illustration-4}
	\end{subfigure}%
	\begin{subfigure}{20pt}
		\quad
	\end{subfigure}%
	\begin{subfigure}[t]{.3\textwidth}
		\includegraphics[scale=0.3]{coloring-sample3e\printable.pdf}
		\caption{Taking the co-connected\\ \quad\qquad closure of the $2$-phase ({\color{fsc2\grayscale}$\blacksquare$}).}
		\label{fig:proof-illustration-5}
	\end{subfigure}%
	\begin{subfigure}{20pt}
		\quad
	\end{subfigure}%
	\begin{subfigure}[t]{.3\textwidth}
		\includegraphics[scale=0.3]{coloring-sample3f\printable.pdf}
		\caption{Taking the co-connected\\ \quad\qquad closure of the $1$-phase ({\color{fsc1\grayscale}$\blacksquare$}).}
		\label{fig:proof-illustration-6}
	\end{subfigure}
	\caption{ The breakup of a coloring, determined by a 5-step procedure. The numbers represent the coloring (identical in all figures) and bold lines represent improper edges. Initially, in~\textsc{(b)}, each vertex $v$ is assigned a phase $\kappa(v)$ according to its parity and color, as in~\eqref{eq:def-K-coloring}. In every subsequent step, the $i$-phase is expanded to its co-connected closure $K'_i$. This is done first for $i=0$ and then for $i=3$, $i=2$ and $i=1$ (in that order).
		Thus, in~(\textsc{\subref{fig:proof-illustration-3}}) for example, the vertices of $K'_0$ are assigned the $0$-phase, while the remaining vertices retain their previously assigned phases from~(\textsc{\subref{fig:proof-illustration-2}}). The phases assigned in the last step~(\textsc{\subref{fig:proof-illustration-6}}) constitute the breakup. The breakup has the property that the color of every vertex on the boundary of some phase is determined by the parity and phase of the vertex (see Definition~\ref{def:adapted-fs}a). In fact, the same holds in each step.}
	\label{fig:proof-illustration}
\end{figure}

The next definition captures several geometric properties of the breakup which are relevant for the construction of four-approximations.

\begin{definition}[four-section]
\label{def:four-section}
A tuple $K=(K_0,K_1,K_2,K_3)$ of disjoint subsets of $\Z^d$ is called a {\em four-section} if the following holds:
\begin{enumerate}[\qquad(a)]
\item \label{it:def-FS-1} $K_0 \cup K_1 \cup K_2 \cup K_3 = \Z^d$.
\item \label{it:def-FS-3} $K_0$ is even and $K_3$ is odd.
\item \label{it:def-FS-5} If $v \in \Z^d$ satisfies $N(v) \subset K_i$ for some $i \in \{0,1,2,3\}$ then $v \in K_i \cup K_{3-i}$.
\end{enumerate}
\end{definition}

Note that this definition is symmetric with respect to $K_1$ and $K_2$, and that the roles of $K_0$ and $K_3$ are symmetric up to a change of parity.
For a four-section $K$, we define its (edge-)boundary $\partial K$ and its vertex-boundary $\intB K$ to be
\begin{align*}
\partial K &:= \partial K_0 \cup \partial K_1 \cup \partial K_2 \cup \partial K_3 ,\\
\intB K &:= \intB K_0 \cup \intB K_1 \cup \intB K_2 \cup \intB K_3 .
\end{align*}
An analogous definition of $\extB K$ is redundant as it would coincide with $\intB K$.
We say that a four-section $K$ is \emph{connected} if $\intB K$ is connected.

As will be explained in Section~\ref{sec:shift}, our methods do not rely on the precise definition of the breakup of $f$, but rather only on its geometric properties (given in the definition above) and on a handful of its properties with respect to $f$. The latter are captured in the following definition.

\begin{definition}[adapted four-section]
\label{def:adapted-fs}
A four-section $K$ is said to be {\em adapted} to a coloring $f$ if the following holds:
\begin{enumerate}[\qquad(a)]
\item[(a)] $f(v)=0 \quad\text{for all } v \in \intB K_0 \cup \intB K_3$,
\item[] $f(v)=1 \quad\text{for all } v \in (\Odd \cap \intB K_1) \cup (\Even \cap \intB K_2)$,
\item[] $f(v)=2 \quad\text{for all } v \in (\Even \cap \intB K_1) \cup (\Odd \cap \intB K_2)$.
\item[(b)] For every $v \in \intB K_1 \cup \intB K_2$ there exists $u \sim v$ such that $f(u) \neq 0$.
\end{enumerate}
\end{definition}

\begin{lemma}\label{lem:K-is-a-four-section}
Let $f \in \Col$ and assume that the breakup $K(f)$ is non-trivial. Then $K(f)$ is a connected four-section adapted to $f$.
\end{lemma}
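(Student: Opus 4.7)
The plan is to extract the three claimed properties — the four-section axioms, the connectedness of $\intB K$, and adaptedness — from two basic observations about the co-connected closures $K'_0, K'_3, K'_2, K'_1$, combined with the order in which they are constructed. Abbreviate the seeds as $T_0 := \kappa^{-1}(0)$, $T_3 := \kappa^{-1}(3) \setminus K'_0$, $T_2 := \kappa^{-1}(2) \setminus (K'_0 \cup K'_3)$, and $T_1 := \kappa^{-1}(1) \setminus (K'_0 \cup K'_3 \cup K'_2)$. First, since $T_0 \subset \Even$ and $T_3 \subset \Odd$, bipartiteness of $\Z^d$ forces $K'_0$ to be even and $K'_3$ to be odd — a wrong-parity boundary vertex would otherwise land in the same component of the seed's complement as the reference point, contradicting its membership in the closure. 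Second, each $K'_i$ satisfies both the \emph{absorbing} property $N(v) \subset K'_i \Rightarrow v \in K'_i$ and the \emph{seed-boundary} property $\intB K'_i \subset T_i$, by the standard co-connected closure argument (if $v \notin K'_i$, then a neighbor on the path to the reference point is in $K'_i \setminus T_i$, which is enclosed, a contradiction).

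For the four-section axioms, disjointness and covering of $\Z^d$ are immediate, and $K_0$ even and $K_3$ odd follow from the preliminaries after ruling out — by the seed disjointness $T_j \cap K'_0 = \emptyset$ for $j \geq 1$ together with the absorbing property — that a wrong-parity boundary vertex could lie in $K'_0 \cap (K'_1 \cup K'_2 \cup K'_3)$. For axiom (c), $N(v) \subset K_i \Rightarrow v \in K_i \cup K_{3-i}$: the cases $i \in \{1,2\}$ follow directly from the absorbing property (plus one extra enclosure step to peel $K_2$ off $K_1$), while for $i \in \{0, 3\}$ one splits on the parity of $v$, handling the ``expected'' parity by evenness/oddness and the other by the same seed/enclosure dichotomy. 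The one edge case is $v = \rho \in T_1 \cup T_2$ with all neighbors in $K_0$, which forces $N(\rho) \subset \kappa^{-1}(0)$; that is precisely the trivial-breakup configuration excluded by~\eqref{eq:origin-non-trivial-four-section}.

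For connectedness, by Corollary~\ref{cor:boundary-of-partition-is-connected} it suffices to show each $K_i$ is co-connected. Each $K'_i$ is co-connected by construction, and because each seed $T_j$ ($j \geq 1$) is disjoint from $K'_0$ and from the closures preceding it in the construction order, Lemma~\ref{lem:co-connect-properties}(c) can be iterated to strip closures one at a time — starting from $B = K'_0$ for $K_0$, and with analogous (shorter) chains for $K_1, K_2, K_3$. For adaptedness~(a), the potential difficulty is a neighbor $u \notin K_i$ of $v \in \intB K_i$ that happens to lie in $K'_i \cap K'_j$; but $u \in K'_i \cap K'_j$ with $T_j \cap K'_i = \emptyset$ forces $u$ to be enclosed, so $v$, adjacent to $u$ in $T_j^c$, lies in the same enclosed component as $u$, giving $v \in K'_j$ — contradicting $v \in K_i$. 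Hence $u$ is genuinely outside $K'_i$, so $v \in \intB K'_i \subset T_i$ and $f(v)$ is pinned as claimed. For (b), any $v \in \intB K_j$ with $j \in \{1,2\}$ lies in $T_j \subset (K'_3)^c$, so not all neighbors can lie in $T_3$ (else $v$ would be isolated in $T_3^c$, hence in $K'_3$, contradicting $v \in T_j$); any such non-$T_3$ neighbor either has $f \neq 0$ outright or would have to sit in $K'_0 \cap \kappa^{-1}(3)$, which bipartiteness and the evenness of $K'_0$ forbid.

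I expect the main obstacle to lie not in any single step but in the systematic bookkeeping of this ``seed vs.\ enclosure'' dichotomy through the many case splits required in axiom (c) and in adaptedness, together with the single but essential invocation of the non-triviality hypothesis at $v = \rho$.
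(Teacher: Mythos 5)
Most of your argument is sound, and your handling of the four-section axiom \refFSisolated{} is genuinely different from the paper's: you run a direct seed-versus-enclosure case analysis (with the $v=\rho$ edge case killed by~\eqref{eq:origin-non-trivial-four-section}), whereas the paper deduces \refFSisolated{} from the co-connectedness of the $K_l$ --- which you prove anyway, exactly as the paper does, via iterated use of Lemma~\ref{lem:co-connect-properties}\ref{it:co-connected-minus-set-is-co-connected} --- together with the observation that non-triviality and the finiteness of $K_1,K_2,K_3$ prevent any $\Z^d\setminus K_l$ from being a singleton. Your route costs more case-checking but is self-contained; the paper's is shorter once co-connectedness is in hand. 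Your proofs of the partition/parity axioms, of connectivity via Corollary~\ref{cor:boundary-of-partition-is-connected}, and of adaptedness~\refAFSvalues{} are essentially the paper's, with Lemma~\ref{lem:co-connect-properties}\ref{it:co-connect-reduces-boundary}--\ref{it:co-connect-reduces-boundary2} replaced by your explicit ``seed-boundary'' and enclosure arguments.

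There is, however, a genuine gap in your proof of adaptedness~\refAFSneighbor{}: as written it only treats the case where $v\in\intB K_1\cup\intB K_2$ is \emph{even}. Your dichotomy for a zero-colored neighbor $u$ --- either $u\in T_3$, or $u\in K'_0\cap\kappa^{-1}(3)$ (which is then impossible, since such a $u$ would be an odd vertex of $\intB K'_0$) --- presupposes that $u$ is odd. If $v$ is odd, every neighbor is even, so a neighbor with $f(u)=0$ lies in $\kappa^{-1}(0)\subset K'_0$, and nothing is contradicted by the evenness of $K'_0$: an even neighbor may perfectly well lie in $\intB K'_0$. In this case no single neighbor produces a contradiction; one must assume all neighbors satisfy $f=0$, note that then $N(v)\subset\kappa^{-1}(0)$ while $v\in T_i$ forces $v\notin K'_0$, so the component of $v$ in $\Z^d\setminus\kappa^{-1}(0)$ is $\{v\}$, whence $v=\rho$; and then $f(\rho)\neq0$ with $f|_{N(\rho)}\equiv0$ makes the breakup trivial by~\eqref{eq:origin-non-trivial-four-section}, contradicting the hypothesis. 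So the non-triviality assumption is needed a \emph{second} time, precisely in property~(b) --- this is where the paper invokes it in its own proof ("since $K$ is non-trivial and $f(v)\neq0$") --- and your closing claim of a single invocation at the axiom-(c) edge case is not accurate. The repair uses only tools you already deploy, but the odd-$v$ case of~(b) is unproven as written.
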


\begin{proof}
Let $f \in \Col$ have a non-trivial breakup and denote $K := K(f)$ and $K_i := K_i(f)$.

We begin by showing that $K$ is a four-section.
The fact that $K_0,K_1,K_2,K_3$ are pairwise disjoint follows directly from the definition of the breakup.
To see \refFSpartition{}, observe that $K_0 \cup K_1 \cup K_2 \cup K_3 = K'_0 \cup K'_1 \cup K'_2 \cup K'_3$, and that $K'_1 \cup K'_2 \supset \kappa^{-1}(\{1,2\}) \setminus (K'_0 \cup K'_3)$, $K'_1 \cup K'_2 \cup K'_3 \supset \kappa^{-1}(\{1,2,3\}) \setminus K'_0$ and $K'_0 \cup K'_1 \cup K'_2 \cup K'_3 \supset \kappa^{-1}(\{0,1,2,3\})$.

In light of~\eqref{eq:def-K-coloring}, \refFSodd{} will follow if we show that
\begin{equation}\label{eq:boundary-of-K}
\intB K_i \subset \intB K'_i \subset \kappa^{-1}(i).
\end{equation}
The fact that $\intB K_i \subset \intB K'_i$ follows from
Lemma~\ref{lem:co-connect-properties}\ref{it:co-connect-reduces-boundary2} (applied iteratively) and the definitions of $K_i$ and $K'_i$. Lemma~\ref{lem:co-connect-properties}\ref{it:co-connect-reduces-boundary} implies that $\intB K'_i \subset \kappa^{-1}(i)$.

Recalling that $K$ is non-trivial, it is easy to check that \refFSisolated{} holds when $|\Z^d \setminus K_0| \le 1$.
Thus, to establish~\refFSisolated{}, it suffices to show that, for any $i \in \{0,1,2,3\}$, $\Z^d \setminus K_i$ does not contain isolated vertices when $|\Z^d \setminus K_0|>1$.
Since $|\Z^d \setminus K_0|>1$ and $K_1$, $K_2$, $K_3$ are finite by~\eqref{eq:K-0-is-co-finite}, we have that $\Z^d \setminus K_i$ is not a singleton for $i \in \{0,1,2,3\}$.
Hence, it suffices to show that
\begin{equation}\label{eq:K-is-co-connected}
K_i \text{ is co-connected}, \quad i \in \{0,1,2,3\} .
\end{equation}
The set $K_1$ is co-connected by definition.
The fact that $K_2$ is co-connected follows directly from  Lemma~\ref{lem:co-connect-properties}\ref{it:co-connected-minus-set-is-co-connected}.
Similarly, to see that $K_3$ and $K_0$ are co-connected, note that $K_3 = (K'_3 \setminus K'_2) \setminus K'_1$ and $K_0 = ((K'_0 \setminus K'_3) \setminus K'_2) \setminus K'_1$, and apply Lemma~\ref{lem:co-connect-properties}\ref{it:co-connected-minus-set-is-co-connected} iteratively.

The fact that $K$ is a connected four-section now follows from~\eqref{eq:K-is-co-connected} and Corollary~\ref{cor:boundary-of-partition-is-connected}.

It remains to show that $K$ is adapted to $f$.
\refAFSvalues{} follows directly from~\eqref{eq:boundary-of-K} and~\eqref{eq:def-K-coloring}.
For \refAFSneighbor{}, let $v \in \Z^d$ satisfy $N(v) \subset f^{-1}(0)$.
By the first part, it suffices to show that $v^+ := N(v) \cup \{v\}$ is contained in either $K_1 \cup K_2$ or $K_0 \cup K_3$ when $f(v) \neq 0$.
To this end, we first show that $v^+$ is contained in $K'_0 \cup K'_3$.
Indeed, if $v$ is odd then $N(v) \subset \kappa^{-1}(0)$, and so $v^+ \subset K'_0$, by the definition of $K'_0$ and since $K$ is non-trivial and $f(v) \neq 0$.
Otherwise, $v$ is even so that $N(v) \subset \kappa^{-1}(3)$. Thus, if $N(v) \cap K'_0 = \emptyset$ then $v^+ \subset K'_3$.
On the other hand, if $N(v) \cap K'_0 \neq \emptyset$ then~\eqref{eq:def-K-coloring} and~\eqref{eq:boundary-of-K} imply that $v \in K'_0$, and hence, $v^+ \subset K'_0 \cup K'_3$.
Recalling the definitions of $K'_1$ and $K'_2$, Lemma~\ref{lem:co-connect-properties}\ref{it:co-connect-kills-components} now implies that, for $i \in \{1,2\}$, if $v^+ \cap K'_i \neq \emptyset$ then $v^+ \subset K'_i$.
In particular, if $v^+ \cap (K'_1 \cup K'_2) \neq \emptyset$ then $v^+ \subset K'_1 \cup K'_2$.
Thus, either $v^+ \subset K'_1 \cup K'_2 = K_1 \cup K_2$ or $v^+ \subset (K'_0 \cup K'_3) \setminus (K'_1 \cup K'_2) = K_0 \cup K_3$.
\end{proof}

For a four-section $K$, we write $K_{12} := K_1 \cup K_2$ and $K_{123} := K_1 \cup K_2 \cup K_3$ for short.
Recall that $K(f)=K(f,\rho)$ is the breakup of $f$ around the fixed vertex $\rho$.
Denote
\[ \FS_\rho := \big\{ K\text{ four-section} ~:~ \rho \in K_{123},~K_{123}\text{ is finite},~ K\text{ is connected} \big\} ,\]
and note that, by Lemma~\ref{lem:K-is-a-four-section}, \eqref{eq:K-0-does-not-contain-rho} and~\eqref{eq:K-0-is-co-finite}, we have
\begin{equation}\label{eq:breakup-in-FS-rho}
K(f)\text{ is non-trivial} \quad\implies\quad K(f) \in \FS_{\rho} .
\end{equation}
Thus, our goal is to show that the event $\{ K(f) \in \FS_{\rho} \}$ is unlikely.
Observe that if $K$ is adapted to $f$ then
\begin{equation}\label{eq:good-bad-boundaries-of-K}
\begin{aligned}
\partial K_{12} &= \partial(K_1 \cup K_2, K_0 \cup K_3) \subset \{ \{u,v\} \in E(\Z^d) : f(u) \neq f(v) \} ,\\
\partial K \setminus \partial K_{12} &= \partial(K_1,K_2) \cup \partial(K_0,K_3) \subset \{ \{u,v\} \in E(\Z^d) : f(u)=f(v) \} .
\end{aligned}
\end{equation}
We call $\partial K_{12}$ and $\partial  K \setminus \partial K_{12}$ the {\em regular} and {\em singular boundary} of $K$, respectively, and denote the {\em singularities} in $K$ by
\[ K^{\anomaly} := \{ v \in \Z^d ~:~ \partial v \cap \partial K \setminus \partial K_{12} \neq \emptyset \} \subset \intB K .\]
For integers $L \ge 0$ and $M \geq 0$, denote
\[ \FS_{L,M} := \big\{ K\text{ four-section} ~:~ |\partial K_{12}|=L,~|\partial K \setminus \partial K_{12}| = M,~ 0<|K_{123}|<\infty \big\} .\]
Let $K$ be a four-section such that $K_{123}$ is finite and non-empty.
Observe that \refFSodd{} implies that $K_{123}$ is odd.
Hence, by Corollary~\ref{cor:isoperimetry}\ref{it:isoperimetry-regular-odd}, if $\Even \cap K_{123} \neq \emptyset$, then $|\partial K| \ge |\partial K_{123}| \ge d^2$.
On the other hand, if $K_{123} \subset \Odd$, then by \refFSisolated{}, $|\partial K \setminus \partial K_{12}| = |\partial K_{123}| = 2d |K_{123}| \ge 2d$.
Therefore,
\begin{equation}\label{eq:L+M-is-large}
\FS_{L,M} \text{ is non-empty} \quad\implies\quad L+M \ge d^2 \quad\text{or}\quad M \ge 2d .
\end{equation}

Our goal now is to bound the probability that $K(f)$ belongs to $\FS_{L,M,\rho} := \FS_{L,M} \cap \FS_{\rho}$.
To do so, we require the notion of a four-approximation.
As was explained in the outline, a four-approximation is a means to record information about a four-section.
For organizational purposes and as the definition is quite involved, it is provided in the next section; see Definition~\ref{def:four-approx}.
We divide the core of our proof into two lemmas.
The first shows that a small set of four-approximations suffices to approximate every four-section in $\FS_{L,M,\rho}$.

For integers $L,M \ge 0$ and a four-approximation $A$, denote
\[ \FS_{L,M}(A) := \big\{ K \in \FS_{L,M} ~:~ A \text{ is a four-approximation of } K \big\} .\]

\begin{lemma}\label{lem:family-of-FA}
	For any integers $L,M \ge 0$, there exists a family $\cA$ of four-approximations of size
	\[ |\cA| \le \exp\left( C L d^{-3/2} \log^{3/2} d + CM \log d \right) \]
	such that
	$\FS_{L,M,\rho} \subset \cup_{A \in \cA} \FS_{L,M}(A)$.
\end{lemma}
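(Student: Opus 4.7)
My plan is to construct $\cA$ via a hierarchical coarsening scheme, in the spirit of Sapozhenko-type counting arguments used in earlier works on rigidity of the hard-core and $3$-coloring models. A four-approximation should record only the most essential features of a four-section $K$: a sparse outline of the bulk of $\intB K_{12}$, the positions of the singularities, and just enough local data to drive the transformation of Lemma~\ref{lem:prob-of-approx-enhanced}. Many four-sections will collapse onto the same four-approximation, so $|\cA|$ stays small.

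Fix $K\in\FS_{L,M,\rho}$. By \refFSpartition{} and Corollary~\ref{cor:boundary-of-partition-is-connected}, $\intB K$ is connected. Because $K_{123}$ is odd by \refFSodd{}, Lemma~\ref{lem:odd-set-boundary-size} gives $|\intB^s K_{123}|=|\partial K_{123}|/(2d)=O((L+M)/d)$ for any unit vector $s$, and analogously the directed boundary of $K_{12}$ has size $O(L/d)$. I would then apply Lemma~\ref{lem:existence-of-covering2} to this directed boundary at a threshold $t\asymp\sqrt{d\log d}$, producing a subset of size $O(Ld^{-3/2}\sqrt{\log d})$ which covers it through high-degree connections. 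By Lemma~\ref{lem:r-connected-sets} this subset inherits connectivity in some $G^{\otimes r}$ for bounded $r$, and Lemma~\ref{lem:number-of-connected-graphs} then bounds the number of such outlines by $\exp(CLd^{-3/2}\log^{3/2}d)$. Each of the $M$ singularities is then recorded individually: given the outline, a singularity must lie in a $d^{O(1)}$-sized local window, costing $O(\log d)$ bits, for a total contribution of $\exp(CM\log d)$. Any further local color/parity data required to complete the definition of a four-approximation (Definition~\ref{def:four-approx}) multiplies the count only by a factor absorbed into the above bound.

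The main obstacle will be justifying the high-threshold covering step: vertices on the directed boundary of $K_{12}$ must typically have at least $t$ neighbors on the same boundary, so that Lemma~\ref{lem:existence-of-covering2} indeed yields a sparse set of the claimed size. This relies crucially on the odd-set structure of $K_3$ and $K_{123}$ together with \refFSisolated{}: oddness forces the boundary of $K_{12}$ to ``thicken'' locally, so a boundary vertex has many near-neighbors also on the boundary, and the no-isolated-vertex condition rules out the pathological configurations that would spoil the covering. Carrying this out cleanly seems to require organizing the proof through a tower of successively finer approximations (as hinted by the macros \refLAAseparate{} through \refLDAanomalyOneTwo{} in the preamble), with each level refining the previous one while preserving the counting budget and keeping the contributions from $L$ and $M$ decoupled; the delicate part is ensuring that the encoding of singularities does not smuggle in additional $d^{-3/2}$-scale information about the regular boundary, which would break the separation of the two terms in the final bound.
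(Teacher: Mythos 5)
Your overall strategy (a sparse connected outline of the boundary plus the singularities, enumerated via connected-subgraph counting, then refined in stages) is indeed the skeleton of the paper's proof, but two of your key steps have genuine gaps. First, the covering step rests on the premise that a vertex on the directed boundary of $K_{12}$ typically has at least $t\asymp\sqrt{d\log d}$ neighbours on that boundary; oddness does not give this, and it is false in general --- a vertex of $\intB K_{12}$ may carry a single regular boundary edge. What is actually available is an \emph{edge} statement (Lemma~\ref{lem:four-cycle-property}, Corollary~\ref{cor:boundary-edge-sees-much-boundary}): the two endpoints of each edge of $\partial K_{12}$ together see at least $2d$ boundary edges, so one endpoint is revealed or a singularity (Corollary~\ref{cor:revealed-and-anomalies-separate}); and even a revealed vertex need not be covered by a high-threshold Lov\'asz set, which is why the proof of Lemma~\ref{lem:existence-of-U} needs the further four-cycle case analysis (the sets $A$, $A'_l$, $A''_{l,m}$, $B'''$ and the dichotomy $G,G',G'',G'''$) and why the separating set must include $N_{d/18}(K^{\anomaly})$. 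Relatedly, your claim that each singularity lies in a $d^{O(1)}$-sized window ``given the outline'' is unjustified: stretches of $\intB K$ may be covered only by $N_{d/18}(K^{\anomaly})$, with no outline vertex nearby. The paper avoids this by making $W=K^{\anomaly}$ part of the approximation and enumerating $U\cup K^{\anomaly}$ as a single $(\Z^d)^{\otimes 5}$-connected set (Lemma~\ref{lem:family-of-level-1-approx}); that is where the $CM\log d$ term really comes from.

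Second, and more seriously, what your construction produces is only (a candidate for) the paper's level-$1$-approximation, whereas the lemma requires four-approximations in the sense of Definition~\ref{def:four-approx}: almost every vertex near $\partial K$ must be assigned a definite phase, the unknown sets $\DilemmaOdd{i}{j}\cup\DilemmaEven{i}{j}$ must induce maximum degree at most $\sqrt d$ (\refFA{13}, which is what later permits the choice of shift direction in~\eqref{eq:choice-of-shift-direction}), and the neighbourhood identities \refFA{12} and \refFA{11} must hold, since these drive the reconstruction in the flow argument of Lemma~\ref{lem:prob-of-approx-enhanced}. Your remark that the remaining data ``multiplies the count only by a factor absorbed into the above bound'' is exactly the assertion that needs proof, and it is the bulk of Section~\ref{sec:approx}: passing from a separating set to per-vertex labels costs an enumeration over large components and over $U\cup W\cup N_{d/18}(W)$ (Lemma~\ref{lem:family-of-level-2-approx}); the $\sqrt d$-tightness is produced by the semi-odd refinement machinery (Lemma~\ref{lem:approx-level2}, Corollary~\ref{cor:approx-algorithm}), whose multiplicity must be checked against the $\exp(CLd^{-3/2}\log^{3/2}d+CM\log d)$ budget; and properties \refFA{9}--\refFA{11} are only obtained after the deterministic exhaustion step (Lemmas~\ref{lem:algorithm-for-level-4-approx}, \ref{lem:level-4-properties}, \ref{lem:level-4-approx-gives-FA}). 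Without these steps the family you build is too coarse to be used in Lemma~\ref{lem:prob-of-approx-enhanced}, so the proposal does not yet prove the statement.
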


The second lemma bounds the probability that the breakup is approximated by a given four-approximation.
In fact, our proof yields a bound not only for the breakup $K(f)$ defined above, but for more general objects, namely, adapted four-sections.
For a coloring $f \in \Col$, denote by $\FS(f)$ the set of four-sections which are adapted to $f$. In particular, Lemma~\ref{lem:K-is-a-four-section} implies that
\begin{equation}\label{eq:breakup-is-adapted}
K(f)\text{ is non-trivial} \quad\implies\quad K(f) \in \FS(f).
\end{equation}

\begin{lemma}\label{lem:prob-of-approx-enhanced}
Let $\beta>0$ and let $f \sim \mu^\tau_{\Lambda,\beta}$.
For any integers $L,M \ge 0$ and any four-approximation $A$, we have
\[ \Pr\big(\FS(f) \cap \FS_{L,M}(A) \neq \emptyset \big) \le \exp(-c L/d - \beta M + CM) .\]
Moreover, for any adjacent vertices $u,v \in \Lambda$, we have
\[ \Pr\big(\FS(f) \cap \FS_{L,0}(A) \neq \emptyset \textnormal{ and } f(u)=f(v) \big) \le e^{-c L/d} \cdot \Pr\big( f(u+e)=f(v+e)\textnormal{ for some }e \in 0^+ \big) .\]
\end{lemma}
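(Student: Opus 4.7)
The plan is to prove the lemma by constructing an energy-reducing, entropy-increasing transformation ${\sf T}_K$ and converting it into a probabilistic flow indexed by $A$, as outlined in Section~\ref{sec:outline}.

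For a pair $(f,K)$ with $K \in \FS(f) \cap \FS_{L,M}(A)$, fix a unit vector $s \in \Z^d$ (the ``down'' direction) and produce $g = {\sf T}_K(f)$ by: (i) leaving $f$ unchanged on $K_0$; (ii) applying the cyclic permutation $0 \to 1 \to 2 \to 0$ on $K_3$; (iii) applying the transposition $1 \leftrightarrow 2$ on $K_2$; and (iv) on $K_{12}$, shifting $f$ by one step in the down direction and applying the cyclic permutation $0 \to 2 \to 1 \to 0$. Using the boundary-color identities of Definition~\ref{def:adapted-fs}a, a case-by-case check shows that (ii)--(iv) turn every singular-boundary edge of $K$ (of which there are $M$) into a proper edge, while the rest of the coloring remains proper, except that each vertex of $\intB^s K_{12}$ is surrounded by a single color under $g$ and may thus freely take either of the two remaining colors. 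This produces $2^{|\intB^s K_{12}|}$ legitimate images, each with weight $\mu^\tau_{\Lambda,\beta}(g) = e^{\beta M}\mu^\tau_{\Lambda,\beta}(f)$. Since $|\partial K_{12}| = \sum_{s'} |\intB^{s'} K_{12}|$, the down direction can be chosen with $|\intB^s K_{12}| \ge L/(2d)$.

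The main obstacle is that several pairs $(f,K)$ with witnesses $K \in \FS_{L,M}(A) \cap \FS(f)$ may produce the same image $g$, so the raw one-to-many map alone cannot beat a union bound over four-sections. I would therefore promote ${\sf T}_K$ to a \emph{biased} probabilistic flow: sample $f \sim \mu^\tau_{\Lambda,\beta}$, on the event $\FS(f) \cap \FS_{L,M}(A) \neq \emptyset$ select a canonical witness $K$, and sample $g$ from a distribution on the $2^{|\intB^s K_{12}|}$ images biased toward color choices at the free vertices that render the breakup locally recoverable from $(g, A)$. The heart of the argument is then the verification that, for every target $g_0$,
\[
\sum_{f_0} \Pr(f = f_0, g = g_0) \le \exp(CM)\cdot \mu^\tau_{\Lambda,\beta}(g_0),
\]
where the $e^{CM}$ overhead absorbs the cost of encoding the $M$ singular-boundary edges of $K$ within a short list of candidates provided by $A$. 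Summing over $g_0$ and combining the entropy gain $2^{|\intB^s K_{12}|} \ge 2^{L/(2d)}$ with the energy gain $e^{\beta M}$ yields the first claimed inequality. Designing the bias and verifying this uniform preimage bound is the hardest step; it is what distinguishes the argument from a naive union bound and relies on $A$ determining $K$ up to local perturbations that the bias renders detectable in $g$ with constant probability.

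For the ``Moreover'' part ($M=0$), the singular boundary of $K$ is empty, so by~\eqref{eq:good-bad-boundaries-of-K} any improper edge $\{u,v\}$ of $f$ lies entirely within one of the four phases $K_0, K_1, K_2, K_3$. Inspecting the transformation on each phase, the image $g$ retains this as an improper edge (possibly translated by a unit vector): on $K_0$ the coloring is unchanged, on $K_2$ and $K_3$ the permutations are bijections on colors and therefore preserve improperness, and on $K_{12}$ the shift translates the improper edge by $s$. In every case $g(u+e)=g(v+e)$ for some $e \in 0^+$. Rerunning the same flow with this additional restriction on $g_0$ yields the factor $\Pr(f(u+e)=f(v+e)\text{ for some }e \in 0^+)$ on the right-hand side and establishes the second bound.
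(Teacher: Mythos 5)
Your proposal reproduces the transformation and the general shape of the argument (flip the colors on $K_2$ and $K_3$ to erase the $M$ singular edges, shift on $K_{12}$ to free up $\intB^{\down}K_{12}$, then run a biased flow and bound the mass entering each target $g_0$), which is indeed the paper's route. But the proof of this lemma \emph{is} the step you defer: you write that ``designing the bias and verifying this uniform preimage bound is the hardest step'' and leave it at that, so what you have is essentially the outline of Section~\ref{sec:outline} rather than a proof. The missing content is concrete: (i) the specific bias, defined via the partition of $\intB^{\down}K_{12}$ into the sets $B^1_1,B^0_1,B_0$ with weights $1/2,1/4,3/4$ keyed to whether $g=2$ at the shifted vertex; (ii) the structural fact that a preimage $K$ is reconstructible from the pair $(D_{\ins ij},D_{\outtwo ij})$ and that this pair is a \emph{minimal vertex-cover} of $(\DilemmaOdd ij,\DilemmaEvenTwo ij)$ (Lemma~\ref{lem:D-is-minimal-cover}), which is what lets one sum over all possible preimages via the inequality $\sum_{(U_\ins,U_\out)\in\MC}p^{|U_\ins|}(1-p)^{|U_\out|}\le 1$ (Lemma~\ref{lem:sum-over-minimal-covers}); and (iii) the per-pair estimate of Lemma~\ref{lem:flow-bound} matching the bias to these cover weights. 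Without (i)--(iii) the claim that the bias ``renders the breakup detectable in $g$ with constant probability'' is an assertion, not an argument, and it is exactly the point where a naive attempt collapses back to the union bound.

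Two further inaccuracies matter. First, your per-target inequality $\sum_{f_0}\Pr(f=f_0,\,g=g_0)\le e^{CM}\mu^\tau_{\Lambda,\beta}(g_0)$ cannot be ``combined afterwards'' with the entropy and energy gains: summing it over $g_0$ only gives $e^{CM}$, so the factors $2^{-cL/d}$ and $e^{-\beta M}$ must already sit inside the per-target bound, as in~\eqref{eq:flow-in}. Second, your choice of $\down$ (maximizing $|\intB^{\down}K_{12}|$, which may depend on $K$) is not what the argument needs: the direction must be fixed from the approximation $A$ alone so that $\sum_{i,j}|\DilemmaOdd ij\cap(\DilemmaEven ij)^{\down}|\le L/d^{3/2}$, as in~\eqref{eq:choice-of-shift-direction}; this uses the $\sqrt d$-tightness property~\refFA{13} and is essential in Lemma~\ref{lem:flow-bound}, while the lower bound $|\intB^{\down}K_{12}|\ge L/2d-M$ then holds for \emph{every} direction (Lemma~\ref{cl:size-of-boundary}). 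Relatedly, you cannot recolor all of $\intB^{\down}K_{12}$ freely at uniform energy gain --- adjacent free vertices (which occur only near singularities) would create new improper edges --- so the modification must be restricted to a maximal independent set as in the definition of $\cH_K$; this costs only constants, but your count $2^{|\intB^{s}K_{12}|}$ with exact weight $e^{\beta M}\mu^\tau_{\Lambda,\beta}(f)$ is not literally correct. Your treatment of the ``Moreover'' part is consistent with Lemma~\ref{lem:existence-of-transformation}\ref{it:transformation-anomalies}, but it too rests on the unproven main flow bound.
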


Note that when $\beta \ge C \log d$, the coefficient of $M$ in Lemma~\ref{lem:prob-of-approx-enhanced} outweighs the corresponding term in Lemma~\ref{lem:family-of-FA}. Similarly, when the dimension $d$ is large enough, the same holds for the coefficient of $L$. We now use this to prove the main theorem, showing that when $d$ and $\beta$ sufficiently large, a non-trivial breakup is unlikely.

\begin{proof}[Proof of Theorem~\ref{thm:main}]
	Let $(\Lambda,\tau)$ be even-0 boundary conditions and let $f \sim \mu^\tau_{\Lambda,\beta}$.
	We assume throughout the proof that $d$ is large enough for our arguments to hold.
	Let $L \ge 0$ and $M \ge 0$ be integers, and let $\cA$ be a family of four-approximations obtained by applying Lemma~\ref{lem:family-of-FA}.
	By the union bound and Lemma~\ref{lem:prob-of-approx-enhanced},
	\begin{align}
	\Pr(\FS(f) \cap \FS_{L,M,\rho} \neq \emptyset)
	 &
	 \le \sum_{A \in \cA} \Pr(\FS(f) \cap \FS_{L,M}(A) \neq \emptyset) \notag \\
	 &\le \exp\left( C L d^{-3/2} \log^{3/2} d + CM \log d \right) \cdot \exp(- c L/d - \beta M + CM) \notag \\
	 &\le \exp(- c' L/d - \beta' M) , \label{eq:prob-of-breakup-with-LM}
	\end{align}
	where $\beta' := \beta - C' \log d$.
	Therefore, denoting
	$\FS_{\ge L, \ge M,\rho} := \bigcup_{\ell \ge L, m \ge M} \FS_{\ell,m,\rho}$,
	and assuming that $\beta' \ge 1$ (using $\beta \ge C \log d$), we have
	\begin{equation}\label{eq:prob-of-breakup-with-LM-or-more}
	\Pr(\FS(f) \cap \FS_{\ge L, \ge M,\rho} \neq \emptyset) \le \sum_{\ell=L}^\infty \sum_{m=M}^\infty e^{-c' \ell/d - \beta' m}
	\le Cd \cdot e^{-c'L/d - \beta'M} .
	\end{equation}
	Thus, by~\eqref{eq:breakup-in-FS-rho}, \eqref{eq:L+M-is-large} and~\eqref{eq:breakup-is-adapted}, we obtain
	\begin{align}
	\Pr(K(f)\text{ is non-trivial})
	 &= \Pr(K(f) \in \FS_{\ge 0, \ge 0,\rho}) \notag \\
	 &\le \Pr(K(f) \in \FS_{\ge d^2-2d, \ge 0,\rho}) + \Pr(K(f) \in \FS_{\ge 0, \ge 2d, \rho}) \notag \\
	 &\le Cd \cdot e^{-c'd} + Cd \cdot e^{-2\beta' d} \le e^{-cd} . \label{eq:prob-of-non-trivial-breakup}
	\end{align}
	Let $u \in \Lambda$ be even and let $v \in \Lambda$ be odd.
	Then, by~\eqref{eq:origin-non-trivial-four-section},
	\[ \Pr(f(u) \neq 0) \le \Pr(K(f,u)\text{ is non-trivial}) \le e^{-cd} . \]
	The same bound easily follows for the event $\{f(v)=0\}$ as well. To obtain the stronger bound as in the statement of the theorem, we show that, on the event $\{f(v)=0\}$, the breakup must be larger than in general.
	Namely, denoting $K := K(f,v)$, $L := |\partial K_{12}|$ and $M := |\partial K \setminus \partial K_{12}|$, we improve the general bound in~\eqref{eq:L+M-is-large} by showing that either $L \ge d^3/2$ or $M \ge 3d/2$.
	To this end, assume that $f(v)=0$ and denote $S := N(v) \cap K_{123}$ and $s := |S|$.
	Then, since $K_{123}$ is odd by \refFSodd{},
	\[ |K_{123}| \ge |S^+| \ge |S^+ \setminus \{v\}| \ge 2ds - \binom{s}{2} \ge s(2d-s/2) .\]
	Thus, Lemma~\ref{lem:isoperimetry} implies that if $s \ge d/2$ then
	\[ L+M \ge |\partial K_{123}| \ge 2d \cdot (7d^2/8)^{1-1/d} \ge d^3 .\]
	On the other hand, if $s \le d/2$ then
	\[ M \ge |\partial(K_0,K_3)| \ge |N(v) \setminus S| = 2d - s \ge 3d/2 ,\]
	where for the first inequality we used~\eqref{eq:good-bad-boundaries-of-K}, and for the second inequality we used the fact that $v \in K_3$ by~\eqref{eq:K-0-does-not-contain-rho} and \refAFSvalues{}.
	Therefore, assuming that $\beta' \ge 2\beta/3$ (using again $\beta \ge C \log d$), we have by~\eqref{eq:origin-non-trivial-four-section},
	\begin{align*}
	\Pr(f(v) = 0)
	&= \Pr(f(v) = 0 \text{ and } K(f,v)\text{ is non-trivial}) \\
	&\le \Pr(K(f,v) \in \FS_{\ge d^3/2, \ge 0,v}) + \Pr(K(f,v) \in \FS_{\ge 0, \ge 3d/2,v}) \\
	&\le Cd \cdot e^{-c'd^2} + Cd \cdot e^{-3\beta'd/2} \le e^{-cd^2} + e^{-\beta d} .
	\end{align*}

It remains to show part~(c) of the theorem.
Assume that $u$ and $v$ are adjacent and define the events $E:=\{f(u)=f(v)\}$ and $F := \{K^{\anomaly}(f,v) = \emptyset\}$.
We bound separately the probabilities of the events $E \setminus F$ and $E \cap F$.
For the first event, using~\eqref{eq:origin-non-trivial-four-section} and repeating a computation similar to the one in~\eqref{eq:prob-of-non-trivial-breakup} (in particular, using~\eqref{eq:breakup-in-FS-rho}, \eqref{eq:L+M-is-large}, \eqref{eq:breakup-is-adapted} and~\eqref{eq:prob-of-breakup-with-LM-or-more}), we obtain
\begin{align*}
\Pr(E \setminus F)
 &\le \Pr(K(f,v) \text{ is non-trivial and } K^{\anomaly}(f,v) \neq \emptyset) \\
 &= \Pr(K(f,v) \in \FS_{\ge 0, \ge 1,v}) \\
 &\le \Pr(K(f,v) \in \FS_{\ge d^2-2d, \ge 1,v}) + \Pr(K(f) \in \FS_{\ge 0, \ge 2d, v}) \\
 &\le Cd \cdot e^{-c'd - (\beta- C \log d)} + Cd \cdot e^{-2d(\beta- C \log d)} \le e^{-cd-\beta} .
\end{align*}
For the second event, denoting $E' := \cup_{e \in N(0)} \{ f(u+e)=f(v+e) \}$, we have by Lemma~\ref{lem:prob-of-approx-enhanced} that
\[ \Pr\left(E \cap F \cap \{\FS(f) \cap \FS_{L,0}(A) \neq \emptyset \} \right) \le e^{-c L/d} \cdot \Pr(E \cup E') .\]
Repeating computations similar to those in~\eqref{eq:prob-of-breakup-with-LM} and~\eqref{eq:prob-of-non-trivial-breakup}, we obtain
\[ \Pr(E \cap F) \le e^{-cd} \cdot \Pr(E \cup E') .\]
Thus,
\[ \Pr(E) = \Pr(E \setminus F) + \Pr(E \cap F) \le e^{-cd-\beta} + e^{-cd} \cdot (\Pr(E) + \Pr(E')) .\]
Therefore, we have $\alpha := \sup_{u \sim v} \Pr(f(u)=f(v)) \le e^{-cd-\beta} + e^{-cd} \cdot (2d+1)\alpha$.
Hence,
\[ \Pr(f(u)=f(v)) \le \alpha \le \frac{e^{-cd-\beta}}{1 - (2d+1)e^{-cd}} \le e^{-c'd-\beta} . \qedhere \]
\end{proof}

%
\section{Transformation and flow}
\label{sec:transformation}
%

This section is dedicated to the proof of Lemma~\ref{lem:prob-of-approx-enhanced}. That is, our goal is to prove an upper bound on the probability that there exists an adapted four-section which has regular boundary of size $L$, has singular boundary of size $M$ and is approximated by a particular four-approximation $A$.
Given a four-section $K$, we first define a transformation ${\sf T}_K$ mapping every coloring to which $K$ is adapted to many distinct colorings with fewer singularities.
This is made precise in Lemma~\ref{lem:existence-of-transformation}.
The existence of such a transformation implies that any given breakup is unlikely.
Next, we provide the formal definition of a four-approximation.
In order to bound the probability that the breakup is approximated by a given four-approximation $A$, we then introduce the technique of \emph{flows}, outlined in the introduction and stated precisely in Lemma~\ref{lem:flow}.
The particular flow used in our setting is subsequently defined, and the bounds obtained in Lemma~\ref{lem:flow-bounds} then yield Lemma~\ref{lem:prob-of-approx-enhanced}.
The proofs of Lemma~\ref{lem:existence-of-transformation} and Lemma~\ref{lem:flow-bounds} are given in Section~\ref{sec:shift} and Section~\ref{sec:flow}, respectively.
Throughout this section, we view the set of colors $\{0,1,2\}$ taken by functions in $\Col_{\Z^d}$ as $\Z/3\Z$. Thus, all arithmetic operations on values of $3$-colorings are taken modulo~$3$.

As before, we fix even-0 boundary conditions $(\Lambda,\tau)$ and denote $\Col := \cC_\Lambda^\tau$.
Let $\down$ be a unit vector in $\Z^d$ and denote $\up := -\down$.
We now define the Flip, Shift and Mod transformations.
For a four-section $K$, denote
\[ \Col_K := \{ f \in \Col ~:~ K \in \FS(f) \} .\]
For a coloring $f \in \Col_K$, define
\begin{align*} \text{Flip}_K(f)(v) := \begin{cases}
 -f(v) 		&\text{if } v \in K_2 \\
 f(v)+1		&\text{if } v \in K_3 \\
 f(v)		&\text{otherwise}
\end{cases} ,\qquad
\text{Shift}_K(f)(v) := \begin{cases}
 f(v) 				&\text{if } v \in K_{03} \\
 f(v^{\up}) - 1		&\text{if } v \in K_{12} \cap K_{12}^{\down} \\
 1				&\text{if } v \in K_{12} \cap K_0^{\down} \\
 0				&\text{if } v \in K_{12} \cap K_3^{\down} \\
\end{cases} .
\end{align*}
For $h\colon \intB^{\down} K_{12} \to \{0,1\}$, define
\[ \text{Mod}_{K,h}(f)(v) := \begin{cases}
 h(v)+1	&\text{if } v \in K_{12} \cap K_0^{\down} \\
 -h(v)		&\text{if } v \in K_{12} \cap K_3^{\down} \\
f(v)		&\text{otherwise}
\end{cases} .\]
Finally, we define the flip+shift+mod transformation
\[ {\sf T}_K \colon \Col_K \times \{0,1\}^{\intB^{\down} K_{12}} \to \Col \]
by
\[ {\sf T}_K (f,h) := \big(\text{Mod}_{K,h} \circ \text{Shift}_K \circ \text{Flip}_K \big)(f) .\]
Observe that ${\sf T}_K (f,h)$ coincides with $f$ on $K_0$, so that ${\sf T}_K$ is well-defined, by~\eqref{eq:K-0-is-co-finite}.
There is some freedom in the choice of the transformation ${\sf T}_K$.
The following lemma summarizes the properties of ${\sf T}_K$ which we require for our arguments (see Figure~\reffig{fig:transformation} for an illustration of this transformation).

\begin{figure}
	\captionsetup[subfigure]{justification=centering}
    \begin{subfigure}[t]{.3\textwidth}
        \includegraphics[scale=0.3]{coloring-sample3f\printable.pdf}
        \caption{A coloring $f$ and \\ \quad its breakup $K$.}
        \label{fig:transformation-breakup}
    \end{subfigure}%
    \begin{subfigure}{20pt}
        \quad
    \end{subfigure}%
    \begin{subfigure}[t]{.3\textwidth}
        \includegraphics[scale=0.3]{coloring-flip\printable.pdf}
        \caption{The coloring \phantom{\hspace{25pt}} $f' := \text{Flip}_K(f)$.}
        \label{fig:transformation-flip}
    \end{subfigure}%
    \begin{subfigure}{20pt}
    	\quad
    \end{subfigure}%
    \begin{subfigure}[t]{.35\textwidth}
        \includegraphics[scale=0.3]{coloring-shift\printable.pdf}
        \caption{The coloring \phantom{\hspace{35pt}} $f'':=\text{Shift}_K(f')$.}
        \label{fig:transformation-shift}
    \end{subfigure}
    \caption{The transformation ${\sf T}_K$ in stages. The backgrounds represent the phases of the initial breakup $K$ as in Figure~\reffig{fig:proof-illustration} ({\color{fsc0\grayscale}$\blacksquare$} 0-phase, {\color{fsc1\grayscale}$\blacksquare$} 1-phase, {\color{fsc2\grayscale}$\blacksquare$} 2-phase, {\color{fsc3\grayscale}$\blacksquare$} 3-phase). The final coloring ${\sf T}_K(f,h)$ is obtained from $f''$ by modifying the value of $f''$ on a subset of $\intB^{\down} K_{12}$. Observe that the singularities of $f$ on the boundary of the breakup (i.e., $\partial K \setminus \partial K_{12}$) are not singularities of $f'$. Also notice that every vertex in $\intB^{\down} K_{12}$ is surrounded by a single color in $f''$.}
    \label{fig:transformation}
\end{figure}

For a coloring $f \in \Col$, denote
\[ E^{\anomaly}(f) := \big\{ \{u,v\} \in E(\Z^d) : f(u)=f(v) \big\} .\]
For a four-section $K$, let $B(K)$ be an independent set of $\intB^{\down} K_{12}$ (i.e., a set containing no two adjacent vertices) of maximal size and denote
\[ \cH_K := \big\{ h\colon \intB^{\down} K_{12} \to \{0,1\} ~:~  h(v) = 0 \text{ for all } v \in \intB^{\down} K_{12} \setminus B(K) \big\} .\]

\begin{lemma}\label{lem:existence-of-transformation}
Let $L,M \ge 0$ be integers and let $K \in \FS_{L,M}$.
For any $f \in \Col_K$ and any $h \in \cH_K$, denoting $g := {\sf T}_K(f,h)$, we have
\begin{enumerate}[\qquad(a)]
 \item \label{it:transformation-injective} ${\sf T}_K$ is an injective map to $\Col$.
 \item \label{it:transformation-probability}
 $|E^{\anomaly}(g)| = |E^{\anomaly}(f)| - M$.
 \item \label{it:transformation-easy-recover} $K_{12} \cap g^{-1}(2)^{\up} \subset f^{-1}(0)$.
  \item \label{it:transformation-anomalies} $E(K_{03}) \cap E^{\anomaly}(f) \setminus \partial K \subset E^{\anomaly}(g)$ and $E(K_{12}) \cap E^{\anomaly}(f) \setminus \partial K \subset E^{\anomaly}(g)^{\up}$.
\end{enumerate}
\end{lemma}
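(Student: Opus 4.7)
My plan is to prove the four parts by direct manipulation of the composition ${\sf T}_K = \text{Mod}_{K,h} \circ \text{Shift}_K \circ \text{Flip}_K$, using the adapted condition~\refAFSvalues{} to pin down $f$ along the phase boundaries at every step.

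For injectivity (a) I will exhibit an explicit inverse. Given $K$ and $g = {\sf T}_K(f,h)$, on $K_0$ set $f = g$ and on $K_3$ set $f = g - 1$ (inverting Flip, which is the identity on $K_0 \cup K_1$, negation on $K_2$, and a $+1$ shift on $K_3$); on $\intB K_{12}$ the values of $f$ are forced by the parity and phase of each vertex via~\refAFSvalues{}; on any remaining $v \in K_{12}$ one has $v^{\down} \in K_{12}$ with $(v^{\down})^{\up} = v \in K_{12}$, so Shift's formula $g(v^{\down}) = f'(v) - 1$ recovers $f(v)$ (the sign of Flip being read off from whether $v \in K_1$ or $v \in K_2$). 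Finally $h$ is read from $g$ on $\intB^{\down} K_{12}$: on $K_{12} \cap K_0^{\down}$ we have $g(v) = h(v) + 1 \in \{1,2\}$ and on $K_{12} \cap K_3^{\down}$ we have $g(v) = -h(v) \in \{0,2\}$, both invertible maps to $h(v) \in \{0,1\}$.

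For the singularity count (b) I will partition the edges of $\Z^d$ as $E(K_{03}) \sqcup E(K_{12}) \sqcup \partial K_{12}$ and track singular edges through each of the three stages. Flip preserves singular status on every edge inside a single $K_i$ (since it acts by a uniform permutation on each phase), and by~\eqref{eq:good-bad-boundaries-of-K} it makes proper exactly the $M$ singular edges in $\partial K \setminus \partial K_{12} = \partial(K_0,K_3) \cup \partial(K_1,K_2)$, while~\refAFSvalues{} guarantees that every edge in $\partial K_{12}$ remains proper after Flip. Shift then effectively translates $f'|_{K_{12}}$ by a unit step and cyclically permutes colours by $-1$, bijecting singular edges inside $E(K_{12}) \setminus \partial K$ to singular edges of $g$ shifted by $\up$; no new singularity is created across $\partial K_{12}$ because~\refAFSvalues{} forces the value of $f$ at each vertex of $\intB K_{12}$ to be the unique color incompatible, after the shift, with the (unchanged) value at its neighbor in $K_{03}$. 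Finally, Mod toggles $g$ on $\intB^{\down} K_{12}$ between two colors chosen precisely to avoid the color of its unique $\up$-neighbor in $K_{03}$ (namely the forbidden value is $0$ if that neighbor is in $K_0$ and $1$ if it is in $K_3$), so again no new singularity is born, yielding $|E^{\anomaly}(g)| = |E^{\anomaly}(f)| - M$.

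Parts (c) and (d) follow from the same toolkit. For (c), if $v \in K_{12}$ and $g(v^{\down}) = 2$, the cases $v^{\down} \in K_0$ and $v^{\down} \in K_3$ would force $g(v^{\down}) = 0$ and $g(v^{\down}) = 1$ respectively (by~\refAFSvalues{} and Flip), both contradicting the hypothesis; hence $v^{\down} \in K_{12}$ and Shift gives $f'(v) = g(v^{\down}) + 1 = 0$, whence $f(v) = 0$ on both $K_1$ and $K_2$. For (d), edges inside $K_0$ or inside $K_3$ preserve singular status trivially (Flip acts uniformly and Shift/Mod act as identity on $K_{03}$); for $\{u,v\} \in E(K_{12}) \setminus \partial K$ (so both endpoints in the same $K_1$ or the same $K_2$), I case on the placements of $u^{\down}, v^{\down}$ among $K_0, K_3, K_{12}$, and the mixed $\{K_0, K_3\}$ arrangement is excluded by~\refAFSvalues{} (it would pin $f(u)$ and $f(v)$ to two distinct boundary-prescribed values, contradicting $f(u)=f(v)$); in every remaining arrangement $g(u^{\down}) = g(v^{\down})$ then follows from direct computation. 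The principal obstacle is precisely this case analysis: the fact that vertices in $\intB K_0 \cup \intB K_3$ are forced to color $0$ and those in $\intB K_1 \cup \intB K_2$ to colors in $\{1,2\}$ with prescribed parity is exactly what makes the Shift output at each internal boundary vertex the complementary color of its forced $K_{03}$-neighbor, preventing Shift or Mod from creating spurious improper edges.
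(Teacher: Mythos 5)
Your route is the same as the paper's: invert ${\sf T}_K$ explicitly region by region for (a), and track improper edges through Flip, Shift and Mod using \refAFSvalues{} and parity for (b)--(d). Parts (a) and (c), and the Flip step of (b) (removing exactly the $M$ edges of $\partial K\setminus\partial K_{12}$ while keeping $\partial K_{12}$ proper), are fine as you present them. The genuine gap is in your verification that Shift and Mod create no spurious improper edges, which is exactly where the paper spends most of its effort (its statements \eqref{eq:values-of-f'}--\eqref{eq:values-of-f''2} and the three-case analysis proving \eqref{eq:shift-reduces-anomalies-2}). Your Shift accounting only discusses edges ``across $\partial K_{12}$'' against the value of the $K_{03}$-neighbour, but the dangerous edges are those meeting $\intB^{\down} K_{12}$ from inside $K_{12}$: at $v\in\intB^{\down}K_{12}$ the shift rule switches from ``copy from above'' to the constant $1$ or $0$, while a neighbour $u\in K_{12}$ of $v$ gets $f''(u)=f'(u^{\up})-1$, and ruling out an improper edge $\{u,v\}$ requires the parity argument that $u^{\up}\in\extB K_0\subset \intB K_{12}\cup\intB K_3$ (resp.\ the analogous statement over $K_3$), i.e.\ precisely \eqref{eq:values-of-f''}; this does not follow from the boundary value of the single $K_{03}$-neighbour of $v$. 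Note also that improper edges of $f''$ \emph{do} occur across $\partial K_{12}$ --- the $\down$-shifts of improper edges of $f'$ lying inside $K_{12}$ can land there --- so the bookkeeping for the exact equality in (b) has to be an identity of the form \eqref{eq:shift-reduces-anomalies-2}, not the looser claim ``no new singularity is created across $\partial K_{12}$''.

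Second, at the Mod stage you only arrange that the toggled value at $v\in\intB^{\down}K_{12}$ differs from the colour of $v^{\up}$. For $|E^{\anomaly}(g)|=|E^{\anomaly}(f'')|$ you must avoid the colours of \emph{all} neighbours of $v$ --- which holds only because of \eqref{eq:values-of-f''} again --- and, crucially, you never invoke the hypothesis $h\in\cH_K$, i.e.\ that $h$ vanishes off the independent set $B(K)$. Without it the step fails: two adjacent vertices of $\intB^{\down}K_{12}$, one with its up-neighbour in $K_0$ and the other with its up-neighbour in $K_3$ (such pairs exist exactly when the singular boundary is non-empty, since their up-neighbours form an edge of $\partial(K_0,K_3)$), could both be assigned the value $2$, creating a new improper edge and destroying the exact count in (b), hence the $e^{\beta M}$ energy gain. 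So the missing ingredients are the all-neighbour statement \eqref{eq:values-of-f''}/\eqref{eq:values-of-f''2} with its parity proof, and the explicit use of the independence of $B(K)$ in the Mod step.
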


The proof is deferred to Section~\ref{sec:shift}.
The mere existence of a transformation ${\sf T}_K$ satisfying \eqref{it:transformation-injective} and \eqref{it:transformation-probability} above shows immediately that $\mu^\tau_{\Lambda,\beta}(\Col_K) \le e^{-\beta M} |\cH_K|^{-1}$.
Since $|\cH_K|=2^{|B(K)|}$ and $|B(K)| \ge |\intB^{\down} K_{12}|/2$, this bound is at most $e^{-\beta M} 2^{-|\intB^{\down} K_{12}|/2}$.
The following lemma provides a lower bound on $|\intB^{\down} K_{12}|$ in terms of $L$ and $M$.

\begin{lemma}\label{cl:size-of-boundary}
Let $L,M \ge 0$ be integers, let $K \in \FS_{L,M}$ and let $\down$ be any unit vector in $\Z^d$. Then
\[ |\intB^{\down} K_{12}| \ge \frac{L}{2d} - M .\]
\end{lemma}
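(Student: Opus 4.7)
The plan is to decompose $\intB^{\down}K_{12}$ into two pieces according to whether the $\up$-neighbor of each boundary vertex lies in $K_0$ or in $K_3$, and then count each piece by invoking Lemma~\ref{lem:odd-set-boundary-size} on the two odd finite sets $K_{123}$ and $K_3$.

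Set $B := \{v \in K_{12} : v^{\up} \in K_0\}$ and $C := \{v \in K_{12} : v^{\up} \in K_3\}$, so that $\intB^{\down}K_{12} = B \sqcup C$. The set $K_{123}$ is odd --- its internal boundary sits in $\extB K_0$, which lies in $\Odd$ because $K_0$ is even by \refFSodd{} --- and is finite by the definition of $\FS_{L,M}$; similarly $K_3$ is odd by \refFSodd{} and finite as a subset of $K_{123}$. Lemma~\ref{lem:odd-set-boundary-size} therefore gives $|\intB^{\down}K_{123}| = |\partial K_{123}|/(2d)$ and $|\intB^{\up}K_3| = |\partial K_3|/(2d)$. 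Using the partition $\intB^{\down}K_{123} = B \sqcup \{v \in K_3 : v^{\up} \in K_0\}$ together with the bijection $w \mapsto w^{\down}$ from $\{w \in K_3 : w^{\down} \in K_{12}\}$ onto $C$, I would add the two identities to obtain
\[ |\intB^{\down}K_{12}| \;=\; \frac{|\partial K_{123}| + |\partial K_3|}{2d} \;-\; X, \]
where $X$ counts those $v \in K_3$ with $v^{\up} \in K_0$ or $v^{\down} \in K_0$.

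To conclude, I would estimate the two terms on the right. Decomposing each boundary by target phase gives $|\partial K_{123}| = |\partial(K_{12}, K_0)| + |\partial(K_3, K_0)|$ and $|\partial K_3| = |\partial(K_3, K_0)| + |\partial(K_{12}, K_3)|$, which together with $L = |\partial(K_{12}, K_0)| + |\partial(K_{12}, K_3)|$ yields $|\partial K_{123}| + |\partial K_3| = L + 2|\partial(K_3, K_0)| \ge L$. For the error term $X$, its two constituent sets index pairwise distinct edges of $\partial(K_0, K_3)$ --- one collection oriented $\up$ from $K_3$, the other $\down$ --- so $X \le |\partial(K_0, K_3)| \le M$. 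Combining gives $|\intB^{\down}K_{12}| \ge L/(2d) - M$, as desired. The only mildly subtle point is this bound $X \le M$: estimating the two sets composing $X$ separately against $|\partial(K_0, K_3)|$ would cost a factor of $2$ and weaken the conclusion.
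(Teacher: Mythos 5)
Your proof is correct and follows essentially the same route as the paper's: both rest on Lemma~\ref{lem:odd-set-boundary-size} applied to the finite odd sets involved, identify $\intB^{\down}K_{12}$ with the relevant directional boundaries up to an error counted by vertically oriented $K_0$--$K_3$ edges, and bound that error by $|\partial(K_0,K_3)|\le M$. Working with $K_{123}$ and $K_3$ instead of $K_0$ and $K_3$ is only a cosmetic change, since $\partial K_{123}=\partial K_0$ and $|\intB^{\down}K_{123}|=|\intB^{\up}K_0|$, and your exact identity $|\partial K_{123}|+|\partial K_3|=L+2|\partial(K_0,K_3)|$ replaces the paper's inequality $|\partial K_0|+|\partial K_3|\ge|\partial K_{12}|$.
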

\begin{proof}
By Lemma~\ref{lem:odd-set-boundary-size}, since $K_0$ is even and $K_3$ is odd, we have
\[ |\intB^{\up} K_0|+|\intB^{\up} K_3| = \tfrac{1}{2d}|\partial K_0|+ \tfrac{1}{2d}|\partial K_3| \ge \tfrac{1}{2d} |\partial K_0 \cup \partial K_3| \ge \tfrac{1}{2d}|\partial K_{12}| = \tfrac{L}{2d} .\]
Hence, writing $|\intB^{\up} K_j| = |K_j \setminus K_j^{\up}| = |K_j \cap K_{12}^{\up}| + |K_j \cap K_{3-j}^{\up}|$ for $j \in \{0,3\}$,
we obtain
\begin{align*}
|\intB^{\down} K_{12}| &= |K_{12} \setminus K_{12}^{\down}| = |K_{12} \cap K_{03}^{\down}| = |K_{12}^{\up} \cap K_{03}|
 = |K_{12}^{\up} \cap K_3| + |K_{12}^{\up} \cap K_0| \\&= |\intB^{\up} K_0|+|\intB^{\up} K_3| - |K_0 \cap K_3^{\up}| - |K_3 \cap K_0^{\up}| \ge \tfrac{L}{2d} - M . \qedhere
\end{align*}
\end{proof}

Hence, we obtain that $\mu^\tau_{\Lambda,\beta}(\Col_K)$ is at most $e^{-(\beta-1/2) M} 2^{-L/4d}$, which is exponentially small in $L$ and $M$ for, say, $\beta \ge 1$.
However, as mentioned in the introduction, naively applying the union bound over all $K \in \FS_{L,M}$ does not yield a meaningful bound. Instead, we shall simultaneously handle the subset of four-sections $\FS_{L,M}(A)$, those approximated by a given four-approximation $A$. We now give the precise definition of this notion, followed by an explanation.

\begin{definition}[four-approximation]
	\label{def:four-approx}
	Let $A=(A_i,A_j,\DilemmaOdd ij,\DilemmaEven ij)_{i \in \{1,2\},j \in \{0,3\}}$ be a twelve-tuple of subsets of $\Z^d$ such that for every $i \in \{1,2\}$ and $j\in\{0,3\}$ the following holds:
	\begin{enumerate}[\qquad(a)]
		\item\label{it:FA1} $\{ A_0,A_1,A_2,A_3, \DilemmaEven 10, \DilemmaEven 20, \DilemmaEven 13, \DilemmaEven 23, \DilemmaOdd 10 \cup \DilemmaOdd 20, \DilemmaOdd 13 \cup \DilemmaOdd 23 \}$ is a partition of $\Z^d$.
		\item\label{it:FA2} 
		$\DilemmaOdd i0, \DilemmaEven i3 \subset \Odd$ and $\DilemmaEven i0, \DilemmaOdd i3 \subset \Even$.
		\item\label{it:FA13} The subgraph induced by $\DilemmaOdd ij \cup \DilemmaEven ij$ has maximum degree at most $\sqrt{d}$.
	\end{enumerate}
	We say that $A$ is a {\em four-approximation} of a four-section $K$ if for every $i \in \{1,2\}$ and $j\in\{0,3\}$,
	\begin{enumerate}[\qquad(a)]
		\setcounter{enumi}{3}
		\item\label{it:FA9} $A_i \subset K_i$, $A_j \subset K_j$, $\DilemmaOdd ij \subset K_{12j}$, $\DilemmaOdd ij \setminus (\DilemmaOdd 1j \cap \DilemmaOdd 2j) \subset K_{ij}$ and $\DilemmaEven ij \subset K_{ij}$.
		\item\label{it:FA12} 
		$\DilemmaEven ij \cap N(\DilemmaOdd ij \setminus K_i) = \DilemmaEven ij \cap K_j$ and $\DilemmaOdd ij \cap N(\DilemmaEven ij \setminus K_j) = \DilemmaOdd ij \cap K_i$.
		\item\label{it:FA11} \qquad~~$N(\DilemmaOdd ij \cap K_i)  \subset \DilemmaEven ij \cup K_j$ and \qquad~~$N(\DilemmaEven ij \cap K_j) \subset \DilemmaOdd ij \cup K_i$.
		
	\end{enumerate}
\end{definition}

\begin{figure}
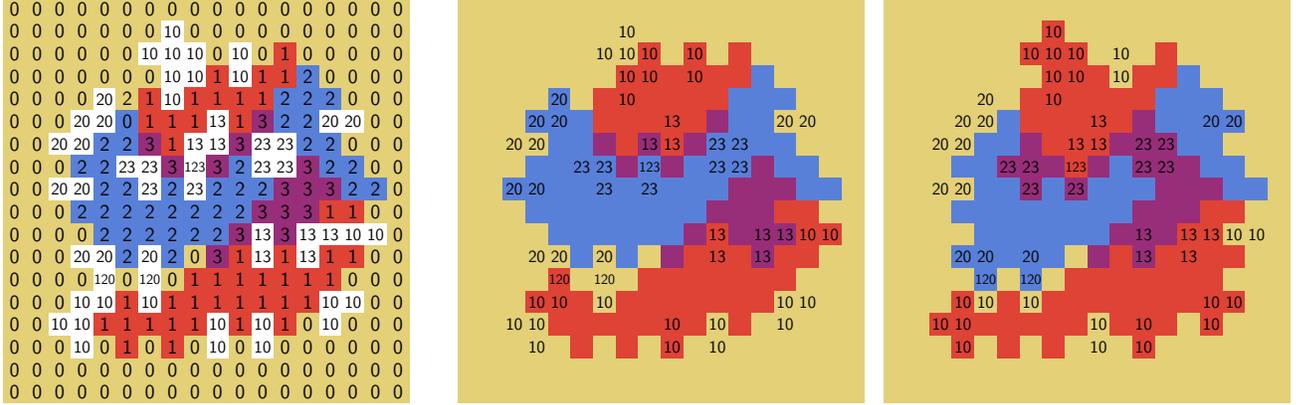

	\captionsetup[subfigure]{justification=centering}
    \begin{subfigure}[t]{.3\textwidth}
        \includegraphics[scale=0.3]{approx-sample\printable.pdf}
        \caption{A four-approximation $A$.}
        \label{fig:approx-sample}
    \end{subfigure}%
    \begin{subfigure}{25pt}
        \quad
    \end{subfigure}%
    \begin{subfigure}[t]{.6\textwidth}
        \includegraphics[scale=0.3]{approx-four-section-a\printable.pdf}~~~%
        \includegraphics[scale=0.3]{approx-four-section-b\printable.pdf}
        \caption{Two possible four-sections approximated by $A$.}
        \label{fig:approx-four-sections}
    \end{subfigure}
    \caption{A four-approximation and two four-sections approximated by it are illustrated. Vertices belonging to some $A_l$ are known to be in $K_l$; these are depicted in~(\textsc{\subref{fig:approx-sample}}) by $l$ and a corresponding color. Vertices belonging to $\DilemmaOdd ij \cup \DilemmaEven ij$ are not known precisely; these are depicted by $ij$ and a white background. When a vertex belongs to more than one such set, we write all participating indices.}
    \label{fig:approx}
\end{figure}

A four-approximation is illustrated in Figure~\reffig{fig:approx-sample}.
In a four-approximation, each vertex of $\Z^d$ either belongs to some $A_i$, in which case it is determined to be in $K_i$, or it belongs to some $\DilemmaEven ij$, in which case it is determined to be in $K_{ij}$, or it belongs to at least one $\DilemmaOdd ij$, in which case it is determined to be in $K_{12j}$. Although a vertex may belong to more than one $\DilemmaOdd ij$, the reader may wish to have in mind the case where every vertex belongs to a single such set, in which case any vertex in $\DilemmaOdd ij$ is determined to be in $K_{ij}$.
Thus, one may regard part~\eqref{it:FA13} as constituting a restriction on the structure of the set of vertices which are not determined to be in any $K_i$.
The relevance of parts~\eqref{it:FA9}-\eqref{it:FA11} will become apparent through their application below.
Here and in the rest of the section, we maintain the convention that $i$ denotes an element in $\{1,2\}$, $j$ an element in $\{0,3\}$, and $l$ any element in $\{0,1,2,3\}$.

Fix integers $L,M \ge 0$ and a four-approximation $A$.
For $K \in \FS_{L,M}(A)$, define
\begin{equation}\label{eq:def-D}
\begin{aligned}
D_{\ins i j} = D_{\ins i j}(K) &:= \DilemmaOdd ij \cap K_i ,\\
D_{\out i j} = D_{\out i j}(K) &:= \DilemmaEven ij \cap K_j ,
\end{aligned}
\end{equation}
and note that $\{D_{\ins ij},D_{\out ij}\}_{i,j}$ are pairwise disjoint by~\refFA{1}.
A key property of this definition is that if one knows $(D_{\ins ij},D_{\out ij})_{i,j}$ then one may recover $K \in \FS_{L,M}(A)$.
That is, the map $K \mapsto (D_{\ins ij}, D_{\out ij})_{i,j}$ is injective on $\FS_{L,M}(A)$.
Indeed, using~\refFA{1} and~\refFA{9}, one may check that
\begin{align*}
K_j &= A_j \cup D_{\out 1j} \cup D_{\out 2j} \cup \big((\DilemmaOdd 1j \cup \DilemmaOdd 2j) \setminus (D_{\ins 1j} \cup D_{\ins 2j})\big) ,\\
K_i &= A_i \cup D_{\ins i0} \cup D_{\ins i3} \cup (\DilemmaEven {i}{0} \setminus D_{\out i0}) \cup (\DilemmaEven {i}{3} \setminus D_{\out i3}).
\end{align*}
See Figure~\reffig{fig:recovery-pairs} and Figure~\reffig{fig:recovery-triples} for an illustration.
In fact, both maps $K \mapsto (D_{\ins ij})_{i,j}$ and $K \mapsto (D_{\out ij})_{i,j}$ are injective on $\FS_{L,M}(A)$, as follows immediately from~\refFA{12}, which in our notation becomes
	\begin{equation}\label{eq:recover-D-out-from-D-ins}
	\begin{aligned}
	D_{\ins ij} &= \DilemmaOdd ij \cap N(\DilemmaEven ij \setminus D_{\out ij}), \\
	D_{\out ij} &= \DilemmaEven ij \cap N(\DilemmaOdd ij \setminus D_{\ins ij}).
	\end{aligned}
	\end{equation}
Finally, \refFA{11} (togther with~\refFA{13}) implies that every vertex in $D_{\ins ij} \cup D_{\out ij}$ has many regular boundary edges. This is later used in~\eqref{eq:choice-of-shift-direction} to find a suitable direction $\down$ for the shift.
We remark that in our construction of the four-approximations in Section~\ref{sec:approx}, the singularities of $K$ are known, in the sense that $K_l \cap K^{\anomaly} \subset A_l$, and moreover, the subgraph induced by $\DilemmaOdd ij \cup \DilemmaEven ij$ contains no isolated vertices. However, we have no need for these properties in this section.

\begin{figure}
	\centering
	\includegraphics[scale=1]{recovery-pairs\printable.pdf}
	\caption{Reconstructing the four-section in an unknown region of two adjacent vertices.
		Here, $u \in \DilemmaEven 10$, $v \in \DilemmaOdd 20$, the neighbors of $u$ (except $v$) belong to $A_1$ ({\color{fsc1\grayscale}$\blacksquare$}) and the neighbors of $v$ (except $u$) belong to $A_0$ ({\color{fsc0\grayscale}$\blacksquare$}).
		The two possible configurations of $K$ are depicted in~\textsc{(b,c)}. Each such configuration is determined by $D_{\ins 10}$ and $D_{\out 10}$. In~\textsc{(b)}, $u \notin D_{\out 10}$ and $v \in D_{\ins 10}$. In~\textsc{(c)}, $u \in D_{\out 10}$ and $v \notin D_{\ins 10}$.}
	\label{fig:recovery-pairs}
	\bigskip
	\includegraphics[scale=1]{recovery-triples\printable.pdf}
	\caption{Reconstructing the four-section in an unknown region bordering both $K_1$ and $K_2$.
		Here, $u \in \DilemmaEven 10$, $v \in \DilemmaOdd 10 \cap \DilemmaOdd 20$, $w \in \DilemmaEven 20$, the neighbors of $u$ (except $v$) belong to $A_1$ ({\color{fsc1\grayscale}$\blacksquare$}), the neighbors of $v$ (except $u$ and $w$) belong to $A_0$ ({\color{fsc0\grayscale}$\blacksquare$}) and the neighbors of $w$ (except $v$) belong to $A_2$ ({\color{fsc2\grayscale}$\blacksquare$}).
		The three possible configurations of $K$ are depicted in~\textsc{(b,c,d)}. Each such configuration is determined by $D_{\ins 10}$, $D_{\ins 20}$ and $D_{\out 10}$, $D_{\out 20}$. In~\textsc{(b)}, $u \notin D_{\out 10}$, $v \in D_{\ins 10}$ and $w \in D_{\out 20}$. In~\textsc{(c)}, $u \in D_{\out 10}$, $v \in D_{\ins 20}$ and $w \notin D_{\out 20}$. In~\textsc{(d)}, $u \in D_{\out 10}$, $v \notin D_{\ins 10} \cup D_{\ins 20}$ and $w \in D_{\out 20}$.}
	\label{fig:recovery-triples}
\end{figure}

Recall that we wish to bound the probability of the event
\[ \Col_{L,M}(A) := \big\{ f \in \Col ~:~ \FS(f) \cap \FS_{L,M}(A) \neq \emptyset \big\} = \bigcup_{K \in \FS_{L,M}(A)} \Col_K .\]
In order to so, we use the so-called technique of flows, captured by the following simple lemma.

\begin{lemma}\label{lem:flow}
Let $E$ and $F$ be two events in a discrete probability space and
let $\epsilon>0$. If there exists a map $\nu \colon E \times F \to [0,1]$ such that $\sum_{f \in F} \nu(e,f) = 1$ for every $e\in E$ and $\sum_{e \in E} \nu(e,f) \Pr(e) \le \epsilon \Pr(f)$ for every $f\in F$ then $\Pr(E) \leq \epsilon \Pr(F)$.
\end{lemma}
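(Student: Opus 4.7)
The plan is to prove this by a direct double-counting argument, exploiting the fact that $\nu$ plays the role of a sub-stochastic flow (or, probabilistically, a Markov kernel) from $E$ to $F$. There is no induction, no contradiction, and no extraneous machinery required; the two hypotheses on $\nu$ are precisely the two ingredients one needs to evaluate $\Pr(E)$ and compare it to $\Pr(F)$.

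First I would use the normalization hypothesis to write, for each $e \in E$,
\[
\Pr(e) \;=\; \Pr(e) \cdot \sum_{f \in F} \nu(e,f),
\]
and then sum over $e \in E$ to obtain
\[
\Pr(E) \;=\; \sum_{e \in E} \Pr(e) \sum_{f \in F} \nu(e,f) \;=\; \sum_{f \in F} \sum_{e \in E} \nu(e,f)\Pr(e),
\]
where the interchange of summations is justified because all summands are non-negative (and the probability space is discrete, so Tonelli applies without further comment). At this point the second hypothesis immediately yields
\[
\Pr(E) \;\le\; \sum_{f \in F} \epsilon \Pr(f) \;=\; \epsilon \Pr(F),
\]
which is the desired bound.

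There is essentially no obstacle here: the lemma is a clean combinatorial identity combined with one pointwise inequality, and the proof is two or three lines. The only point worth flagging, for readers unfamiliar with the flow formalism, is the conceptual dictionary — one may think of $\nu(e,\cdot)$ as distributing a unit of mass sitting at $e$ among elements of $F$, so that the mass arriving at each $f \in F$ (weighted by $\Pr(e)$) is controlled by $\epsilon \Pr(f)$; summing shows that the total mass, which equals $\Pr(E)$, is at most $\epsilon \Pr(F)$. This viewpoint is what later justifies the terminology ``flow'' and motivates the more elaborate construction in Section~\ref{sec:flow}.
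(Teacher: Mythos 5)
Your proof is correct and follows exactly the same argument as the paper: expand $\Pr(E)$ using the normalization $\sum_{f\in F}\nu(e,f)=1$, swap the order of summation, and apply the mass bound at each $f\in F$. Nothing to add.
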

\begin{proof}
We have
\[ \Pr(E)
= \sum_{e \in E} \Pr(e)
= \sum_{e \in E} \Pr(e) \sum_{f \in F} \nu(e,f)
= \sum_{f \in F} \sum_{e \in E} \nu(e,f) \Pr(e)
\le \epsilon \sum_{f \in F} \Pr(f) = \epsilon \Pr(F) . \qedhere \]
\end{proof}
One should think of $\nu$ as prescribing a flow of $\nu(e,f)\Pr(e)$ mass from $e$ to $f$. Thus, assumptions on the total mass exiting each $e$ and entering each $f$ allow to compare $\Pr(E)$ and $\Pr(F)$.
For our purpose, we shall define a flow
\begin{equation}\label{eq:flow-def}
\nu \colon \Col_{L,M}(A) \times \Col \to [0,1] .
\end{equation}
For the sake of clarity, we define $\nu$ on each four-section separately. Thus, for every $K \in \FS_{L,M}(A)$, we shall define
\[ \nu_{K\vphantom{()}} \colon \Col_K \times \Col \to [0,1] ,\]
and then set $\nu(f,g) := \nu_{\bar{K}(f)}(f,g)$, where $\bar{K}(f)$ is an arbitrary element of $\FS(f) \cap \FS_{L,M}(A)$.
One may regard $\nu_{K\vphantom{()}}$ as a weighted variant of ${\sf T}_K$. In particular,
\begin{equation}\label{eq:flow-image}
\nu_{K\vphantom{()}}(f,g) > 0 \quad\iff\quad g \in \cG_{K,f} := \big\{ {\sf T}_K(f,h) ~:~ h \in \cH_K \big\} .
\end{equation}

One may now consider defining $\nu_K$ uniformly by $\nu_K(f,g)=\1_{\cG_{K,f}}(g)/|\cG_{K,f}|$.
Using the facts that $|\cG_{K,f}| = |\cH_K|$ and $\mu^\tau_{\Lambda,\beta}(f) = e^{-\beta M} \cdot \mu^\tau_{\Lambda,\beta}(g)$ by Lemma~\ref{lem:existence-of-transformation}\ref{it:transformation-probability}, together with the trivial bound $\sum_f \1(\nu(f,g)>0) \le |\FS_{L,M}(A)|$, one then obtains the bound $\mu^\tau_{\Lambda,\beta}(\Col_{L,M}(A)) \le |\FS_{L,M}(A)| \cdot e^{-\beta M} |\cH_K|^{-1}$, recovering the ineffective union bound argument.
The problem with such a uniform flow is that we are essentially counting the number of preimages of $g$, i.e., the number of $f$ such that $g \in \cG_{K,f}$ for some $K \in \FS(f) \cap \FS_{L,M}(A)$, which greatly varies among different $g$. To overcome this, we define $\nu$ in a weighted manner, biasing $\nu(f,g)$ according to the number of such preimages, as to balance the total weight $\sum_f \nu(f,g)$ entering various $g$. This will allow us to apply Lemma~\ref{lem:flow} to obtain a better bound (recalling that $\mu^\tau_{\Lambda,\beta}(f)/\mu^\tau_{\Lambda,\beta}(g) = e^{-\beta M}$ whenever $\nu(f,g)>0$).

\begin{figure}
	\centering
	\includegraphics[scale=1]{flow-pairs\printable.pdf}
	\caption{The contribution to the flow due to an unknown pair $(u,v)$ surrounded by known vertices. Here, $u \in \DilemmaEven 10$, $v \in \DilemmaOdd 10$, the neighbors of $u$ (except $v$) belong to $A_1$ ({\color{fsc1\grayscale}$\blacksquare$}) and the neighbors of $v$ (except $u$) belong to $A_0$ ({\color{fsc0\grayscale}$\blacksquare$}).
The three possible configurations of $f$ having $f(u) \neq f(v)$ are depicted in~\textsc{(b,c,d)}. Each such configuration is mapped by ${\sf T}_K$ to two possible configurations of $g$, yielding four configurations in total, depicted in~\textsc{(e,f,g,h)}.
The two configurations in~\textsc{(e,h)} have unique preimages, while the two in~\textsc{(f,g)} have two preimages.
In cases~\textsc{(b,d)}, $w \in B^0_{1, 10}$, while in case~\textsc{(c)}, $v \in B_{0,10}$. In the former cases, the image having a unique preimage is characterized by having $g(w)=2$. The transition weights are chosen as to balance the total weight entering each configuration $g$.}
	\label{fig:flow-pairs}
\end{figure}

To understand the variation in the number of preimages, consider the simplest case of an unknown pair $(u,v) \in \DilemmaEven 10 \times \DilemmaOdd 10$ of adjacent vertices (with $v=u^{\up}$), surrounded by known vertices (in which case, $N(u)\setminus\{v\} \subset A_1$ and $N(v)\setminus\{u\} \subset A_0$), as illustrated in Figure~\reffig{fig:flow-pairs}. In this case, the local configuration $(f(u),f(v))$ has two possible singular states $(0,0)$ and $(1,1)$ and three possible non-singular states $(0,1)$, $(0,2)$ and $(2,1)$. The variation is a consequence of the behavior of ${\sf T}_K$ with respect to the non-singular states. Namely, the states $(0,1)$ and $(0,2)$ may lead to a local configuration $(g(v),g(u),g(u^{\down}))$ which has either one or two (local) preimages. As shown in Figure~\reffig{fig:flow-pairs}, these two states are characterized by the fact that $u^{\down}$ is known and $u$ is unknown, and the configuration $g$ having a unique preimage is characterized by the fact that $g(u^{\down})=2$.
The general case may involve more complex constellations of unknown vertices (see Figure~\reffig{fig:flow}). However, as we will show, the case when $\DilemmaOdd 10 \cup \DilemmaEven 10$ consists only of isolated adjacent pairs (and perhaps isolated vertices) turns out to be the extremal case in some sense.

Thus, we are led to the following definition for $\nu_K$.
First, observe that $\{ K_i \cap K_j^{\down} \}_{i,j}$ is a partition of $\intB^{\down} K_{12}$.
Using the information in $A$, we further partition $K_i \cap K_j^{\down}$ into
\begin{align*}
B^1_{1,ij} = B^1_{1,ij}(K) &:= K_i \cap K_j^{\down} \cap \DilemmaOdd ij^c \cap (\DilemmaEven ij^c)^{\down}  ,\\
B^0_{1,ij} = B^0_{1,ij}(K) &:= K_i \cap K_j^{\down} \cap \DilemmaOdd ij^c \cap (\DilemmaEven ij)^{\down}  ,\\
B_{0,ij} = B_{0,ij}(K) &:= K_i \cap K_j^{\down} \cap \DilemmaOdd ij .
\end{align*}
We write $B^1_1$, $B^0_1$ and $B_0$ for the union of these sets over $i$ and $j$.
Finally, denoting $B = B(K)$, we define
\[ \nu_{K\vphantom{()}}(f,g) := \begin{cases}
 (1/2)^{|B \cap B^1_1| + |B \cap B_0|} \cdot (1/4)^{|B \cap B^0_1 \setminus g^{-1}(2)|} \cdot (3/4)^{|B \cap B^0_1 \cap g^{-1}(2)|} &\text{if } g \in \cG_{K,f} \\
 0 &\text{if } g \notin \cG_{K,f}
 \end{cases} .\]

We now determine the choice of the direction $\down$.
We may assume that $\FS_{L,M}(A)$ is non-empty, since otherwise Lemma~\ref{lem:prob-of-approx-enhanced} holds trivially.
Therefore, there exists some $K \in \FS_{L,M}(A)$.
By~\refFA{12} and~\refFA{11},
\begin{align*}
\partial(\DilemmaOdd ij,\DilemmaEven ij) &~\subset~ \partial(D_{\ins ij}, \DilemmaEven ij) \cup \partial(\DilemmaOdd ij, D_{\out ij}) ~=~ \partial(D_{\ins ij} \cup D_{\out ij}, \DilemmaOdd ij \cup \DilemmaEven ij) , \\
 \partial(K_i,K_j) &~\supset~ \partial(D_{\ins ij}, \DilemmaEven ij ^c) \cup \partial(D_{\out ij}, \DilemmaOdd ij ^c) ~=~ \partial(D_{\ins ij} \cup D_{\out ij}, (\DilemmaOdd ij \cup \DilemmaEven ij)^c) .
\end{align*}
Thus, Lemma~\ref{lem:sizes2} and~\refFA{13} imply that $|\partial(\DilemmaOdd ij,\DilemmaEven ij)| \le \frac{\sqrt{d}}{2d-\sqrt{d}} |\partial(K_i,K_j)|$, and so
\[ \sum_{i,j} |\partial(\DilemmaOdd ij,\DilemmaEven ij)| \le \sum_{i,j} \frac{|\partial(K_i,K_j)|}{\sqrt{d}} = \frac{L}{\sqrt{d}} .\]
Therefore, since $|\partial(\DilemmaOdd ij,\DilemmaEven ij)|=\sum_s |\DilemmaOdd ij \cap (\DilemmaEven ij)^s|$, where $s$ ranges over the $2d$ unit vectors in $\Z^d$, there exists a unit vector $s=\down$ such that
\begin{equation}\label{eq:choice-of-shift-direction}
\sum_{i,j} |\DilemmaOdd ij \cap (\DilemmaEven ij)^{\down}| \le \frac{L}{d^{3/2}} .
\end{equation}

The first part of Lemma~\ref{lem:prob-of-approx-enhanced} is an immediate consequence of Lemma~\ref{lem:flow} and the following lemma whose proof is given in Section~\ref{sec:flow}.
In fact, the second part of Lemma~\ref{lem:prob-of-approx-enhanced} follows just the same, since Lemma~\ref{lem:existence-of-transformation}\ref{it:transformation-anomalies} and~\eqref{eq:flow-image} imply that $\nu_K(f,g)=0$ when $\{u,v\} \notin \partial K$, $f(u)=f(v)$, $g(u) \neq g(v)$ and $g(u^{\down}) \neq g(v^{\down})$.

\begin{lemma}\label{lem:flow-bounds}
Let $\down$ be a unit vector in $\Z^d$ satisfying~\eqref{eq:choice-of-shift-direction}.
Then the above defined flow $\nu$ satisfies
\begin{align}\label{eq:flow-out}
&\sum_g \nu(f,g) = 1 , &&f \in \Col_{L,M}(A) , \\
\label{eq:flow-in}
&\sum_f \nu(f,g) \cdot \mu^\tau_{\Lambda,\beta}(f) \le \exp(-c L/d - \beta M + CM) \cdot \mu^\tau_{\Lambda,\beta}(g) , && g \in \Col .
\end{align}
\end{lemma}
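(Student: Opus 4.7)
The plan is to verify the two identities separately; the first is a direct combinatorial bookkeeping, while the second carries the substantive content of the lemma.

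For the out-flow identity~\eqref{eq:flow-out}, fix $f$ and set $K := \bar K(f) \in \FS_{L,M}(A)$. By Lemma~\ref{lem:existence-of-transformation}\ref{it:transformation-injective} and~\eqref{eq:flow-image}, the map $h \mapsto {\sf T}_K(f,h)$ is a bijection $\cH_K \to \cG_{K,f}$, so $\sum_g \nu(f,g) = \sum_{h \in \cH_K} \nu_K(f, {\sf T}_K(f,h))$. Since $\{B^1_{1,ij}, B^0_{1,ij}, B_{0,ij}\}_{i,j}$ partitions $\intB^\down K_{12}$, the weight $\nu_K$ factorizes as a product over $v \in B(K)$ of local factors depending only on the class of $v$ and on $h(v) \in \{0,1\}$. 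Summing over the two values of $h(v)$ yields $1/2+1/2=1$ for $v \in B \cap (B^1_1 \cup B_0)$ and $1/4+3/4=1$ for $v \in B \cap B^0_1$, so the total product is $1$.

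For the in-flow bound~\eqref{eq:flow-in}, fix $g$. By Lemma~\ref{lem:existence-of-transformation}\ref{it:transformation-probability}, $\mu^\tau_{\Lambda,\beta}(f) = e^{-\beta M}\mu^\tau_{\Lambda,\beta}(g)$ whenever $\nu(f,g)>0$, so it suffices to prove $\sum_f \nu(f,g) \le \exp(-cL/d+CM)$. Since $\bar K$ is well-defined on $\Col_{L,M}(A)$ and each ${\sf T}_K$ is injective, every contributing $f$ corresponds to at most one pair $(K,h)$ with $K \in \FS_{L,M}(A)$ and $g = {\sf T}_K(f,h)$; bounding by $\sum_K \nu_K(f_K,g)$, I would decompose the unknown region $\bigcup_{i,j}(\DilemmaOdd ij \cup \DilemmaEven ij)$ into its connected components in $\Z^d$. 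Using~\eqref{eq:recover-D-out-from-D-ins} and the local enumeration of valid $K$-configurations of Figures~\ref{fig:recovery-pairs} and~\ref{fig:recovery-triples}, the valid $K$'s decouple across components, and $\sum_K \nu_K(f_K,g)$ factors as a product of local weighted sums indexed by these components.

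The heart of the proof is to bound each such local factor by $(3/4)^{|B \cap C|}$, where $C$ is the component in question. The sharp case is the isolated adjacent pair of Figure~\ref{fig:flow-pairs}: for the single $v \in B$ in the component, summing over the local $K$-choices yields weight $3/4$ (the $g(v)=2$ case, which has a unique preimage) or $1/4+1/4=1/2$ (the $g(v) \neq 2$ case, which has two preimages), both bounded by $3/4$. Combined with $|B| \ge |\intB^\down K_{12}|/2$ (from bipartiteness of $\Z^d$) and Lemma~\ref{cl:size-of-boundary}, one obtains $(3/4)^{|B|} \le \exp(-cL/d + CM)$. I expect the main obstacle to be extending this sharp per-vertex bound to components with non-trivial topology --- triple junctions as in Figure~\ref{fig:recovery-triples}, chains where $\DilemmaOdd ij$ meets $(\DilemmaEven ij)^\down$, and components touching the singular boundary. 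The choice of the shift direction $\down$ via~\eqref{eq:choice-of-shift-direction} bounds the abutting region $\bigcup_{i,j} \DilemmaOdd ij \cap (\DilemmaEven ij)^\down$ by $L/d^{3/2}$, which is small enough that the non-sharp cases can be absorbed into the $e^{CM}$ slack, while the maximum-degree bound~\refFA{13} keeps the local enumeration tractable.
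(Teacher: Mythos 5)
Your verification of~\eqref{eq:flow-out} is correct and is essentially the paper's argument: $\nu_{K}(f,\cdot)$ is the law of ${\sf T}_K(f,h)$ when $h(v)$ is chosen independently over $v\in B(K)$ with the indicated Bernoulli parameters, so the total mass is $1$.

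For~\eqref{eq:flow-in}, however, there is a genuine gap at exactly the point you flag as ``the main obstacle''. After relaxing the sum over $K\in\FS_{L,M}(A)$ to a product over unknown regions, you still have to control, within a single connected unknown component, the \emph{number} of configurations $K$ consistent with $g$, each carrying its weight; this multiplicity can grow exponentially with the component size (think of a long path alternating between $\DilemmaOdd ij$ and $\DilemmaEven ij$, where the consistent configurations correspond to the exponentially many minimal vertex-covers of the path). A per-vertex bound of the form $(3/4)^{|B\cap C|}$, verified only for an isolated adjacent pair, does not survive this multiplicity, and the two sources of slack you invoke do not cover it: the bound $\sum_{i,j}|\DilemmaOdd ij\cap(\DilemmaEven ij)^{\down}|\le L/d^{3/2}$ from~\eqref{eq:choice-of-shift-direction} and the factor $e^{CM}$ control the abutting region and the singular boundary, but a large unknown component need not meet either, while still supporting exponentially many preimages. (There is also a smaller issue that $B=B(K)$ depends on $K$, not only on $g$, so the exponent $|B\cap C|$ is not even well-defined uniformly over the $K$'s you are summing.) The paper's proof supplies precisely the missing mechanism: for each fixed $g$ one defines $\DilemmaEvenTwo ij:=\DilemmaEven ij\setminus g^{-1}(2)^{\up}$ and shows (Lemma~\ref{lem:D-is-minimal-cover}, via~\eqref{eq:recover-D-out-from-D-ins} and Lemma~\ref{lem:existence-of-transformation}\ref{it:transformation-easy-recover}) that $(D_{\ins ij},D_{\outtwo ij})$ is a \emph{minimal cover} of $(\DilemmaOdd ij,\DilemmaEvenTwo ij)$; then Lemma~\ref{lem:flow-bound} rewrites $\nu_K(f,g)$ as $(3/4)^{L/2d}\,4^{9M+Ld^{-3/2}}\,(2/3)^{\sum|D_{\ins ij}|}(1/3)^{\sum|D_{\outtwo ij}|}$, and Lemma~\ref{lem:sum-over-minimal-covers} shows that the weighted sum of $p^{|U_\ins|}(1-p)^{|U_\out|}$ over \emph{all} minimal covers of a bipartite graph is at most $1$ — the weights $2/3,1/3$ are calibrated exactly so that the entropy of preimages is absorbed. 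Without this minimal-cover identification and the accompanying weighted counting lemma (itself proved by a non-obvious probabilistic construction), your scheme does not yield the claimed bound; the front factor $(3/4)^{L/2d}\,4^{9M+Ld^{-3/2}}$, obtained from the bookkeeping in Lemma~\ref{lem:flow-bound} together with Lemma~\ref{cl:size-of-boundary}, is then what produces $\exp(-cL/d+CM)$.
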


\subsection{The shift transformation}
\label{sec:shift}

In this section, we prove Lemma~\ref{lem:existence-of-transformation}.
Fix a four-section $K$.
For $f \in \Col_K$, denote $f' := \text{Flip}_K(f)$ and $f'' := \text{Shift}_K(f')$.
Throughout this section, we repeatedly use the fact that $K_0$ is even and $K_3$ is odd, without an explicit reference to~\refFSodd{}.
By~\refAFSvalues{} and by the definition of $\text{Flip}_K$, we have
\begin{equation}\label{eq:values-of-f'}
\begin{aligned}
f'(v)=0 &\quad\text{for all } v \in \intB K_0 ,\\
f'(v)=1 &\quad\text{for all } v \in \intB K_3 ,\\
f'(v)=1 &\quad\text{for all } v \in \Odd \cap (\intB K_1 \cup \intB K_2) ,\\
f'(v)=2 &\quad\text{for all } v \in \Even \cap (\intB K_1 \cup \intB K_2) .
\end{aligned}
\end{equation}
Thus, by the definition of $\text{Shift}_K$,
\begin{equation}\label{eq:values-of-f''-lower}
f''(v) = f'(v) = f'(v^{\up})-1 \quad \text{for all } v \in \intB^{\down} K_{03} .
\end{equation}
Moreover,
\begin{equation}\label{eq:values-of-f''}
\begin{aligned}
f''(v)=1 \text{ and } f''(u)=0 &\quad\text{for all } v \in K_{12} \cap K_0^{\down}  \text{ and } u \in N(v) ,\\
f''(v)=0 \text{ and } f''(u)=1 &\quad\text{for all } v \in K_{12} \cap K_3^{\down} \text{ and } u \in N(v) .
\end{aligned}
\end{equation}
To see the first statement, let $v \in K_{12} \cap K_0^{\down}$ and $u \in N(v)$, and note that $v$ is odd and $u$ is even.
We have $f''(v)=1$ by the definition of $\text{Shift}_K$.
If $u \notin K_{12}$ then $u \in \intB K_0$, and thus, $f''(u)=f'(u)=0$ by \eqref{eq:values-of-f'}.
If $u \in K_{12}$ then $u^{\up} \in \extB K_0 \subset \intB K_{12} \cup \intB K_3$, and thus, $f''(u)=0$ if $u^{\up} \in \intB K_3$ and $f''(u)=f'(u^{\up})-1=0$ if $u^{\up} \in \intB K_{12}$ by \eqref{eq:values-of-f'}.
We omit the proof of the second statement as it follows similar lines.
It is straightforward to check that \eqref{eq:values-of-f''} is equivalent to
\begin{equation}\label{eq:values-of-f''2}
\begin{aligned}
f''(v)=1 &\quad\text{for all } v \in \Odd \cap (\intB^{\down} K_{12})^+ ,\\
f''(v)=0 &\quad\text{for all } v \in \Even \cap (\intB^{\down} K_{12})^+ .
\end{aligned}
\end{equation}

\subsubsection{Part (\ref{it:transformation-injective})} \label{lem:shift-injective}
Let $g \in \Col$. We show that there exist at most one $f \in \Col_K$ and one $h\colon \intB^{\down} K_{12} \to \{0,1\}$ such that $g = {\sf T}_K(f,h)$.
Assume that $(f,h)$ is such a pair.
Then $f'$ is given by
\[ f'(v) = \begin{cases}
 g(v) 						&\text{if } v \in K_{03} \\
 g(v^{\down}) + 1				&\text{if } v \in K_{12}
\end{cases} .\]
Indeed, the only non-trivial case to check is when $v \in \intB^{\up} K_{12}$, in which case $g(v^{\down})+1=f''(v^{\down})+1$, and thus it follows from \eqref{eq:values-of-f''-lower}.

Recalling the definition of $\text{Mod}_{K,h}$, it is straightforward to check that $h$ is given by
\[ h(v) = \begin{cases}
 g(v)-1	&\text{if } v \in K_{12} \cap K_0^{\down} \\
 -g(v)		&\text{if } v \in K_{12} \cap K_3^{\down}
\end{cases} . \]
Thus, since $f$ is clearly determined by $f'$, part~\eqref{it:transformation-injective} follows. 

\subsubsection{Parts (\ref{it:transformation-probability}) and (\ref{it:transformation-anomalies})} \label{lem:shift-removes-anomalies}
Let $f \in \Col_K$, let $h \in \cH_K$ and denote $g := {\sf T}_K(f,h)=\text{Mod}_{K,h}(f'')$.
Recalling that $M = |\partial K \setminus \partial K_{12}|$, parts~\eqref{it:transformation-probability} and~\eqref{it:transformation-anomalies} will follow if we show the following four statements:
\begin{align}
E^{\anomaly}(f) &\supset \partial K \setminus \partial K_{12} , \label{eq:shift-reduces-anomalies-0} \\
E^{\anomaly}(f') &= E^{\anomaly}(f) \setminus (\partial K \setminus \partial K_{12}) \subset E(K_{03}) \cup E(K_{12}) , \label{eq:shift-reduces-anomalies-1} \\
E^{\anomaly}(f'') &= \big(E^{\anomaly}(f') \cap E(K_{03})\big) \uplus \big(E^{\anomaly}(f') \cap E(K_{12})\big)^{\down} , \label{eq:shift-reduces-anomalies-2} \\
E^{\anomaly}(g) &= E^{\anomaly}(f'') . \label{eq:shift-reduces-anomalies-3}
\end{align}

Recalling \eqref{eq:good-bad-boundaries-of-K}, \eqref{eq:shift-reduces-anomalies-0} is immediate.
Using \eqref{eq:values-of-f'} and the definition of $\text{Flip}_K$, it is straightforward to check \eqref{eq:shift-reduces-anomalies-1}.

To see \eqref{eq:shift-reduces-anomalies-2}, we begin by showing that the right-hand side is contained in $E^{\anomaly}(f'')$. Indeed, the fact that $E^{\anomaly}(f') \cap E(K_{03}) \subset E^{\anomaly}(f'')$ is immediate from the definition of $\text{Shift}_K$, while the fact that $(E^{\anomaly}(f') \cap E(K_{12}))^{\down} \subset E^{\anomaly}(f'')$ follows from the definition of $\text{Shift}_K$ and~\eqref{eq:values-of-f''-lower}.
Moreover, one may easily check that these sets are disjoint.
Towards showing the opposite containment, let $e \in E^{\anomaly}(f'')$.
We must show that either $e \in E^{\anomaly}(f') \cap E(K_{03})$ or $e^{\up} \in E^{\anomaly}(f') \cap E(K_{12})$.
To this end, we consider three cases:

Case 1. $e \subset K_{03}$; follows immediately from the definition of $\text{Shift}_K$.

Case 2. $e \subset K_{12}^{\down}$; follows from~\eqref{eq:values-of-f''-lower} and the definition of $\text{Shift}_K$.

Case 3. $e \cap \intB^{\down} K_{12} \neq \emptyset$; is impossible by~\eqref{eq:values-of-f''2}.

To see that one of these cases must hold, assume towards a contradiction that neither holds.
As case 1 does not hold, we may write $e=\{u,v\}$, where $u \in K_{12}$.
As case 3 does not hold, $u^{\up} \in K_{12}$.
As case 2 does not hold, $v^{\up} \in K_{03}$.
As case 2 does not hold, $v \in K_{03}$.
If $u$ is odd then $v \in K_0$ and~\eqref{eq:values-of-f'} implies that $f'(u^{\up})=2$ and $f'(v)=0$.
If $u$ is even then $v \in K_3$ and~\eqref{eq:values-of-f'} implies that $f'(u^{\up})=1$ and $f'(v)=1$.
Thus, using the definition of $\text{Shift}_K$, we see that in either case, $e \notin E^{\anomaly}(f'')$, which is a contradiction.

Finally, we show \eqref{eq:shift-reduces-anomalies-3}.
By the definition of $\text{Mod}_{K,h}$, we have $g(v)=f''(v)$ for $v \notin \intB^{\down} K_{12}$.
Moreover, by the definitions of $\text{Shift}_K$ and $\text{Mod}_{K,h}$ and the assumption on $h$, we have $g(v)=f''(v)$ for $v \notin B(K)$.
Thus, since~\eqref{eq:values-of-f''2} implies that the vertices in $B(K)$ are not singularities of $f''$, it suffices to show that these vertices are not singularities of $g$.
Indeed, since $B(K)$ is an independent set, this follows from~\eqref{eq:values-of-f''} and the definition of $\text{Mod}_{K,h}$.

\subsubsection{Part (\ref{it:transformation-easy-recover})}
Let $f \in \Col_K$, let $h\colon \intB^{\down} K_{12} \to \{0,1\}$ and denote $g := {\sf T}_K(f,h)$.
We show that $K_{12} \cap g^{-1}(2)^{\up} \subset f^{-1}(0)$.
Let $v \in K_{12}$ and assume that $f(v)\neq 0$. We must show that $g(v^{\down}) \neq 2$.
To this end, we consider three cases.

If $v^{\down} \in K_{03}$ then $g(v^{\down})=f''(v^{\down})=f'(v^{\down}) \neq 2$, by~\eqref{eq:values-of-f'}, since $v^{\down} \in \intB K_{03}$.

If $v^{\down} \in K_{12}$ and $v \in K_1$ then $g(v^{\down})=f''(v^{\down})=f'(v)-1=f(v)-1 \neq 2$.

If $v^{\down} \in K_{12}$ and $v \in K_2$ then $g(v^{\down})=f''(v^{\down})=f'(v)-1=-f(v)-1 \neq 2$.

\subsection{The flow}
\label{sec:flow}

In this section, we prove Lemma~\ref{lem:flow-bounds}.
We first show that \eqref{eq:flow-out} holds.
To this end, let $K \in \FS_{L,M}(A)$ and let $f \in \Col_K$.
Let $g \in \cG_{K,f}$ so that $g={\sf T}_K(f,h)$ for a unique $h \in \cH_K$, since ${\sf T}_K$ is injective, by Lemma~\ref{lem:existence-of-transformation}\ref{it:transformation-injective}.
Observe that, by the definition of ${\sf T}_K$ and $\text{Mod}_{K,h}$, for $v \in \intB^{\down} K_{12}$, we have $h(v)=1 \iff g(v)=2$. Thus, we may write
\begin{equation*}\label{eq:flow-def-with-h}
\nu_{K\vphantom{()}}(f,g) = (1/2)^{|B \cap B^1_1| + |B \cap B_0|} \cdot (1/4)^{|B \cap B^0_1 \cap h^{-1}(0)|} \cdot (3/4)^{|B \cap B^0_1 \cap h^{-1}(1)|} .
\end{equation*}
We interpret the above as defining a probability distribution $\nu_{K,f}$ on $\cH_K$. The value of $h$ at each vertex $v \in B$ is independently decided according to a Bernoulli random variable $h(v)$, where $\nu_{K,f}(h(v)=1)=1/2$ if $v \in B^1_1 \cup B_0$ and $\nu_{K,f}(h(v)=1)=3/4$ if $v \in B^0_1$.
Thus, $\nu_K(f,g)=\nu_{K,f}({\sf T}_K(f,h)=g)$, and \eqref{eq:flow-out} follows.

It remains to prove \eqref{eq:flow-in}.
In light of Lemma~\ref{lem:existence-of-transformation}\ref{it:transformation-probability} and by the definition of $\mu^\tau_{\Lambda,\beta}$, \eqref{eq:flow-in} will follow if we show that
\begin{equation}\label{eq:flow-in2}
\sum_f \nu(f,g) \leq \exp(-c L/d + CM) , \quad g \in \Col .
\end{equation}
In order to prove~\eqref{eq:flow-in2}, we henceforth fix a coloring $g \in \Col$ and define
\begin{equation}\label{eq:def-D-outtwo}
\begin{aligned}
\DilemmaEvenTwo ij &:= \DilemmaEven ij \setminus g^{-1}(2)^{\up} ,\\
D_{\outtwo ij} = D_{\outtwo i j}(K) &:= \DilemmaEvenTwo ij \cap K_j = D_{\out ij} \setminus g^{-1}(2)^{\up} .
\end{aligned}
\end{equation}
In the particular case of an isolated unknown pair, depicted in Figure~\reffig{fig:flow-pairs}, we observed that when $g(w)=2$, one may uniquely recover the preimage locally. This is generalized by the fact that the map $K \mapsto ( D_{\outtwo ij}(K) )_{i,j}$ is injective on
\[ \FS_{L,M}(A,g) := \big\{ K \in \FS_{L,M}(A) ~:~ \text{there exists } f \in \Col_K \text{ such that } g \in \cG_{K,f} \big\} .\]
In fact, this is a consequence of a special relation between the sets $D_{\ins ij}$ and $D_{\outtwo ij}$, namely that together they form a minimal cover of $\DilemmaOdd ij \cup \DilemmaEvenTwo ij$ (see Figure~\reffig{fig:flow}).
Let us now define this notion.

\begin{figure}
	\centering
	\begin{subfigure}[t]{.42\textwidth}
		\centering
		\includegraphics[scale=0.4]{flow-1\printable.pdf}
		\caption{\grayscaleText{A region of unknown vertices in $\DilemmaOdd ij \cup \DilemmaEven ij$, denoted by $*$ with a white background.}{A region of unknown vertices in $\DilemmaOdd ij \cup \DilemmaEven ij$. The vertices in $\DilemmaOdd ij$ ($\DilemmaEven ij$) are denoted by $*$ with a white (gray) background.} The vertices in $A_i$ are depicted by {\color{fsc1\grayscale}$\blacksquare$} and those in $A_j$ by {\color{fsc0\grayscale}$\blacksquare$}.}
		\label{fig:flow-1}
	\end{subfigure}%
	\begin{subfigure}{20pt}
		\quad
	\end{subfigure}%
	\begin{subfigure}[t]{.52\textwidth}
		\centering
		\includegraphics[scale=0.4]{flow-2\printable.pdf}
		\includegraphics[scale=0.4]{flow-3\printable.pdf}
		\caption{On the left, a coloring $g$ in the image of the transformation. Given such a $g$, we may identify the set $\DilemmaEven ij \setminus \DilemmaEvenTwo ij$ of unknown vertices in $v \in \DilemmaEven ij$ having $g(v^{\down})=2$. These vertices (depicted by $\times$) do not play a role in the recovery of the four-section.}
		\label{fig:flow-2-3}
	\end{subfigure}
	\vspace{5pt}
	
	\begin{subfigure}[t]{1\textwidth}
		\centering
		\includegraphics[scale=0.4]{flow-4a\printable.pdf}
		\includegraphics[scale=0.4]{flow-4b\printable.pdf}\qquad
		\includegraphics[scale=0.4]{flow-5a\printable.pdf}
		\includegraphics[scale=0.4]{flow-5b\printable.pdf}
		\caption{Two examples of $(D_{\ins ij},D_{\outtwo ij})$ and their corresponding four-section. Observe that $D_{\ins ij} \cup D_{\outtwo ij}$ is a minimal vertex-cover of $\DilemmaOdd ij \cup \DilemmaEvenTwo ij$. This is manifested in the figure by the fact that there are no two adjacent $*$. The corresponding four-sections are obtained by adding each vertex in $D_{\ins ij}$ to $K_i$ and each vertex in $D_{\outtwo ij}$ to $K_j$, determining for each vertex in $\DilemmaEven ij \setminus \DilemmaEvenTwo ij$ whether it belongs to $K_j$ or $K_i$ according to whether or not it is adjacent to a vertex in $\DilemmaOdd ij \setminus D_{\ins ij}$, and then determining the remaining vertices according to their neighbors.}
		\label{fig:flow-5}
	\end{subfigure}
	\caption{Bounding the total flow entering a given $g$ from various colorings $f$. For each four-section $K \in \FS_{L,M}(A,g)$, at most one $f \in \Col_K$ contributes a positive amount to this total (see Lemma~\ref{lem:existence-of-transformation}\ref{it:transformation-injective}). In turn, $K$ is determined by $(D_{\ins ij},D_{\outtwo ij})_{i,j}$. Moreover, $D_{\ins ij} \cup D_{\outtwo ij}$ is a minimal cover of $\DilemmaOdd ij \cup \DilemmaEvenTwo ij$ (see Lemma~\ref{lem:D-is-minimal-cover}).
Since the contribution can be expressed in terms of $D_{\ins ij}$ and $D_{\outtwo ij}$ (see Lemma~\ref{lem:flow-bound}), this allows us to bound the total flow entering $g$ using Lemma~\ref{lem:sum-over-minimal-covers}. The figure illustrates the process of recovering $K$ from $D_{\ins ij} \cup D_{\outtwo ij}$ in a region surrounded by known vertices.}
	\label{fig:flow}
\end{figure}

Let $W$ be a bipartite graph with bipartition classes $W_\ins$ and $W_\out$.
We say that a set $V \subset W_\ins \cup W_\out$ is a {\em vertex-cover} of $W$ if every edge of $W$ has an endpoint in $V$.
Given sets $V_\ins \subset W_\ins$ and $V_\out \subset W_\out$, we say that $(V_\ins,V_\out)$ is a {\em minimal cover} of $(W_\ins,W_\out)$ if $V_\ins \cup V_\out$ is a minimal vertex-cover of $W_\ins \cup W_\out$.
Denote by $MC(W_\ins,W_\out)$ the set of all minimal covers of $(W_\ins,W_\out)$.
One should note that
\begin{equation}\label{eq:minimal-cover-def}
(V_\ins,V_\out) \in MC(W_\ins,W_\out) \quad\iff\quad
V_\ins = N(W_\out \setminus V_\out) \quad\text{and}\quad V_\out = N(W_\ins \setminus V_\ins) .
\end{equation}

\begin{lemma}\label{lem:D-is-minimal-cover}
	Let $K \in \FS_{L,M}(A,g)$. Then $(D_{\ins ij}, D_{\outtwo ij})$ is a minimal cover of $(\DilemmaOdd ij, \DilemmaEvenTwo ij)$.
\end{lemma}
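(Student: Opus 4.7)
The plan is to verify the characterization \eqref{eq:minimal-cover-def} for the pair $(D_{\ins ij}, D_{\outtwo ij})$, namely that
\[ D_{\ins ij} = \DilemmaOdd ij \cap N(\DilemmaEvenTwo ij \setminus D_{\outtwo ij}) \quad\text{and}\quad D_{\outtwo ij} = \DilemmaEvenTwo ij \cap N(\DilemmaOdd ij \setminus D_{\ins ij}). \]
The second identity is immediate on intersecting the second equation in \eqref{eq:recover-D-out-from-D-ins} with $\DilemmaEvenTwo ij$ and using $D_{\outtwo ij} = D_{\out ij} \cap \DilemmaEvenTwo ij$. For the first, \refFA{9} gives $\DilemmaEven ij \cap K_i = \DilemmaEven ij \setminus D_{\out ij}$, so $\DilemmaEvenTwo ij \setminus D_{\outtwo ij} = (\DilemmaEven ij \cap K_i) \setminus g^{-1}(2)^{\up}$; the inclusion $\supset$ follows directly from \eqref{eq:recover-D-out-from-D-ins}. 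What remains is to show that every $v \in D_{\ins ij}$ has at least one neighbor $u \in \DilemmaEven ij \cap K_i$ with $g(u^{\down}) \neq 2$.

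Fix such a $v$ and suppose for contradiction that every neighbor $u$ of $v$ in $\DilemmaEven ij \cap K_i$ satisfies $g(u^{\down}) = 2$. Since $u \in K_i \subset K_{12}$, Lemma~\ref{lem:existence-of-transformation}\ref{it:transformation-easy-recover} yields $f(u) = 0$; combined with $i \in \{1,2\}$ and the adapted property \refAFSvalues{}, this forces $u \notin \intB K_i$. I next claim that also $v \notin \intB K_i$. Indeed, were $v \in \intB K_i$, then \refAFSneighbor{} would provide some $w \sim v$ with $f(w) \neq 0$; but by \refFA{11} any such $w$ lies in $\DilemmaEven ij \cup K_j$, and either case yields $f(w) = 0$: if $w \in \DilemmaEven ij \setminus K_j = \DilemmaEven ij \cap K_i$ then the standing assumption together with Lemma~\ref{lem:existence-of-transformation}\ref{it:transformation-easy-recover} give $f(w) = 0$, while if $w \in K_j$ then $w \in \intB K_j$ (as $v \in K_i$ is adjacent) and \refAFSvalues{} gives $f(w) = 0$.

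Having established $v \notin \intB K_i$, we conclude $N(v) \subset K_i$, which combined with \refFA{11} forces $N(v) \subset \DilemmaEven ij \cap K_i$. Thus the vertex $v \in \DilemmaOdd ij$ has all $2d$ of its neighbors in $\DilemmaEven ij$, contradicting the degree bound \refFA{13}. The main technical obstacle is this last step: before the sparsity condition \refFA{13} can bite, one must rule out $v \in \intB K_i$, and this relies essentially on \refAFSneighbor{} together with the ability of Lemma~\ref{lem:existence-of-transformation}\ref{it:transformation-easy-recover} to convert the would-be $g$-value $2$ at $u^{\down}$ into $f(u) = 0$.
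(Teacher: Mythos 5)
Your proof is correct and takes essentially the same route as the paper: the same reduction via \eqref{eq:minimal-cover-def} and \eqref{eq:recover-D-out-from-D-ins}, with the nontrivial containment $D_{\ins ij} \subset N(\DilemmaEvenTwo ij \setminus D_{\outtwo ij})$ obtained from \refAFSvalues{}, \refAFSneighbor{}, \refFA{11}, \refFA{13} and Lemma~\ref{lem:existence-of-transformation}\ref{it:transformation-easy-recover}. The only difference is cosmetic: you argue this step by contradiction (forcing all $2d$ neighbors of $v$ into $\DilemmaEven ij$ and contradicting \refFA{13}), whereas the paper argues directly, first using \refFA{13} and \refFA{11} to place $v$ in $\intB K_i$ and then producing the required neighbor.
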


\begin{proof}
	Observe that $D_{\outtwo ij} = \DilemmaEvenTwo ij \cap N(\DilemmaOdd ij \setminus D_{\ins ij})$ follows from~\eqref{eq:recover-D-out-from-D-ins}, by subtracting $g^{-1}(2)^{\up}$ from both sides in the second row. Thus, by~\refFA{2} and~\eqref{eq:minimal-cover-def}, it remains to show that
	\[ D_{\ins ij} = \DilemmaOdd ij \cap N(\DilemmaEvenTwo ij \setminus D_{\outtwo ij}) .\]
	To see this, first observe that the right-hand side is contained in $D_{\ins ij}$ by~\eqref{eq:recover-D-out-from-D-ins}.
	Towards showing the opposite containment, let $v \in D_{\ins ij}$ and let $f \in \Col_K$ be such that $g \in \cG_{K,f}$.
	By~\refFA{13} and~\refFA{11}, $v$ has a neighbor in $K_j$ so that $v \in \intB K_i$. Thus, by~\refAFSneighbor{}, $f(u) \neq 0$ for some $u \sim v$.
	By~\refAFSvalues{}, $u \notin K_j$.
	Thus, \refFA{11} now implies that $u \in \DilemmaEven ij$.
	Moreover, Lemma~\ref{lem:existence-of-transformation}\ref{it:transformation-easy-recover} implies that $g(u^{\down}) \neq 2$, so that $u \in \DilemmaEvenTwo ij$. Hence, $u \in \DilemmaEvenTwo ij \setminus D_{\outtwo ij}$ so that $v \in N(\DilemmaEvenTwo ij \setminus D_{\outtwo ij})$, as required.
	\end{proof}


The next lemma, whose proof we postpone to Section~\ref{sec:flow-bound}, enables us to bound the flow in terms of the sets $D_{\ins ij}$ and $D_{\outtwo ij}$.

\begin{lemma}\label{lem:flow-bound}
	Let $K \in \FS_{L,M}(A)$ and $f \in \Col_K$ be such that $g \in \cG_{K,f}$. Then
	\[ \nu_K(f,g) \le (3/4)^{L/2d} \cdot 4^{9M + Ld^{-3/2}} \cdot (2/3)^{\sum_{i,j} |D_{\ins ij}(K)|} \cdot (1/3)^{\sum_{i,j} |D_{\outtwo ij}(K)|} .\]
\end{lemma}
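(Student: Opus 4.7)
The plan is to rewrite $\nu_K(f,g)$ using $\tfrac12 = \tfrac34\cdot\tfrac23$ and $\tfrac14 = \tfrac34\cdot\tfrac13$ as
\[ \nu_K(f,g) = (3/4)^{|B|}\cdot (2/3)^{x}\cdot (1/3)^{y}, \]
where $x := |B \cap (B^1_1 \cup B_0)|$ and $y := |B \cap B^0_1 \setminus g^{-1}(2)|$. Since each of $(3/4)^{-1}, (2/3)^{-1}, (1/3)^{-1}$ is at most $4$, it suffices to establish three inequalities
\[ |B| \ge \tfrac{L}{2d} - E_0,\qquad x \ge \textstyle\sum_{i,j}|D_{\ins ij}| - E_1,\qquad y \ge \textstyle\sum_{i,j}|D_{\outtwo ij}| - E_2, \]
with a total deficit $E_0+E_1+E_2 \le 9M+Ld^{-3/2}$.

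For the first inequality, Lemma~\ref{cl:size-of-boundary} gives $|\intB^{\down}K_{12}|\ge L/(2d)-M$, and K\"onig's theorem applied to the bipartite graph on $\intB^{\down}K_{12}$ identifies $|\intB^{\down}K_{12}|-|B|$ with the size of a maximum matching. Every matching edge $\{u,v\}$, necessarily with $u\in K_i\cap K_0^{\down}$ and $v\in K_{i'}\cap K_3^{\down}$, produces a distinct singular boundary edge of $K$: either $\{u,v\}\in\partial(K_1,K_2)$ when $i\ne i'$, or $\{u^{\down},v^{\down}\}\in\partial(K_0,K_3)$ when $i=i'$. Hence the matching has size at most $M$ and $E_0\le 2M$. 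For the second inequality, the identity $B_{0,ij}=D_{\ins ij}\cap K_j^{\down}$ pairs each $v\in B\cap B_{0,ij}$ with an element of $D_{\ins ij}$ whose down-neighbor lies in $K_j$. An element $v\in D_{\ins ij}$ missed this way has $v^{\down}\in K_i\cup K_{3-i}$ --- the parity of $\intB K_3\subset\Odd$ rules out $v^{\down}\in K_{3-j}$ --- where the case $v^{\down}\in K_{3-i}$ contributes a singular edge and the ``deep'' case $v^{\down}\in K_i$ forces $v^{\down}\in\DilemmaEven ij$ by~\refFA{11} (since $A_i\cap N(\DilemmaOdd ij\cap K_i)=\emptyset$), hence contributes to $\sum_{i,j}|\DilemmaOdd ij\cap(\DilemmaEven ij)^{\down}|\le L/d^{3/2}$ via~\eqref{eq:choice-of-shift-direction}. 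After also accounting for vertices that lie in $\intB^{\down}K_{12}$ but are dominated by $B$ rather than belonging to it (bounded again by the max matching), one obtains $E_1=O(M+Ld^{-3/2})$. The third inequality is symmetric, using Lemma~\ref{lem:existence-of-transformation}\ref{it:transformation-easy-recover} to match $v\in B^0_{1,ij}$ with $g(v)\ne 2$ to $v^{\down}\in D_{\outtwo ij}$.

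The main obstacle is precisely the ``deep" case above: an element of $D_{\ins ij}$ (respectively $D_{\outtwo ij}$) whose down-neighbor lies in the same phase does not sit on $\intB^{\down}K_{12}$, so there is no canonical vertex of $B$ to charge against. The resolution combines~\refFA{11}, which forces such a neighbor into the uncertain region $\DilemmaEven ij$ (as the certain set $A_i$ is incompatible), with the careful choice of shift direction in~\eqref{eq:choice-of-shift-direction}, which limits adjacent uncertain pairs to a total of $L/d^{3/2}$. Adding this to the $O(M)$ contributions from singular boundary and from the matching deficit yields the claimed $9M+Ld^{-3/2}$ correction, completing the plan.
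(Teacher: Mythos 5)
Your overall route is essentially the paper's: you regroup $1/2=(3/4)(2/3)$ and $1/4=(3/4)(1/3)$, lower-bound $|B|$ via Lemma~\ref{cl:size-of-boundary}, and match $B_0$ and $B^0_1\setminus g^{-1}(2)$ against $D_{\ins ij}$ and $D_{\outtwo ij}$, charging the exceptions to the singular boundary and to~\eqref{eq:choice-of-shift-direction}; your K\H{o}nig/matching step is a harmless variant of the paper's observation that adjacent vertices of $\intB^{\down}K_{12}$ lie over a $\partial(K_0,K_3)$ edge. Two points need fixing, though. First, a direction slip that you should be careful with: since $U^s=\{v+s:v\in U\}$, membership $v\in K_j^{\down}$ means $v^{\up}\in K_j$. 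So the singular edge produced by an adjacent pair $u,v\in\intB^{\down}K_{12}$ in the same phase is $\{u^{\up},v^{\up}\}\in\partial(K_0,K_3)$, not $\{u^{\down},v^{\down}\}$, and in your ``deep'' case it is $v^{\up}$ that \refFA{11} forces into $\DilemmaEven ij$, which is precisely what places $v$ in $\DilemmaOdd ij\cap(\DilemmaEven ij)^{\down}$ --- the set that~\eqref{eq:choice-of-shift-direction} controls. As literally written, your deep-case vertices would land in $\DilemmaOdd ij\cap(\DilemmaEven ij)^{\up}$, which the chosen direction $\down$ does \emph{not} bound; the argument is correct only after the up/down convention is straightened out.

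Second, and more substantively, your accounting overdraws the $Ld^{-3/2}$ budget. The $E_1$ deficit charges $\sum_{i,j}|D_{\ins ij}\cap(\DilemmaEven ij\setminus D_{\out ij})^{\down}|$ to~\eqref{eq:choice-of-shift-direction}, and the ``symmetric'' $E_2$ deficit charges $\sum_{i,j}|(D_{\outtwo ij})^{\down}\cap\DilemmaOdd ij|$ to the same bound; adding them naively gives $2Ld^{-3/2}$, so your sketch as stated only yields a factor $4^{CM+2Ld^{-3/2}}$, weaker than the lemma (though still sufficient for Lemma~\ref{lem:prob-of-approx-enhanced} when $d$ is large). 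The paper closes exactly this point by noting that $D_{\ins ij}\cap(\DilemmaEven ij\setminus D_{\out ij})^{\down}$ and $(D_{\out ij})^{\down}\cap\DilemmaOdd ij$ are \emph{disjoint} subsets of $\DilemmaOdd ij\cap(\DilemmaEven ij)^{\down}$ (the up-neighbour lies in $K_i$ in the first case and in $K_j$ in the second), so the two charges together cost a single $Ld^{-3/2}$; with that observation your total deficit is about $5M+Ld^{-3/2}\le 9M+Ld^{-3/2}$ and the stated constant is met. A small further remark: Lemma~\ref{lem:existence-of-transformation}\ref{it:transformation-easy-recover} is not needed for your third inequality (the injection $v\mapsto v^{\up}$ into $D_{\outtwo ij}$ follows from the definitions alone); it is needed later, in Lemma~\ref{lem:D-is-minimal-cover}.
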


Lemma~\ref{lem:existence-of-transformation}\ref{it:transformation-injective} and the fact that the map $K \mapsto (D_{\ins ij})_{i,j}$ is injective on $\FS_{L,M}(A)$ imply that
\[ \sum_{K,f \colon g \in \cG_{K,f}} (2/3)^{\sum_{i,j} |D_{\ins ij}(K)|} \cdot (1/3)^{\sum_{i,j} |D_{\outtwo ij}(K)|} \le \prod_{i,j} \sum_{(U_\ins, U_\outtwo) \in \mathcal{D}_{ij}} (2/3)^{|U_\ins|} \cdot (1/3)^{|U_\outtwo|} ,\]
where
\[ \mathcal{D}_{ij} := \big\{ (D_{\ins ij}(K), D_{\outtwo ij}(K)) ~:~ K \in \FS_{L,M}(A,g) \big\} .\]
Thus, in light of~\eqref{eq:flow-image} and Lemma~\ref{lem:flow-bound},~\eqref{eq:flow-in2} will follow if we show that the right-hand side in the above displayed equation is at most $1$.
Since $\mathcal{D}_{ij} \subset \MC(\DilemmaOdd ij, \DilemmaEvenTwo ij)$ by Lemma~\ref{lem:D-is-minimal-cover}, this follows immediately from the next lemma.

\begin{lemma}\label{lem:sum-over-minimal-covers}
	Let $W$ be a finite bipartite graph with bipartition classes $W_\ins$ and $W_\out$.
	Then, for any $0 \le p \le 1$,
	\[ \sum_{(U_\ins,U_\out) \in MC(W_\ins,W_\out)} p^{|U_\ins|} \cdot (1-p)^{|U_\out|} \le 1 .\]
\end{lemma}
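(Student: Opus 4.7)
The approach is induction on $n := |W_\ins \cup W_\out|$, splitting the sum over minimal covers according to whether a chosen vertex $v \in W_\ins$ belongs to $U_\ins$ or not. Write $S(W)$ for the left-hand side of the inequality.

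The base case is $n = 0$ (or, more generally, when $W$ has no edges): the only minimal vertex cover is $(\emptyset, \emptyset)$, so $S(W) = 1$. For the inductive step, assume $W$ has at least one edge. Since $W$ is bipartite, some vertex $v \in W_\ins$ has $k := |N(v)| \ge 1$. I partition $\MC(W_\ins,W_\out)$ into
\[ \cA := \{ (U_\ins,U_\out) \in \MC(W_\ins,W_\out) : v \in U_\ins \} \quad \text{and} \quad \cB := \{ (U_\ins,U_\out) \in \MC(W_\ins,W_\out) : v \notin U_\ins \} , \]
and bound each sum by invoking the induction hypothesis on a strictly smaller graph.

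For case $\cA$, I will show that $(U_\ins \setminus \{v\}, U_\out)$ is a minimal cover of $W - v$ (the graph obtained by deleting $v$ and its incident edges). The edge-covering property is immediate. For minimality, vertices in $U_\ins \setminus \{v\}$ retain their private edges (which go to $W_\out \setminus U_\out$, so not to $v$), and vertices in $U_\out$ retain their private edges (which go into $W_\ins \setminus U_\ins$, hence not to $v$ either). This map is injective, so
\[ \sum_{\cA} p^{|U_\ins|}(1-p)^{|U_\out|} \le p \cdot S(W - v) .\]

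For case $\cB$, the condition $v \notin U_\ins$ combined with the covering property forces $N(v) \subset U_\out$. Setting $W' := W - v - N(v)$, I will show that $(U_\ins, U_\out \setminus N(v))$ is a minimal cover of $W'$. The delicate point is verifying minimality: vertices in $U_\ins$ have private edges ending in $W_\out \setminus U_\out \subset W_\out \setminus N(v)$, so they survive in $W'$; and vertices in $U_\out \setminus N(v)$ have private edges to $W_\ins \setminus U_\ins$, whose $W_\ins$-endpoint cannot be $v$ (otherwise the vertex would be in $N(v)$). This map is injective, yielding
\[ \sum_{\cB} p^{|U_\ins|}(1-p)^{|U_\out|} = (1-p)^k \sum_{\cB} p^{|U_\ins|}(1-p)^{|U_\out \setminus N(v)|} \le (1-p)^k \cdot S(W') .\]

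By the induction hypothesis $S(W - v) \le 1$ and $S(W') \le 1$, so
\[ S(W) \le p + (1-p)^k \le p + (1-p) = 1 , \]
where the last inequality uses $k \ge 1$ and $1 - p \in [0,1]$. There is no real obstacle; the only thing to be careful about is the minimality verification in case $\cB$ (and the fact that the case $\cA$ and $\cB$ maps need not be surjective, but injectivity suffices for the inequality).
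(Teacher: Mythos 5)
Your proof is correct, and it takes a genuinely different route from the paper. You induct on the vertex set, splitting the minimal covers according to whether a fixed vertex $v \in W_\ins$ of positive degree lies in $U_\ins$ (then delete $v$) or not (then the covering property forces $N(v) \subset U_\out$, and you delete $v$ together with $N(v)$); your minimality verifications via private edges are exactly the delicate points, and you are right that injectivity of the two restriction maps suffices --- surjectivity is not needed for an upper bound. The final estimate $p+(1-p)^k \le 1$ closes the induction. The paper instead deduces the statement as a special case of the more general Lemma~\ref{lem:minimum-cover}: for an arbitrary finite graph with vertex weights $p_u \ge 0$ satisfying $p_u+p_v\le 1$ on every edge, $\sum_{U \in \MC(G)} \prod_{u \in U} p_u \le 1$. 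That lemma is proved probabilistically, by processing the edges in a fixed order, letting each still-uncovered edge recruit one of its endpoints (or neither) independently with the prescribed probabilities, and showing that every minimal cover $U$ is produced by this random process with probability at least $\prod_{u\in U}p_u$. The paper's argument buys generality --- arbitrary graphs, non-uniform weights, and, as remarked there, hypergraphs --- which is not needed for the application; your induction is more elementary and self-contained. It is worth noting that your scheme also adapts to the weighted general-graph setting: in the case $v \notin U$ one has $N(v)\subset U$, and the contribution is bounded by $p_v + \prod_{w \in N(v)} p_w \le p_v + (1-p_v) = 1$, using the edge condition for a single neighbor of $v$.
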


In fact, when $0<p<1$, equality holds if and only if every vertex in $W$ has at most one neighbor. The proof of Lemma~\ref{lem:sum-over-minimal-covers} is given in Section~\ref{sec:minimal-covers}.

\subsection{Bounding the flow}
\label{sec:flow-bound}

In this section, we prove Lemma~\ref{lem:flow-bound}.
Let $K \in \FS_{L,M}(A)$, let $f \in \Col_K$ and let $g \in \cG_{K,f}$. Note that $\nu_K(f,g)=\prod_{i,j} \nu_{i,j}(f,g)$, where
\[ \nu_{i,j}(f,g) := (1/2)^{|B \cap B^1_{1, ij}| + |B \cap B_{0, ij}|} \cdot (1/4)^{|B \cap B^0_{1, ij} \setminus g^{-1}(2)|} \cdot (3/4)^{|B \cap B^0_{1, ij} \cap g^{-1}(2)|} .\]
We begin by observing that if $v,u \in \intB^{\down} K_{12}$ are adjacent then $\{ v^{\up}, u^{\up} \} \in \partial(K_0,K_3)$ so that $v,u \in K^{\anomaly}$. Thus, the vertices in $\intB^{\down} K_{12} \setminus (K^{\anomaly})^{\down}$ are isolated in $\intB^{\down} K_{12}$, implying that they are contained in every maximal independent set of $\intB^{\down} K_{12}$. Hence, $\intB^{\down} K_{12} \setminus (K^{\anomaly})^{\down} \subset B$ and
\[ \nu_{i,j}(f,g)
 \le (1/2)^{|B^1_{1, ij}| + |B_{0, ij}|} \cdot (1/4)^{|B^0_{1, ij} \setminus g^{-1}(2)|} \cdot (3/4)^{|B^0_{1, ij} \cap g^{-1}(2)|} \cdot 4^{|K^{\anomaly}|} .\]
Next, we express the sets $B^0_{1,ij}$ and $B_{0,ij}$ in terms of $D_{\ins ij}$ and $D_{\out ij}$.
By~\refFA{11}, we have
\begin{align*}
B^0_{1,ij} &= D_{\out ij}^{\down} \cap \DilemmaOdd ij^c , \\
B_{0, ij} &= D_{\ins ij} \cap (D_{\out ij} \cup \DilemmaEven ij^c)^{\down} ,
\end{align*}
so that, by~\eqref{eq:def-D-outtwo},
\begin{align*}
 B^0_{1,ij} \setminus g^{-1}(2) &= D_{\outtwo ij}^{\down} \cap \DilemmaOdd ij^c = D_{\outtwo ij}^{\down} \setminus (D_{\outtwo ij}^{\down} \cap \DilemmaOdd ij) ,\\
  B^0_{1,ij} \cap g^{-1}(2) &= (D_{\out ij} \setminus D_{\outtwo ij})^{\down} \cap \DilemmaOdd ij^c = (D_{\out ij} \setminus D_{\outtwo ij})^{\down} \setminus \big((D_{\out ij} \setminus D_{\outtwo ij})^{\down} \cap \DilemmaOdd ij \big) ,\\
  B_{0,ij} &= D_{\ins ij} \setminus (D_{\ins ij} \cap (\DilemmaEven ij \setminus D_{\out ij})^{\down}) .
\end{align*}
Putting these together, we obtain
\begin{align*}
\nu_{i,j}(f,g)
 &\le (1/2)^{|B^1_{1, ij}| + |B_{0, ij}|} \cdot (1/4)^{|B^0_{1, ij} \setminus g^{-1}(2)|} \cdot (3/4)^{|B^0_{1, ij} \cap g^{-1}(2)|} \cdot 4^{|K^{\anomaly}|} \\
%
%
 &= (1/2)^{|B^1_{1,ij}| + |D_{\ins ij}|} \cdot (1/4)^{|D_{\outtwo ij}|} \cdot (3/4)^{|D_{\out ij} \setminus D_{\outtwo ij}|} \cdot 4^{|K^{\anomaly}|} \\
&\quad \cdot (1/2)^{-|D_{\ins ij} \cap (\DilemmaEven ij \setminus D_{\out ij})^{\down}|} \cdot (1/4)^{-|(D_{\outtwo ij})^{\down} \cap \DilemmaOdd ij|} \cdot (3/4)^{-|(D_{\out ij} \setminus D_{\outtwo ij})^{\down} \cap \DilemmaOdd ij|} \\
%
%
 &\le (1/2)^{|D_{\ins ij}|} \cdot (1/4)^{|D_{\outtwo ij}|} \cdot (3/4)^{|B^1_{1,ij}| + |D_{\out ij} \setminus D_{\outtwo ij}|} \cdot 4^{|K^{\anomaly}|+|D_{\ins ij} \cap (\DilemmaEven ij \setminus D_{\out ij})^{\down}| + |D_{\out ij}^{\down} \cap \DilemmaOdd ij|} \\
  &\le (1/2)^{|D_{\ins ij}|} \cdot (1/4)^{|D_{\outtwo ij}|} \cdot (3/4)^{|B^1_{1,ij}| + |D_{\out ij} \setminus D_{\outtwo ij}|} \cdot 4^{|K^{\anomaly}|+|\DilemmaOdd ij \cap (\DilemmaEven ij)^{\down}|} ,
\end{align*}
where in the last inequality, we used the fact that $D_{\ins ij} \cap (\DilemmaEven ij \setminus D_{\out ij})^{\down}$ and $D_{\out ij}^{\down} \cap \DilemmaOdd ij$ are disjoint subsets of $\DilemmaOdd ij \cap (\DilemmaEven ij)^{\down}$.

We now combine the contributions of $\nu_{i,j}(f,g)$ to $\nu_K(f,g)$ over all pairs $(i,j)$.
Denote $D_\ins := \cup_{i,j} D_{\ins ij}$, $D_\out := \cup_{i,j} D_{\out ij}$ and $D_\outtwo := \cup_{i,j} D_{\outtwo ij}$ and recall that $B^1_1 = \cup_{i,j} B^1_{1,ij}$. Then
\[ \nu_K(f,g)
 \le (1/2)^{|D_\ins|} \cdot (1/4)^{|D_\outtwo|} \cdot (3/4)^{|B^1_1|+|D_\out| - |D_\outtwo|} \cdot 4^{4|K^{\anomaly}|+\sum_{i,j} |\DilemmaOdd ij \cap (\DilemmaEven ij)^{\down}|} .\]
Noting that $K_i \cap K_j^{\down} = B^1_{1,ij} \cup B^0_{1,ij} \cup B_{0,ij} \subset B^1_{1,ij} \cup D_{\ins ij} \cup (D_{\out ij})^{\down}$, we have
\[ |B^1_1| + |D_\ins| + |D_\out| \ge |\intB^{\down} K_{12}| .\]
Therefore, by~\eqref{eq:choice-of-shift-direction} and Lemma~\ref{cl:size-of-boundary}, and as $|K^{\anomaly}| \le 2M$, we obtain
\begin{align*}
\nu_K(f,g)
 &\le (1/2)^{|D_\ins|} \cdot (1/4)^{|D_\outtwo|} \cdot (3/4)^{|\intB^{\down} K_{12}| - |D_\ins| - |D_\outtwo|} \cdot 4^{8M + Ld^{-3/2}} \\
&\le (2/3)^{|D_\ins|} \cdot (1/3)^{|D_\outtwo|} \cdot (3/4)^{L/2d} \cdot  4^{9M + Ld^{-3/2}} .
\end{align*}

\subsection{Minimal vertex-covers}
\label{sec:minimal-covers}

Let $G$ be a graph.
We say that a set $U \subset V(G)$ is a {\em vertex-cover} of $G$ if every edge of $G$ has an endpoint in $U$.
We say that a vertex-cover is \emph{minimal} if it is minimal with respect to inclusion.
Denote by $\MC(G)$ the set of all minimal vertex-covers of $G$.
Lemma~\ref{lem:sum-over-minimal-covers} is a special case of the following lemma.

\begin{lemma}\label{lem:minimum-cover}
		Let $G=(V,E)$ be a finite graph, let $\{p_v\}_{v \in V}$ be non-negative numbers and assume that $p_u + p_v \le 1$ for all $\{u,v\} \in E$.
		Then
		\[ \sum_{U \in \MC(G)} \prod_{u \in U} p_u \le 1 .\]
\end{lemma}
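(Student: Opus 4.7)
The plan is to use induction on $|V(G)|$. For the base case $V(G) = \emptyset$, the only minimal vertex-cover is $\emptyset$, so the sum equals $1$. For the inductive step, if $G$ has an isolated vertex $v$ then $v$ lies in no minimal vertex-cover (removing $v$ from any cover still leaves a cover), so $\MC(G) = \MC(G - v)$ and the conclusion follows by induction. Thus I may assume every vertex has at least one neighbor, and fix a vertex $v$ together with some $w_0 \in N(v)$.

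I will partition $\MC(G)$ according to whether $v \in U$. The first structural claim is that $U \mapsto U \setminus N(v)$ is a bijection from $\{U \in \MC(G) : v \notin U\}$ onto $\MC(G')$, where $G'$ denotes the subgraph induced on $V \setminus v^+$. Indeed, when $v \notin U$ each edge $\{v,w\}$ must be covered by $w \in N(v)$, forcing $N(v) \subset U$; moreover, using the characterization that $U \in \MC(G)$ iff $U$ is a vertex-cover and every $u \in U$ has a neighbor in $V \setminus U$, the condition translates cleanly in both directions because the vertices of $N(v)$ are justified by the non-cover vertex $v$ itself, while vertices of $U \setminus N(v)$ have identical neighborhoods in $G$ and in $G'$ outside of $N(v)$. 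The second structural claim is that $U \mapsto U \setminus \{v\}$ is an injection from $\{U \in \MC(G) : v \in U\}$ into $\MC(G - v)$; the minimality condition for $u \neq v$ in $G$ and in $G - v$ coincides since $v \in U$ is never a possible witness in either graph.

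Combining the two cases and applying the inductive hypothesis to $G'$ and $G - v$ (both strictly smaller and inheriting the constraint $p_u + p_w \le 1$ on their edges from $G$) yields
\begin{align*}
\sum_{U \in \MC(G)} \prod_{u \in U} p_u &\;\le\; \Big(\prod_{w \in N(v)} p_w\Big) \sum_{U' \in \MC(G')} \prod_{u \in U'} p_u \;+\; p_v \sum_{U'' \in \MC(G-v)} \prod_{u \in U''} p_u \\
&\;\le\; \prod_{w \in N(v)} p_w + p_v.
\end{align*}
Since $p_w \in [0,1]$ for every $w$ and $w_0 \in N(v)$, the product is bounded by $p_{w_0}$, and the edge condition on $\{v, w_0\}$ gives $p_v + p_{w_0} \le 1$, closing the induction.

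The main obstacle is the careful verification of the two bijection/injection claims: one must track precisely how the characterization ``every vertex of the cover has a neighbor outside the cover'' behaves under the two vertex-deletion operations $G \mapsto G - v^+$ and $G \mapsto G - v$. Once these are in place, everything else reduces to the inductive hypothesis applied twice followed by the one-line bound $\prod_{w \in N(v)} p_w \le p_{w_0} \le 1 - p_v$, which is essentially tight and shows that the peeling strategy loses nothing in the worst case.
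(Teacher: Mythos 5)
Your proof is correct, and it takes a genuinely different route from the paper. You argue by induction on the number of vertices, splitting $\MC(G)$ according to whether a fixed non-isolated vertex $v$ lies in the cover: if $v\notin U$ then $N(v)\subset U$ and $U\setminus N(v)$ is a minimal cover of $G[V\setminus v^+]$, while if $v\in U$ then $U\setminus\{v\}$ is a minimal cover of $G-v$ (both verifications go through exactly as you indicate, since the witness vertex outside $U$ certifying minimality can never lie in $N(v)\cup\{v\}$ in the first case nor equal $v$ in the second), and the peeling closes with $\prod_{w\in N(v)}p_w + p_v \le p_{w_0}+p_v\le 1$, using that every $w\in N(v)$ satisfies $p_w\le 1-p_v\le 1$. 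The paper instead gives a probabilistic proof: it orders the edges, runs a greedy random process that, at each uncovered edge $e_i=\{u_i,v_i\}$, adds $u_i$ or $v_i$ with probabilities $p_{u_i},p_{v_i}$, and shows that each minimal cover $U$ is produced with probability at least $\prod_{u\in U}p_u$, so the weights sum to at most $1$. Your induction is more elementary and self-contained; the paper's coupling argument is what makes the remark following the lemma immediate, namely that the statement extends verbatim to hypergraphs under the condition $\sum_{u\in e}p_u\le 1$ per hyperedge — a generalization your vertex-peeling does not directly give, since for hypergraphs $v\notin U$ no longer forces $N(v)\subset U$.
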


\begin{proof}
	Let $E = \{e_1,e_2,\dots,e_n\}$ be an ordering of the edges of $G$ and write $e_i = \{u_i,v_i\}$.
	We prove the statement using a probabilistic method, by constructing a random set $X_n \subset V$ such that $\Pr(X_n=U) \ge \prod_{u \in U} p_u$ for every $U \in \MC(G)$.
	Since $\sum_{U \in \MC(G)} \Pr(X_n=U) = \Pr(X_n \in \MC(G)) \le 1$, the lemma will follow.

	For each $1 \le i \le n$, let $Y_i$ be an independent variable such that $Y_i=\{u_i\}$ with probability $p_{u_i}$, $Y_i=\{v_i\}$ with probability $p_{v_i}$ and $Y_i=\emptyset$ with probability $1-p_{u_i}-p_{v_i}$. Define $X_0 := \emptyset$ and, for $1 \le i \le n$, define $X_i:=X_{i-1}$ if $X_{i-1} \cap e_i \neq \emptyset$ and $X_i:=X_{i-1} \cup Y_i$ otherwise.
	
	Let $U \in \MC(G)$.
	Define $U_0:=\emptyset$. If $U_{i-1} \cap e_i \neq \emptyset$ then define $U_i := U_{i-1}$. Otherwise, choose $y_i \in U \cap e_i$ (which exists since $U$ is a cover) and define $U_i := U_{i-1} \cup \{y_i\}$.
	Let us show that $U_n=U$.
	By construction, $U_n \subset U$.
	To see that $U \subset U_n$, let $u \in U$.
	Since $U$ is a minimal cover, there exists an $1 \le i \le n$ such that $e_i = \{u,v\}$ and $v \notin U$.
	In particular, either $y_i=u$ so that $u \in U_i \subset U_n$, or $u \in U_{i-1} \subset U_n$.
	To see that $\Pr(X_n=U) \ge \prod_{u \in U} p_u$, observe that, for every $1 \le i \le n$,
	\[
		\Pr(X_i=U_i \mid X_1=U_1, \dots, X_{i-1}=U_{i-1}) = \begin{cases}1 &\text{if } U_{i-1} \cap e_i \neq \emptyset \\ p_{y_i} &\text{if } U_{i-1} \cap e_i = \emptyset \end{cases}.\]
	Thus,
	\[ \Pr(X_n=U) \ge \Pr(X_1=U_1,\dots,X_n=U_n)
	 = \prod_{i=1}^n \begin{cases}1 &\text{if } U_{i-1} \cap e_i \neq \emptyset \\ p_{y_i} &\text{if } U_{i-1} \cap e_i = \emptyset \end{cases} = \prod_{u \in U} p_u  . \qedhere \]
\end{proof}

We remark that the above lemma holds also for hypergraphs with essentially the same proof (where the above condition on $\{p_v\}$ becomes $\sum_{u \in e} p_u \le 1$ for every hyper-edge $e$).

%
\section{Approximations}
\label{sec:approx}
%

This section is devoted to the proof of Lemma~\ref{lem:family-of-FA}. That is, given integers $L,M \ge 0$, we show that there exists a small family of four-approximations $\cA$ (see Definition~\ref{def:four-approx}) which covers $\FS_{L,M,\rho}$, in the sense that each four-section $K \in \FS_{L,M,\rho}$ has a four-approximation in $\cA$.
The construction of the family of four-approximations is done progressively, in four steps.
In the $\ell$-th step, we define the notion of a \emph{level-$\ell$-approximation of a four-section $K$}.
We also use the term \emph{level-$\ell$-approximation} for $A$ which is a level-$\ell$-approximation of at least one four-section.
Formally, a level-$\ell$-approximation will be a tuple of subsets of $\Z^d$.
However, the sense in which each level-$\ell$-approximation actually approximates $K$ will vary among levels.
In each level, we exploit different structural properties of four-sections, which, by means of a small enumeration, allow us to obtain significantly more information about $K$.
Consequentially, the number of level-$\ell$-approximations needed to cover $\FS_{L,M,\rho}$ increases from one level to the next.
To prove Lemma~\ref{lem:family-of-FA}, we must control this increase so that the number of sets does not become too big.
Before describing the nature of these approximations, we give a short outline of this section.

Denote by $\FS^\ell(A)$ the collection of all four-sections $K$ such that $A$ is a level-$\ell$-approximation of $K$. Also, for a family of level-$\ell$-approximations $\cA$, denote (with slight abuse of notation) $\FS^\ell(\cA) :=  \cup_{A \in \cA} \FS^\ell(A)$.
We begin by showing that there is a small family $\cA$ of level-$1$-approximations which covers $\FS_{L,M,\rho}$ in the sense that $\FS_{L,M,\rho} \subset \FS^1(\cA)$ (Lemma~\ref{lem:family-of-level-1-approx}).
Next, for $\ell=1$ and $\ell=2$, we show that for each level-$\ell$-approximation $A$ there exists a small family
$\cA'$ of level-$(\ell+1)$-approximations which covers $\FS^{\ell}(A)$ in the sense that $\FS^{\ell}(A) \subset \FS^{\ell+1}(\cA')$ (Lemmas~\ref{lem:family-of-level-2-approx} and~\ref{lem:family-of-level-3-approx}).
Then we show that every level-$3$-approximation $A$ gives rise to a single level-$4$-approximation $A'$ such that $\FS^3(A) \subset \FS^4(A')$ (Lemma~\ref{lem:algorithm-for-level-4-approx}).
Together, this gives a small family of level-$4$-approximations which covers $\FS_{L,M,\rho}$ (Corollary~\ref{lem:family-of-level-4-approx}).
Finally, we show that every level-$4$-approximation $A$ gives rise to a four-approximation which covers $\FS^4(A)$ (Lemma~\ref{lem:level-4-approx-gives-FA}). This yields Lemma~\ref{lem:family-of-FA}.

Let us now give some details about the essence of the approximations at each level.
A level-$1$-approximation will be a $2$-tuple, consisting of a small set $U$ which approximates the boundary of $K$ away from singularities,
and of another set $W$ which consists of all the singularities in $K$. These sets will closely approximate $\intB K$ in the sense that if we remove $N(U) \cup W \cup N_{d/18}(W)$ from $\Z^d$, then each connected component in the remaining vertices will be contained entirely in $K_l$ for some $l\in\{0,1,2,3\}$. By recording the location in $K$ of all such components which are not too small and of each vertex in $U \cup W \cup N_{d/18}(W)$, we obtain a level-$2$-approximation.
This, in turn, is a $4$-tuple, recording the exact location in $K$ of all but a small number of vertices which are near the boundary of $K$.
The level-$3$-approximation adds geometric information about the boundaries of the sets $K_{ij}$, for $i\in\{1,2\}$ and $j\in\{0,3\}$.
In particular, if the location of a vertex
in, say, $\{K_{10},K_{23}\}$ is not recorded by the level-$3$-approximation, then the locations of the vast majority of its neighbors are.
Finally, a level-$4$-approximation is a data structure which records for each vertex to which $K_l$'s it may belong.
This data structure is a convenient way to fully utilize the information given by the level-$3$-approximation.

In the rest of the section, we maintain the convention that $i$ denotes an element in $\{1,2\}$, $j$ an element in $\{0,3\}$ and $l$ any element in $\{0,1,2,3\}$. In light of the duality between $K_1$ and $K_2$ and between $K_0$ and $K_3$, it will also be useful to denote $\bar{l} := 3-l$.
We assume throughout this section that $d$ is large enough for our arguments to hold.

\subsection{Level-1-approximation.}

In this section, we construct level-$1$-approximations, and show the existence of a small family of these objects which approximates any four-section in $\FS_{L,M,\rho}$.
We start with a basic property of four-sections, which we then exploit for the construction of level-1-approximations.

\begin{lemma}\label{lem:four-cycle-property}
	Let $K$ be a four-section, let $\{u,v\} \in \partial K_{12}$ with $u \in K_i$ and $v \in K_j$ and let $e \in \Z^d$ be a unit vector. Then one of the following holds:
	\begin{itemize}
		\item $\{u,u+e\} \in \partial(K_i,K_j)$ or $\{v,v+e\} \in \partial(K_i,K_j)$,
		 \item $\{u,u+e\} \in \partial K \setminus \partial K_{12}$ or $\{v,v+e\} \in \partial K \setminus \partial K_{12}$.
		 \item $\{u+e,v+e\} \in \partial(K_1,K_2)$.
	\end{itemize}
\end{lemma}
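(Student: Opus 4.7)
Set $u' := u+e$ and $v' := v+e$, and let $\bar{i} := 3-i$ and $\bar{j} := 3-j$, so that $K_{\bar{i}}$ is the other element of $\{K_1,K_2\}$ and $K_{\bar{j}}$ the other element of $\{K_0,K_3\}$. The plan is to assume all three conclusions fail and derive a contradiction using the parity constraints in~\refFSodd{}. Since $u \in K_i$ and $v \in K_j$, failure of the first bullet means $u' \notin K_j$ and $v' \notin K_i$. Since $\partial K \setminus \partial K_{12} = \partial(K_1,K_2) \cup \partial(K_0,K_3)$ and $u \in K_i \in \{K_1,K_2\}$, $v \in K_j \in \{K_0,K_3\}$, failure of the second bullet means $u' \notin K_{\bar{i}}$ and $v' \notin K_{\bar{j}}$. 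Combined with~\refFSpartition{}, this forces $u' \in K_i \cup K_{\bar{j}}$ and $v' \in K_j \cup K_{\bar{i}}$, while failure of the third bullet additionally yields $\{u',v'\} \notin \partial(K_1,K_2)$.

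This leaves four sub-cases for the pair $(u',v')$. The sub-case $u' \in K_i$, $v' \in K_{\bar{i}}$ is immediate: it yields $\{u',v'\} \in \partial(K_1,K_2)$, contradicting the third bullet. The remaining three sub-cases will be ruled out by parity. By~\refFSodd{} and the bipartiteness of $\Z^d$, $\intB K_0 \subset \Even$ and $\intB K_3 \subset \Odd$; write $\pi(0) := \Even$ and $\pi(3) := \Odd$, so that $\pi(\bar{j})$ is the opposite parity to $\pi(j)$. Note that $v \in K_j$ is adjacent to $u \notin K_j$, so $v \in \intB K_j \subset \pi(j)$.

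In the sub-case $u' \in K_i$, $v' \in K_j$, the vertex $v'$ is adjacent to $u' \in K_i \not\subset K_j$, so $v' \in \intB K_j \subset \pi(j)$; but $v' = v+e$ has parity opposite to $v \in \pi(j)$, a contradiction. In the sub-case $u' \in K_{\bar{j}}$, $v' \in K_j$, the same conclusion $v' \in \pi(j)$ holds (now via $u' \in K_{\bar{j}} \not\subset K_j$), yielding the same contradiction. Finally, in the sub-case $u' \in K_{\bar{j}}$, $v' \in K_{\bar{i}}$, we have $u' \in \intB K_{\bar{j}} \subset \pi(\bar{j})$ (since $u' \sim u \in K_i \not\subset K_{\bar{j}}$), and therefore $u = u'-e \in \pi(j)$; but then $u$ and $v$ both have parity $\pi(j)$ while being adjacent in $\Z^d$, contradicting bipartiteness.

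The only substantive bookkeeping is tracking which parity each internal boundary forces; the uniform notation $\pi(j)$ reduces this to a mechanical check, and I anticipate this to be the main (mild) source of confusion rather than a genuine obstacle. No enumeration of edges and no further structural property of four-sections are needed beyond the even/odd constraints~\refFSodd{} and the partition property~\refFSpartition{}.
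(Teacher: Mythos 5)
Your proof is correct and follows essentially the same route as the paper: after noting that failure of the first two bullets forces $u+e \in K_i \cup K_{\bar{j}}$ and $v+e \in K_{\bar{i}} \cup K_j$, both arguments eliminate the $K_{\bar j}$ and $K_j$ options using the parity constraint \refFSodd{} on $\intB K_0$ and $\intB K_3$ together with bipartiteness, leaving $\{u+e,v+e\} \in \partial(K_1,K_2)$. The only difference is presentational: you phrase it as a contradiction over four sub-cases, while the paper directly deduces the third bullet once the first two are assumed to fail.
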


For an illustration of all possible configurations of the locations of $u+e$ and $v+e$  in $K$ when $u \in K_1$ and $v \in K_0$, see Figure~\reffig{fig:four-cycles}.

\begin{proof}
	Assume that the first two statements do not hold, so that $u+e \in K_i \cup K_{\bar{j}}$ and $v+e \in K_{\bar{i}} \cup K_j$, where we recall that $\bar{l} = 3-l$.
	Since $v \in \intB K_j$, \refFSodd{} implies that $u+e \notin \intB K_{\bar{j}}$ and $v+e \notin \intB K_j$. Hence, applying \refFSodd{} again, $u \notin K_{\bar{j}}$ and $u+e \notin K_j$ imply that $u+e \notin K_{\bar{j}}$ and $v+e \notin K_j$.
	Thus, $u+e \in K_i$ and $v+e \in K_{\bar{i}}$ so that, in particular, $\{u+e,v+e\} \in \partial(K_1,K_2)$.
\end{proof}

\begin{figure}
	\centering
	\includegraphics[scale=1]{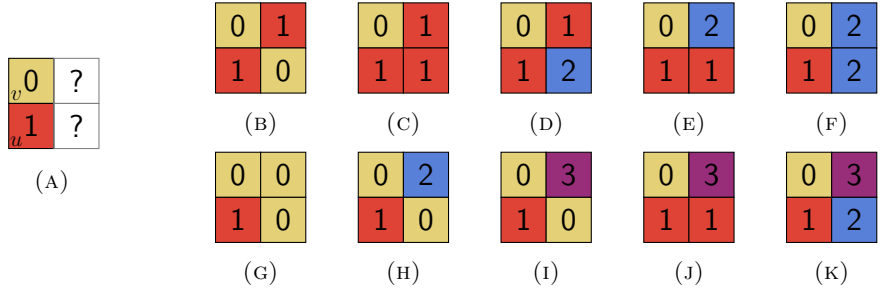}
	\caption{Four-cycle constellations: $(u,u+e,v+e,v)$ is a four-cycle with $u \in K_1$ and $v \in K_0$. The numbers represent the phases of the vertices, i.e., their location in the partition $\{K_0,K_1,K_2,K_3\}$.
	In cases~\textsc{(b,c,d,g,h,i)}, either $\{u,u+e\}$ or $\{v,v+e\}$ belongs to $\partial(K_0,K_1)$.
	In cases~\textsc{(d,e,f,i,j,k)}, at least one of $\{u,u+e\},\{v+v+e\},\{u+e,v+e\}$ belongs to $\partial K \setminus \partial K_{12}$.}
	\label{fig:four-cycles}
\end{figure}

\begin{cor}\label{cor:boundary-edge-sees-much-boundary}
	Let $K$ be a four-section and let $\{u,v\} \in \partial K_{12}$ with $u \in K_i$ and $v \in K_j$. Then
	\[ |\partial u \cap \partial K| + |\partial v \cap \partial K| \ge 2d .\]
	Moreover,
	\[ |N(u) \cap K_j| + |N(v) \cap K_i| + |\partial N(u) \cap \partial N(v) \cap \partial K \setminus \partial K_{12}| \ge 2d .\]
\end{cor}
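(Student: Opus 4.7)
The plan is to apply Lemma~\ref{lem:four-cycle-property} to each of the $2d$ unit vectors $e \in \Z^d$, aggregating the contributions and avoiding overcounting; throughout, let $f := v - u$ denote the unit vector from $u$ to $v$. For the first inequality, I will verify that for every unit vector $e$ at least one of the edges $\{u,u+e\}$ and $\{v,v+e\}$ lies in $\partial K$. The first two bullets of Lemma~\ref{lem:four-cycle-property} state this directly. In the third bullet, the conclusion $\{u+e,v+e\} \in \partial(K_1,K_2)$ forces $v+e \in K_{\bar i}$ with $\bar i \in \{1,2\}$, so, since $v \in K_j$ and $j \in \{0,3\}$, the edge $\{v,v+e\}$ still lies in $\partial K$. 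Summing the indicator over the $2d$ unit vectors and noting that both $e \mapsto \{u,u+e\}$ and $e \mapsto \{v,v+e\}$ are injective yields $|\partial u \cap \partial K| + |\partial v \cap \partial K| \ge 2d$.

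For the sharper second inequality, I will assign to each $e$ a distinguished \emph{witness} in one of the three sets on the left-hand side via the following priority: (P1) the vertex $u+e \in N(u) \cap K_j$ if $u+e \in K_j$; (P2) otherwise the vertex $v+e \in N(v) \cap K_i$ if $v+e \in K_i$; (P3) otherwise, the first among the edges $\{u,u+e\}, \{v,v+e\}, \{u+e,v+e\}$ that lies in $\partial K \setminus \partial K_{12}$, which exists by Lemma~\ref{lem:four-cycle-property} as its first two bullets now fail. Each candidate edge lies in $\partial N(u) \cap \partial N(v)$: bipartiteness of $\Z^d$ yields $N(u) \cap N(v) = \emptyset$, and the identity $|e \pm f|^2 = 2 \pm 2 \langle e, f \rangle \in \{0,2,4\}$ for axis-aligned unit vectors forces $u+e \in N(u) \setminus N(v)$ and $v+e \in N(v) \setminus N(u)$, so each candidate edge joins the disjoint classes $N(u) \setminus N(v)$ and $N(v) \setminus N(u)$.

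The main obstacle is the injectivity bookkeeping for the map $e \mapsto$ witness. Vertex witnesses from distinct $e$'s are automatically distinct, and cross-type collisions between $N(u) \cap K_j$ and $N(v) \cap K_i$ are excluded by $N(u) \cap N(v) = \emptyset$. For edge witnesses, a coincidence $\{u+e_1,v+e_1\} = \{u+e_2,v+e_2\}$ would require $u+e_1 = v+e_2$, hence $|e_1 - e_2| = 1$, which is impossible for axis-aligned unit vectors. The only cross-type edge coincidence, $\{u,u+f\} = \{v,v-f\} = \{u,v\}$, is ruled out because $\{u,v\} \in \partial K_{12}$ cannot serve as a witness. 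Once these cases are eliminated, injectivity of $e \mapsto$ witness yields the count $\ge 2d$.
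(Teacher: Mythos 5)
Your argument is correct and is essentially the paper's proof: the paper likewise applies Lemma~\ref{lem:four-cycle-property} direction by direction, noting that for each unit vector $e$ one gets either an edge of $\partial K$ at $u$ or $v$ (first part), or a vertex in $N(u)\cap K_j$ or $N(v)\cap K_i$ or one of the three candidate edges in $\partial K\setminus\partial K_{12}$ (second part), and then counts over the $2d$ directions. You simply spell out the bookkeeping (membership of the candidate edges in $\partial N(u)\cap\partial N(v)$, injectivity of the witness assignment, and the exclusion of $\{u,v\}$) that the paper leaves implicit.
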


\begin{proof}
	For the first part, it suffices to show that for any unit vector $e$ in $\Z^d$, either $\{u,u+e\} \in \partial K$ or $\{v,v+e\} \in \partial K$, while for the second part, it suffices to show that for any such unit vector, either $u+e \in K_j$ or $v+e \in K_i$ or one of $\{u,u+e\},\{v,v+e\},\{u+e,v+e\}$ belongs to $\partial K \setminus \partial K_{12}$.
	Both statements follow from Lemma~\ref{lem:four-cycle-property}.
\end{proof}

For a four-section $K$, denote the {\em revealed vertices} in $K$ by
\[ K^{\rev} := \{ v \in \Z^d ~:~ |\partial v \cap \partial K_{12}| \geq d \} .\]
That is, a vertex is revealed if it sees the regular boundary of $K$ in at least half of the $2d$ directions.
We say that a set $U \subset \Z^d$ \emph{separates} $K$ if every edge in $\partial K$ has an endpoint in $U$, or equivalently, if every connected component of $\Z^d \setminus U$ is contained in $K_l$ for some $l \in \{0,1,2,3\}$.

\begin{cor}\label{cor:revealed-and-anomalies-separate}
	Let $K$ be a four-section. Then $K^{\rev} \cup K^{\anomaly}$ separates $K$.
\end{cor}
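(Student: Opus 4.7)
The plan is to reduce the statement to Corollary~\ref{cor:boundary-edge-sees-much-boundary} by a simple case split on the type of boundary edge. Unfolding the definition, we need to show that every edge $\{u,v\} \in \partial K$ has at least one endpoint in $K^{\rev} \cup K^{\anomaly}$.

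First I would dispose of the singular case: if $\{u,v\} \in \partial K \setminus \partial K_{12}$, then directly from the definition of $K^{\anomaly}$ both $u$ and $v$ lie in $K^{\anomaly}$, so there is nothing to prove. This leaves the regular case $\{u,v\} \in \partial K_{12}$, where I would invoke the first inequality of Corollary~\ref{cor:boundary-edge-sees-much-boundary}, namely $|\partial u \cap \partial K| + |\partial v \cap \partial K| \ge 2d$. By pigeonhole, one endpoint, say $u$, satisfies $|\partial u \cap \partial K| \ge d$.

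Now I would split this count according to whether the $\ge d$ incident boundary edges are regular or singular. Write $\partial u \cap \partial K = (\partial u \cap \partial K_{12}) \cup (\partial u \cap (\partial K \setminus \partial K_{12}))$. If the second set is non-empty, then $u \in K^{\anomaly}$ by definition. Otherwise, $\partial u \cap \partial K_{12} = \partial u \cap \partial K$ has cardinality at least $d$, so $u \in K^{\rev}$. In either subcase $u \in K^{\rev} \cup K^{\anomaly}$, completing the proof.

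There is no real obstacle here; the content has been packed into Corollary~\ref{cor:boundary-edge-sees-much-boundary}, and this corollary is essentially a two-line unpacking of definitions combined with that bound.
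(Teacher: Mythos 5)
Your proof is correct and follows essentially the same route as the paper: the paper likewise disposes of singular edges via $K^{\anomaly}$, applies the first inequality of Corollary~\ref{cor:boundary-edge-sees-much-boundary} to a regular boundary edge, and concludes that an endpoint lies in $K^{\rev}\cup K^{\anomaly}$ (the paper just assumes $u,v\notin K^{\anomaly}$ upfront, so that $\partial u\cap\partial K=\partial u\cap\partial K_{12}$, rather than doing your per-vertex subcase). No gap here.
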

\begin{proof}
	Let $\{u,v\} \in \partial K$. We must show that either $u$ or $v$ belongs to $K^{\rev} \cup K^{\anomaly}$.
	Assume that $u,v \notin K^{\anomaly}$.
	Then, by Corollary~\ref{cor:boundary-edge-sees-much-boundary}, $|\partial u \cap \partial K_{12}| + |\partial v \cap \partial K_{12}| \geq 2d$, so that, in particular, either $u$ or $v$ belongs to $K^{\rev}$.
\end{proof}

\begin{definition}[level-$1$-approximation]
	\label{def:level-1-approx}
	A pair $(U,W)$ of subsets of $\Z^d$ is called a \emph{level-$1$-approximation} of a four-section $K$ if $W=K^{\anomaly}$ and the following holds:
	\begin{enumerate}[\qquad(a)]
		\item \label{it:def-level-1-approx-separate} $N(U) \cup W \cup N_{d/18}(W)$ separates $K$.
		\item \label{it:def-level-1-approx-many-revealed} $N_{d/9}(K_l \cap K^{\rev}) \subset N(K_l \cap U)$ for all $l \in \{0,1,2,3\}$.
		\item \label{it:def-level-1-approx-size} $|U| \le C  (L/d^{3/2} + M/\sqrt{d}) \sqrt{\log d}$, where $L := |\partial K_{12}|$ and $M := |\partial K \setminus \partial K_{12}|$.
	\end{enumerate}
\end{definition}

We remark that part~\eqref{it:def-level-1-approx-many-revealed} guarantees that vertices which are adjacent to many revealed vertices are, in a sense, captured by level-1-approximations. More details will be given after the definition of a level-2-approximation. The constants $1/18$ and $1/9$, which arise in our proof, are of no particular significance and (after suitable modifications) others constants could be used.

We now show the existence of level-1-approximations. The proof is accompanied by Figure~\ref{fig:existence-of-U}.

\begin{lemma}\label{lem:existence-of-U}
	For every four-section $K$ there exists a set $U \subset (\intB K)^+$
	such that $K \in \FS^1((U,K^{\anomaly}))$, i.e., such that $(U,K^{\anomaly})$ is a level-$1$-approximation of $K$.
\end{lemma}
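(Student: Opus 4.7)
The plan is to build $U$ as a union of Lov\'asz-style coverings, one per phase. Concretely, set $A_l := K_l \cap K^{\rev}$ for each $l \in \{0,1,2,3\}$ and apply Lemma~\ref{lem:existence-of-covering2} to $A_l$ with threshold $t = d/9$, producing $U_l \subseteq A_l \subseteq K^{\rev} \subseteq \intB K$ with $|U_l| \le \frac{9(1+\log 2d)}{d}|A_l|$ and $N_{d/9}(A_l) \subseteq N(U_l)$. Taking $U := U_0 \cup U_1 \cup U_2 \cup U_3 \subseteq (\intB K)^+$, one has $K_l \cap U = U_l$ since the phases are pairwise disjoint, so property~\refLAAmanyRevealed{} is immediate.

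For the size bound~\refLAAsize{}, double-counting incidences gives $|K^{\rev}| \le 2L/d$, as each $v \in K^{\rev}$ contributes at least $d$ to $|\partial K_{12}|$ and every edge of $\partial K_{12}$ is counted at most twice. Summing the per-phase bounds yields $|U| \le \frac{18(1+\log 2d)L}{d^2}$, which is dominated by $CL\sqrt{\log d}/d^{3/2}$ for large $d$; the $M\sqrt{\log d}/\sqrt{d}$ slack is not needed here since the singularities are recorded exactly in $W$.

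The substantive step is the separation property~\refLAAseparate{}. By Corollary~\ref{cor:revealed-and-anomalies-separate}, $K^{\rev} \cup W$ separates $K$, so it suffices to prove $K^{\rev} \setminus (W \cup N_{d/18}(W)) \subseteq N(U)$. Fix $v$ in this set and assume $v \in K_i \subseteq K_{12}$. Then $v$ has at least $d - d/18$ non-anomalous neighbors in $K_{03}$. For each such neighbor $u \in K_j$, the second bound of Corollary~\ref{cor:boundary-edge-sees-much-boundary} applies; since $u,v \notin W$, the singular-edge term reduces to a count of parallel singular edges $\{v+e, u+e\}$, each of which forces $v+e \in W$, so that term is at most $|N(v) \cap W| < d/18$. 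Hence $|N(u) \cap K_i| + |N(v) \cap K_j| > 2d - d/18$, yielding the dichotomy: either $|N(u) \cap K_i| \ge d$, placing $u$ in $A_j$, or $|N(v) \cap K_j| \ge 17d/18$. Since $|N(v) \cap K_0| + |N(v) \cap K_3| \le 2d$, at most one $j \in \{0,3\}$ can satisfy the latter; a pigeonhole argument over the non-dominant phase then produces at least $d/9$ revealed neighbors of $v$ in some $A_{l'}$, so $v \in N_{d/9}(A_{l'}) \subseteq N(U_{l'}) \subseteq N(U)$.

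The principal technical obstacle lies in the extremal regime where one phase, say $K_0$, dominates $v$'s neighborhood with more than $17d/18$ neighbors and $v$'s $K_3$-side is too sparse for the pigeonhole to yield $d/9$ revealed neighbors directly. Dispelling this case requires combining the dichotomy above with the four-cycle constraints of Lemma~\ref{lem:four-cycle-property} and the non-isolation property~\refFSisolated{}, to propagate the revealed structure to $v$'s same-phase neighbors or to directly exhibit enough revealed vertices in $A_0$ near $v$.
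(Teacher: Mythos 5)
Your verification of \refLAAmanyRevealed{} and \refLAAsize{} is fine, but the separation step \refLAAseparate{} has a genuine gap, and it is not the kind that a sharper pigeonhole can close. Because your $U$ is contained in $K^{\rev}$ and Lemma~\ref{lem:existence-of-covering2} only guarantees $N_{d/9}(K_l\cap K^{\rev})\subset N(U_l)$, your argument ultimately needs every revealed vertex $v\notin K^{\anomaly}\cup N_{d/18}(K^{\anomaly})$ to have at least $d/9$ revealed neighbours in a single phase. This is false. Consider a ``deep well'': take $K_2=K_3=\emptyset$, let $K_0$ consist locally of the column $T=\{u-ke_d:k\ge 0\}$ together with all neighbours of its odd vertices (this thickening is forced by $\intB K_0\subset\Even$), and let $K_1$ be a large odd box minus this set; one checks \refFSpartition{}--\refFSisolated{} and that there are no singularities, so $W=\emptyset$. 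Then $u$ is revealed ($2d-1$ neighbours in $K_1$), but its unique $K_0$-neighbour $u-e_d$ has no boundary edges, each $K_1$-neighbour $u+s$ has exactly two edges to $K_0$ (to $u$ and to $u+s-e_d$), and the revealed vertices $u+s-e_d$ likewise have no revealed neighbours. Hence $u\notin N(U)$ for \emph{any} $U\subset K^{\rev}$, the opposite endpoints $u+s$ lie only in $N_2(A_0)$ (far below the $d/9$ threshold), and locally $N_{d/9}(A_0)=\{u-e_d\}$, so a perfectly legal output of Lemma~\ref{lem:existence-of-covering2} (say, picking only $u-2e_d$ from this neighbourhood) leaves boundary edges such as $\{u,u+e_1\}$ with neither endpoint in $N(U)\cup W\cup N_{d/18}(W)$. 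This is exactly the extremal regime you flag in your final paragraph, and it cannot be ``dispelled by propagation'': there is nothing revealed adjacent to $u$ to propagate from. (Two smaller issues: the claim that at most one $j\in\{0,3\}$ can have $|N(v)\cap K_j|\ge 17d/18$ is arithmetically wrong since $2\cdot\tfrac{17}{18}<2$, and revealed vertices $v\in K_0\cup K_3$ are never treated; these are secondary.)

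This failure is precisely why the paper's $U$ is \emph{not} contained in $K^{\rev}$. Besides covering the heavily exposed sets $A'_l$, the paper introduces $A''_{l,m}:=\intB K_l\cap N_{2d-2r}(K_m)$, the vertices nearly swallowed by another phase, and places $B''_{l,m}:=N_t(A''_{l,m})\setminus K_m$ into $U$; then, for a revealed $u$, it uses \refFSisolated{} to choose a neighbour $z\in N(u)\setminus K_m$ and a four-cycle case analysis to conclude $u\in N(U)\cup N_{d/18}(K^{\anomaly})$. In the well example, $z=u-e_d$ is not revealed, yet it is adjacent to $2d$ vertices of $A''_{0,1}$, so it enters $U$ through $B''_{0,1}$ and covers all the boundary edges at $u$ at once. (The paper additionally needs the sets $B$ and $B'''$ to handle singular edges concentrated near $z$ rather than near $u$ in that deeper analysis.) So the missing ingredient is an enlargement of $U$ beyond revealed vertices, driven by the ``almost swallowed'' vertices and the \refFSisolated{}-neighbour; without something playing this role, your construction and your proof of \refLAAseparate{} cannot be completed.
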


\begin{figure}
	\centering
	\includegraphics[scale=1.2]{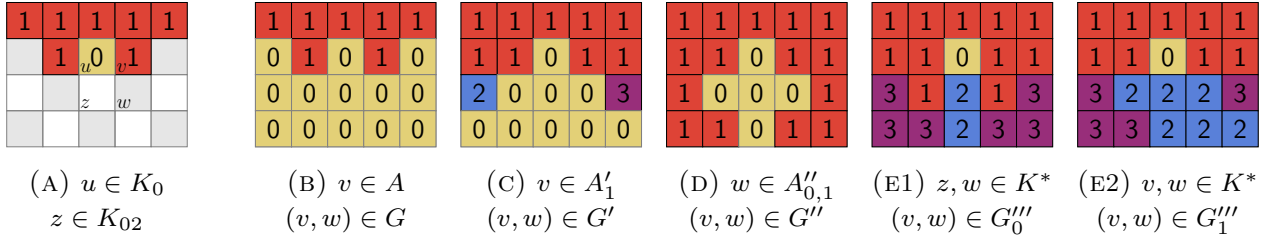}
	\caption{Constructing the separating set. In~\textsc{(a)}, a non-singular revealed vertex $u \in K_0$ is depicted along with a neighbor $z \in K_{02}$. Every four-cycle $(u,v,w,z)$ such that $v \in K_1$ falls into one of four types. A typical situation corresponding to each such type is shown in the figure. In~\textsc{(b)}, $v$ has at least $r$ regular boundary edges. In~\textsc{(c)}, $v$ belongs to at least $r$ four-cycles having a singularity. In~\textsc{(d)}, $w \in K_0$ and has at least $2d-2r$ neighbors in $K_1$. In~\textsc{(e)}, one of the edges $\{z,w\}$ or $\{v,w\}$ is a singularity. At least $1/4$ of all such four-cycles belong to the same type. If it is the type corresponding to~\textsc{(b)}, then $u$ is adjacent to many vertices which have many regular boundary edges and one such neighbor of $u$ is included in $B$; if it is~\textsc{(c)}, then $u$ is adjacent to many vertices which are themselves adjacent to many singularities and one such neighbor is included in $B'_1$; if it is~\textsc{(d)}, then $z$ is adjacent to many vertices in $K_0$ which have almost all their neighbors in $K_1$ and $z$ is included in $B''_{0,1}$; if it is~\textsc{(e)}, then either $z$ has many singular edges so that $z \in B'''$ or $u$ is adjacent to many singularities so that $u \in N_{d/18}(K^{\anomaly})$. The set $U$ is obtained by combining $B$, $B'_i$, $B''_{i,j}$ and $B'''$.}
	\label{fig:existence-of-U}
\end{figure}

\begin{proof}
	Let $K$ be a four-section and denote $L := |\partial K_{12}|$ and $M := |\partial K \setminus \partial K_{12}|$.
	Denote $r:=\sqrt{d \log d}$ and $t:=\frac{d}9$.
	Define
	\begin{align*}
		A &:= \intB K \cap \{ v : |\partial N(v) \cap \partial K \setminus \partial K_{12}| \ge r \}, \\
		A'_l &:= \intB K_l \cap \{ v : |\partial v \cap \partial K_{12}| \ge r \}, &&l \in \{0,1,2,3\} ,\\
		A''_{l,m} &:= \intB K_l \cap N_{2d-2r}(K_m), &&l,m \in \{0,1,2,3\}\text{ such that }m\notin \{l,\bar{l}\}.
	\end{align*}
	Observe that, by Lemma~\ref{lem:sizes}, we have
	\begin{equation}\label{eq:a-s}
		|A| \le \frac{4dM}r,\quad \sum_l |A'_l| \le \frac{2L}r\quad\text{ and }\quad \sum_{l,m} |A''_{l,m}| \le \sum_{l,m} \frac{|\partial (K_l,K_m)|}{2d-2r} = \frac{L}{d-r}.
	\end{equation}
	We now use Lemma~\ref{lem:existence-of-covering2} with $A$ and $A'_l$ to obtain sets $B \subset \intB K$ and $B'_l \subset \intB K_l$ such that
	\begin{equation}\label{eq:u-s}	
		\begin{aligned}
			|B|\le\frac{2\log d}t|A|\quad \text{and} \quad N_t(A)&\subset N(B),\\
			|B'_l|\le\frac{2\log d}t|A'_l| \quad\text{and}\quad N_t(A'_l)&\subset N(B'_l).
		\end{aligned}
	\end{equation}
	We also define $B''_{l,m}:=N_t(A''_{l,m})\setminus K_m$ for $l,m\in\{0,1,2,3\}$ such that $m\notin \{l,\bar{l}\}$. By Lemma~\ref{lem:sizes}, we have
	\[ |B''_{l,m}| \le \frac{2r}{t}|A''_{l,m}| .\]
	Denote $B':=\cup_l B'_l$, $B'':=\cup_{l,m}B''_{l,m}$ and $B''':= \{ v : |\partial v \cap \partial K \setminus \partial K_{12}| \ge t/2\}$, and observe that $|B'''|\le \frac{4M}{t}$.
	Finally, we define $U:=B\cup B'\cup B''\cup B'''$. Clearly, $U \subset (\intB K)^+$. Moreover, by~\eqref{eq:a-s} and~\eqref{eq:u-s}, we have
	\[ |U| \le \frac{2\log d}t\left(\frac{4dM}r+ \frac{2L}r \right)+\frac{2rL}{t(d-r)}+ \frac{4M}{t} .\]
	Plugging in the values of $r$ and $t$, we obtain
	\[ |U|\le  \frac{36\log d}d \cdot \frac{2dM + L}{\sqrt{d\log d}} +\frac{36\sqrt{\log d}L}{d^{3/2}}+ \frac{36M}{d}\le
	C\sqrt{\log d}\left(\frac{L}{d^{3/2}} + \frac{M}{\sqrt{d}}\right).\]

	In order to show that $(U,K^{\anomaly})$ is a level-$1$-approximation of $K$, it remains to show that~\refLAAseparate{} and~\refLAAmanyRevealed{} hold.
	To see~\refLAAmanyRevealed{}, note that $K_l \cap K^{\rev} \subset A'_l$ so that, by~\eqref{eq:u-s},
\[ N_{d/9}(K_l \cap K^{\rev}) \subset N_t(A'_l) \subset N(B'_l) \subset N(K_l \cap U).\]
	Towards showing~\refLAAseparate{}, observe that, by Corollary~\ref{cor:revealed-and-anomalies-separate}, it suffices to show that
\[ K^{\rev} \subset N(U) \cup K^{\anomaly} \cup N_{d/18}(K^{\anomaly}) .\]
To see this, let $u \in K_l \cap K^{\rev} \setminus K^{\anomaly}$ for some $l \in \{0,1,2,3\}$,
	and let $m \in \{0,1,2,3\} \setminus \{l,\bar{l}\}$ be such that $u \in N_{d/2}(K_m)$.
	By~\refFSisolated{}, there exists a vertex $z \in N(u) \setminus K_m$.
	Let $F$ denote the set of pairs $(v,w)$ such that $(u,v,w,z)$ is a four-cycle and $v \in K_m$, and note that $|F| \ge d/2-1$.
	Denote
	\begin{align*}
		G'''_0 &:= \{ (v,w) \in F ~:~ \{z,w\} \in \partial K \setminus \partial K_{12} \}, \\
		G'''_1 &:= \{ (v,w) \in F ~:~ \{v,w\} \in \partial K \setminus \partial K_{12} \},
	\end{align*}
	and
	$G''' := G'''_0 \cup G'''_1$.
	Note that, by Lemma~\ref{lem:four-cycle-property}, for any $(v,w) \in F \setminus G'''$, we have $w \in K_l$.
	Denote
	\begin{align*}
		G &:= \{ (v,w) \in F \setminus G''' ~:~ v \in A \}, \\
		G' &:= \{ (v,w) \in F \setminus G''' ~:~ v \in A'_m \}, \\
		G'' &:= \{ (v,w) \in F \setminus G''' ~:~ w \in A''_{l,m} \} .
	\end{align*}
    Corollary~\ref{cor:boundary-edge-sees-much-boundary} implies that $F \setminus G''' = G \cup G' \cup G''$. Thus, either $|G|$, $|G'|$, $|G''|$ or $|G'''|$ is at least $|F|/4 \ge t$.
	Now observe that
	if $|G| \ge t$ then $u \in N_t(A) \subset N(B)$.
	If $|G'| \ge t$ then $u \in N_t(A'_m) \subset N(B')$.
	If $|G''| \ge t$ then $z \in N_t(A''_{l,m}) \setminus K_m$ so that $u \in N(B'')$.
	If $|G'''| \ge t$ then, either $|G'''_0| \ge t/2$ so that $z \in B'''$ and $u \in N(B''')$, or $|G'''_1| \ge t/2$ so that $u \in N_{t/2}(K^{\anomaly})$.
    Hence $u \in N(U) \cup N_{d/18}(K^{\anomaly})$, as required.
\end{proof}

We now show the existence of a small family of level-$1$-approximations which covers $\FS_{L,M,\rho}$.
To do so, we exploit the connectivity of the boundary of $K$.

\begin{lemma}\label{lem:family-of-level-1-approx}
	For any integers $L,M \ge 0$, there exists a family $\cA$ of level-$1$-approximations of size
	\[ |\cA| \le \exp\left( C L d^{-3/2} \log^{3/2} d + CM \log d \right) \]
	such that $\FS_{L,M,\rho} \subset \FS^1(\cA)$.
\end{lemma}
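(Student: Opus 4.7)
The plan is to enumerate all candidate pairs $(U, W)$ that could arise from Lemma~\ref{lem:existence-of-U}, and retain only those that are genuinely level-$1$-approximations. For each $K \in \FS_{L,M,\rho}$, Lemma~\ref{lem:existence-of-U} produces a set $U \subset (\intB K)^+$ with $|U| \le C(L/d^{3/2} + M/\sqrt{d})\sqrt{\log d}$, while $W := K^\anomaly$ automatically satisfies $W \subset \intB K$ and $|W| \le 2M$ (each of the $M$ singular edges contributes at most two endpoints to $W$).

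The crux is a connectivity statement: $U \cup W$ is connected in $(\Z^d)^{\otimes 5}$. To see this, I would first show that every $v \in \intB K$ lies within graph distance $2$ of $U \cup W$. Indeed, $v$ has a neighbor $u$ with $\{v,u\} \in \partial K$, and by the separation property~\refLAAseparate{}, one of $v, u$ belongs to $N(U) \cup W \cup N_{d/18}(W)$; each vertex of this set is within distance $1$ of $U \cup W$. Combined with the fact that $U \cup W \subset (\intB K)^+$ is in turn within distance $1$ of $\intB K$, and that $\intB K$ is connected in $\Z^d$ (since $K \in \FS_\rho$), Sapozhenko's Lemma~\ref{lem:r-connected-sets} with $a = 1$ and $b = 2$ delivers the desired $5$-connectivity.

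To anchor the enumeration, I would use a ray from $\rho$ in a fixed direction $e_1$. Since $\rho \in K_{123}$ while $K_0 \supset \Lambda^c$ by~\eqref{eq:K-0-is-co-finite}, the smallest $t^* \ge 1$ with $\rho + t^* e_1 \in K_0$ exists, and the vertex $v^* := \rho + (t^* - 1) e_1$ lies in $\intB K$, hence within distance $2$ of some $w^* \in U \cup W$. Applying Lemma~\ref{lem:isoperimetry} to the odd set $K_{123}$ and using $|\partial K_{123}| \le |\partial K| = L+M$ yields $|K_{123}| \le \bigl((L+M)/(2d)\bigr)^{d/(d-1)}$, which bounds the number of admissible values of $t^*$.

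The final count enumerates: (i) $t^*$; (ii) $w^*$ among the at most $(2d+1)^2$ vertices within distance $2$ of $v^*$; (iii) a connected subset $T$ of $(\Z^d)^{\otimes 5}$ of size at most $|U|+|W|$ containing $w^*$, of which by Lemma~\ref{lem:number-of-connected-graphs}---noting that $(\Z^d)^{\otimes 5}$ has maximum degree at most $(2d+1)^5$---there are at most $(Cd^5)^{|U|+|W|}$; and (iv) a partition of $T$ into $U$ and $W$, of which there are at most $3^{|U|+|W|}$. Combining these and inserting the bounds on $|U|$ and $|W|$, and using~\eqref{eq:L+M-is-large} (which says $L+M \ge d^2$ or $M \ge 2d$) to absorb the $O(\log(L+M))$ anchor factor into the stated exponent, the total is at most $\exp(CL d^{-3/2} \log^{3/2} d + CM \log d)$. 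The main obstacle I expect is the distance-$2$ proximity argument underlying the $5$-connectivity: this is what allows us to trade the unwieldy set $\intB K$, of size up to $L+M$, for the much smaller $U \cup W$ at the cost of only a constant-factor blowup in the adjacency radius, and everything downstream hinges on getting the right constants there.
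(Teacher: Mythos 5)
Your proposal is correct and takes essentially the same approach as the paper's proof: Lemma~\ref{lem:existence-of-U} together with $W=K^{\anomaly}$, the separation property \refLAAseparate{} combined with the connectivity of $\intB K$ and Lemma~\ref{lem:r-connected-sets} (with $a=1$, $b=2$) to make $U\cup W$ connected in $(\Z^d)^{\otimes 5}$, an $e_1$-ray anchor at $\rho$ controlled by the isoperimetric bound on $|K_{123}|$, and a count of $(\Z^d)^{\otimes 5}$-connected sets with a $3^{|T|}$ factor for the decomposition into $(U,W)$, absorbing the anchor factor via~\eqref{eq:L+M-is-large}. The only (inconsequential) imprecision is citing~\eqref{eq:K-0-is-co-finite} for the existence of $t^*$: here $K$ is an arbitrary four-section in $\FS_{L,M,\rho}$ rather than a breakup of a coloring, and the existence of $t^*$ follows instead from the finiteness of $K_{123}$ built into the definition of $\FS_{\rho}$.
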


\begin{proof}
	Recalling~\eqref{eq:L+M-is-large}, we may assume that $L+M \ge d^2$ or $M \ge 2d$.
	Let $(\Z^d)^{\otimes 5}$ be the graph obtained from $\Z^d$ by adding edges between vertices at distance at most $5$ in $\Z^d$.
    Setting $V:=\{\rho+ke_1 : 0 \le k < (L+M)^2 \}$ and $r:= CLd^{-3/2}\sqrt{\log d} +3M$, we let $\cA'$ be the collection of all subsets of $\Z^d$ of size at most $r$,
    which are connected in $(\Z^d)^{\otimes 5}$ and intersect $V^{++}$.
    Let $\cA$ be the set of pairs $(U,W)$ such that $U \cup W \in \cA'$.

    Since every vertex in $U \cup W$ is either in $U$, in $W$ or in both, and since $(\Z^d)^{\otimes 5}$ has maximum degree at most $Cd^5$, Lemma~\ref{lem:number-of-connected-graphs} implies that
    \[ |\cA| \le 3^r |\cA'| \le |V^{++}| \cdot \big(Cd^5\big)^r \le  \exp\left(CL d^{-3/2} \log^{3/2} d + CM \log d\right),\]
    where the rightmost inequality uses the fact that $|V^{++}|\le (L+M)^2 (2d+1)^2$ and our assumption that either $L+M \ge d^2$ or $M \ge 2d$.

	We now show that $\FS_{L,M,\rho} \subset \FS^1(\cA)$.
	Let $K \in \FS_{L,M,\rho}$. We must show that there exists an $A \in \cA$ such that $A$ is a level-$1$-approximation of $K$.
	By Lemma~\ref{lem:existence-of-U}, there exists a set $U \subset (\intB K)^+$ such that $(U,K^*)$ is a level-$1$-approximation of $K$. Thus, it suffices to show that $(U,K^*) \in \cA$, or equivalently, that $U \cup K^* \in \cA'$.
	
	To this end, we first show that $U \cup K^*$ is $(\Z^d)^{\otimes 5}$-connected.
	Since $U \subset (\intB K)^+$ and $K^* \subset \intB K$, we see that $\dist(v,\intB K) \le 1$ for all $v \in U \cup K^*$.
	Since $N(U) \cup K^* \cup N_{d/18}(K^*)$ separates $K$, $\intB K \subset (N(U) \cup K^* \cup N_{d/18}(K^*))^+ \subset (U \cup K^*)^{++}$ so that we obtain $\dist(v,U \cup K^*) \le 2$ for all $v \in \intB K$.
	Thus, since $\intB K$ is connected, Lemma~\ref{lem:r-connected-sets} implies that $U \cup K^*$ is $(\Z^d)^{\otimes 5}$-connected.
	
    Next, we show that $(U \cup K^*) \cap V^{++} \neq \emptyset$.
	Indeed, by Lemma~\ref{lem:isoperimetry}, $|K_{123}| \le |\partial K_{123}|^2 \le (L+M)^2$ so that there exists a vertex $v \in V \cap \intB K$.
	Since $\intB K \subset (U \cup K^*)^{++}$, there exists a $u \in U \cup K^*$ such that $\dist(u,v) \le 2$, i.e., $u \in (U \cup K^*) \cap V^{++}$.

	To conclude that $U \cup K^* \in \cA'$, it remains to show that $|U \cup K^*|\le r$. Indeed, since $|K^*| \le 2M$, this follows from~\refLAAsize{}.
\end{proof}

\subsection{Level-2-approximation.}

In this section, we construct level-$2$-approximations. 
\begin{definition}[level-$2$-approximation]
	\label{def:level-2-approx}
	A tuple $A=(A_0,A_1,A_2,A_3)$ of disjoint subsets of $\Z^d$ is called a {\em level-$2$-approximation} of a four-section $K$ if for every $l \in \{0,1,2,3\}$ the following holds:
	\begin{enumerate}[\qquad(a)]
		\item \label{it:def-weak-FA-1} $A_l \subset K_l$.
		\item \label{it:def-weak-FA-anomaly} $K_l \cap K^{\anomaly} \subset A_l$.
		\item \label{it:def-weak-FA-many-anomalies} $K_l \cap N_{d/18}(K^{\anomaly}) \subset A_l$.
		\item \label{it:def-weak-FA-sees-many-i} $N_{d/9}(K_l \cap K^{\rev}) \subset N(A_l)$.
		\item \label{it:def-weak-FA-size} $|K_l \setminus A_l| \le C(L/d + M) \sqrt{d \log d}$, where $L := |\partial K_{12}|$ and $M := |\partial K \setminus \partial K_{12}|$.
	\end{enumerate}
\end{definition}

Thus, a level-2-approximation records the locations in $K$ of all but a small number of non-singular vertices as is expressed by parts~\eqref{it:def-weak-FA-1}, \eqref{it:def-weak-FA-anomaly} and~\eqref{it:def-weak-FA-size}.
In recovering information about the four-section from the approximation, we often use parity considerations which may be invalid near large numbers of singularities. The purpose of part~\eqref{it:def-weak-FA-many-anomalies} is to overcome this obstacle.
Part~\eqref{it:def-weak-FA-sees-many-i} essentially guarantees that near every edge of the boundary $\partial(K_i,K_j)$ there is a vertex in either $A_i$ or $A_j$. While we do not prove this explicitly, we shall use part~\eqref{it:def-weak-FA-sees-many-i} to prove a related property in Lemma~\ref{lem:level-4-properties}\ref{it:def-FA-no-mixing}.

The next lemma shows that one may upgrade a level-$1$-approximation to a small family of level-$2$-approximations which cover (at least) the same set of four-sections. Here, the main property we exploit is the existence of a small set which separates $K$.

\begin{lemma}\label{lem:family-of-level-2-approx}
	For any integers $L,M \ge 0$ and any level-$1$-approximation $(U,W)$, there exists a family $\cA$ of level-$2$-approximations of size
	\[ |\cA| \le \exp\left( C L d^{-3/2} \log^{1/2} d + CM \right) \]
	such that $\FS^1((U,W)) \cap \FS_{L,M} \subset \FS^2(\cA)$.
\end{lemma}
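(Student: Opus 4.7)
The plan is to construct $\cA$ by enumerating a modest amount of combinatorial data that pinpoints each $K \in \FS^1((U,W)) \cap \FS_{L,M}$ up to an error set small enough for~\refLBAsize{} to hold. Write $S := N(U) \cup W \cup N_{d/18}(W)$ and $T := U \cup W \cup N_{d/18}(W)$. By~\refLAAseparate{}, every connected component of $\Z^d \setminus S$ lies entirely in one $K_l$, so it would in principle suffice to specify a label in $\{0,1,2,3\}$ for each vertex of $T$ and each such component. Labeling every component is too costly, however, since there may be of order $|S|$ of them and $|S|$ carries a term of order $M\sqrt{d\log d}$. The key idea is to enumerate labels only for the \emph{large} components---those of size at least $d$---and to absorb the small ones into the error term.

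Concretely, let $\cC$ be the collection of connected components of $\Z^d \setminus S$ of size at least $d$. For each pair of maps $\phi : T \to \{0,1,2,3\}$ and $\psi : \cC \to \{0,1,2,3\}$, set
\begin{equation*}
A_l(\phi,\psi) := \{v \in T : \phi(v) = l\} \cup \bigcup_{\substack{C \in \cC \\ \psi(C) = l}} C, \qquad l \in \{0,1,2,3\},
\end{equation*}
and let $\cA$ be the family of all resulting tuples $A=(A_0,A_1,A_2,A_3)$. Given $K$ in the target class, take $\phi$ and $\psi$ to be the \emph{true} labelings: $\phi(v) = l \iff v \in K_l$ and $\psi(C) = l \iff C \subset K_l$. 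The resulting $A$ is a level-2-approximation of $K$: \refLBAsubset{} holds by construction; \refLBAanomaly{} and~\refLBAmanyAnomalies{} follow from $W = K^{\anomaly} \subset T$ and $N_{d/18}(K^{\anomaly}) = N_{d/18}(W) \subset T$ together with the correctness of $\phi$; and~\refLBAseesManyi{} is inherited from~\refLAAmanyRevealed{} since $K_l \cap U \subset A_l$ in the true enumeration.

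The heart of the proof is to verify~\refLBAsize{} and bound $|\cA|$ simultaneously; both rest on the same isoperimetric estimate. Applying Lemma~\ref{lem:isoperimetry} to each finite component of $\Z^d \setminus S$ and summing yields $2d\sum_{C\text{ finite}} |C|^{1-1/d} \le |\partial S| \le 2d|S|$, hence $\sum_{C\text{ finite}} |C|^{1-1/d} \le |S|$. Since $d^{1/d} \le 2$ for all $d \ge 1$, this gives both $|\cC| \le |S|/d^{1-1/d} + 1 \le 2|S|/d + 1$ and $\sum_{|C|<d} |C| \le d^{1/d}|S| \le 2|S|$ (the infinite component, of which there is exactly one and which lies in $K_0$, is handled by counting it in $\cC$). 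Since $K_l \setminus A_l \subset (S \setminus T) \cup \bigcup_{|C|<d} C \subset N(U) \cup \bigcup_{|C|<d} C$ and $|N(U)| \le 2d|U|$, one obtains $|K_l \setminus A_l| \le 2d|U| + 2|S| \le C(d|U| + M)$, which is bounded by $C(L/d + M)\sqrt{d\log d}$ using $|U| \le C(L/d^{3/2} + M/\sqrt{d})\sqrt{\log d}$ from~\refLAAsize{} and $|N_{d/18}(W)| \le 36|W| \le 72M$ from Lemma~\ref{lem:sizes}. The same estimates yield $|T| \le |U| + 74M$ and $|\cC| \le CLd^{-3/2}\sqrt{\log d} + CM$, so
\begin{equation*}
|\cA| \le 4^{|T|}\cdot 4^{|\cC|} \le \exp\!\bigl(C|U| + CM + C|\cC|\bigr) \le \exp\!\bigl(CLd^{-3/2}\sqrt{\log d} + CM\bigr),
\end{equation*}
absorbing the $Md^{-1/2}\sqrt{\log d}$ contribution from $4^{|U|}$ into $CM$ for $d$ large. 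The main obstacle is precisely this balancing act: naively enumerating a label for every component of $\Z^d \setminus S$ would cost roughly $4^{|S|}$, which is too expensive because of the $M\sqrt{d\log d}$ term in $|S|$; what saves us is that the very same isoperimetric inequality which limits the number of large components also limits the total volume of the small ones that we can afford to omit from the approximation.
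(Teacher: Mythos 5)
Your proposal is correct and takes essentially the same route as the paper: with the true labeling, your $A_l$ coincides exactly with the paper's construction (the large components of $\Z^d\setminus S$ contained in $K_l$ together with $T\cap K_l$), and your counting and error estimates are the paper's, with Lemma~\ref{lem:isoperimetry} used directly where the paper invokes Corollary~\ref{cor:isoperimetry}. The only cosmetic differences are that you enumerate all labelings $(\phi,\psi)$ rather than only those realized by some $K$ (one should trim $\cA$ to the genuine level-2-approximations, which only shrinks it), and that absorbing the additive constants (e.g.\ the $+1$ for the infinite component) implicitly uses that the covered class is nonempty only when $L+M\ge d^2$ or $M\ge 2d$.
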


\begin{proof}
Let $K \in \FS^1((U,W)) \cap \FS_{L,M}$. We first construct $A=A(K)$, a level-$2$-approximation of $K$. We then show that $\cA := \{A(K): K \in \FS^1((U,W)) \cap \FS_{L,M}\}$ is not too large.

Denote $Y := N(U) \cup W \cup N_{d/18}(W)$ and $X := \Z^d \setminus Y$.
Say that a connected component of $X$ is {\em small} if its size is at most $d$, and that it is {\em large} otherwise.
For $l \in \{0,1,2,3\}$, let $A'_l$ be the union of all the large components of $X$ which are contained in $K_l$, and denote $U_l := U \cap K_l$ and $W_l := (W \cup N_{d/18}(W)) \cap K_l$.
Finally, define $A_l := A'_l \cup U_l \cup W_l$.

We now show that $A:=(A_0,A_1,A_2,A_3)$ is a level-$2$-approximation of $K$.
Part \eqref{it:def-weak-FA-1} is immediate from the construction.
Parts \eqref{it:def-weak-FA-anomaly} and \eqref{it:def-weak-FA-many-anomalies} follow directly from the definitions of $W_l$ and $A_l$ and from the fact that $W=K^{\anomaly}$ by the definition of a level-$1$-approximation.
Part~\eqref{it:def-weak-FA-sees-many-i} follows from the definition of $U_l$ and~\refLAAmanyRevealed{}.
It remains to show part~\eqref{it:def-weak-FA-size}.
By~\refLAAseparate{}, $Y$ separates $K$, i.e., every connected component of $X$ is contained in $K_l$ for some $l$.
Thus, denoting by $S$ the union of all the small components of $X$, we have
\[ K_l \setminus A_l \subset S \cup N(U) .\]
Since any small component $B$ of $X$ satisfies $|B| \le |\partial B|/d$ (by Corollary~\ref{cor:isoperimetry}\ref{it:isoperimetry-small}) and $\partial B \subset \partial Y$, and since $|N_{d/18}(W)| \le 72M$ (by Lemma~\ref{lem:sizes}), we have
\begin{equation}\label{eq:rough-approx-1}
	|S| \le \frac{|\partial Y|}{d} \le \frac{2d(|N(U)| + |W| + |N_{d/18}(W)|)}{d} \le 2 |N(U)| + CM .
\end{equation}
Hence, by~\refLAAsize{},
\begin{equation}\label{eq:rough-approx-2}
|K_l \setminus A_l| \leq |S| + |N(U)| \leq 3 |N(U)| + CM \leq 6d |U| + CM \leq C(L/d + M) \sqrt{d \log d} .
\end{equation}
We have thus shown that $K \in \FS^2(A)$.

\smallskip
To conclude the proof, it remains to bound $|\cA|$.
Let $\ell$ be the number of large components of $X$, and observe that
\[ |\cA| \le 4^{|U \cup W \cup N_{d/18}(W)| + \ell} .\]
Since $|W| \le 2M$ and $|N_{d/18}(W)| \le CM$, we have
\[ |U \cup W \cup N_{d/18}(W)| \le CLd^{-3/2}\sqrt{\log d} + CM .\]
Observe that any large component $B$ of $X$ satisfies $|\partial B| \ge d^2$ (by Corollary~\ref{cor:isoperimetry}\ref{it:isoperimetry-large}) and $\partial B \subset \partial Y$. Therefore, by~\eqref{eq:rough-approx-1} and~\eqref{eq:rough-approx-2},
\[ \ell \leq |\partial Y| / d^2 \leq C(L/d + M) \sqrt{\tfrac{\log d}{d}} .\]
Thus,
$|\cA| \le \exp( C L d^{-3/2} \log^{1/2} d + CM)$,
as required.
\end{proof}

\subsection{Level-3-approximation.}

In this section, we refine the level-$2$-approximation, obtaining more geometric information about the four-section and its boundaries.
This process takes advantage of a certain property of four-sections which we call semi-oddness.
Conceptually, we think of the notion of a semi-odd pair $(U_\ins,U_\out)$ introduced below,
as an approximation of a set $U$. The sets $U_\ins$ and $U_\out$ correspond to our approximation of $U$ and $U^c$ respectively.
The properties of the semi-odd pair are derived from the fact that the ``unknown boundary'' of $U$ must be odd, that is, the boundary must be
odd in every region not contained in $U_\ins \cup U_\out$.

For a set $\xi \subset E(\Z^d)$ and a set $U \subset \Z^d$, we write $\intB^\xi U$ for the internal boundary of $U$ in the graph $(\Z^d,E(\Z^d)\setminus \xi)$, i.e.,
\[ \intB^\xi U := \big\{u \in U : \exists v \in U^c\text{ such that }\{u,v\}\in E(\Z^d) \setminus \xi \big\}.\]
A set $U \subset \Z^d$ is called {\em $\xi$-odd} if $\xi\subset \partial U$ and $\intB^\xi U \subset \Odd$.
Similarly, it is called {\em $\xi$-even} if $\xi\subset \partial U$ and $\intB^\xi U \subset \Even$.
Note that $U$ is odd if and only if $U$ is $\emptyset$-odd, and that any set $U$ is $\partial U$-odd.

The following lemma relates $\xi$-odd sets and four-sections.

\begin{lemma}\label{lem:K-ij-is-xi-odd}
	Let $K$ be a four-section.
	Then $K_{10}$ and $K_{20}$ are $\partial(K_1,K_2)$-even and $K_{13}$ and $K_{23}$ are $\partial(K_1,K_2)$-odd.
\end{lemma}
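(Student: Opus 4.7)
\medskip

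\noindent\textbf{Proof plan.}
The claim splits into two parts: (i) verify the edge-set condition $\partial(K_1,K_2)\subset\partial K_{ij}$ for each of the four pairs $(i,j)\in\{(1,0),(2,0),(1,3),(2,3)\}$, and (ii) verify the parity condition on $\intB^{\partial(K_1,K_2)}K_{ij}$. Part~(i) is immediate from the fact that $K_1,K_2,K_0,K_3$ is a partition and $K_1,K_2$ lie on opposite sides of each of the four sets $K_{ij}$, so every edge of $\partial(K_1,K_2)$ crosses~$\partial K_{ij}$. So the real work is~(ii), which I would handle by a direct case analysis using \refFSodd{} (the only structural input needed).

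For part~(ii), fix a pair $(i,j)$ and a vertex $u\in\intB^{\partial(K_1,K_2)}K_{ij}$; thus $u$ has a neighbor $v\in K_{ij}^c=K_{\bar i\,\bar j}$ via an edge $\{u,v\}\notin\partial(K_1,K_2)$. Because $K_0\cup K_1\cup K_2\cup K_3=\Z^d$ and $\{u,v\}\notin\partial(K_1,K_2)$, the pair $(u,v)$ falls into exactly one of three configurations: $(u,v)\in K_i\times K_{\bar j}$, or $(u,v)\in K_j\times K_{\bar i}$, or $(u,v)\in K_j\times K_{\bar j}$ (i.e.\ the ``$K_0\leftrightarrow K_3$'' edge). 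In each configuration, the parity of $u$ is forced directly by \refFSodd{}: if $u\in K_0$ with $v\notin K_0$ then $u\in\intB K_0\subset\Even$, and if $u\in K_3$ with $v\notin K_3$ then $u\in\intB K_3\subset\Odd$; the remaining case, $u\in K_i$ with $v\in K_{\bar j}$ and $\{i,j\}=\{1,2,0,3\}$-indices as appropriate, is handled symmetrically by noting that $v\in\intB K_{\bar j}$ has the opposite parity, and $u\sim v$, so $u$ has parity $\bar j$ (even if $\bar j=0$, odd if $\bar j=3$).

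Concretely, for $K_{10}$ the three configurations give $(u,v)\in K_1\times K_3$ (so $v$ odd, $u$ even), $(u,v)\in K_0\times K_2$ (so $u\in\intB K_0$ is even), $(u,v)\in K_0\times K_3$ (so again $u\in\intB K_0$ is even). Thus $\intB^{\partial(K_1,K_2)}K_{10}\subset\Even$, i.e.\ $K_{10}$ is $\partial(K_1,K_2)$-even. The argument for $K_{20}$ is identical up to swapping $K_1\leftrightarrow K_2$. For $K_{13}$ the three configurations give $(u,v)\in K_1\times K_0$ (so $v$ even, $u$ odd), $(u,v)\in K_3\times K_2$ (so $u\in\intB K_3$ is odd), $(u,v)\in K_3\times K_0$ (so $u\in\intB K_3$ is odd). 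Hence $K_{13}$ is $\partial(K_1,K_2)$-odd, and $K_{23}$ follows by swapping $K_1\leftrightarrow K_2$.

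I do not expect any genuine obstacle: the whole statement is a bookkeeping consequence of \refFSodd{} combined with the observation that removing the edges $\partial(K_1,K_2)$ leaves every remaining cross-edge of $\partial K_{ij}$ incident to either $K_0$ or $K_3$, whose boundary parities are dictated by the four-section axioms. The only point to be slightly careful about is verifying the trivial containment $\partial(K_1,K_2)\subset\partial K_{ij}$ so that the definition of ``$\xi$-odd/even'' applies; this is handled in part~(i) above.
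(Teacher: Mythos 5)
Your proof is correct and is essentially the paper's argument: both rest solely on \refFSodd{} together with bipartiteness of $\Z^d$ and the partition property, the paper phrasing it contrapositively for $K_{13}$ (an even vertex of $K_{13}$ adjacent to an odd vertex of $K_{20}$ forces a $\partial(K_1,K_2)$-edge) and invoking symmetry, while you enumerate the three possible non-$\partial(K_1,K_2)$ crossing configurations directly. The only blemish is the slip in your general description (``$u$ has parity $\bar j$'' should read ``$u$ has parity $j$''), which your explicit case analyses for $K_{10}$ and $K_{13}$ already get right, so nothing is missing.
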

\begin{proof}
	By symmetry, it suffices to consider the case of $K_{13}$. Denote $\xi := \partial(K_1,K_2)$ and observe that $\partial(K_{13},K_{20}) = \partial K_{13}$.
Let $v \in \Even \cap K_{13}$ and let $u \in \Odd \cap K_{20}$ be adjacent to $v$. To check that $K_{13}$ is $\xi$-odd, we must show that $\{v,u\} \in \xi$.
	By~\refFSodd{}, $v \notin K_3$ and $u \notin K_0$, and thus, $v \in K_1$ and $u \in K_2$. Hence $\{v,u\} \in \xi$, as required.
\end{proof}

We say that a pair $(U_\ins,U_\out)$ of disjoint sets is {\em $\xi$-odd} if $U_\ins$ is $\xi$-odd and $U_\out$ is $\xi$-even, and we say that it is {\em semi-odd} if it is $\partial(U_\ins,U_\out)$-odd.
For a pair $(U_\ins,U_\out)$ of disjoint sets, we use the notation
\[ U_* := (U_\ins \cup U_\out)^c \]
for the \emph{unknown} vertices. Observe that, by definition, if $(U_\ins,U_\out)$ is a semi-odd pair then
\begin{equation}\label{eq:semi-odd-disjoint-boundary}
\begin{aligned}
	N(U_\out) \cap U_* &\subset \Odd ,\\
	N(U_\ins) \cap U_* &\subset \Even .
\end{aligned}
\end{equation}
We say that $(V_\ins,V_\out)$ is a {\em refinement} of $(U_\ins,U_\out)$ if $U_\ins \subset V_\ins$ and $U_\out \subset V_\out$.
A triple $(U_\ins,U,U_\out)$ is called semi-odd if $(U_\ins,U^c)$ and $(U,U_\out)$ are semi-odd. Observe that a triple $(U_\ins,U,U_\out)$ is semi-odd if and only if $U_\ins \subset U$, $U_\out \subset U^c$ and
\begin{equation}\label{eq:semi-odd-triple}
\partial(\Even \cap U, \Odd \setminus U) \subset \partial(U_\ins,U_\out) .
\end{equation}
A pair $(U_\ins,U_\out)$ of disjoint sets is called \emph{$t$-tight} if the subgraph of $\Z^d$ induced by $U_*$ has maximum degree less than $t$. That is,
\begin{equation}\label{eq:odd-t-approx}
(U_\ins,U_\out) \text{ is $t$-tight} \quad\iff\quad U_* \cap N_t(U_*) = \emptyset .
\end{equation}
See Figure~\reffig{fig:odd-approx} for an illustration of these notions.

\begin{definition}[level-$3$-approximation]
	\label{def:level-3-approx}
	A tuple $A=(A_0,A_1,A_2,A_3,A_{10},A_{23},A_{20},A_{13})$ of subsets of $\Z^d$ is called a {\em level-$3$-approximation} of a four-section $K$ if the following holds:
	\begin{enumerate}[\qquad(a)]
		\item \label{it:def-weak-FA-extends-rough-FA} $(A_0,A_1,A_2,A_3)$ is a level-$2$-approximation of $K$.
		\item \label{it:def-weak-FA-semiodd} $(A_{13},K_{13},A_{20})$ and $(A_{23},K_{23},A_{10})$ are semi-odd triples.
		\item \label{it:def-weak-FA-sqrt-d-approx} $(A_{13},A_{20})$ and $(A_{23},A_{10})$ are $\sqrt{d}$-tight semi-odd pairs.
	\end{enumerate}
\end{definition}

Thus, part~\eqref{it:def-weak-FA-semiodd} reflects the semi-odd properties of $K_{13}$ and $K_{23}$ as stated in Lemma~\ref{lem:K-ij-is-xi-odd}, and part~\eqref{it:def-weak-FA-sqrt-d-approx} guarantees that every unknown vertex is almost entirely surrounded by known vertices.

\begin{figure}
	\centering
	\includegraphics[scale=0.5]{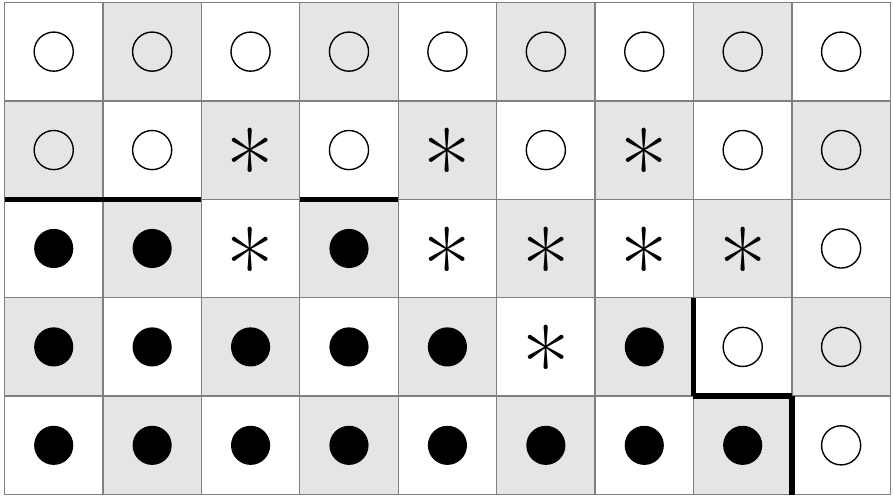}\qquad\qquad
	\includegraphics[scale=0.5]{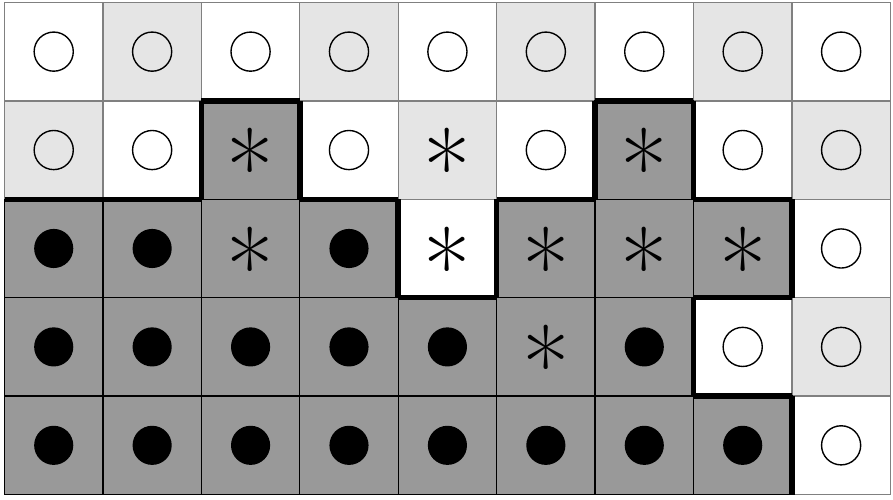}
	\caption{On the left, a $4$-tight semi-odd pair $(U_\ins,U_\out)$. The vertices in $U_\ins$, $U_\out$ and $U_*$ are denoted by $\ins$, $\out$ and $*$, respectively. The known boundary edges $\partial(U_\ins,U_\out)$ are depicted by a bold line. On the right, a set $U$ satisfying that $(U_\ins,U,U_\out)$ is a semi-odd triple.}
	\label{fig:odd-approx}
\end{figure}

The next lemma shows that one may upgrade a level-$2$-approximation to a small family of level-$3$-approximations which cover (at least) the same set of four-sections.

\begin{lemma}\label{lem:family-of-level-3-approx}
	For any integers $L,M \ge 0$ and any level-$2$-approximation $A$, there exists a family $\cA$ of level-$3$-approximations of size
	\[ |\cA| \le \exp\left( \tfrac{C\log^{3/2} d}{\sqrt{d}} (L/d + M) \right) \]
	such that $\FS^2(A) \cap \FS_{L,M} \subset \FS^3(\cA)$.
\end{lemma}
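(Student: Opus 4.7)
\smallskip
\noindent\textbf{Proof plan.}
Fix a level-$2$-approximation $A=(A_0,A_1,A_2,A_3)$, let $K\in\FS^2(A)\cap\FS_{L,M}$, and denote the set of ``unknown'' vertices by $B:=\Z^d\setminus\bigcup_l A_l$. By \refLBAsize{}, $|B|\le C(L/d+M)\sqrt{d\log d}$. The plan is to construct, for each such $K$, a level-$3$-approximation determined by $K$ through only a small amount of additional data, and then to bound the total number of possibilities.

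The natural starting candidates are $A_{13}^0:=A_1\cup A_3$, $A_{20}^0:=A_0\cup A_2$, $A_{23}^0:=A_2\cup A_3$ and $A_{10}^0:=A_0\cup A_1$. By \refLBAsubset{} these are contained in $K_{13}$, $K_{20}$, $K_{23}$ and $K_{10}$ respectively, and the associated set of unknown vertices is exactly $B$. I will upgrade each pair by committing to the values of $K$ on a carefully chosen subset $R\subset B$: set $A_{13}:=A_{13}^0\cup(R\cap K_{13})$ and $A_{20}:=A_{20}^0\cup(R\cap K_{20})$, and analogously for $(A_{23},A_{10})$. The resulting unknown set shrinks to $B\setminus R$, and the level-$3$-approximation is completely determined by $A$, $R$, and the labeling $\phi\colon R\to\{0,1,2,3\}$ recording the phase of each vertex of $R$ under $K$.

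Under this setup, the level-$3$ requirements translate into conditions on $R$. The semi-odd triple condition \refLCAsemiodd{} is essentially automatic from the construction, once $R$ contains both endpoints of every edge of $\partial(K_1,K_2)$ whose endpoints both lie in $B$; there are at most $M$ such edges, so this costs at most $O(M)$ vertices in $R$, negligible compared to the target. The main constraint is \refLCAsqrtdTight{}, which requires the subgraph of $\Z^d$ induced by $B\setminus R$ to have maximum degree less than $\sqrt{d}$. To achieve $|R|\le\frac{C\log^{3/2}d}{\sqrt{d}}(L/d+M)$, I would exploit the structural description of $B$ inherited from the proof of Lemma~\ref{lem:family-of-level-2-approx}, namely that $B$ is contained in the union of the small (size $\le d$) components of the separated region $X$ together with a thin neighborhood of a small separating set $U$ of size at most $C(L/d^{3/2}+M/\sqrt{d})\sqrt{\log d}$. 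Using this to control the induced-degree profile of $B$, an iterated application of Lemma~\ref{lem:existence-of-covering2} (or a weighted set-cover variant) then produces $R$ of the claimed size.

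Given such an $R$ for each $K$, the enumeration is straightforward: each level-$3$-approximation is indexed by a subset $R\subset B$ of bounded size together with a labeling $\phi\colon R\to\{0,1,2,3\}$, contributing at most $\binom{|B|}{|R|}\cdot 4^{|R|}$ possibilities in total. A routine binomial estimate then yields the claimed bound $|\cA|\le\exp(C\log^{3/2}d/\sqrt{d}\cdot(L/d+M))$. The principal obstacle is producing the small cover $R$: a naive application of Lov\'asz's covering only guarantees that every high-degree vertex of $B$ has \emph{some} neighbor in $R$, which is insufficient for $\sqrt{d}$-tightness. Overcoming this will require leveraging the ``boundary-like'' structure of $B$ so that only a small subset of $B$ carries high induced degree in the first place, after which a cheap cover suffices.
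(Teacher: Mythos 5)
There is a genuine gap, and it is exactly at the point you flag as ``the principal obstacle'': the tightness requirement \refLCAsqrtdTight{} cannot be bought by enumerating a small labelled hitting set $R\subset B$. To make $B\setminus R$ have induced degree below $\sqrt d$, the set $R$ must contain every vertex of $B$ having at least $\sqrt d$ neighbours in $B\setminus R$, hence essentially every vertex of $B$ with $\ge\sqrt d$ neighbours in $B$. Nothing in Definition~\ref{def:level-2-approx} (nor in the specific construction of Lemma~\ref{lem:family-of-level-2-approx} -- and note the present lemma quantifies over \emph{all} level-$2$-approximations, so you may only use the definition) prevents $B$ from containing clusters such as $\{u+e_i\}_{i\le t}\cup\{u+e_i+e_j\}_{i<j\le t}$ with $t\approx\sqrt d$: each such cluster has size about $d$ and about $\sqrt d$ vertices of induced degree $\ge\sqrt d$. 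Packing $B$ (of size up to $C(L/d+M)\sqrt{d\log d}$) with such clusters forces $|R|\gtrsim |B|/\sqrt d \approx (L/d+M)\sqrt{\log d}$, so your enumeration cost $\binom{|B|}{|R|}4^{|R|}$ is at least $\exp\bigl(c(L/d+M)\log^{3/2}d\bigr)$ -- larger than the claimed bound by a factor $\sqrt d$ in the exponent. This is fatal globally: in the proof of Theorem~\ref{thm:main} the $L$-term of the approximation count must be beaten by the entropy gain $cL/d$ from Lemma~\ref{lem:prob-of-approx-enhanced}, which $\exp(cL\log^{3/2}d/d)$ is not. (Two smaller points: even granting $|R|\le \tfrac{C\log^{3/2}d}{\sqrt d}(L/d+M)$, the binomial estimate gives $\exp(C|R|\log d)$, a $\log d$ worse than you claim; and the semi-odd \emph{pair} condition in \refLCAsqrtdTight{} also fails for $A_1\cup A_3$ itself unless you first pass to its minimal $\partial(A_1,A_2)$-odd superset, as the paper does with the sets $B_{ij}$.)

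The paper's proof avoids this entirely by never enumerating phases vertex-by-vertex. Using Lemma~\ref{lem:K-ij-is-xi-odd}, the sets $K_{13},K_{23}$ are $\partial(K_1,K_2)$-odd, so among the unknown vertices only the ``misaligned'' ones (odd unknowns inside $K_{13}$, even unknowns outside) are rare -- at most about $L/d$ of them, by the computation leading to~\eqref{cl:tightness-semi-odd-triples}. Lemma~\ref{lem:approx-algorithm} then enumerates only a small seed set $W$ of size $\le n/t$ inside a controlled set $\overline W$, and the (possibly very large) sets $W_\out:=V_*\cap N(W)$ and $W_\ins:=\Odd\cap V_*\cap N_t(V_*\setminus W_\out)$ that are committed to achieve $t$-tightness are \emph{reconstructed deterministically} from $W$ by neighbourhood rules, at no entropy cost; a two-stage application (odd-tight with $t=d$, then even/odd with $t=\sqrt d$, Corollaries~\ref{cor:approx-algorithm-exterior}--\ref{cor:approx-algorithm} and Lemma~\ref{lem:approx-level2}) yields the stated $\exp\bigl(\tfrac{C\log^{3/2}d}{\sqrt d}(L/d+M)\bigr)$. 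This parity-driven ``small seed generates large committed set'' mechanism is the idea missing from your proposal; a covering-lemma refinement of $R$ cannot substitute for it.
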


The proof of Lemma~\ref{lem:family-of-level-3-approx} is based on the following technical lemma whose proof is given at the end of this section.
For a semi-odd pair $(U_\ins,U_\out)$ and an integer $L\ge0$, denote
\begin{align*}
\SO(U_\ins,U_\out) &:= \big\{ U \subset \Z^d ~:~ (U_\ins,U,U_\out) \text{ is semi-odd} \big\} ,\\
\SO^\partial_L(U_\ins,U_\out) &:= \big\{ U \in \SO(U_\ins,U_\out) ~:~ |\partial U \setminus \partial(U_\ins,U_\out)| \le L \big\}.
\end{align*}
One should think of $\SO^\partial_L(U_\ins,U_\out)$ as the collection of sets approximated by $(U_\ins,U_\out)$, which have an ``unknown boundary'' of length $L$.
For a family $\cE$ of semi-odd pairs, set $\SO(\cE) := \cup_{(U_\ins,U_\out) \in \cE} \SO(U_\ins,U_\out)$.

\begin{lemma}\label{lem:approx-level2}
Let $(V_\ins,V_\out)$ be a semi-odd pair.
For every integer $L \ge 0$ there exists a family $\cE$ of $\sqrt{d}$-tight semi-odd pairs of size
\[ |\cE| \leq \exp\left(\tfrac{C\log d}{d} \left(|V_*| + \tfrac{L}{\sqrt{d}}\right)\right) \]
such that $\SO^\partial_L(V_\ins,V_\out) \subset \SO(\cE)$.
\end{lemma}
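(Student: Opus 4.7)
My plan is to construct the family $\cE$ explicitly and verify that it has the required size and covers every $U \in \SO^\partial_L(V_\ins, V_\out)$.

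First, I reduce to refinements of $(V_\ins, V_\out)$: if $(\tilde U_\ins, \tilde U_\out)$ is a $\sqrt{d}$-tight semi-odd pair with $(\tilde U_\ins, U, \tilde U_\out)$ semi-odd, then $(\tilde U_\ins \cup V_\ins, \tilde U_\out \cup V_\out)$ is again $\sqrt{d}$-tight and semi-odd (the unknown set only shrinks, and every new boundary edge was already a boundary edge of $(\tilde U_\ins, \tilde U_\out)$ or of $(V_\ins, V_\out)$), still has $U$ as the middle of a semi-odd triple, and refines $(V_\ins, V_\out)$. Hence I will restrict to pairs with $V_\ins \subset U_\ins \subset U$ and $V_\out \subset U_\out \subset U^c$, so that the choice of $(U_\ins, U_\out)$ amounts to choosing $T_\ins := U_\ins \setminus V_\ins \subset V_* \cap U$ and $T_\out := U_\out \setminus V_\out \subset V_* \cap U^c$.

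Next, I would apply Lemma~\ref{lem:existence-of-covering2} to $V_*$ with threshold $t=d$, obtaining a fixed set $B \subset V_*$ with $|B| \le C(\log d/d)|V_*|$ and $N_d(V_*) \subset N(B)$. Enumerating all binary labelings $\sigma\colon B \to \{\ins, \out\}$ already contributes a factor $2^{|B|} \le \exp(C|V_*|\log d/d)$, matching the first term of the bound. This handles the ``super-dense'' part of $V_*$; the moderately-dense vertices (with between $\sqrt{d}$ and $d$ neighbors in $V_*$) require a further correction set $X_U \subset V_*$ of size at most $CL/(d\sqrt{d})$, enumerated together with a labeling $\tau\colon X_U \to \{\ins, \out\}$, contributing an additional factor at most $\exp(CL\log d/(d\sqrt{d}))$.

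The existence of such a small $X_U$ exploits the parity structure of $V_*^U := V_* \cap U$. Since $(V_\ins, U, V_\out)$ is semi-odd, any edge $\{x,y\}$ within $V_*$ crossing from $V_*^U$ to its complement has its even endpoint outside $V_*^U$; equivalently, the internal boundary of $V_*^U$ within the induced subgraph on $V_*$ lies entirely in $\Odd$. Starting from the labels of $B$ and propagating these parity constraints, I expect to determine $V_*^U$ on most of $V_*$. The set $X_U$ would collect the vertices where propagation fails---essentially, vertices near the boundary edges of $V_*^U$ within $V_*$---and the boundary bound $L$ together with a second application of Lemma~\ref{lem:existence-of-covering2} to the endpoints of these boundary edges should yield $|X_U| = O(L/(d\sqrt{d}))$.

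The main obstacle will be this third step: carefully verifying simultaneously that the constructed pair $(V_\ins \cup \sigma^{-1}(\ins) \cup \tau^{-1}(\ins), V_\out \cup \sigma^{-1}(\out) \cup \tau^{-1}(\out))$ is $\sqrt{d}$-tight and that the correction set $X_U$ can be chosen of the claimed size. Both assertions rely on a delicate interplay between the parity-propagation rules inherited from the semi-oddness of $(V_\ins, U, V_\out)$ and the boundary bound $L$; specifically, one needs to show that each of the at most $L$ boundary edges of $V_*^U$ within $V_*$ contributes on average at most $O(1/(d\sqrt{d}))$ vertices to $X_U$, which is where the tight numerical constants from the covering lemma become essential.
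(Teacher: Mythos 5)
Your reduction to refinements and your budget accounting (a term $\exp(C|V_*|\log d/d)$ for the dense part and $\exp(CL\log d/d^{3/2})$ for the rest) match the shape of the true bound, but the central mechanism is missing and the two quantitative claims it rests on fail as stated. First, labelling the covering set $B$ only propagates information from vertices in the non-violating states: semi-oddness of $(V_\ins,U,V_\out)$ forces the unknown neighbours of an \emph{even} unknown vertex of $U$ into $U$, and the unknown neighbours of an \emph{odd} unknown vertex of $U^c$ out of $U$, but a labelled vertex which is odd-in-$U$ or even-outside-$U$ forces nothing about its neighbours. Before any tightness has been established, the number of such violating vertices is bounded only by $|V_*|$ (the bound ``violations $\le L/(2d-t)$'' of~\eqref{cl:tightness-semi-odd-triples} already presupposes a $t$-tight pair), so order $d|B|\approx|V_*|\log d$ dense vertices can be adjacent only to violating members of $B$ and remain undetermined, while your correction budget is $\exp(CL\log d/d^{3/2})$ and $L$ bears no relation to $|V_*|$ (it may even be $0$). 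Second, even for the moderately dense vertices a direct count gives only $O(L/d)$ violations (each has at least $d$ known neighbours, all contributing edges to the unknown boundary); enumerating their positions and labels costs $\exp(C(L/d)\log d)$, which is a factor $\sqrt d$ too large in the exponent, and neither the boundary bound nor a second application of Lemma~\ref{lem:existence-of-covering2} supplies the claimed compression to $O(L/d^{3/2})$.

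The idea you are missing is the one the paper uses in Lemma~\ref{lem:approx-algorithm}: instead of labelling a covering set, enumerate only a witness set $W$ of size at most $n/t$ consisting of odd unknown vertices outside $U$, chosen \emph{maximal} subject to $|V_*\cap N(W)|\ge t|W|$; then $W_\out:=V_*\cap N(W)$ is forced outside $U$ by propagation, and maximality forces every odd unknown vertex with at least $t$ neighbours in $V_*\setminus W_\out$ to lie in $U$, so it may be declared ``in'' deterministically. This yields $t$-odd-tightness at cost $\exp(Cn\log d/t)$, an even-side pass (Corollary~\ref{cor:approx-algorithm-exterior}) upgrades it to $t$-tightness, and a two-stage bootstrap — first $t=d$ with the trivial bound $n=|V_*|$, then $t=\sqrt d$ with $n=\lfloor L/d\rfloor$, the latter bound being available only \emph{after} the first stage — gives the stated estimate. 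Your sketch contains neither the greedy/maximality reconstruction that buys the $1/t$ compression nor the bootstrap that legitimises the $L/d$ violation count, and you yourself flag exactly this step as the unresolved obstacle; as written, the proposal is therefore a plausible outline with a genuine gap at its core rather than a proof.
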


\begin{proof}[Proof of Lemma~\ref{lem:family-of-level-3-approx}]
Observe first that given $U \subset \Z^d$ and $\xi \subset \partial U$, there exists a minimal $\xi$-odd set and a minimal $\xi$-even set containing $U$, namely, $U \cup (\Odd \cap \intB^\xi U^c)$ and $U \cup (\Even \cap \intB^\xi U^c)$.

Let $L,M \ge 0$ be integers and let $A$ be a level-$2$-approximation.
Denote $\xi := \partial(A_1,A_2)$. Let $B_{13}$ and $B_{23}$ be the minimal $\xi$-odd sets containing $A_1 \cup A_3$ and $A_2 \cup A_3$, respectively, and let $B_{10}$ and $B_{20}$ be the minimal $\xi$-even sets containing $A_1 \cup A_0$ and $A_2 \cup A_0$, respectively.
Since $\FS^{2}(A) \cap \FS_{L,M} \neq \emptyset$, \refLBAsize{} implies that
\begin{equation}\label{eq:upgrading-weak-FA-1}
|(B_{13} \cup B_{20})^c|, |(B_{23} \cup B_{10})^c| \le |(A_0 \cup A_1 \cup A_2 \cup A_3)^c| \le C(L/d + M) \sqrt{d \log d} .
\end{equation}
Let $\cE_1$ and $\cE_2$ be the families of $\sqrt{d}$-tight semi-odd pairs obtained by applying Lemma~\ref{lem:approx-level2} to $(B_{13},B_{20})$ and $(B_{23},B_{10})$, respectively.
Finally, define
\[ \cA := \big\{ (A_0,A_1,A_2,A_3,A_{10},A_{23},A_{20},A_{13}) ~:~ (A_{13},A_{20}) \in \cE_1, ~ (A_{23},A_{10}) \in \cE_2 \big\} .\]
Then, by the bounds on $|\cE_1|$ and $|\cE_2|$ given in Lemma~\ref{lem:approx-level2} and by~\eqref{eq:upgrading-weak-FA-1}, we have
\[ |\cA| \le \exp\left(C \log d \cdot \left(\tfrac{C(L/d + M) \sqrt{d \log d}}{d} + \tfrac{L}{d^{3/2}}\right)\right) \le \exp\left( C (L/d + M) \tfrac{\log^{3/2} d}{\sqrt{d}} \right) .\]

We now show that $\cA$ satisfies the conclusion of the lemma.
Let $K \in \FS^{2}(A) \cap \FS_{L,M}$.
Then, by~\refLBAsubset{} and~\refLBAanomaly{}, $\xi = \partial(K_1,K_2)$. Thus, as $A_i \cup A_j \subset K_{ij}$, the minimality of $B_{ij}$ implies that $B_{ij} \subset K_{ij}$, for $i \in \{1,2\}$ and $j \in \{0,3\}$.
Therefore, Lemma~\ref{lem:K-ij-is-xi-odd} implies that $(B_{13},K_{13},B_{20})$ and $(B_{23},K_{23},B_{10})$ are semi-odd triples.
Since $(B_{13},B_{20})$ is a refinement of $(A_1 \cup A_3,A_2 \cup A_0)$, \refLBAanomaly{} implies that
\[ |\partial K_{13} \setminus \partial(B_{13},B_{20})| \le |\partial K \setminus (\partial(K_1,K_2) \cup \partial(K_0,K_3))| = |\partial K_{12}| = L .\]
Hence, we conclude that $K_{13} \in \SO^\partial_L(B_{13},B_{20})$.
Therefore, since $\SO^\partial_L(B_{13},B_{20}) \subset \SO(\cE_1)$, there exists $(A_{13},A_{20}) \in \cE_1$ such that $(A_{13},K_{13},A_{20})$ is semi-odd.
Similarly, $|\partial K_{23} \setminus \partial(B_{23},B_{10})| \le L$ so that $K_{23} \in \SO^\partial_L(B_{23},B_{10})$, and so there exists $(A_{23},A_{10}) \in \cE_2$ such that $(A_{23},K_{23},A_{10})$ is semi-odd.
Thus, we have shown that $K \in \FS^3(\cA)$, as required.
\end{proof}

The rest of the section is devoted to the proof of Lemma~\ref{lem:approx-level2}.
For a semi-odd pair $(U_\ins,U_\out)$ and an integer $n \ge 0$, we define
\[
\SO_n(U_\ins,U_\out) := \big\{ U \in \SO(U_\ins,U_\out) ~:~ |\Odd \cap U_* \cap U| + |\Even \cap U_* \cap U^c| \le n \big\}.
\]
We have
\begin{equation}\label{cl:tightness-semi-odd-triples}
\SO^\partial_L(U_\ins,U_\out) \subset \SO_{\lfloor L/(2d-t) \rfloor}(U_\ins,U_\out) .
\end{equation}
To see this, observe that since $\partial(U_\ins,U_\out) \cap \partial U_* = \emptyset$, it suffices to show that, for $U \in \SO(U_\ins,U_\out)$,
\[ |\Odd \cap U_* \cap U| + |\Even \cap U_* \cap U^c| \le \tfrac{1}{2d-t} |\partial U \cap \partial U_*| .\]
Indeed, this follows since, by~\eqref{eq:odd-t-approx} and~\eqref{eq:semi-odd-disjoint-boundary},
\begin{align*}
|\Odd \cap U_* \cap U| \cdot (2d-t) &\le |\partial(\Odd \cap U_* \cap U, \Even \setminus U_*)| = |\partial(\Odd \cap U_* \cap U, U_\out)| ,\\
|\Even \cap U_* \cap U^c| \cdot (2d-t) &\le |\partial(\Even \cap U_* \cap U^c, \Odd \setminus U_*)| = |\partial(\Even \cap U_* \cap U^c, U_\ins)| ,
\end{align*}
and $\partial(\Odd \cap U_* \cap U, U_\out)$ and $\partial(\Even \cap U_* \setminus U, U_\ins)$ are disjoint subsets of $\partial U \cap \partial U_*$.

In the following lemma, we show that for any semi-odd pair $(V_\ins,V_\out)$ there exists a small family of $t$-odd-tight semi-odd refinements which covers $\SO_n(V_\ins,V_\out)$. We then obtain in Corollary~\ref{cor:approx-algorithm} a small family of $t$-tight semi-odd refinements which covers the same collection. Finally, using this and~\eqref{cl:tightness-semi-odd-triples}, we prove Lemma~\ref{lem:approx-level2}.

For a family $\cE$ of semi-odd pairs, we define $\SO_n(\cE) := \cup_{(U_\ins,U_\out) \in \cE} \SO_n(U_\ins,U_\out)$.
A pair $(U_\ins,U_\out)$ of disjoint sets is called \emph{$t$-odd-tight} if $\Odd \cap U_* \cap N_t(U_*) = \emptyset$.

\begin{lemma}\label{lem:approx-algorithm}
Let $(V_\ins,V_\out)$ be a semi-odd pair.
For any integer $n \ge 0$ and any $1 \le t \le 2d$ there exists a family $\cE$ of $t$-odd-tight semi-odd refinements of $(V_\ins,V_\out)$ of size $|\cE| \leq \exp(C \log d \cdot n/t)$
such that $\SO_n(V_\ins,V_\out) \subset \SO_n(\cE)$.
\end{lemma}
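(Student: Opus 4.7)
The plan is to prove the lemma via a greedy peeling algorithm. Given $U \in \SO_n(V_\ins, V_\out)$, initialize $(V'_\ins, V'_\out) := (V_\ins, V_\out)$ and iterate: while there exists an odd $v \in V'_* := (V'_\ins \cup V'_\out)^c$ with $|N(v) \cap V'_*| \ge t$, select the lexicographically smallest such $v$; if $v \in U$ (\emph{case A}), set $V'_\ins \leftarrow V'_\ins \cup \{v\}$, and if $v \notin U$ (\emph{case B}), set $V'_\out \leftarrow V'_\out \cup \{v\} \cup (N(v) \cap V'_*)$. Take $\cE$ to be the collection of all possible final pairs $(V'_\ins, V'_\out)$.

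First I would verify that every update preserves the semi-odd property and the invariant $U \in \SO(V'_\ins, V'_\out)$. For case A, since $v$ is odd its neighbors are even, so adding $v$ to $V'_\ins$ preserves $N(V'_\ins) \cap V'_* \subset \Even$, while $V'_\ins \cup \{v\} \subset U$ by hypothesis. For case B, the triple condition~\eqref{eq:semi-odd-triple} applied to $(V'_\ins, U, V'_\out)$ together with $v \in \Odd \cap U^c$ forces $N(v) \cap V'_* \subset U^c$: an even $u \in V'_* \cap U$ adjacent to $v$ would yield an edge in $\partial(\Even \cap U, \Odd \setminus U)$ lying outside $\partial(V'_\ins, V'_\out)$, contradicting semi-oddness of the triple. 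Adding these (even) vertices to $V'_\out$ keeps $N(V'_\out) \cap V'_* \subset \Odd$ automatically, since even vertices have only odd neighbors.

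Next, I would bound the number of iterations using the potential $\phi := |\Odd \cap V'_* \cap U| + |\Even \cap V'_* \cap U^c|$, which is nonnegative and initially at most $n$. In case A, $v \in \Odd \cap V'_* \cap U$ leaves $V'_*$, so $\phi$ drops by exactly $1$. In case B, $v$ itself is not counted by $\phi$, but $N(v) \cap V'_* \subset \Even \cap V'_* \cap U^c$ is entirely removed from $V'_*$, so $\phi$ drops by $|N(v) \cap V'_*| \ge t$. Hence at most $a \le n$ case-A and $b \le \lfloor n/t\rfloor$ case-B steps occur, and upon termination $(V'_\ins, V'_\out)$ is by construction $t$-odd-tight.

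Finally, since the vertex $v$ processed in each iteration is determined by the current pair, the algorithm depends on $U$ only through the sequence of binary choices (A or B), and each $U$ produces a unique trace. The number of traces with exactly $a$ A-choices and $b$ B-choices is $\binom{a+b}{a}$, so
\[
|\cE| \le \sum_{a=0}^{n}\sum_{b=0}^{\lfloor n/t\rfloor} \binom{a+b}{a} \le (n+1)\left(\tfrac{n}{t}+1\right)\binom{n+\lfloor n/t\rfloor}{\lfloor n/t\rfloor} \le \exp\!\left(\tfrac{Cn\log d}{t}\right),
\]
using the standard bound $\binom{n+n/t}{n/t} \le (e(t+1))^{n/t}$, the constraint $t \le 2d$, and absorbing the polynomial prefactor into the exponential. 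The main obstacle is the validity check in case B, where deducing $N(v) \cap V'_* \subset U^c$ hinges on a precise use of the triple semi-odd condition; the counting is then a routine binomial estimate.
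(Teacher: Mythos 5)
Your construction is a genuinely different route from the paper's. The paper makes a one-shot choice: it enumerates small seed sets $W\subset\overline{W}:=\Odd\cap V_*\cap N(V_*)$ with $|W|\le n/t$, sets $W_\out:=V_*\cap N(W)$ and $W_\ins:=\Odd\cap V_*\cap N_t(V_*\setminus W_\out)$, and proves covering by taking $W$ maximal among subsets of $\overline{W}\setminus U$ with $|W_\out|\ge t|W|$; since only $W$ is enumerated, the bound $\exp(C\log d\cdot n/t)$ is immediate. You instead run an adaptive greedy algorithm and enumerate its A/B traces. Your verification of the invariants is essentially correct, and in particular the case-B deduction $N(v)\cap V'_*\subset U^c$ from the triple condition is exactly the right use of semi-oddness. (One small omission: in case B you also place the odd vertex $v$ itself in $V'_\out$; this is legitimate only because after the update $v$ has no neighbors left in $V'_*$, so it imposes no parity constraint — your stated justification covers only the even vertices you add.)

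The genuine gap is in the final counting step. The inequality $(n+1)(n/t+1)\binom{n+\lfloor n/t\rfloor}{\lfloor n/t\rfloor}\le\exp(C\log d\cdot n/t)$ is false, for any absolute constant $C$, when $n/t$ is small: take $n=\lceil\sqrt d\rceil$ and $t=2d$, so $\lfloor n/t\rfloor=0$; your bound is about $n+1\approx\sqrt d$, while $\exp(C\log d\cdot n/t)=\exp(C\log d/(2\sqrt d))\to1$ as $d\to\infty$. This regime is not vacuous: the lemma is invoked with $t=d$, $n=|V_*|$ and with $t=\sqrt d$, $n=\lfloor L/d\rfloor$, both of which can be far smaller than $t$. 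The problem is that your trace records every A-step individually, and there can be up to $n$ of them, which is too expensive when $(n/t)\log d\ll\log n$; the paper avoids this by never enumerating the A-type data (its $W_\ins$ is a deterministic function of $W$).

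The gap is repairable inside your framework. If $n<t$, no B-step can ever occur for $U\in\SO_n(V_\ins,V_\out)$: a B-step would exhibit at least $t>n$ vertices of $\Even\cap V_*\cap U^c$, contradicting the definition of $\SO_n$. Hence every step is an A-step; since the candidate vertex, the A-update and the stopping rule depend only on the current pair and not on $U$, all $U\in\SO_n(V_\ins,V_\out)$ produce the identical run, so $|\cE|\le1$ and the bound holds trivially. If $n\ge t$, then $n/t\ge\max(1,\,n/(2d))$, which gives $\log\bigl((n+1)(n/t+1)\bigr)\le C\log d\cdot n/t$, so the polynomial prefactor (and the binomial term, as you bound it) is genuinely absorbable. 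With this case split your argument goes through.
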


\begin{proof}
Define $\overline{W} := \Odd \cap V_* \cap N(V_*)$. For $W \subset \overline{W}$, denote
\begin{align*}
W_\out &:= V_* \cap N(W) ,\\
W_\ins &:= \Odd \cap V_* \cap N_t(V_* \setminus W_\out) .
\end{align*}
Let $\mathcal{W}:=\{W\subset \overline{W} : |W|\le n/t\}$ and define
\[ \cE := \left\{ (V_\ins \cup W_\ins, V_\out \cup W_\out) ~:~ W \in \mathcal{W} \right\} .\]
Let us show that every pair in $\cE$ is a $t$-odd-tight semi-odd pair.
Let $W \in \mathcal{W}$ and denote $U_\ins := V_\ins \cup W_\ins$, $U_\out := V_\out \cup W_\out$ and $\xi := \partial(V_\ins,V_\out)$. Note that $\xi \subset \partial(U_\ins,U_\out)$, since $W_\ins,W_\out$ are disjoint from $V_\ins,V_\out$.
Thus, since $V_\ins$ is $\xi$-odd and $W_\ins \subset \Odd$, we see that $U_\ins$ is $\xi$-odd.
Similarly, since $V_\out$ is $\xi$-even and $W_\out \subset \Even$, we have that $U_\out$ is $\xi$-even.
Hence, $(U_\ins,U_\out)$ is semi-odd.
Moreover, by the definition of $W_\ins$,
\[ \Odd \cap U_* \cap N_t(U_*) \subset \Odd \cap (V_* \setminus W_\ins) \cap N_t(V_* \setminus W_\out) = \emptyset ,\]
so that $(U_\ins,U_\out)$ is $t$-odd-tight.

We now bound the size of $\cE$.
If $\SO_n(V_\ins,V_\out)=\emptyset$ then the lemma follows trivially by taking $\cE:=\emptyset$.
Otherwise, there exists $U \in \SO_n(V_\ins,V_\out)$. Since $(V_\ins,U,V_\out)$ is semi-odd, \eqref{eq:semi-odd-triple} yields
\[ \overline{W} \subset (\Odd \cap V_* \cap U) \cup N(\Even \cap V_* \cap U^c) ,\]
so that $|\overline{W}| \le 2dn$, and hence,
\[ |\cE| \le |\mathcal{W}| \le \sum_{k=0}^{\lfloor n/t \rfloor} \binom{2dn}{k} \le \exp\big(C (n/t) \log d \big) , \]
where the right-most inequality follows by a well-known Chernoff bound.

It remains to show that $\SO_n(V_\ins,V_\out) \subset \SO_n(\cE)$.
Let $U \in \SO_n(V_\ins,V_\out)$ and let $W$ be a maximal subset of $\overline{W} \setminus U$ among those satisfying $|W_\out| \ge t|W|$. Define $U_\ins := V_\ins \cup W_\ins$ and $U_\out := V_\out \cup W_\out$.
We shall show that $(U_\ins,U_\out) \in \cE$ and that $(U_\ins,U,U_\out)$ is a semi-odd triple.

We begin by showing that $W_\out \subset U^c$ and $W_\ins\subset U$.
Indeed, since $(V_\ins,U^c)$ is semi-odd, using~\eqref{eq:semi-odd-disjoint-boundary} we obtain
\[ W_\out \cap U \subset N(U^c) \cap U \setminus V_\ins \subset \Odd .\]
Thus, as $W_\out \subset \Even$, we have $W_\out \subset U^c$. In particular, $|W_\out| \le |\Even \cap V_* \cap U^c| \le n$. Hence, we obtain that $|W| \le n/t$ so that $W \in \mathcal{W}$ and $(U_\ins,U_\out) \in \cE$.
To see that $W_\ins \subset U$, observe that, by the definition of $W_\ins$, for any $v \in W_\ins \subset \overline{W}$, we have
\[ |(W \cup\{v\})_\out| = |N(W \cup\{v\})\cap V_*|\ge |W_\out|+|N(v)\cap V_* \setminus W_\out|\ge t(|W|+1) .\]
Thus, by the maximality of $W$, $v \in U$.
Therefore, $U_\ins \subset U$ and $U_\out \subset U^c$.
It remains to show that $(U_\ins,U,U_\out)$ is a semi-odd triple. Since $(V_\ins,U^c)$ and $(U,V_\out)$ are semi-odd, this follows from the same argument as for $(V_\ins \cup W_\ins,V_\out \cup W_\out)$ above.
\end{proof}

We say that a pair $(U_\ins,U_\out)$ of disjoint sets is \emph{$t$-even-tight} if $\Even \cap U_* \cap N_t(U_*) = \emptyset$.

\begin{cor}\label{cor:approx-algorithm-exterior}
Let $(V_\ins,V_\out)$ be a semi-odd pair.
For any integer $n \ge 0$ and any $1 \le t \le 2d$ there exists a family $\cE$ of $t$-even-tight semi-odd refinements of $(V_\ins,V_\out)$ of size $|\cE| \leq \exp(C \log d \cdot n/t)$
such that $\SO_n(V_\ins,V_\out) \subset \SO_n(\cE)$.
\end{cor}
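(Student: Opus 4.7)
The corollary is the exact parity-dual of Lemma~\ref{lem:approx-algorithm}: same statement with ``odd'' replaced by ``even'' in the tightness condition. My plan is to mirror the proof of Lemma~\ref{lem:approx-algorithm}, swapping the roles of the two parity classes throughout.

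Concretely, I would set $\overline{W} := \Even \cap V_* \cap N(V_*)$ and, for each $W \subset \overline{W}$, define
\[ W_\ins := V_* \cap N(W), \qquad W_\out := \Even \cap V_* \cap N_t(V_* \setminus W_\ins). \]
Note $W_\ins \subset \Odd$ by parity. Letting $\mathcal{W} := \{W \subset \overline{W} : |W| \le n/t\}$, the family $\cE := \{(V_\ins \cup W_\ins, V_\out \cup W_\out) : W \in \mathcal{W}\}$ is the candidate cover. That each element of $\cE$ is a $t$-even-tight semi-odd refinement of $(V_\ins,V_\out)$ is straightforward: semi-oddness is preserved because $W_\ins \subset \Odd$ and $W_\out \subset \Even$ match the required parities of the boundary of a semi-odd pair, while $t$-even-tightness follows directly from the definition of $W_\out$ together with $U_* \subset V_* \setminus W_\ins$.

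For the cardinality bound, assuming $\SO_n(V_\ins,V_\out)$ is non-empty, I would fix some $U \in \SO_n(V_\ins,V_\out)$ and use semi-oddness of the triple $(V_\ins,U,V_\out)$ to show $\overline{W} \subset (\Even \cap V_* \cap U^c) \cup N(\Odd \cap V_* \cap U)$: a vertex $v\in\overline{W}\cap U$ must have every $V_*$-neighbor in $U$ (any $V_*$-neighbor in $U^c$ would produce an edge in $\partial(\Even\cap U,\Odd\setminus U)\setminus \partial(V_\ins,V_\out)$, violating~\eqref{eq:semi-odd-triple}), so such a $v$ lies in $N(\Odd\cap V_* \cap U)$. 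This yields $|\overline{W}| \le n + 2dn \le 3dn$ and hence $|\cE| \le |\mathcal{W}| \le \exp(C(n/t)\log d)$ via the standard Chernoff bound.

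The crux is verifying $\SO_n(V_\ins,V_\out) \subset \SO_n(\cE)$. Given $U \in \SO_n(V_\ins,V_\out)$, I would take $W$ to be a maximal subset of $\overline{W} \cap U$ satisfying $|W_\ins| \ge t|W|$, and show: (i) $W_\ins \subset U$ using \eqref{eq:semi-odd-disjoint-boundary} applied to the semi-odd pair $(U,V_\out)$: any $v \in W_\ins \cap U^c$ would lie in $N(U) \cap U_*(U,V_\out) \subset \Even$, contradicting $W_\ins \subset \Odd$; (ii) $|W| \le n/t$ from $W_\ins \subset \Odd \cap V_* \cap U$ (where the last inclusion uses~(i)) and the $\SO_n$ bound; and (iii) the refined triple is semi-odd, which follows since $\partial(V_\ins,V_\out) \subset \partial(V_\ins \cup W_\ins, V_\out \cup W_\out)$. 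The main obstacle I anticipate is step (iv): showing $W_\out \subset U^c$. Here I expect to argue by contradiction, taking $v \in W_\out \cap U$ so that $v \in \overline{W} \cap U$; if $v \notin W$, then $W \cup \{v\}$ strictly enlarges $W$ while preserving the condition $|W_\ins|\ge t|W|$, contradicting maximality. The delicate point is ruling out $v \in W$: this uses the fact that every $V_*$-neighbor of any $w \in W$ lies in $W_\ins$ by definition, so $w$ has no neighbors in $V_* \setminus W_\ins$ and thus $w \notin W_\out$ (for $t \ge 1$). This $W \cap W_\out = \emptyset$ observation is the parity-dual analogue of the $W \cap W_\ins = \emptyset$ step hidden in the original proof, and once it is in place the remaining verifications proceed exactly as in Lemma~\ref{lem:approx-algorithm}.
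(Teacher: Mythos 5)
Your proposal is correct, but it takes a more hands-on route than the paper. The paper proves the corollary in three lines by pure symmetry: it observes that $(V_\out,V_\ins)$ is a semi-\emph{even} pair, applies the even analogue of Lemma~\ref{lem:approx-algorithm} (odd replaced by even) to that reversed pair, and then reverses back, using that $U \in \SO_n(U_\ins,U_\out)$ if and only if $U^c$ lies in the even analogue of $\SO_n$ for $(U_\out,U_\ins)$; no construction is re-verified. If one unpacks that reduction, the family it produces is exactly yours: $\overline{W} = \Even \cap V_* \cap N(V_*)$, $W_\ins = V_* \cap N(W)$ adjoined to $V_\ins$, and $W_\out = \Even \cap V_* \cap N_t(V_*\setminus W_\ins)$ adjoined to $V_\out$. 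So the two arguments are the same dual construction, packaged differently: the paper's reduction buys brevity and zero extra verification, while your direct re-derivation is self-contained and makes explicit the points that are only implicit in Lemma~\ref{lem:approx-algorithm} — notably that $W \cap W_\out = \emptyset$ (the mirror of $W \cap W_\ins = \emptyset$ in the original, needed for the maximality contradiction), the use of the pair $(U,V_\out)$ with \eqref{eq:semi-odd-disjoint-boundary} to get $W_\ins \subset U$, and the bound $\overline{W} \subset (\Even \cap V_* \cap U^c) \cup N(\Odd \cap V_* \cap U)$ via \eqref{eq:semi-odd-triple}. All of these steps check out, and the counting ($|W_\ins| \le |\Odd \cap V_* \cap U| \le n$, hence $|W| \le n/t$, and $|\overline{W}| \le 3dn$) gives the stated bound on $|\cE|$, so your argument is a valid, if longer, substitute for the paper's symmetry argument.
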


\begin{proof}
For a semi-even pair $(U_\ins,U_\out)$ we define $\SO^{\text{even}}_n(U_\ins,U_\out)$ as the analogue of $\SO_n(U_\ins,U_\out)$ obtained by exchanging the roles of even and odd. Observe that Lemma~\ref{lem:approx-algorithm} is then applicable to semi-even pairs as well, with ``odd'' replaced by ``even''.
Thus, since $(V_\out,V_\ins)$ is semi-even, we may apply the analogue of Lemma~\ref{lem:approx-algorithm} for semi-even pairs to obtain a family $\cE'$ of $t$-even-tight semi-even refinements of $(V_\out,V_\ins)$ such that $|\cE'| \le \exp(C \log d \cdot n/t)$ and $\SO^{\text{even}}_n(V_\out,V_\ins) \subset \SO^{\text{even}}_n(\cE')$.
Noting that $U \in \SO_n(U_\ins,U_\out)$ if and only if $U^c \in \SO^{\text{even}}_n(U_\out,U_\ins)$, we see that $\cE := \{ (U_\ins,U_\out) : (U_\out,U_\ins) \in \cE' \}$ is a family of $t$-even-tight semi-odd refinements of $(V_\ins,V_\out)$ such that $\SO_n(V_\ins,V_\out) \subset \SO_n(\cE)$.
\end{proof}

\begin{cor}\label{cor:approx-algorithm}
Let $(V_\ins,V_\out)$ be a semi-odd pair.
For any integer $n \ge 0$ and any $1 \le t \le 2d$ there exists a family $\cE$ of $t$-tight semi-odd refinements of $(V_\ins,V_\out)$ of size $|\cE| \leq \exp(C \log d \cdot n/t)$
such that $\SO_n(V_\ins,V_\out) \subset \SO_n(\cE)$.
\end{cor}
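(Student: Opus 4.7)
The plan is to obtain $\cE$ by composing the constructions of Lemma~\ref{lem:approx-algorithm} and Corollary~\ref{cor:approx-algorithm-exterior}, exploiting the fact that both the $t$-odd-tight and $t$-even-tight properties are preserved under refinement of semi-odd pairs.

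First observe that if $(U_\ins,U_\out)$ is $t$-odd-tight and $(U'_\ins,U'_\out)$ is any refinement of it, then $U'_* \subset U_*$, so
\[
\Odd \cap U'_* \cap N_t(U'_*) \subset \Odd \cap U_* \cap N_t(U_*) = \emptyset,
\]
hence $(U'_\ins,U'_\out)$ is again $t$-odd-tight; the same argument gives preservation of $t$-even-tightness under refinement. Consequently, the pair $(U'_\ins,U'_\out)$ is $t$-tight precisely when it is simultaneously $t$-odd-tight and $t$-even-tight.

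Now apply Lemma~\ref{lem:approx-algorithm} to $(V_\ins,V_\out)$ to obtain a family $\cE_1$ of $t$-odd-tight semi-odd refinements of $(V_\ins,V_\out)$, of size $|\cE_1| \le \exp(C\log d\cdot n/t)$, satisfying $\SO_n(V_\ins,V_\out) \subset \SO_n(\cE_1)$. For each $(U_\ins,U_\out) \in \cE_1$, which is in particular a semi-odd pair, apply Corollary~\ref{cor:approx-algorithm-exterior} to obtain a family $\cE_2(U_\ins,U_\out)$ of $t$-even-tight semi-odd refinements of $(U_\ins,U_\out)$, of size at most $\exp(C\log d\cdot n/t)$, such that $\SO_n(U_\ins,U_\out) \subset \SO_n(\cE_2(U_\ins,U_\out))$. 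Define
\[
\cE := \bigcup_{(U_\ins,U_\out)\in\cE_1} \cE_2(U_\ins,U_\out).
\]
Each element of $\cE$ is a refinement of some $(U_\ins,U_\out) \in \cE_1$, hence a refinement of $(V_\ins,V_\out)$; it is $t$-even-tight by construction, and $t$-odd-tight by the preservation observation applied to $(U_\ins,U_\out)$. Therefore every element of $\cE$ is a $t$-tight semi-odd refinement of $(V_\ins,V_\out)$.

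The size bound follows from
\[
|\cE| \le |\cE_1| \cdot \max_{(U_\ins,U_\out)\in\cE_1} |\cE_2(U_\ins,U_\out)| \le \exp(2C\log d\cdot n/t),
\]
which is of the required form after absorbing the factor $2$ into the constant. Finally, the covering property follows by chaining the two inclusions:
\[
\SO_n(V_\ins,V_\out) \subset \SO_n(\cE_1) = \bigcup_{(U_\ins,U_\out)\in\cE_1} \SO_n(U_\ins,U_\out) \subset \bigcup_{(U_\ins,U_\out)\in\cE_1} \SO_n(\cE_2(U_\ins,U_\out)) = \SO_n(\cE).
\]
There is no real obstacle here; the only point requiring any care is the preservation of $t$-odd-tightness under the further refinement performed by Corollary~\ref{cor:approx-algorithm-exterior}, which is immediate from $U'_* \subset U_*$.
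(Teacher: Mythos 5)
Your proof is correct and follows essentially the same route as the paper: compose Lemma~\ref{lem:approx-algorithm} with Corollary~\ref{cor:approx-algorithm-exterior} and note that one tightness property survives the second round of refinement because the unknown set only shrinks. The only difference is the order (you do odd-tight first, the paper does even-tight first, observing that a $t$-odd-tight refinement of a $t$-even-tight pair is $t$-tight), which is immaterial.
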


\begin{proof}
Applying Corollary~\ref{cor:approx-algorithm-exterior} to $(V_\ins,V_\out)$, we obtain a family $\cE'$ of $t$-even-tight semi-odd refinements of $(V_\ins,V_\out)$ such that $|\cE'| \le \exp(C \log d \cdot n/t)$ and $\SO_n(V_\ins,V_\out) \subset \SO_n(\cE')$.
Applying Lemma~\ref{lem:approx-algorithm} to each $(U_\ins,U_\out) \in \cE'$ and noting that a $t$-odd-tight refinement of a $t$-even-tight pair is $t$-tight, we obtain a collection of families of $t$-tight semi-odd refinements of $(V_\ins,V_\out)$.
Taking the union over this collection, we obtain a family $\cE$ of $t$-tight semi-odd refinements of $(V_\ins,V_\out)$ such that $|\cE| \le |\cE'| \exp(C \log d \cdot n/t)$ and $\SO_n(V_\ins,V_\out) \subset \SO_n(\cE)$.
\end{proof}

\begin{proof}[Proof of Lemma~\ref{lem:approx-level2}]
Denote $m := |V_*|$.
Applying Corollary~\ref{cor:approx-algorithm} with $n=m$ and $t=d$ to $(V_\ins,V_\out)$, we obtain a family $\cE'$ of $d$-tight semi-odd refinements of $(V_\ins,V_\out)$ such that $|\cE'| \le \exp(C \log d \cdot m/d)$ and $\SO_m(V_\ins,V_\out) \subset \SO_m(\cE')$.
Applying Corollary~\ref{cor:approx-algorithm} with $n=\lfloor L/d \rfloor$ and $t=\sqrt{d}$ to each $(U_\ins,U_\out) \in \cE'$, we obtain a collection of families of $\sqrt{d}$-tight semi-odd refinements of $(V_\ins,V_\out)$.
Taking the union over this collection, we obtain a family $\cE$ of $\sqrt{d}$-tight semi-odd refinements of $(V_\ins,V_\out)$ such that
\[ |\cE| \le |\cE'| \exp(C \log d \cdot L/d^{3/2}) \le \exp(C \log d \cdot (m/d + L/d^{3/2})) \]
and $\SO_{\lfloor L/d \rfloor}(\cE') \subset \SO(\cE)$.

It remains to show that $\SO^\partial_L(V_\ins,V_\out) \subset \SO(\cE)$.
To this end, let $U \in \SO^\partial_L(V_\ins,V_\out)$.
Clearly, $U \in \SO_m(V_\ins,V_\out)$.
Therefore, as $\SO_m(V_\ins,V_\out) \subset \SO_m(\cE')$, there exists $(U_\ins,U_\out) \in \cE'$ such that $U \in \SO_m(U_\ins,U_\out)$. Since $(U_\ins,U_\out)$ is a refinement of $(V_\ins,V_\out)$, we have $|\partial U \setminus \partial(U_\ins,U_\out)| \le |\partial U \setminus \partial(V_\ins,V_\out)| \le L$. Thus, by~\eqref{cl:tightness-semi-odd-triples}, $U \in \SO^\partial_L(U_\ins,U_\out) \subset \SO_{\lfloor L/d \rfloor}(U_\ins,U_\out)$.
Finally, since $\SO_{\lfloor L/d \rfloor}(\cE') \subset \SO(\cE)$, we have $U \in \SO(\cE)$, as required.
\end{proof}

\subsection{Level-4-approximation.}

In this section, we construct level-4-approximations using only the information contained within the level-$3$-approximations, and then put together the previous steps to obtain the existence of a small family of level-4-approximations which approximates every four-section in $\FS_{L,M,\rho}$.

We call a family $A=(A_I)_{I \subset \{0,1,2,3\}}$ of subsets of $\Z^d$ an {\em information system} ({\em system} for short).
We say that a system $A'$ is a {\em refinement} of a system $A$ if $A_I \subset A'_I$ for all $I \subset \{0,1,2,3\}$.
We say that a system $A$ {\em respects} a four-section $K$ if $A_I \subset K_I := \cup_{l \in I} K_l$ for all $I \subset \{0,1,2,3\}$.
A system $A$ is said to be {\em exhausted} if
\begin{equation}\label{it:def-strong-FA-intersect}
A_I \cap A_J = A_{I \cap J} \quad \text{ for all } I,J \subset \{0,1,2,3\} .
\end{equation}
To lighten the notation, we shorten $A_{\{l\}}$ to $A_l$, $A_{\{l,m\}}$ to $A_{lm}$, and so forth.
Information systems are a convenient way to represent information about four-sections.
Recall our notation $\bar{l} = 3-l$.

\begin{definition}[level-$4$-approximation]
	\label{def:strong-FA}
	An exhausted system $A$ is called a {\em level-$4$-approximation} of a four-section $K$ if the following holds:
	\begin{enumerate}[\qquad(a)]
		\item \label{it:def-strong-FA-extends-weak-FA} $(A_0,A_1,A_2,A_3,A_{10},A_{23},A_{20},A_{13})$ is a level-$3$-approximation of $K$.
		\item \label{it:def-strong-FA-subset} $A$ respects $K$.
		\item \label{it:def-strong-FA-odd} $A_0$ and $A_{012}$ are even and $A_3$ and $A_{123}$ are odd.
		\item \label{it:def-strong-FA-anomaly03} $N(A_j) \subset A_{\bar{j}} \cup A_{12j}$ for every $j \in \{0,3\}$.
		\item \label{it:def-strong-FA-anomaly12} $N(A_i) \subset A_{\bar{i}} \cup A_{i03}$ for every $i \in \{1,2\}$.
	\end{enumerate}
\end{definition}

Thus, part~\eqref{it:def-strong-FA-odd} reflects the fact that $K_0$ must be even and $K_3$ must be odd, while parts~\eqref{it:def-strong-FA-anomaly03} and~\eqref{it:def-strong-FA-anomaly12} reflect the fact that singularities are guaranteed to be known.

\begin{lemma}\label{lem:algorithm-for-level-4-approx}
For any level-$3$-approximation $A'$ there exists a level-$4$-approximation $A$ such that $\FS^3(A') \subset \FS^4(A)$.
\end{lemma}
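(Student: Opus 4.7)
Assume $\FS^3(A') \neq \emptyset$; otherwise the claim is vacuous and any fixed level-$4$-approximation (e.g., the one witnessing $K=(\Z^d,\emptyset,\emptyset,\emptyset)$) serves as $A$. The natural candidate is then the minimal exhausted system respecting every $K$ in $\FS^3(A')$, namely
\[ A_I := \bigcap_{K \in \FS^3(A')} K_I, \qquad I \subset \{0,1,2,3\}. \]
My plan is to verify the five clauses of Definition~\ref{def:strong-FA} in turn. Exhaustion $A_I \cap A_J = \bigcap_K (K_I \cap K_J) = A_{I \cap J}$ and the respecting property \refLDAsubset{} are immediate from the definition.

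For the parities \refLDAodd{}, I would invoke the fact that an intersection of even (resp.\ odd) sets is again even (resp.\ odd): if $v \in A_0$ is odd, then for every $K$, $v \in K_0$ with $K_0$ even forces $v^+ \subset K_0$, hence $v^+ \subset A_0$, so $A_0$ is even; the other three sets are handled identically, using $K_{012}=K_3^c$ and $K_{123}=K_0^c$. The neighbor conditions \refLDAanomalyZeroThree{} and \refLDAanomalyOneTwo{} follow by combining the fact that singularities are recorded in the singletons of $A'$ (by \refLBAanomaly{}) with the phase-interface diagram of Figure~\ref{fig:phase-diagram-c}. For instance, in \refLDAanomalyZeroThree{} with $j=0$: given $v \in A_0$ and $u \sim v$, either some $K$ has $u \in K_3$, in which case the edge $\{u,v\}$ crosses the improper $K_0$--$K_3$ interface so $u$ is a singularity and $u \in K_3 \cap K^{\anomaly} \subset A'_3 \subset A_3$; or $u \in K_{012}$ for every $K$, so $u \in A_{012}$. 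The analogous argument, with the improper $K_1$--$K_2$ interface in place of $K_0$--$K_3$, yields \refLDAanomalyOneTwo{}.

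The main obstacle is \refLDAextendsLCA{}: verifying that the enlarged truncation $(A_0,A_1,A_2,A_3,A_{10},A_{23},A_{20},A_{13})$ remains a level-$3$-approximation of each $K \in \FS^3(A')$. Since $A_l \supseteq A'_l$ and $A_{ij} \supseteq A'_{ij}$ while both stay sandwiched below $K_l$ and $K_{ij}$, each defining property is preserved under this enlargement: the inclusion and containment conditions \refLBAsubset{}--\refLBAseesManyi{} are monotone in $A_l$, the size bound \refLBAsize{} only improves because $|K_l \setminus A_l| \le |K_l \setminus A'_l|$, the semi-odd triple condition \refLCAsemiodd{} for $(A_{13}, K_{13}, A_{20})$ survives because $A_{13} \cap A_{20} = \emptyset$ forces $\partial(A_{13},A_{20}) \supseteq \partial(A'_{13},A'_{20})$, and the $\sqrt{d}$-tightness \refLCAsqrtdTight{} is inherited because the set of unknown vertices $(A_{13}\cup A_{20})^c$ only shrinks.
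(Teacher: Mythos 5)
Your construction is exactly the paper's: the paper also sets $A_I := \bigcap_{K \in \FS^3(A')} K_I$ and simply declares the verification of Definition~\ref{def:strong-FA} straightforward, so what your write-up adds is the verification itself. Most of it is correct: exhaustion, \refLDAsubset{}, the parity clause \refLDAodd{} (intersections of even, resp.\ odd, sets are even, resp.\ odd), and the singularity arguments for \refLDAanomalyZeroThree{} and \refLDAanomalyOneTwo{} (if some $K$ places the neighbour across the improper $K_0$--$K_3$ or $K_1$--$K_2$ interface, then \refLBAanomaly{} records it in the corresponding $A'_l$, and \refLBAsubset{}, applied to every other four-section in $\FS^3(A')$, transports this membership into the intersection). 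Your monotonicity remarks also correctly dispose of the level-$2$ clauses and of the triple condition \refLCAsemiodd{}, via \eqref{eq:semi-odd-triple}.

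The gap is in \refLDAextendsLCA{}, specifically in \refLCAsqrtdTight{}: that clause requires $(A_{13},A_{20})$ and $(A_{23},A_{10})$ to be $\sqrt d$-tight \emph{semi-odd pairs}, and you verify only the tightness half. Semi-oddness of a pair is not monotone under enlarging the known sets: adding vertices to $A_{13}$ can create even vertices of $A_{13}$ that border the unknown region $(A_{13}\cup A_{20})^c$, so ``the unknown set only shrinks'' does not settle it. The property does hold for the intersection system, but it needs an argument; here is one that reuses what you already proved. Let $u \in A_{13}$ have a neighbour $v \notin A_{13} \cup A_{20}$. Since $v \notin A_{20} = \bigcap_K K_{20}$, there is $K \in \FS^3(A')$ with $v \in K_{13}$, so $v$ lies in the unknown region $K_{13}\setminus A_{13}$ of the semi-odd pair $(A_{13},K_{13}^c)$ coming from the triple $(A_{13},K_{13},A_{20})$ (which you did establish); hence $u \in \intB^{\xi} A_{13}$ with $\xi = \partial(A_{13},K_{13}^c)$, and $\xi$-oddness of $A_{13}$ forces $u$ to be odd. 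Symmetrically, if $u \in A_{20}$ has a neighbour $v \notin A_{13}\cup A_{20}$, choose $K$ with $v \in K_{20}$ (possible since $v \notin A_{13}$) and use the semi-odd pair $(K_{13},A_{20})$ to conclude that $u$ is even; the same two lines apply verbatim to $(A_{23},A_{10})$. (Alternatively, one can argue through singularities: if $u\in A_{13}$ were even and $v\in K_{20}$ for some $K$, Lemma~\ref{lem:K-ij-is-xi-odd} forces $\{u,v\}\in\partial(K_1,K_2)$, whence $v\in K_2\cap K^{\anomaly}\subset A'_2\subset A_{20}$ by \refLBAanomaly{} and \refLBAsubset{}, a contradiction.) A minor aside: your fallback for $\FS^3(A')=\emptyset$ is unnecessary, since by the paper's convention a level-$3$-approximation approximates at least one four-section.
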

\begin{proof}
Let $A'$ be a level-$3$-approximation.
Define $A$ by $A_I := \cap_{K \in \FS^3(A')} K_I$ for $I \subset \{0,1,2,3\}$.
It is straightforward to check that $A$ is an exhausted system and that $\FS^3(A') \subset \FS^4(A)$.
\end{proof}

We now put together the previous steps to obtain the following corollary.
\begin{cor}\label{lem:family-of-level-4-approx}
	For any integers $L,M \ge 0$, there exists a family $\cA$ of level-$4$-approximations of size
	\[ |\cA| \le \exp\left( C L d^{-3/2} \log^{3/2} d + CM \log d \right) \]
	such that
	$\FS_{L,M,\rho} \subset \FS^4(\cA)$.
\end{cor}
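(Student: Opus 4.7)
The plan is to chain together the four preceding lemmas, building the family $\cA$ by successive refinement: level-$1$ $\to$ level-$2$ $\to$ level-$3$ $\to$ level-$4$. At each step we replace every approximation by a small family of finer approximations which covers the same set of four-sections, and bound the total size by multiplying the per-approximation bounds by the bound on the size of the previous family.

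First, apply Lemma~\ref{lem:family-of-level-1-approx} to obtain a family $\cA_1$ of level-$1$-approximations with
\[ |\cA_1| \le \exp\!\left( C L d^{-3/2} \log^{3/2} d + CM \log d \right) \]
and $\FS_{L,M,\rho} \subset \FS^1(\cA_1)$. For each $(U,W) \in \cA_1$, invoke Lemma~\ref{lem:family-of-level-2-approx} to produce a family of level-$2$-approximations of size at most $\exp(CLd^{-3/2}\log^{1/2}d + CM)$ covering $\FS^1((U,W)) \cap \FS_{L,M}$; taking the union over $\cA_1$, we obtain a family $\cA_2$ of level-$2$-approximations with
\[ |\cA_2| \le \exp\!\left( C L d^{-3/2} \log^{3/2} d + CM \log d \right) \]
(the bound for $\cA_1$ absorbs the second factor) such that $\FS_{L,M,\rho} \subset \FS^2(\cA_2) \cap \FS_{L,M}$.

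Next, for each $A \in \cA_2$, apply Lemma~\ref{lem:family-of-level-3-approx} to produce a family of level-$3$-approximations of size at most $\exp\!\big(C(L/d+M)\log^{3/2}d/\sqrt{d}\big)$ covering $\FS^2(A) \cap \FS_{L,M}$. Since $\frac{\log^{3/2} d}{d^{3/2}} \cdot L \le L d^{-3/2} \log^{3/2} d$ and $\frac{\log^{3/2} d}{\sqrt{d}} \cdot M \le M \log d$ for $d$ large, the resulting family $\cA_3$ of level-$3$-approximations satisfies
\[ |\cA_3| \le \exp\!\left( C L d^{-3/2} \log^{3/2} d + CM \log d \right) \]
and $\FS_{L,M,\rho} \subset \FS^3(\cA_3)$. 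Finally, for each $A' \in \cA_3$ apply Lemma~\ref{lem:algorithm-for-level-4-approx} to obtain a single level-$4$-approximation $A$ with $\FS^3(A') \subset \FS^4(A)$, and let $\cA$ be the resulting collection. Then $|\cA| \le |\cA_3|$ and $\FS_{L,M,\rho} \subset \FS^4(\cA)$, as required.

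There is no real obstacle beyond bookkeeping: the only thing to verify is that the error terms accumulated at each refinement step are dominated by the level-$1$ bound, which is the case as shown above.
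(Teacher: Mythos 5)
Your proposal is correct and is essentially the same argument as the paper's: chain Lemmas~\ref{lem:family-of-level-1-approx}, \ref{lem:family-of-level-2-approx}, \ref{lem:family-of-level-3-approx} and~\ref{lem:algorithm-for-level-4-approx}, take unions at each refinement step, and multiply the size bounds, noting that the level-$2$ and level-$3$ contributions are absorbed into the level-$1$ bound. The bookkeeping (including the observation that $\FS_{L,M,\rho}\subset\FS_{L,M}$, needed to invoke the intermediate lemmas) matches the paper's proof.
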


\begin{proof}
Applying Lemma~\ref{lem:family-of-level-1-approx}, we obtain a family $\cA_1$ of level-$1$-approximations such that $\FS_{L,M,\rho} \subset \FS^1(\cA_1)$ and
\[ |\cA_1| \le \exp\left( C L d^{-3/2} \log^{3/2} d + CM \log d \right) .\]
Applying Lemma~\ref{lem:family-of-level-2-approx} to each level-$1$-approximation in $\cA_1$, we obtain a collection of families of level-$2$-approximations.
Taking the union over this collection, we obtain a family $\cA_2$ of level-$2$-approximations such that $\FS_{L,M,\rho} \subset \FS^2(\cA_2)$ and
\[ |\cA_2| \le |\cA_1| \cdot \exp\left( C L d^{-3/2} \log^{1/2} d + CM \right) .\]
Similarly, using Lemma~\ref{lem:family-of-level-3-approx}, we obtain a family $\cA_3$ of level-$3$-approximations such that $\FS_{L,M,\rho} \subset \FS^3(\cA_3)$ and
\[ |\cA_3| \le |\cA_2| \cdot \exp\left( C (L/d + M) \tfrac{\log^{3/2} d}{\sqrt{d}} \right) .\]
Once again, using Lemma~\ref{lem:algorithm-for-level-4-approx}, we obtain a family $\cA_4$ of level-$4$-approximations such that $\FS_{L,M,\rho} \subset \FS^4(\cA_4)$ and $|\cA_4| \le |\cA_3|$. Putting together the above bounds, we obtain
	\[ |\cA_4| \le \exp\left( C L d^{-3/2} \log^{3/2} d + CM \log d \right) . \]
Thus, $\cA:=\cA_4$ satisfies the conclusion of the corollary.
\end{proof}

As we have already remarked, level-4-approximations do not contain more information than level-3-approximations per se, but rather they allow us to extract and utilize this information more conveniently.
For instance, this allows us to ``break'' the regular boundary $\partial K_{12}$ into the four types of boundaries $\partial(K_i,K_j)$.
The relevant properties for this are summarized in the following lemma.

Define $\Odd_0 := \Odd$, $\Even_0 := \Even$, $\Odd_3 := \Even$ and $\Even_3 := \Odd$.
These notations will be useful in light of the parity difference between $K_0$ and $K_3$ in~\refFSodd{}.

\begin{lemma}\label{lem:level-4-properties}
	Let $A$ be a level-$4$-approximation. Then, for $i \in \{1,2\}$ and $j \in \{0,3\}$, we have
	\begin{enumerate}[\qquad(a)]
		\item \label{it:def-FA-dilemma-of-pairs} $\Even_j \subset A_{1j} \cup A_{2j} \cup A_{12 \bar{j}}$.
		\item \label{it:def-FA-boundary-neighbor} $\Even_j \subset A_{ij} \cup A_{\bar{i}\bar{j}} \cup N(A_{\bar{i}\bar{j}})$.
		\item \label{it:def-FA-unknown-expands} $\Even_j \cap N(A_{ij}) \subset A_{\bar{i}} \cup A_{ij}$.
		\item \label{it:def-FA-no-triples} $\Even_j \cap N(A_{12j}) \subset A_{1j} \cup A_{2j}$.
		\item \label{it:def-FA-no-mixing} $A_{ij} \cap N(A_{\bar{i} \bar{j}}) \subset A_i \cup A_j$.
	\end{enumerate}
\end{lemma}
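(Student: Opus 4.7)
The plan is to prove all five parts by a unified case analysis on vertex ``types.'' Using the exhaustedness of $A$ (the identity $A_I \cap A_J = A_{I \cap J}$), I will assign to each vertex a unique minimal $T(v) \subset \{0,1,2,3\}$ with $v \in A_{T(v)}$, so that $v \in A_I$ iff $T(v) \subset I$. The main tools will be (i)~this type characterization together with the dual inclusions $A_I \subset A_J$ for $I \subset J$; (ii)~the parity axioms \refLDAodd{} (odd vertices in $A_0$ or $A_{012}$ have all neighbors in the same set, and dually for $A_3$, $A_{123}$); (iii)~the adjacency axioms \refLDAanomalyZeroThree{} and \refLDAanomalyOneTwo{}; and (iv)~the $\sqrt{d}$-tightness and semi-oddness of the pairs $(A_{13},A_{20})$ and $(A_{23},A_{10})$ inherited from the underlying level-$3$-approximation via \refLCAsemiodd{} and \refLCAsqrtdTight{}.

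A key auxiliary lemma I will extract first is a \emph{parity concentration} principle, immediate from semi-oddness: if $v$ is unknown for the pair $(A_{13},A_{20})$, then all neighbors of $v$ in $A_{13}$ must be odd while all those in $A_{20}$ must be even (and analogously for $(A_{23},A_{10})$). Consequently, a vertex $v$ of a fixed parity which is unknown for \emph{both} pairs has its $\ge 2d-2\sqrt{d}$ ``doubly known'' neighbors forced into $A_{13} \cap A_{23} = A_3$ (if $v$ is even) or $A_{10} \cap A_{20} = A_0$ (if $v$ is odd). Combined with \refLDAanomalyZeroThree{} this severely constrains $T(v)$.

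I will then dispatch the parts in turn. For part~(a) with $j=0$, I enumerate the bad even types $T(v) \in \{\{0,3\},\{0,1,2\},\{0,1,3\},\{0,2,3\},\{0,1,2,3\}\}$; each makes $v$ unknown for both semi-odd pairs, so concentration places $v \in N(A_3) \subset A_0 \cup A_{123}$, which is incompatible with every such $T(v)$. Part~(b) follows from a one-pair version of concentration: for $v \in \Even_j$ outside $A_{ij} \cup A_{\bar i \bar j}$, $v$ is unknown for $(A_{\bar i \bar j}, A_{ij})$, parity rules out $A_{ij}$-neighbors, and tightness forces $v \in N(A_{\bar i \bar j})$. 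Parts~(c), (d) use bipartiteness of $\Z^d$ together with the evenness of $A_{012}$ (resp.\ oddness of $A_{123}$) to first confine $v$ to a large set, after which the $A_3$-neighborhood argument together with \refLDAanomalyZeroThree{}--\refLDAanomalyOneTwo{} removes the remaining bad types. For part~(e), the exhaustedness identity $A_i \cup A_j = A_{ij} \cap (A_{i\bar j} \cup A_{j\bar i})$ reduces the claim to showing that $v \in A_{ij} \cap N(A_{\bar i \bar j})$ must lie in $A_{i\bar j} \cup A_{j\bar i}$; I will split on the parity of $v$ and on the finer type of the neighbor $u \in A_{\bar i \bar j}$ (namely $u \in A_{\bar i}$, $u \in A_{\bar j}$, or $u$ of the ``doubly ambiguous'' type $\{\bar i, \bar j\}$), dispatching the first two subcases directly from \refLDAanomalyZeroThree{}--\refLDAanomalyOneTwo{}. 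The main obstacle I anticipate is the doubly ambiguous subcase, which I will tame by applying the oddness of $A_{123}$ to $u$: since $v \notin A_{123}$, the vertex $u$ lies in $\intB A_{123}$ and is forced to be odd, and then bipartiteness of $\Z^d$ together with a final application of the adjacency axioms yields the contradiction. The overall bookkeeping across $15$ possible types is streamlined by the twofold $(0 \leftrightarrow 3),(1 \leftrightarrow 2)$ symmetry of all axioms.
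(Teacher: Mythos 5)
Your treatment of parts (a)--(c) is sound and essentially coincides with the paper's: your ``parity concentration'' principle is exactly the paper's opening inclusions $\Even_j \setminus (A_{ij}\cup A_{\bar{i}\bar{j}}) \subset N_{2d-\sqrt{d}}(A_{\bar{i}\bar{j}})$ and their two-pair version, obtained from \refLCAsqrtdTight{} via~\eqref{eq:semi-odd-disjoint-boundary} and~\eqref{eq:odd-t-approx}, and combining these with \refLDAodd{}, \refLDAanomalyZeroThree{} and exhaustion gives (a)--(c). The gap is in parts (d) and (e): your plan uses only data intrinsic to the tuple $A$ --- exhaustion, \refLDAodd{}, \refLDAanomalyZeroThree{}, \refLDAanomalyOneTwo{}, and the $\sqrt{d}$-tight semi-odd pairs of \refLCAsqrtdTight{} --- and never invokes the four-section $K$ that $A$ approximates, nor the level-$2$ properties \refLBAanomaly{}, \refLBAmanyAnomalies{}, \refLBAseesManyi{} tying $A$ to the singularities $K^{\anomaly}$ and the revealed vertices $K^{\rev}$. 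But (d) and (e) are not consequences of the intrinsic axioms. For (d) the paper reduces to the claim $A_{12}\cap N(A_{12})\subset A_1\cup A_2$ and proves it by noting that for adjacent $u,v\in A_{12}$ unknown for both pairs, most edges $\{u+e,v+e\}$ lie in $\partial(K_0,K_3)$, hence are singular, so $u,v\in N_{d/18}(K^{\anomaly})$ and \refLBAmanyAnomalies{} forces $u,v\in A_1\cup A_2$. For (e), after the parity step you describe, the paper runs a four-cycle count splitting the $\ge 2d-2\sqrt{d}$ cycles through $\{u,v\}$ into those containing a singular edge of $K$ (handled by \refLBAmanyAnomalies{}) and those containing a regular edge of $\partial(K_i,K_j)$ or $\partial(K_{\bar{i}},K_{\bar{j}})$ (handled by $N_d(\intB K_j)\subset N(A_j)$, which itself rests on \refLBAanomaly{} and \refLBAseesManyi{}). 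Your proposed finish for the ``doubly ambiguous'' subcase --- oddness of $A_{123}$, bipartiteness, and the adjacency axioms --- only recovers the parities of $u$ and $v$, which is where this work begins, not where it ends.

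This is not a bookkeeping issue that a finer enumeration of types can repair, because the axioms you allow yourself are locally consistent with violating configurations. For (d): take adjacent $v\in\Even$ and $u\in\Odd$, both of exact type $\{1,2\}$, with all other neighbors of $v$ of exact type $\{3\}$ and all other neighbors of $u$ of exact type $\{0\}$; one checks that exhaustion, \refLDAodd{}, \refLDAanomalyZeroThree{}, \refLDAanomalyOneTwo{} and the tightness/semi-oddness of $(A_{13},A_{20})$ and $(A_{23},A_{10})$ all hold in a neighborhood of $\{u,v\}$, yet $v\in\Even\cap N(A_{012})\setminus(A_{10}\cup A_{20})$. For (e): take $v$ of exact type $\{1,0\}$ adjacent to $u$ of exact type $\{2,3\}$, with $v$'s other neighbors of exact type $\{1,3\}$ and $u$'s of exact type $\{2,0\}$; again everything on your list holds locally while $v\in A_{10}\cap N(A_{23})\setminus(A_1\cup A_0)$. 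What excludes such pictures is precisely that $A$ approximates an actual four-section, through \refLBAanomaly{}, \refLBAmanyAnomalies{} and \refLBAseesManyi{}; these must enter your argument for (d) and (e). Two smaller points: your type map $T(v)$ presupposes $\Z^d=A_{0123}$, which is not an axiom and has to be derived first (it falls out of the concentration step proving (a)); and the semi-oddness you cite is the pair statement in \refLCAsqrtdTight{}, not \refLCAsemiodd{}, which is a statement about $K$ and would anyway be unavailable in a purely intrinsic argument.
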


\begin{proof}
Let $K \in \FS^4(A)$.
In the following proofs, we repeatedly use the facts that $A$ respects $K$ and that $A$ is an exhausted system, invoking $A_I \subset \cup_{l \in I} K_l$ for $I \subset \{0,1,2,3\}$ and~\eqref{it:def-strong-FA-intersect} without an explicit reference. Moreover, we use properties of level-$3$-approximations and level-$2$-approximations for $A$ without referring to~\refLCAextendsLBA{} and~\refLDAextendsLCA{}.
We also use the following.
By~\refLCAsqrtdTight{}, together with~\eqref{eq:semi-odd-disjoint-boundary} and~\eqref{eq:odd-t-approx},
\begin{equation}\label{eq:sqrt-d-approx-single}
\begin{aligned}
 	\Even_j \setminus (A_{ij} \cup A_{\bar{i}\bar{j}}) \subset N_{2d-\sqrt{d}}(A_{\bar{i}\bar{j}}) .
\end{aligned}
\end{equation}
In particular, applying~\eqref{eq:sqrt-d-approx-single} twice, with $i=1$ and with $i=2$, we obtain
\begin{equation}\label{eq:sqrt-d-approx-double}
\begin{aligned}
 	\Even_j \setminus (A_{10} \cup A_{20} \cup A_{13} \cup A_{23}) \subset N_{2d-2\sqrt{d}}(A_{\bar{j}}) .
\end{aligned}
\end{equation}

\smallskip
\noindent\textbf{Part~(\ref{it:def-FA-dilemma-of-pairs}).}
It suffices to show that if $v \in \Even_j$ does not belong to $A_{10} \cup A_{20} \cup A_{13} \cup A_{23}$, then $v \in A_{12 \bar{j}}$. Indeed, this follows from~\eqref{eq:sqrt-d-approx-double} and~\refLDAanomalyZeroThree{}.

\smallskip
\noindent\textbf{Part~(\ref{it:def-FA-boundary-neighbor}).}
Follows immediately from~\eqref{eq:sqrt-d-approx-single}.

\smallskip
\noindent\textbf{Part~(\ref{it:def-FA-unknown-expands}).}
Let $v \in \Even_j$ and let $u \in A_{ij}$ be adjacent to $v$.
By~\refLCAsqrtdTight{} and~\eqref{eq:semi-odd-disjoint-boundary}, we have $v \in A_{\bar{i}\bar{j}} \cup A_{ij}$.
If $v \in A_{\bar{i} \bar{j}}$ then, since $u \in A_{12j}$, \refLDAodd{} implies that $v \in A_{12j} \cap A_{\bar{i} \bar{j}} = A_{\bar{i}}$.

\smallskip
\noindent\textbf{Part~(\ref{it:def-FA-no-triples}).}
We first show that $A_{12} \cap N(A_{12}) \subset A_1 \cup A_2$.
Let $u,v \in A_{12}$ be adjacent.
By~\refLDAanomalyOneTwo{}, it suffices to show that either $u \in A_1 \cup A_2$ or $v \in A_1 \cup A_2$.
Thus, we may restrict ourselves to the case $u,v \notin A_{10} \cup A_{20} \cup A_{13} \cup A_{23}$.
Assume without loss of generality that $u$ is odd.
Then, by~\eqref{eq:sqrt-d-approx-double}, $u \in N_{2d - 2\sqrt{d}}(A_0)$ and $v \in N_{2d-2\sqrt{d}}(A_3)$.
Therefore, $u,v \in N_{2d-4\sqrt{d}}(K^{\anomaly})$, which implies that $u,v \in A_1 \cup A_2$ by~\refLBAmanyAnomalies{}.

Now, towards showing part~\eqref{it:def-FA-no-triples}, let $v \in \Even_j$ and let $u \in A_{12j}$ be adjacent to $v$. By part~\eqref{it:def-FA-dilemma-of-pairs}, $v \in A_{1j} \cup A_{2j} \cup A_{12\bar{j}}$. If $v \in A_{12\bar{j}}$ then, by~\refLDAodd{}, $v,u \in A_{12}$, and hence, by the above claim, $v \in A_1 \cup A_2$.

\smallskip
\noindent\textbf{Part~(\ref{it:def-FA-no-mixing}).}
We first show that
\begin{equation}\label{eq:strong-FA-sees-many-j}
N_d(\intB K_j) \subset N(A_j) .
\end{equation}
Let $v \in N_d(\intB K_j)$.
If $v \in N(K_j \cap K^{\anomaly})$ then $v \in N(A_j)$ by~\refLBAanomaly{}, and thus, we may assume that $v \in N_d(\intB K_j \setminus K^{\anomaly})$.
If $v \in N_{d/2}(K_j \cap K^{\rev})$ then, by~\refLBAseesManyi{}, $v \in N(A_j)$.
Otherwise, $v \in N_{d/2}(\intB K_j \setminus (K^{\rev} \cup K^{\anomaly}))$ so that, by~\eqref{eq:sqrt-d-approx-single},
\[ \intB K_j \setminus (K^{\rev} \cup K^{\anomaly}) \subset \intB K_j \setminus N_{2d-\sqrt{d}}(A_{\bar{i}\bar{j}}) \subset (\Even_j \setminus A_{\bar{i}\bar{j}}) \setminus N_{2d-\sqrt{d}}(A_{\bar{i}\bar{j}}) \subset A_{ij} .\]
Thus, $v \in N_{d/2}(A_{1j} \cap A_{2j}) = N_{d/2}(A_j) \subset N(A_j)$.

We now show part~\eqref{it:def-FA-no-mixing}.
Let $v \in A_{ij}$ and let $u \in A_{\bar{i}\bar{j}}$ be adjacent to $v$.
Assume towards a contradiction that $v \notin A_i \cup A_j$.
Then, by~\refLDAanomalyZeroThree{} and~\refLDAanomalyOneTwo{}, $u \notin A_{\bar{i}} \cup A_{\bar{j}}$.
Since $v \in A_{ij} \setminus (A_i \cup A_j)$ and $u \in A_{\bar{i}\bar{j}} \setminus (A_{\bar{i}} \cup A_{\bar{j}})$, we must have $v,u \notin A_{i\bar{j}} \cup A_{\bar{i}j}$.
Thus, observing that, by~\refLDAodd{}, $v \in \Even_j$ and $u \in \Odd_j$, \eqref{eq:sqrt-d-approx-single} implies that $v \in N_{2d-\sqrt{d}}(A_{i\bar{j}})$ and $u \in N_{2d-\sqrt{d}}(A_{\bar{i}j})$.

Let $F$ denote the set of pairs $(w,z)$ such that $(v,u,w,z)$ is a four-cycle, $w \in A_{\bar{i}j}$ and $z \in A_{i\bar{j}}$, and note that $|F| \ge 2d-2\sqrt{d}$.
We partition $F$ into $F^* \cup F'$ according to whether the edge $\{w,z\}$ is on the singular or regular boundary, respectively.
If $|F^*| \ge d-2\sqrt{d}$ then, by~\refLBAmanyAnomalies{}, $v \in A_i \cup A_j$ and $u \in A_{\bar{i}} \cup A_{\bar{j}}$, which leads to a contradiction. Thus, we may assume that $|F'| \ge d$. Using that $v,u \notin K^{\anomaly}$ by~\refLBAanomaly{}, one may check that $F'$ consists either entirely of edges in $\partial(K_i,K_j)$ or entirely of edges in $\partial(K_{\bar{i}},K_{\bar{j}})$.
In the former case, $u \in N_d(\intB K_j) \subset N(A_j)$ by~\eqref{eq:strong-FA-sees-many-j}, in which case, $u \in A_{12j} \cap A_{\bar{i}\bar{j}} = A_{\bar{i}}$ by~\refLDAanomalyZeroThree{}.
Similarly, in the latter case, $v \in N_d(\intB K_{\bar{j}}) \subset N(A_{\bar{j}})$, in which case, $v \in A_{12\bar{j}} \cap A_{ij} = A_i$.
\end{proof}

\subsection{Four-approximations.}

Here, we use level-4-approximations to construct four-approximations explicitly.
Lemma~\ref{lem:family-of-FA} is an immediate consequence of Corollary~\ref{lem:family-of-level-4-approx} and the following lemma.

\begin{lemma}\label{lem:level-4-approx-gives-FA}
For any level-$4$-approximation $A$ there exists a four-approximation $A'$ satisfying that $\FS^4(A) \subset \FS(A')$.
\end{lemma}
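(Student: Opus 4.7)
The plan is to build the four-approximation $A'$ canonically from the level-$4$-approximation $A$ and then verify the six conditions of Definition~\ref{def:four-approx} by directly invoking Lemma~\ref{lem:level-4-properties}. Concretely, I would set $A'_l := A_l$ for each $l \in \{0,1,2,3\}$ and, for $i\in\{1,2\}$ and $j\in\{0,3\}$, define
\[
\DilemmaEven{i}{j} := \Even_j \cap A_{ij} \setminus (A_0 \cup A_1 \cup A_2 \cup A_3),
\qquad
\DilemmaOdd{i}{j} := \Odd_j \cap A_{12j} \setminus (A_0 \cup A_1 \cup A_2 \cup A_3).
\]
With this definition the sets $\DilemmaOdd{1}{j}$ and $\DilemmaOdd{2}{j}$ coincide, so the partition cell $\DilemmaOdd{1}{j}\cup \DilemmaOdd{2}{j}$ is just either of them.

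The partition and parity conditions (\ref{def:four-approx}a)--(\ref{def:four-approx}b) should follow from a short case analysis on the minimal set $I(v)$ with $v \in A_{I(v)}$: for an unknown vertex $v \notin A_0 \cup A_1 \cup A_2 \cup A_3$ of parity $\Even_j$, Lemma~\ref{lem:level-4-properties}(a) forces $v$ into exactly one of $A_{1j}, A_{2j}, A_{12\bar j}$, and the pairwise intersections $A_{1j}\cap A_{2j}$, $A_{ij}\cap A_{12\bar j}$ live in singleton $A_l$'s by the exhausted identity $A_I\cap A_J=A_{I\cap J}$, so they are already accounted for among the $A'_l$'s. The subset condition (\ref{def:four-approx}d) is immediate from $A_I \subset K_I$, which holds because $A$ respects every $K \in \FS^4(A)$ by \refLDAsubset.

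The three remaining conditions are the technical core. For the max-degree bound (\ref{def:four-approx}c), I would deduce the $\sqrt d$-tightness of the induced subgraph from the $\sqrt d$-tightness of the level-$3$ semi-odd pairs $(A_{13},A_{20})$ and $(A_{23},A_{10})$ of \refLCAsqrtdTight, by identifying $\DilemmaEven{i}{j}\cup \DilemmaOdd{i}{j}$ with a subset of the unknown region of the appropriate semi-odd pair, after using the exhausted identities and Lemma~\ref{lem:level-4-properties}(e) (which forbids adjacency between $A_{ij}$ and $A_{\bar i \bar j}$ except through known singletons) to trim away any vertex that secretly sits in a singleton $A_l$. Conditions (\ref{def:four-approx}e) and (\ref{def:four-approx}f) reduce, via the definitions, to set-theoretic neighborhood inclusions of $A_I$'s with neighborhoods of $A_I$'s; they follow by combining the boundary inclusions provided by Lemma~\ref{lem:level-4-properties}(c)--(e) with the fact that $K \in \FS^4(A)$ is a four-section adapted in the sense of \refLDAanomalyZeroThree{} and \refLDAanomalyOneTwo{}.

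The main obstacle will be (\ref{def:four-approx}c): to make the inclusion into a $\sqrt d$-tight unknown region work, one must carefully handle the ``triple-unknown'' vertices, namely those in $A_{12j}\setminus(A_{1j}\cup A_{2j})$, and show that the exhausted identities $A_{12j}\cap A_{ij}=A_i$ together with Lemma~\ref{lem:level-4-properties}(e) force all relevant neighbors of such vertices into known sets $A_l$, so that no vertex of $\DilemmaOdd{i}{j}\cup\DilemmaEven{i}{j}$ sees more than $\sqrt d$ neighbors in that same set.
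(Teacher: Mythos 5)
Your overall plan is the same as the paper's: build $A'$ canonically from the exhausted system and verify Definition~\ref{def:four-approx} via Lemma~\ref{lem:level-4-properties}. But your definition of the unknown sets differs from the paper's in a way that matters. The paper sets $\Unknown ij := (A_{i\bar\jmath}\cup A_{\bar\imath j})^c$ and takes $\DilemmaOdd ij := \Odd_j\cap\Unknown ij\cap A_{12j}$, $\DilemmaEven ij := \Even_j\cap\Unknown ij\cap A_{ij}$; by exhaustion your $\DilemmaEven ij$ agrees with this, but your $\DilemmaOdd ij$ is strictly larger: for instance $\DilemmaOdd 10$ in your version contains all of $\Odd\cap A_{20}\setminus(A_0\cup A_2)$, which the paper deliberately excludes (and this is exactly why $\DilemmaOdd 1j=\DilemmaOdd 2j$ for you, while they differ in the paper).

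This is where the concrete gap sits. Your argument for \refFA{13} — ``identify $\DilemmaOdd ij\cup\DilemmaEven ij$ with a subset of the unknown region of the appropriate semi-odd pair, after trimming vertices that sit in singletons'' — fails for your definition: the vertices of $\Odd\cap A_{20}\setminus(A_0\cup A_2)$ are not in any singleton, yet they lie in $A_{20}$ itself and hence outside the unknown region of $(A_{13},A_{20})$, while $\DilemmaEven 10\subset A_{10}$ lies outside the unknown region of $(A_{23},A_{10})$; so your set straddles both pairs and no single containment yields the degree bound, and Lemma~\ref{lem:level-4-properties}\ref{it:def-FA-no-mixing} is about $A_{ij}$--$A_{\bar\imath\bar\jmath}$ adjacencies, so it does not address these vertices. (The bound is in fact still true for your sets, but for a different reason: by \eqref{eq:semi-odd-disjoint-boundary} an even vertex of $(A_{13}\cup A_{20})^c$ has no neighbour in $A_{20}$, and parity rules out $\DilemmaOdd$--$\DilemmaOdd$ edges, so the extra vertices are isolated in the induced subgraph — an argument you would need to supply, and which is also needed to salvage \refFA{12} for your enlarged $\DilemmaOdd ij$.) More importantly, \refFA{11} and \refFA{12} are the real content of the lemma and are not, as you claim, ``set-theoretic neighborhood inclusions of $A_I$'s'': they quantify over the unknown four-section $K\in\FS^4(A)$, and the paper proves them by a genuine case analysis combining Lemma~\ref{lem:level-4-properties}\ref{it:def-FA-dilemma-of-pairs},\ref{it:def-FA-unknown-expands},\ref{it:def-FA-no-triples},\ref{it:def-FA-no-mixing} with the level-2 singularity property \refLBAanomaly{} and the four-section properties \refFSodd{} and \refFSisolated{} (e.g.\ to show that a neighbour $u\notin K_j$ of $v\in\DilemmaOdd ij\cap K_i$ cannot lie in $K_{\bar\imath}$ or $K_{\bar\jmath}$). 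Your proposal defers all of this to a one-line appeal to \refLDAanomalyZeroThree{} and \refLDAanomalyOneTwo{}, so the core verification is missing. Either adopt the paper's definition of $\DilemmaOdd ij$ (which makes the tightness condition immediate from \refLCAsqrtdTight{}) or add the non-adjacency argument above, and in either case carry out the case analysis for \refFA{11} and \refFA{12}.
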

\begin{proof}
We show that $A':=\big(A_0,A_1,A_2,A_3,(\DilemmaOdd ij,\DilemmaEven ij)_{i,j}\big)$ satisfies the conclusion of the lemma, where
\begin{align*}
\Unknown ij &:= (A_{i \bar{j}} \cup A_{\bar{i} j})^c ,\\
\DilemmaOdd ij &:= \Odd_j \cap \Unknown ij \cap A_{12j} ,\\
\DilemmaEven ij &:= \Even_j \cap \Unknown ij \cap A_{ij}.
\end{align*}
To this end, let $K \in \FS^4(A)$. We must show that $A'$ is a four-approximation of $K$. As before, we shall repeatedly use the facts that $A$ respects $K$ and that $A$ is an exhausted system, without an explicit reference.

\smallskip\noindent{\bf\refFA{1}.}
Clearly, $\{ A_0,A_1,A_2,A_3,\cup_{i,j} (\DilemmaOdd ij \cup \DilemmaEven ij) \}$ are pairwise disjoint, and moreover, it is straightforward to check (using exhaustion and parity) that $\{ A_i,A_j,\DilemmaOdd 1j \cup \DilemmaOdd 2j, \DilemmaEven ij \}_{i,j}$ are pairwise disjoint.
By Lemma~\ref{lem:level-4-properties}\ref{it:def-FA-dilemma-of-pairs}, $\{ \DilemmaOdd {\bar{i}}{\bar{j}}, \DilemmaEven ij \}$ is a partition of $\Even_j \cap \Unknown ij$. Thus, if $v \in \Even_j \setminus \cup_{i,j} (\DilemmaOdd ij \cup \DilemmaEven ij)$ then $v \notin \Unknown ij \cup \Unknown {\bar{i}}j$ so that $v \in (A_{i \bar{j}} \cup A_{\bar{i} j}) \cap (A_{\bar{i} \bar{j}} \cup A_{i j}) = A_0 \cup A_1 \cup A_2 \cup A_3$.

\noindent{\bf\refFA{2}.} Immediate from the definition.

\noindent{\bf\refFA{13}.} This precisely property~\refLCAsqrtdTight{}.

\noindent{\bf\refFA{9}.} Straightforward.

\noindent{\bf\refFA{11}.}
	Let $v \in \DilemmaOdd ij \cap K_i$ and let $u \in N(v) \setminus K_j$. We must show that $u \in \DilemmaEven ij$.
	Note that $u \in K_i$, since $u \in K_{\bar{i}}$ would imply that $v \in A_i$ by~\refLBAanomaly{}, and $u \in K_{\bar{j}}$ would imply that $v \in K_{\bar{j}}$ by~\refFSodd{}.
	Since $u \notin A_{\bar{i}j}$ and $v \notin A_{\bar{i}} \cup A_{i\bar{j}}$, Lemma~\ref{lem:level-4-properties}\ref{it:def-FA-unknown-expands} implies that $u \in \Unknown ij$.
	Finally, by Lemma~\ref{lem:level-4-properties}\ref{it:def-FA-no-triples}, $u \in (A_{1j} \cup A_{2j}) \cap K_i \subset A_{ij}$.

Next, let $u \in \DilemmaEven ij \cap K_j$ and let $v \in N(u) \setminus K_i$. We must show that $v \in \DilemmaOdd ij$.
	By~\refLBAanomaly{}, $v \notin K_{\bar{j}}$.
	Since $v \notin A_{i\bar{j}}$ and $u \notin A_i \cup A_{\bar{i}j}$, Lemma~\ref{lem:level-4-properties}\ref{it:def-FA-unknown-expands} implies that $v \in \Unknown ij$.
	It remains to show that $v \in A_{12j}$.
	Indeed, as $v \in \Even_{\bar{j}} \setminus A_{i\bar{j}}$ and $u \in A_{ij}$, this follows from Lemma~\ref{lem:level-4-properties}\ref{it:def-FA-dilemma-of-pairs} and Lemma~\ref{lem:level-4-properties}\ref{it:def-FA-no-mixing}.

\noindent{\bf\refFA{12}.}
By~\refFA{11} and~\refFSisolated{}, it suffices to show that
	\[ \DilemmaEven ij \cap N(\DilemmaOdd ij \setminus K_i) \subset K_j \quad\text{ and }\quad \DilemmaOdd ij \cap N(\DilemmaEven ij \setminus K_j) \subset K_i .\]

	Let $u \in \DilemmaEven ij$ and let $v \in \DilemmaOdd ij \setminus K_i$ be adjacent to $u$. We must show that $u \in K_j$.
	Indeed, since $u \in A_{ij} \setminus A_i$ and $v \in A_{12j} \setminus K_i \subset K_{\bar{i}j}$, this follows from~\refLBAanomaly{} and~\refFSodd{}.

	Next, let $v \in \DilemmaOdd ij$ and let $u \in \DilemmaEven ij \setminus K_j$ be adjacent to $v$. We must show that $v \in K_i$.
	Indeed, since $u \in (A_{ij} \setminus A_i) \setminus K_j \subset K_i \setminus A_i$ and $v \in A_{12j}$, this follows from~\refLBAanomaly{} and~\refFSodd{}.
\end{proof}

\section{Pattern violations and infinite-volume Gibbs measures}\label{sec:pattern-violations+gibbs}

Recall that $B(f,\rho)$ is the connected component of $\rho$ in $T(f)^+$, where 
$T(f)$ is the set of vertices which violate the even-$0$ pattern.
We would like to use our results about the unlikeliness of large breakups to derive that $B(f,v)$ is typically not large. Unfortunately, the breakup $K(f,v)$ does not necessarily capture the deviation described by $B(f,v)$ (consider for example a coloring $f$ which equals $0$ on a large ball around $v$). To overcome this, we extend the breakup. The procedure by which this is done resembles the original procedure used to define the breakup, an important difference however being that we do not co-connect the $0$-phase. This definition will also be important in the proof of Theorem~\ref{thm:convergence-of-finite-volume-measure}.

Let $(\Lambda,\tau)$ be even-0 boundary conditions and let $f \in \cC_\Lambda^\tau$.
Let us now define $K(f,V)$, the \emph{breakup of $f$ around a set $V \subset T(f)$}. The restriction on $V$ is in order to avoid dealing with vertices whose breakup is trivial.
Define $\kappa$ as in~\eqref{eq:def-K-coloring}.
Let $K'_0$ be the complement of the union of connected components of $\Z^d \setminus \kappa^{-1}(0)$ which intersect $V$ (this is an analogue of the co-connected closure, taken with respect to a set).
We now proceed precisely as in the definition of the breakup.
Let $K'_3$ be the co-connected closure of $\kappa^{-1}(3) \setminus K'_0$ with respect to infinity, let $K'_2$ be the co-connected closure of $\kappa^{-1}(2) \setminus (K'_0 \cup K'_3)$ with respect to infinity and let $K'_1$ be the co-connected closure of $\kappa^{-1}(1) \setminus (K'_0 \cup K'_3 \cup K'_2)$ with respect to infinity.
Finally, define $K(f,V) := (K_0,K_1,K_2,K_3)$, where $K_1 := K'_1$, $K_2 := K'_2 \setminus K'_1$, $K_3 := K'_3 \setminus (K'_1 \cup K'_2)$ and $K_0 := K'_0 \setminus (K'_1 \cup K'_2 \cup K'_3)$.

The following lemma may be seen as an extension of Lemma~\ref{lem:K-is-a-four-section}. The proof is similar to that of Lemma~\ref{lem:K-is-a-four-section} and we omit the details.
For a finite set $V \subset \Z^d$, denote
\[ \FS_V := \left\{ K\text{ four-section} ~:~ K_{123} \cap V \neq \emptyset ~\text{ and }~ \substack{\text{every connected component $U$ of $K_{123}^+$ satisfies:}\\\text{$U$ is finite, $U$ intersects $V$, $U \cap \intB K$ is connected}} \right\}.\]
Observe that, for a singleton $\{v\}$, this definition coincides with that of $\FS_v$ from Section~\ref{sec:breakup+high-level-proof}.

\begin{lemma}\label{lem:join-of-breakups}
	Let $(\Lambda,\tau)$ be even-0 boundary conditions,	let $f \in \cC_\Lambda^\tau$ and let $V \subset T(f)$. Then $K(f,V) \in \FS(f) \cap \FS_V$ and $V \subset K_{123}(f,V)$.
\end{lemma}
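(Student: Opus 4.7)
The plan is to follow the proof of Lemma~\ref{lem:K-is-a-four-section}, adapting to the two new features of $K(f,V)$: $K'_0$ is no longer defined as a co-connected closure, and $V$ is a set imposing additional geometric conditions captured by $\FS_V$. The argument naturally splits into three stages.

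First, I would verify that $K(f,V) \in \FS(f)$. The partition~\refFSpartition{}, non-isolation~\refFSisolated{}, oddness of $K_3$, and the adaptedness conditions~\refAFSvalues{} and~\refAFSneighbor{} all follow from Lemma~\ref{lem:K-is-a-four-section} essentially verbatim, since the constructions of $K'_1, K'_2, K'_3$ are unchanged. The only genuinely new verification is that $K_0$ is even: by iterated application of Lemma~\ref{lem:co-connect-properties}\ref{it:co-connect-reduces-boundary2}, $\intB K_0 \subset \intB K'_0$, so it suffices to show $\intB K'_0 \subset \kappa^{-1}(0) \subset \Even$. If $v \in \intB K'_0$ has a neighbor $u \in (K'_0)^c$, then $u$ lies in a connected component $C$ of $\Z^d \setminus \kappa^{-1}(0)$ intersecting $V$; if $v \notin \kappa^{-1}(0)$, then $v$ itself would lie in the same component $C$, contradicting $v \in K'_0$.

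Next, I would establish the $\FS_V$ properties. Since $T(f) \subset \kappa^{-1}(\{1,2,3\}) = \Z^d \setminus \kappa^{-1}(0)$, each $v \in V$ lies in a component of $\Z^d \setminus \kappa^{-1}(0)$ that intersects $V$; this component is removed in the construction of $K'_0$, so $v \in (K'_0)^c \subset K_{123}$. To show $K_{123} \subset \Lambda$, observe that the boundary conditions place $\Even \setminus \Lambda \subset \kappa^{-1}(0)$ and render odd vertices outside $\Lambda^+$ isolated in $\Z^d \setminus \kappa^{-1}(0)$; since these singletons are disjoint from $V \subset T(f) \subset \Lambda$, they remain in $K'_0$. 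Thus $K'_0 \supset \Z^d \setminus \Lambda$, and since $\Lambda^c$ is connected and infinite, the subsequent co-connected closures with respect to infinity stay within $\Lambda$, forcing $K_{123} \subset \Lambda$ so that every connected component $U$ of $K_{123}^+$ is finite. To show $U \cap V \neq \emptyset$, choose $v \in U \cap K_{123}$: either $v \in (K'_0)^c$, in which case $v$ lies in some component $C_k$ of $\Z^d \setminus \kappa^{-1}(0)$ with $C_k \cap V \neq \emptyset$ and $C_k \subset U$ by connectedness of $C_k$; or $v$ was added by a co-connected closure of some $\kappa^{-1}(l) \setminus K'_0$, in which case $v$ sits inside a finite hole of that set whose external boundary lies in $(K'_0)^c = \cup_k C_k$, again forcing $C_k \subset U$ for some $k$.

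The main obstacle will be establishing that $U \cap \intB K$ is connected for each component $U$. Since $K_0$ need not be co-connected globally, the direct appeal to Corollary~\ref{cor:boundary-of-partition-is-connected} used in Lemma~\ref{lem:K-is-a-four-section} does not apply. I would instead localize: using the key observation that every neighbor of a vertex in $K_{123} \cap U$ lies in $U$ (as $U$ is a connected component of $K_{123}^+$), one checks that the auxiliary partition $(\tilde K_0, \tilde K_1, \tilde K_2, \tilde K_3)$ of $\Z^d$, where $\tilde K_0 := U^c \cup (K_0 \cap U)$ and $\tilde K_l := K_l \cap U$ for $l \in \{1,2,3\}$, satisfies $\bigcup_l \intB \tilde K_l = U \cap \intB K$. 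After replacing $U$ by its co-connected closure (which amounts to absorbing finite holes of $U$ into $\tilde K_0$ and may require handling additional $K_{123}^+$ components trapped inside these holes), each $\tilde K_l$ is co-connected by Lemma~\ref{lem:co-connect-properties}\ref{it:co-connected-minus-set-is-co-connected}, and Corollary~\ref{cor:boundary-of-partition-is-connected} then yields the desired connectedness.
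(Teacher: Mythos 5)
Your stages 1 and 2 are essentially sound, but stage 3 — which you correctly single out as the main obstacle — has a genuine gap, and it is not the one your parenthetical tries to repair. With $\tilde K_0 := U^c \cup (K_0 \cap U)$, co-connectedness of $\tilde K_0$ means connectedness of its complement $K_{123}\cap U$, and this fails precisely in the typical situation the lemma is designed for: a single component $U$ of $K_{123}^+$ may contain \emph{several} components of $K_{123}$, joined only through common neighbours lying in $K_0$. For instance, take $V=\{v_1,v_2\}$ two odd vertices of $T(f)$ at distance $2$, each with all neighbours even and coloured $0$; then $K_{123}=\{v_1,v_2\}$ has two components while $U=v_1^+\cup v_2^+$ is one component of $K_{123}^+$, $U$ has no finite holes, so your co-connected closure changes nothing, and $\tilde K_0=\Z^d\setminus\{v_1,v_2\}$ is connected but not co-connected. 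Hence neither Corollary~\ref{cor:boundary-of-partition-is-connected} nor its proof (which still needs each component of each part to be co-connected, via Lemma~\ref{lem:int+ext-boundary-is-connected}) applies, even though the conclusion ($U\cap\intB K$ connected, here via the shared neighbour) is true. Your proposed fix addresses only holes of $U$ and trapped components, which is a different (also real, but secondary) issue; the basic obstruction above is untouched, so the key property of $\FS_V$ is not established. A working route is to localize one level deeper: for each connected component $W$ of $K_{123}$ contained in $U$, apply Corollary~\ref{cor:boundary-of-partition-is-connected} to the partition $(\Z^d\setminus W,\,K_1\cap W,\,K_2\cap W,\,K_3\cap W)$ — here $\Z^d\setminus W$ \emph{is} co-connected because $W$ is connected, and each $K_l\cap W$ ($l\in\{1,2,3\}$) is a union of components of the co-connected set $K_l$, hence co-connected — yielding that $(W\cap\intB K)\cup\extB W$ is connected; then glue these pieces over the finitely many $W\subset U$, using that $U=\bigcup W^+$ is connected and that two sets $W^+$ at distance at most $1$ contribute overlapping or adjacent vertices of $\intB K_0$ to $U\cap\intB K$.

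A smaller inaccuracy: it is not quite true that \refFSisolated{} follows verbatim. For $i=0$ the original proof rests on co-connectedness of $K_0$, which is exactly what is lost here; one needs a short direct argument (if $N(v)\subset K_0$ and $v\in K_1\cup K_2$, then $v\in\intB K'_1\cup\intB K'_2$ forces $\kappa(v)\in\{1,2\}$ and $v\notin K'_0$, so $\{v\}$ is a singleton component of $\Z^d\setminus\kappa^{-1}(0)$ meeting $V$, contradicting that an odd vertex with $f(v)\neq 0$ is not in $T(f)$). This is the same kind of observation you already use elsewhere, so it is easily repaired; likewise \refAFSneighbor{} needs ``$v\notin V$'' in place of the non-triviality used in Lemma~\ref{lem:K-is-a-four-section}. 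Note the paper omits its proof of this lemma, declaring it similar to Lemma~\ref{lem:K-is-a-four-section}, so the per-component connectivity argument is exactly the detail a complete write-up must supply — and it is the step your proposal does not yet carry.
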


Before proving Theorem~\ref{thm:main2}, we require the following lemma. We extend the definition of diameter to disconnected sets $U$ by defining $\diam U := \sum_{i=1}^k \diam U_i$, where $\{U_i\}_{i=1}^k$ are the connected components of $U$. We also define $\diam^+ U := \diam U + 3k$. Observe that $\diam^+ U > \diam U^+$.

\begin{lemma}\label{lem:minimal-breakup-by-diameter}
	Let $d$ be sufficiently large, let $L,M \ge 0$ be integers, let $K \in \FS_{L,M}$ and denote $r := \diam^+ K_{123}$. Then either $L \ge cd^2r$ or $M \ge cdr$.
\end{lemma}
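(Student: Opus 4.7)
The plan is to decompose $K_{123}$ into its connected components and bound each one's contribution to $L$ or $M$ separately.

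Let $U_1,\dots,U_k$ be the connected components of $K_{123}$ and set $r_i := \diam U_i$, so that $r = \sum_{i=1}^k (r_i+3)$. Since $K_{123}=K_0^c$ is odd by~\refFSodd{}, and since every connected component of an odd set is itself odd, each $U_i$ is non-empty, finite, odd and connected. For each $i$, write $L_i := |\partial(U_i\cap K_{12},K_0)|$ and $M_i := |\partial(U_i\cap K_3,K_0)|$. Because distinct components of $K_{123}$ are not adjacent, we have $|\partial U_i| = L_i + M_i$; and since $\sum_i L_i = |\partial(K_{12},K_0)|$ and $\sum_i M_i = |\partial(K_3,K_0)|$ are contained in the regular and singular boundaries of $K$ respectively, we obtain $\sum_i L_i \le L$ and $\sum_i M_i \le M$.

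I will next show that each component contributes on the ``correct'' scale to either $L$ or $M$. For a singleton component ($r_i=0$), $U_i=\{v\}$ for some $v$ with $N(v)\subset K_0$, so~\refFSisolated{} forces $v\in K_0\cup K_3$; since $v\in K_{123}$ we must have $v\in K_3$, and all $2d$ edges at $v$ are singular, giving $M_i = 2d = \tfrac{2d}{3}(r_i+3)$. For a component with $r_i\ge 1$, Lemma~\ref{lem:boundary-size-via-diameter} gives $L_i+M_i \ge (d-1)^2 r_i$; hence for $d$ large enough, either $M_i \ge d\cdot r_i$, or else $L_i \ge (d-1)^2 r_i - d\cdot r_i \ge \tfrac12 d^2 r_i$. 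Using $r_i+3\le 4r_i$ for such components, we obtain an absolute constant $c_0>0$ for which each component satisfies either $M_i \ge c_0\, d\,(r_i+3)$ or $L_i \ge c_0\, d^2\,(r_i+3)$.

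To finish, partition the indices into $I_L := \{i : L_i \ge c_0 d^2(r_i+3)\}$ and $I_M$ (the rest), and set $R_L := \sum_{i\in I_L}(r_i+3)$, $R_M := \sum_{i\in I_M}(r_i+3)$, so $R_L + R_M = r$. Then $L \ge \sum_{i\in I_L} L_i \ge c_0 d^2 R_L$ and $M \ge \sum_{i\in I_M} M_i \ge c_0 d R_M$. Since $\max(R_L,R_M)\ge r/2$, we conclude that either $L \ge (c_0/2)d^2 r$ or $M \ge (c_0/2)d r$, proving the lemma with $c := c_0/2$.

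The only real subtlety is bookkeeping: one must carefully split $\partial U_i$ into its regular and singular parts, and use~\refFSisolated{} to rule out the scenario in which a singleton component sits inside $K_{12}$ (which would contribute only $O(d)$ to $L$ per unit of $r$, breaking the claimed dichotomy). Once that is handled, Lemma~\ref{lem:boundary-size-via-diameter} does all the quantitative work.
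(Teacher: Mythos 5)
Your proof is correct and uses essentially the same ingredients as the paper: decomposing $K_{123}$ into connected components, invoking \refFSisolated{} to force singleton components into $K_3$ (contributing $2d$ singular edges each), and applying Lemma~\ref{lem:boundary-size-via-diameter} to the non-singleton odd components. The only difference is organizational — you run the $L$-versus-$M$ dichotomy per component and then pigeonhole, whereas the paper runs it globally (assuming $M \le dr/3$ and deducing $L \ge cd^2 r$) — so this is the same argument with slightly finer bookkeeping.
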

\begin{proof}
	Denote by $s$ the number of isolated vertices in $K_{123}$ and note that $M \ge 2ds$ by~\refFSisolated{}. Assume that $M \le dr/3$ so that $s \le r/6$. Let $A_1,\dots,A_m$ denote the connected components of $K_{123}$ which are not singletons. Then $r = 3s+3m+ \sum_{i=1}^m \diam A_i \le r/2 + 4 \sum_{i=1}^m \diam A_i$. Thus, since $A_1,\dots,A_m$ are odd by~\refFSodd{}, Lemma~\ref{lem:boundary-size-via-diameter} implies that
	\[ L+M \ge |\partial K_{123}| \ge \sum_{i=1}^m |\partial A_i| \ge (d-1)^2 \sum_{i=1}^m \diam A_i \ge \tfrac{1}{8} (d-1)^2 r .\]
	Since $M \le dr/3$, we have that $L \ge cd^2 r$.
\end{proof}

\subsection{Proof of Theorem~\ref{thm:main2}}
We assume throughout the proof that $d$ is large enough for our arguments to hold. 
Let $(\Lambda,\tau)$ be even-0 boundary conditions and let $f \sim \mu^\tau_{\Lambda,\beta}$.
Define $V(f) := T(f) \cap B(f,\rho)$ and $\bar{K}(f) := K(f,V(f))$.
Since $B(f,\rho)$ has size at least $d$ when it is non-empty, it suffices to prove the theorem for $\rho$ odd.
Thus, by~\eqref{eq:origin-non-trivial-four-section}, $\rho \in V(f)$ whenever $B(f,\rho) \neq \emptyset$. Hence, since $V(f)^+=B(f,\rho)$, Lemma~\ref{lem:join-of-breakups} implies that
\begin{equation}\label{eq:extended-breakup-is-adapted-four-section}
B(f,\rho) \neq \emptyset \quad\implies\quad \bar{K}(f) \in \FS(f) \cap \FS_\rho .
\end{equation}
Moreover, by~\eqref{eq:K-0-does-not-contain-rho} and~\refFSodd{}, we have
\begin{equation}\label{eq:B-subset-of-extended-breakup}
B(f,\rho) \subset \bar{K}_{123}(f) \cup \bar{K}^{\anomaly}(f) .
\end{equation}

We first bound the probability that $B(f,\rho)$ has size at least $r$. Since $B(f,\rho)$ has size at least $d$ when it is non-empty, we may assume that $r \ge d$.
By Lemma~\ref{lem:isoperimetry}, \eqref{eq:prob-of-breakup-with-LM-or-more}, \eqref{eq:extended-breakup-is-adapted-four-section} and~\eqref{eq:B-subset-of-extended-breakup},
\begin{align*}
	\Pr(|B(f,\rho)| \ge r)
	  &\le \Pr(|\bar{K}_{123}(f)| \ge r/2 ~\text{ or }~ |\bar{K}^{\anomaly}(f)| \ge r/2) \\
	  &\le \Pr(|\partial \bar{K}_{123}(f)| \ge 2d \cdot (r/2)^{1-1/d} ~\text{ or }~ |\partial \bar{K}(f) \setminus \partial \bar{K}_{12}(f)| \ge r/4) \\
	  &\le \Pr(\bar{K}(f) \in \FS_{\ge d (r/2)^{1-1/d}, \ge 0,\rho}) + \Pr(\bar{K}(f) \in \FS_{\ge 0, \ge (1/4) r^{1-1/d},\rho}) \\
	  &\le Cd \cdot e^{-c'r^{1-1/d}} + Cd \cdot e^{-(\beta'/4) r^{1-1/d}} \le e^{-cr^{1-1/d}} .
\end{align*}

Next, we bound the probability that the diameter of $B(f,\rho)$ is at least $r$.
By~\eqref{eq:B-subset-of-extended-breakup}, we have $B(f,\rho) \subset \bar{K}_{123}(f)^+$, and since $B(f,\rho)$ is connected, we have $\diam B(f,\rho) \le \diam \bar{K}_{123}(f)^+$.
Therefore, by Lemma~\ref{lem:minimal-breakup-by-diameter}, \eqref{eq:prob-of-breakup-with-LM-or-more} and~\eqref{eq:extended-breakup-is-adapted-four-section},
\[ \Pr(\diam B(f,\rho) \ge r) \le \Pr(\bar{K}(f) \in \FS_{\ge cd^2 r, \ge 0,\rho}) + \Pr(\bar{K}(f) \in \FS_{\ge 0, \ge cdr,\rho}) \le e^{-cdr} . \qedhere \]

\subsection{Infinite-volume Gibbs measures}
\label{sec:gibbs-proof}

In this section, we prove Theorem~\ref{thm:convergence-of-finite-volume-measure}. We first state two technical lemmas (whose proofs are postponed) which we require for the proof of the theorem.

For a set $U \subset \Z^d$ and an integer $r \ge 0$, denote
\[ U^{+r} := \{ u \in \Z^d : \dist(u,U) \le r \} .\]
For a distribution $\mu$ on $\cC_{\Z^d}$, we denote by $\mu|_U$ the marginal distribution of $\mu$ on $U$. Given two discrete distributions $\mu$ and $\lambda$ on a common space, we denote the total-variation distance between $\mu$ and $\lambda$ by $\distTV(\mu,\lambda) := \max_{A} |\mu(A)-\lambda(A)|$.

\begin{lemma}\label{lem:convergence}
	Let $d$ be sufficiently large and let $\beta \ge C \log d$.
	Let $(\Lambda,\tau)$ and $(\Lambda',\tau')$ be two even-0 boundary conditions.
	Let $r \ge 1$ and let $U$ be an odd domain such that $U^{+r} \subset \Lambda \cap \Lambda'$.
	Then
	\[ \distTV\big(\mu_{\Lambda,\beta}^\tau|_U, \mu_{\Lambda',\beta}^{\tau'}|_U\big) \le |U| \cdot e^{-cdr} .\]
\end{lemma}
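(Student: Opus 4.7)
The plan is to combine Theorem~\ref{thm:main2} with the Dobrushin--Lanford--Ruelle (DLR) property. For each coloring $f$, define the random shield
\[ R(f) := U \cup \bigcup_{v \in U} B(f,v), \]
where $B(f,v)$ is the connected component of $v$ in $T(f) \cup \extB T(f)$. By the very definition of $B(f,v)$, every vertex $u \in \extB R(f)$ lies outside $T(f) \cup \extB T(f)$, so $u$ and all of its neighbors follow the even-$0$ pattern. Let $\Omega := \{R(f) \subset U^{+r-1}\}$. If $R(f)\not\subset U^{+r-1}$ then some $v \in U$ has $\diam B(f,v)\ge r$, so by Theorem~\ref{thm:main2} and a union bound,
\[ \mu_{\Lambda,\beta}^\tau(\Omega^c)\le \sum_{v\in U} \mu_{\Lambda,\beta}^\tau\bigl(\diam B(f,v)\ge r\bigr) \le |U|\cdot e^{-cdr}, \]
and analogously for $\mu_{\Lambda',\beta}^{\tau'}$.

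On $\Omega$, the pair $(R(f), f|_{\extB R(f)})$ is a measurable function of $f|_{U^{+r}}$ and satisfies $R(f)\cup \extB R(f)\subset U^{+r}\subset \Lambda\cap\Lambda'$. The DLR condition identifies the conditional law of $f|_R$ given $R(f)=R$ and $f|_{\extB R}=\phi$ as $\mu_{R,\beta}^\phi$ restricted to $\{R(\cdot)=R\}$, which does not depend on $\tau$ or $\tau'$. For any event $A\subset\{0,1,2\}^U$ one thus has
\[ \mu_{\Lambda,\beta}^\tau(A\cap\Omega)=\sum_{R,\phi}p^\tau(R,\phi)\,q(A\mid R,\phi), \]
where $p^\tau(R,\phi):=\mu_{\Lambda,\beta}^\tau(R(f)=R,\,f|_{\extB R}=\phi)$ and the kernel $q$ is $\tau$-independent. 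Writing the analogous identity for $\mu_{\Lambda',\beta}^{\tau'}$, subtracting, and taking the total variation over $A$ yields
\[ \distTV\bigl(\mu_{\Lambda,\beta}^\tau|_U,\mu_{\Lambda',\beta}^{\tau'}|_U\bigr)\le \tfrac{1}{2}\sum_{R,\phi}\bigl|p^\tau(R,\phi)-p^{\tau'}(R,\phi)\bigr|+|U|\cdot e^{-cdr}. \]

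The remaining and most delicate step is to bound the first term on the right by $|U|\cdot e^{-cdr}$ as well. I intend to do this by constructing a coupling of $f\sim \mu_{\Lambda,\beta}^\tau$ and $f'\sim \mu_{\Lambda',\beta}^{\tau'}$ in which the shield and its exterior values are exposed inward from $\partial(\Lambda\cap\Lambda')$ using shared randomness for the common Gibbs specifications -- an iterative procedure in which each successive layer of $U^{+r}$ is sampled from a conditional distribution that, by DLR, agrees under both measures once the previously exposed vertices agree. A disagreement propagating inward to $U$ must then trace a chain of pattern-violating vertices of length at least $r$, whose probability is bounded by Theorem~\ref{thm:main2} applied to either measure. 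The main obstacle is setting up this coupling in the absence of FKG/monotonicity properties enjoyed by ferromagnetic models; however, the symmetry between the two even-$0$ boundary conditions and the rigid structure of the even-$0$ phase (quantified by Theorems~\ref{thm:main} and~\ref{thm:main2}) make the construction tractable.
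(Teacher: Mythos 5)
Your overall skeleton (a random ``shield'' of pattern violations around $U$, the domain Markov property inside it, and a diameter bound to keep it inside $U^{+r}$) is in the spirit of the paper, but the two steps that carry the actual weight are not sound as written. First, the conditioning step: the event $\{R(f)=R\}$ is \emph{not} measurable with respect to $f|_{R^c}$ (knowing that the violation clusters of vertices of $U$ fill out exactly $R$ requires knowing $f$ inside $R$, and even on the second boundary shell), so the DLR property does not give that the conditional law of $f|_R$ given $\{R(f)=R,\,f|_{\extB R}=\phi\}$ is a $\tau$-independent kernel $q(\cdot\mid R,\phi)$. The paper circumvents exactly this issue in Lemma~\ref{lem:marginal-distribution-given-agreement} by working not with the violation cluster itself but with the \emph{maximal} element of a boundary semi-lattice of odd sets $S$ with $U\subset S\subset V$ and zero boundary values; maximality is what makes the event $\{{\sf S}=S\}$ determined by the values outside $S$, and this mechanism is absent from your argument.

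Second, and more seriously, even granting the factorization, your bound reduces the lemma to controlling $\frac12\sum_{R,\phi}|p^\tau(R,\phi)-p^{\tau'}(R,\phi)|$, i.e.\ the total-variation distance between the laws of the shield and its boundary values under the two measures --- which is essentially the statement you set out to prove. Your proposed fix (an inward layer-exposing coupling, with a disagreement forced to trace a chain of pattern violations whose probability is ``bounded by Theorem~\ref{thm:main2} applied to either measure'') does not go through: the paper instead samples $f\sim\mu^\tau_{\Lambda,\beta}$ and $f'\sim\mu^{\tau'}_{\Lambda',\beta}$ \emph{independently} and bounds the probability that there is no common odd set $S\subset U^{+r}$ on whose external boundary \emph{both} colorings vanish; the obstruction is then a large connected component of the \emph{joint} violation set of the two independent colorings (the cluster $B(f,f',u)$). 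A long joint chain can be made of many small, interleaved violation clusters of $f$ and of $f'$, so it need not contain a large cluster of either coloring alone, and a union bound using only the single-coloring estimate of Theorem~\ref{thm:main2} does not control it. This is precisely why the paper proves the separate joint-cluster estimate Lemma~\ref{lem:prob-of-joint-breakup-core}, which requires extending the breakup and approximation machinery to breakups around sets of vertices (Lemmas~\ref{lem:family-of-level-1-approx-many-points} and~\ref{lem:prob-of-breakup-many-points}) together with the path/diameter argument of Lemma~\ref{lem:diam-witness}. Without an analogue of these ingredients, the ``most delicate step'' you defer is a genuine gap, not a routine completion.
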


We say that two random variables $X$ and $Y$ are $\epsilon$-almost independent if the covariance between any events of the form $\{X \in A\}$ and $\{Y \in B\}$ is at most $\epsilon$.

\begin{lemma}\label{lem:almost-independence-of-colorings}
	Let $d$ be sufficiently large and let $\beta \ge C \log d$.
	Let $(\Lambda,\tau)$ be even-0 boundary conditions and let $f \sim \mu^\tau_{\Lambda,\beta}$.
	Let $V \subset \Lambda$ be a domain, let $r \ge 1$ and let $U \subset \Lambda$ be such that $U^{+2r} \subset V$.
Then $f|_U$ and $f|_{V^c}$ are $\epsilon$-almost independent, where $\epsilon := |U| \cdot e^{-cdr}$.
\end{lemma}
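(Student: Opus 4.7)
The strategy is to exhibit a random odd domain $\hat W$ sandwiched between $U$ and $V^c$ whose external boundary is entirely colored $0$, so that the DLR equations together with Lemma~\ref{lem:convergence} decouple $f|_U$ from $f|_{V^c}$. Let $D$ be the union of all connected components of $T(f)^+$ that meet $V^c$. By construction, $\extB D \cap T(f)^+ = \emptyset$, so every vertex of $\extB D$ follows the even-$0$ pattern along with all its neighbors. If $D\cap U^{+r} = \emptyset$ (call this event $\Omega^*$), take $\hat W$ to be a maximal odd domain contained in $V\setminus D$ that contains $U^{+r}$; otherwise set $\hat W := \emptyset$. On $\Omega^*$, $\extB\hat W\subset\Even$ and $f|_{\extB\hat W}\equiv 0$, since every vertex of $\extB\hat W$ lies in $\extB D$ or in $V^c$ (the latter being colored $0$ on evens by the even-$0$ boundary condition).

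If $\Omega^{*c}$ occurs, some $v\in U^{+r}$ lies in $D$, which forces $\diam B(f,v)\ge \dist(v,V^c)\ge r$ because $U^{+2r}\subset V$. Theorem~\ref{thm:main2} and a union bound give
\[
\Pr(\Omega^{*c}) \le |U^{+r}|\cdot e^{-cdr} \le |U|(2r+1)^d e^{-cdr} \le |U|e^{-c'dr},
\]
the polynomial factor being absorbed for $d$ large by enlarging the constant $C$ in $\beta\ge C\log d$ (since $\log(2r+1)\le 2r$ for $r\ge 1$). Crucially, $D$ can be computed by exploring $T(f)^+$ outward from $V^c$, so $\hat W$ is a stopping set in the reverse sense: $\{\hat W = W\}$ is measurable with respect to $f|_{W^c}$. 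By the DLR equations, conditional on $\mathcal{F}:=\sigma(\hat W, f|_{\hat W^c})$, the law of $f|_{\hat W}$ is $\mu^{\tau_0}_{\hat W,\beta}$ for some even-$0$ boundary condition $\tau_0$, because the measure depends on $\tau_0$ only through $\tau_0|_{\extB\hat W}\equiv 0$. Lemma~\ref{lem:convergence} applied to $(\hat W,\tau_0)$ and $(\Lambda,\tau)$ with the common subdomain $U$ at distance $\ge r$ from both boundaries then gives
\[
\distTV\big(\mu^{\tau_0}_{\hat W,\beta}\big|_U,\,\mu^\tau_{\Lambda,\beta}\big|_U\big) \le |U|e^{-cdr}.
\]

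For events $A$ depending on $f|_U$ and $B$ depending on $f|_{V^c}$, both $\1_B$ and $\1_{\Omega^*}$ are $\mathcal{F}$-measurable, since $V^c\subset \hat W^c$ and $\hat W\in\mathcal{F}$. Writing
\[
\Pr(A\cap B)-\Pr(A)\Pr(B) = \E\big[\1_B\big(\Pr(A\mid\mathcal{F})-\Pr(A)\big)\big]
\]
and splitting the expectation over $\Omega^*$ and $\Omega^{*c}$, the $\Omega^*$-contribution is bounded in absolute value by $|U|e^{-cdr}$ via the displayed total-variation estimate, while the $\Omega^{*c}$-contribution is at most $\Pr(\Omega^{*c}) \le |U|e^{-c'dr}$. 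Summing yields the claimed $\epsilon$-almost independence with $\epsilon = C|U|e^{-cdr}$. The main technical obstacle is the careful construction of $\hat W$: it must simultaneously be a stopping set from outside, an odd domain on $\Omega^*$ with $\extB\hat W$ entirely colored $0$, and large enough to contain $U^{+r}$; the parity closure and domain requirements must be handled so that the probability of failure is controlled by the diameter tail of Theorem~\ref{thm:main2} rather than by the (much larger) size of the external boundary of $U^{+r}$.
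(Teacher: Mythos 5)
Your overall plan---condition on a random odd surface between $U^{+r}$ and $\intB V$ on which $f$ vanishes, then invoke the domain Markov property, Lemma~\ref{lem:convergence} and Theorem~\ref{thm:main2}---is the same as the paper's, but two steps at the heart of your construction do not hold as written. First, the claim that $\{\hat W=W\}$ is measurable with respect to $f|_{W^c}$ is not valid for your $\hat W$. To certify that the exploration of $T(f)^+$ from $V^c$ has terminated one must check that each frontier vertex $y$ is \emph{not} in $T(f)^+$, and this depends on $f$ at the neighbours of $y$, which may lie in $\hat W$: your construction only forces $\hat W\cap D=\emptyset$, not a distance-two separation. Concretely, if a component $Q$ of $T(f)^+$ meeting $V^c$ is adjacent to $W=\hat W(f)$ (which is allowed), then modifying $f$ at a vertex $w\in W$ adjacent to $Q$ so that $w$ violates the pattern attaches $w$ to $Q$, so $w\in D$ for the modified coloring and $\hat W\neq W$ there; hence $\{\hat W=W\}\notin\sigma(f|_{W^c})$, and the step ``conditional on $\mathcal F$ the law of $f|_{\hat W}$ is $\mu^{\tau_0}_{\hat W,\beta}$'' is exactly the selection-bias issue the argument must rule out. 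The paper's Lemma~\ref{lem:marginal-distribution-given-agreement} handles this by taking the surface to be the \emph{maximal} odd set $S$ with $U^{+r}\subset S\subset V$ and $f|_{\extB S}\equiv 0$: this family is closed under unions, and both ``$S$ qualifies'' and ``no larger $S'$ qualifies'' are read off from $f|_{S^c}$ (since $\extB S'\subset S^c$ whenever $S'\supset S$), which is what legitimizes the conditioning.

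Second, $f|_{\extB\hat W}\equiv 0$ is not established. For $v\in\extB\hat W\cap V^c$ you appeal to the even-$0$ boundary condition, but that constrains $f$ only on $\Lambda^c$; on $\Lambda\setminus V$ the coloring is unconstrained and such a $v$ may violate the pattern. (One can instead argue through $D$: if $f(v)\neq 0$ then $v\in T(f)\cap V^c\subset D$, so its neighbour in $\hat W$ lies in $\extB D$, hence outside $T(f)^+$, contradicting its adjacency to $T(f)$---but this argument is absent from your write-up.) For $v\in\extB\hat W\cap V$ the asserted dichotomy ``$v\in\extB D$'' does not follow from maximality without work: adjoining $v$ forces adjoining all of $N(v)$ (to keep the set odd) and then taking a co-connected closure (to keep it a domain), and this extension can be blocked by $N(v)$ meeting $V^c$ or $D$, cases which again require the $D$-based argument rather than maximality. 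Two smaller points: $D\cap U^{+r}=\emptyset$ does not guarantee that \emph{any} odd domain containing $U^{+r}$ avoids $D$ (an even vertex of $U^{+r}$ adjacent to $D$ blocks every odd superset), so $\Omega^*$ needs a one-step buffer; and $|U^{+r}|\le |U|(2r+1)^d$ does not give $(2r+1)^d e^{-cdr}\le e^{-c'dr}$ for small $r$ unless $c>\log 3$---enlarging the constant $C$ in $\beta\ge C\log d$ does not improve the constant $c$ of Theorem~\ref{thm:main2}; use $|U^{+r}|\le |U|\,(Cd)^r$ together with $d$ large, as the paper does.
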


\begin{proof}[Proof of Theorem~\ref{thm:convergence-of-finite-volume-measure}]
{\bf Convergence.}
	Let $U$	be an odd domain. We must show that the measures $\mu_{\Lambda_n,\beta}^{\tau_n}|_U$ converge as $n \to \infty$. Indeed, since $\Lambda_n$ increases to $\Z^d$, we have $\dist(U,\Lambda_n^c) \to \infty$ as $n \to \infty$. Thus, by Lemma~\ref{lem:convergence}, the sequence of measures $(\mu_{\Lambda_n,\beta}^{\tau_n}|_U)_{n=1}^{\infty}$ is a Cauchy sequence with respect to the total-variation metric, and therefore, converges.
	Since this holds for any odd domain $U$, we have established the convergence of $\mu_{\Lambda_n,\beta}^{\tau_n}$ as $n \to \infty$ towards an infinite-volume measure $\Pr = \mu_{\Z^d,\beta}^{0,0}$.

\smallskip
\noindent{\bf Strong mixing.}
Recall that the result on convergence implies that the limiting measure $\Pr$ is invariant with respect to parity-preserving automorphisms.
Lemma~\ref{lem:almost-independence-of-colorings} immediately implies that $\Pr$ is strongly mixing with respect to any parity-preserving translation $T$. Indeed, since $\Pr$ is $T$-invariant, \eqref{eq:mix} holds for any two cylinder events $A$ and $B$ (with an exponential rate of convergence).

\smallskip
\noindent{\bf Extremality.}
It is well-known (see, e.g., \cite[Proposition~7.7]{georgii2011gibbs}) that extremality (within the set of all Gibbs measures) is equivalent to the fact that the tail $\sigma$-algebra is trivial in the sense that any tail event has probability zero or one.
This in turn follows for $\Pr$ directly from Lemma~\ref{lem:almost-independence-of-colorings} (see, e.g., \cite[Proposition~7.9]{georgii2011gibbs}).

For completeness, we include a short proof that the tail $\sigma$-algebra of $\Pr$ is indeed trivial.
Let $A$ be a tail event for $\Pr$. The fact that $A$ is measurable implies that it can be approximated by cylinder events. Let $E$ be a cylinder event depending only on the values of $f$ on some finite set $U \subset \Z^d$ such that $\Pr(A \Delta E) \le \epsilon$.
Let $V$ be a domain containing $U^{+r}$.
Since $A$ is a tail event, it is measurable with respect to the $\sigma$-algebra generated by the values of $f$ on $\Z^d \setminus V$. Thus, there exists a domain $W$ and another cylinder event $F$ depending only on $f|_{W \setminus V}$ such that $\Pr(A \Delta F) \le \epsilon$. Taking $r$ large enough and applying Lemma~\ref{lem:almost-independence-of-colorings} (using the definition of $\Pr$ as a limit), we get that $f|_U$ and $f|_{W \setminus V}$ are $\epsilon$-almost independent. In particular, $|\text{Cov}(E,F)| \le \epsilon$.
Since $\Pr(A \Delta E) \le \epsilon$ and $\Pr(A \Delta F) \le \epsilon$, it follows that $|\Pr(A)-\Pr(E \cap F)|\le 2\epsilon$ and $|\Pr(A)^2-\Pr(E)\Pr(F)|\le 2\epsilon$.
Thus,
\[ |\Pr(A)-\Pr(A)^2| \le |\Pr(A)-\Pr(E \cap F)|+|\Pr(E \cap F) - \Pr(E)\Pr(F)|+|\Pr(E)\Pr(F)-\Pr(A)^2| \le 5 \epsilon .\]
Since this holds for any $\epsilon>0$, we see that $\Pr(A)=\Pr(A)^2$, which implies that $\Pr(A) \in \{0,1\}$.
\end{proof}

The proofs of Lemma~\ref{lem:convergence} and Lemma~\ref{lem:almost-independence-of-colorings} make use of the following fact which exploits the domain Markov property of the model.
We say that a collection $\cS$ of proper subsets of $\Z^d$ is a \emph{boundary semi-lattice} if for any $S_1,S_2 \in \cS$ there exists $S \in \cS$ such that $S_1 \cup S_2 \subset S$ and $\partial S \subset \partial S_1 \cup \partial S_2$.
Two boundary semi-lattices which we require are $\cS(U,V) := \{ S \subsetneq \Z^d : U \subset S \subset V \text{ and $S$ is odd} \}$ and $\cS(f) := \{ S \subsetneq \Z^d : f|_{\extB S} \equiv 0 \}$.
The latter has the property that if $\cS$ is any boundary semi-lattice, then $\cS \cap \cS(f)$ is also a boundary semi-lattice.

Recall that a domain is a finite, non-empty, connected and co-connected subset of $\Z^d$.

\begin{lemma}\label{lem:marginal-distribution-given-agreement}
	Let $\Lambda,\Lambda' \subset \Z^d$ be finite and let $U \subset V \subset \Lambda \cap \Lambda'$ be non-empty.
	Let $\tau$ and $\tau'$ be two arbitrary colorings, let $\beta>0$ and let $f \sim \mu_{\Lambda,\beta}^\tau$ and $f' \sim \mu_{\Lambda',\beta}^{\tau'}$ be independent.
	\begin{enumerate}[\qquad(a)]
		\item \label{it:marginal-distribution-given-agreement-pair} $\distTV(\mu_{\Lambda,\beta}^\tau|_U,\mu_{\Lambda',\beta}^{\tau'}|_U) \le \Pr(\cS(U,V) \cap \cS(f) \cap \cS(f')=\emptyset)$.
		\item \label{it:marginal-distribution-given-agreement-single} Assume that $U$ is connected, $V$ is co-connected and $\Pr(\cS(U,V) \cap \cS(f) \neq \emptyset)>0$. Then, conditioned on $\{\cS(U,V) \cap \cS(f) \neq \emptyset \}$, the distribution of $f|_U$ is a convex combination of the measures $\{ \mu_{S,\beta}^0|_U \}_{S \in \cS^{\text{dom}}(U,V)}$, where $\cS^{\text{dom}}(U,V)$ is the collection of domains in $\cS(U,V)$.
	\end{enumerate}
\end{lemma}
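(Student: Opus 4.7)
Both parts follow from a single coupling strategy based on two observations about the collections $\cS(g)$ and $\cS(U,V)$. First, each is closed under finite union: the external boundary of $S_1\cup S_2$ is contained in $\extB S_1\cup\extB S_2$, and the union of two odd subsets of $V$ is itself odd and contained in $V$. Hence intersections of such collections remain boundary semi-lattices closed under union, and (being finite) admit a unique maximum under inclusion whenever non-empty. Second, if $S^{\ast}$ denotes such a maximum, the event $\{S^{\ast}=S\}$ is determined by the values of the relevant colorings on $S^c$: the membership condition involves only values on $\extB S\subset S^c$, and the maximality condition only involves values on $\extB S'\subset(S')^c\subset S^c$ for various $S'\supsetneq S$. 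Combined with the domain Markov property of the $3$-state AF Potts model, these facts let us condition on $S^{\ast}$ and replace the relevant boundary data by the identically-zero boundary.

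For part (a), I would let $f\sim\mu^\tau_{\Lambda,\beta}$ and $f'\sim\mu^{\tau'}_{\Lambda',\beta}$ be independent and set $\cS^{\ast}:=\cS(U,V)\cap\cS(f)\cap\cS(f')$. On $\{\cS^{\ast}\neq\emptyset\}$, letting $S=S(f,f')$ be its maximum, the measurability step above together with the Markov property gives that, conditionally on $(f,f')|_{S^c}$ and $\{S(f,f')=S\}$, the restrictions $f|_S$ and $f'|_S$ are conditionally independent and each distributed as $\mu^0_{S,\beta}$. One may therefore couple $f|_S$ with $f'|_S$ so that they agree on this event, producing a joint coupling of $(f,f')$ in which $f|_U=f'|_U$ whenever $\cS^{\ast}\neq\emptyset$ (since $U\subset S$). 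The standard coupling bound on total variation then yields
$\distTV(\mu^\tau_{\Lambda,\beta}|_U,\mu^{\tau'}_{\Lambda',\beta}|_U)\le\Pr(f|_U\neq f'|_U)\le\Pr(\cS^{\ast}=\emptyset)$.

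For part (b), the same plan is executed inside the subfamily $\cS_{\mathrm{dom}}(f):=\cS^{\mathrm{dom}}(U,V)\cap\cS(f)$ of domains. The hypotheses that $U$ is connected and $V$ is co-connected enter twice. Non-emptiness: given any $S_0\in\cS(U,V)\cap\cS(f)$, Lemma~\ref{lem:co-connect-properties} ensures that taking the connected component of $S_0$ containing $U$, and then its co-connected closure with respect to any vertex of the non-empty, connected set $V^c$, yields a domain still in $\cS(U,V)\cap\cS(f)$; hence $\cS_{\mathrm{dom}}(f)\neq\emptyset$ whenever $\cS(U,V)\cap\cS(f)\neq\emptyset$. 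Unique maximum: for $S_1,S_2\in\cS_{\mathrm{dom}}(f)$, the union $S_1\cup S_2$ is connected (both contain $U$) and lies in $\cS(U,V)\cap\cS(f)$, so its co-connected closure with respect to $V^c$ is a domain in $\cS_{\mathrm{dom}}(f)$ dominating both. The resulting maximum $S^{\ast}(f)$ satisfies the same measurability property as in (a), so the Markov property gives $f|_S\sim\mu^0_{S,\beta}$ conditional on $\{S^{\ast}(f)=S\}$; partitioning $\{\cS(U,V)\cap\cS(f)\neq\emptyset\}$ by the value of $S^{\ast}(f)$ expresses the conditional law of $f|_U$ as the claimed convex combination. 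The main technical point is precisely this ``closure under domain-union'' step: naive unions of domains need not be co-connected, and only by packaging the co-connected-closure operation on top of ordinary union does one stay inside $\cS^{\mathrm{dom}}(U,V)$.
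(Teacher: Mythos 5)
Your proposal is correct and follows essentially the same route as the paper: identify the maximal element of the relevant boundary semi-lattice, note that the event pinning it down is measurable with respect to the configuration outside that set, and apply the domain Markov property with the all-zero boundary, invoking co-connected closures to stay among domains for part (b). The only cosmetic differences are that you phrase part (a) as an explicit coupling rather than as equality of the two conditional laws of $f|_U$ and $f'|_U$ given non-emptiness, and in part (b) you restrict to the subfamily of domains from the outset (closing it under union followed by co-connected closure) instead of taking the maximum of the full family $\cS(U,V)\cap\cS(f)$ and then arguing it yields a domain.
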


\begin{proof}
	We shall prove both items together.
	To this end, let $f''$ be either $f$ or $f'$, and denote $\cS := \cS(U,V) \cap \cS(f) \cap \cS(f'')$.
	Since $\cS$ is a finite boundary semi-lattice, it has a unique maximal element $\sf S$ (if  $\cS = \emptyset$ we set $\sf S = \emptyset$).
	Let $S \neq \emptyset$ be such that $\Pr({\sf S}=S)>0$.
	Observe that the event $\{{\sf S}=S\}$ is determined by $f|_{S^c}$ and $f''|_{S^c}$.
	Therefore, by the domain Markov property, conditioned on $\{{\sf S}=S\}$, $f|_S$ and $f''|_S$ are distributed according to $\mu_{S,\beta}^0|_S$. In particular, conditioned on $\{\cS \neq \emptyset \}$, the distribution of both $f|_U$ and $f''|_U$ is $\sum_S \Pr({\sf S}=S \mid \cS \neq \emptyset) \mu_{S,\beta}^0|_U$, from which the first item follows. Moreover, if $U$ is connected and $V$ is co-connected, then $\sf S$ is always a domain, since Lemma~\ref{lem:co-connect-properties}\ref{it:co-connect-reduces-boundary} and Lemma~\ref{lem:co-connect-properties}\ref{it:co-connect-kills-components} imply that the co-connected closure of $S$ (with respect to infinity) belongs to $\cS$ for any $S \in \cS$.
\end{proof}

In order to apply the above lemma for our purposes, we need to extend our result on the unlikeliness of pattern violations to pairs of colorings.
Given two colorings $f$ and $f'$, we define $B(f,f',\rho)$ to be the connected component of $\rho$ in $(\Even \cap T(f))^+ \cup (\Even \cap T(f'))^+$.

\begin{lemma}\label{lem:prob-of-joint-breakup-core}
	Let $d$ be sufficiently large and let $\beta \ge C \log d$.
	Let $(\Lambda,\tau)$ and $(\Lambda',\tau')$ be two even-0 boundary conditions and let $f \sim \mu^{\tau}_{\Lambda,\beta}$ and $f' \sim \mu^{\tau'}_{\Lambda',\beta}$ be independent.
	Then, for any $r \ge 1$ and any vertex $u \in \Z^d$,
	\[ \Pr\big(\diam B(f,f',u) \ge r \big) \le e^{-cdr} .\]
\end{lemma}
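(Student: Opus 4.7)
The plan is to mimic the proof of Theorem~\ref{thm:main2}, but with a \emph{pair} of joint breakups -- one for each coloring -- in place of the single breakup around $u$. Each joint breakup will be anchored at the even pattern-violators of its coloring that lie in $B(f,f',u)$, and together their diameters will dominate $\diam B(f,f',u)$. The probability estimate~\eqref{eq:prob-of-breakup-with-LM-or-more} applied to each coloring separately, combined with the independence of $f$ and $f'$, will then yield the exponential bound.

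Concretely, I would set
\[
V := \Even \cap T(f) \cap B(f,f',u), \qquad V' := \Even \cap T(f') \cap B(f,f',u).
\]
If $B(f,f',u) \ne \emptyset$ then $u \in (\Even \cap T(f))^+ \cup (\Even \cap T(f'))^+$, so $V \cup V' \ne \emptyset$, and a short check based on the definition of $B$ and its connectedness gives $B(f,f',u) = V^+ \cup V'^+$ (any odd vertex of $B$ is adjacent to an even violator, which itself must lie in $B$). Applying Lemma~\ref{lem:join-of-breakups} to each non-empty set among $V, V'$ then produces adapted four-sections $\bar K(f) := K(f, V) \in \FS(f) \cap \FS_V$ and $\bar K(f') := K(f', V') \in \FS(f') \cap \FS_{V'}$ with $V \subset \bar K_{123}(f)$ and $V' \subset \bar K_{123}(f')$.

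Since $B(f,f',u)$ is connected and contained in $\bar K_{123}(f)^+ \cup \bar K_{123}(f')^+$, I would then decompose any long path within $B(f,f',u)$ into maximal monochromatic runs lying in $\bar K_{123}(f)^+$ or $\bar K_{123}(f')^+$. Each such run is bounded in length by the diameter of a single connected component of the corresponding $\bar K_{123}$, and between any two consecutive runs there exists a vertex of $V$ within distance $3$ of a vertex of $V'$; charging a constant to each such transition should yield
\[
\diam B(f,f',u) \le \diam^+ \bar K_{123}(f) + \diam^+ \bar K_{123}(f') + C.
\]
Hence the event $\{\diam B(f,f',u) \ge r\}$ implies $\diam^+ \bar K_{123}(f) \ge cr$ or $\diam^+ \bar K_{123}(f') \ge cr$, which by Lemma~\ref{lem:minimal-breakup-by-diameter} forces the corresponding breakup to have regular boundary of size at least $cd^2 r$ or singular boundary of size at least $cdr$.

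Finally, to invoke~\eqref{eq:prob-of-breakup-with-LM-or-more} I need an anchor $\rho \in \bar K_{123}$. Since $u \in V^+ \cup V'^+$, one of $V, V'$ contains a vertex within distance $1$ of $u$, but for the other coloring the nearest anchor may be farther away; this I would handle by union-bounding $\rho$ over the ball of radius $r$ around $u$, which contributes a polynomial factor $(2r+1)^d$ that is absorbed by the exponential decay from~\eqref{eq:prob-of-breakup-with-LM-or-more} once $r$ exceeds an absolute constant (the case of bounded $r$ is handled by the trivial estimate $\Pr(B(f,f',u) \ne \emptyset) \le e^{-cd}$, which is immediate from Theorem~\ref{thm:main}). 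Using the independence of $f$ and $f'$ to apply this bound to both colorings then produces $\Pr(\diam B(f,f',u) \ge r) \le e^{-cdr}$. The main technical hurdle is the geometric diameter bound, since $V$ and $V'$ may each split into many interlacing components inside $B$; the use of $\diam^+$ in place of $\diam$ is precisely what allows the argument to pay for these extra components and align with Lemma~\ref{lem:minimal-breakup-by-diameter}. A secondary subtlety is that $\bar K(f)$ belongs a priori only to $\FS_V$ rather than $\FS_\rho$, so before applying~\eqref{eq:prob-of-breakup-with-LM-or-more} one should restrict attention to the connected component of $\bar K_{123}^+$ containing the chosen anchor $\rho$, a step which is essentially automatic from the semi-odd definitions.
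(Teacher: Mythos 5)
There is a genuine gap at the probabilistic step. Your geometric reduction (taking $V:=\Even\cap T(f)\cap B$, $V':=\Even\cap T(f')\cap B$, producing $\bar K(f)=K(f,V)\in\FS(f)\cap\FS_V$ via Lemma~\ref{lem:join-of-breakups}, and bounding $\diam B$ by $\diam^+\bar K_{123}(f)+\diam^+\bar K_{123}(f')$ up to constants) is in the right spirit, but the conclusion you want to draw from it cannot be fed into~\eqref{eq:prob-of-breakup-with-LM-or-more}. That estimate applies only to four-sections in $\FS_\rho$, i.e.\ with $\rho\in K_{123}$, $K_{123}$ finite \emph{and} $\intB K$ connected. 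The joint breakup $\bar K(f)$ typically has many connected components (it only lies in $\FS_V$), so it is in no $\FS_\rho$; and your fix --- restricting to the single component of $\bar K_{123}(f)^+$ containing a chosen anchor $\rho$ --- destroys exactly the quantity you need. On the event $\{\diam B\ge r\}$ the bound $\diam^+\bar K_{123}(f)\ge cr$ (and hence $L\ge cd^2r$ or $M\ge cdr$ from Lemma~\ref{lem:minimal-breakup-by-diameter}) concerns a \emph{sum} over components: the bad event can be realized by order $r$ interlacing components of the two breakups, each of bounded size and bounded boundary, chained from $u$ outward. No single component need be large, so the single-anchor bound applied to the component of any $\rho$ yields only $e^{-cd}$, not $e^{-cdr}$, and there is no independence across components of the same coloring to multiply such bounds. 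The union bound over anchors $\rho$ in a ball of radius $r$ is the minor issue; the multi-component structure is the fatal one.

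This is precisely the difficulty the paper's proof is built to handle, and it requires machinery your proposal does not invoke: Lemma~\ref{lem:family-of-level-1-approx-many-points} and Lemma~\ref{lem:prob-of-breakup-many-points} extend the approximation family and the probability estimate from a single anchor $\rho$ to a finite anchor set $V$, at a cost $2^{|V|}$, exactly so that four-sections whose largeness is spread over many small components (each meeting $V$) can be union-bounded. The paper then chooses the anchor set to be a path $p$ from $u$ to a vertex $v\in B$ at distance $\ge r/2$: Lemma~\ref{lem:diam-witness} produces such a path of length $s\le\diam^+V(f)+\diam^+V(f')$ where $V(f)$ collects only the components of $K_{123}(f,B\cap T(f))$ meeting $p$, one restricts the breakup to those components (the operation $K|_{V(f)}$), and then sums over the $(2d)^s$ paths against the bound $2^{s+1}e^{-cds/2}$ from Lemma~\ref{lem:prob-of-breakup-many-points}. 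Without a many-anchor estimate of this kind (or some substitute argument controlling arbitrarily many small components simultaneously), the reduction to~\eqref{eq:prob-of-breakup-with-LM-or-more} does not close, so the proposal as written does not prove the lemma.
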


Before proving Lemma~\ref{lem:prob-of-joint-breakup-core}, we give the proofs of Lemma~\ref{lem:convergence} and Lemma~\ref{lem:almost-independence-of-colorings}.

\begin{proof}[Proof of Lemma~\ref{lem:convergence}]
	Let $f \sim \mu^{\tau}_{\Lambda,\beta}$ and $f' \sim \mu^{\tau'}_{\Lambda',\beta}$ be independent.
	Let $\mathcal E$ be the event that $B(f,f',u)$ intersects $\Z^d \setminus U^{+r}$ for some $u \in U$.
	Denote $S := \cup_{u \in U} B(f,f',u) \cup U$. Observe that, by definition, $S$ is odd and $f|_{\extB S} \equiv f'|_{\extB S} \equiv 0$. Moreover, on the complement of $\mathcal E$, $S \subset U^{+r}$.
	Thus, by Lemma~\ref{lem:marginal-distribution-given-agreement}\ref{it:marginal-distribution-given-agreement-pair} and Lemma~\ref{lem:prob-of-joint-breakup-core},
	\[ \distTV\big(\mu_{\Lambda,\beta}^\tau|_U, \mu_{\Lambda',\beta}^{\tau'}|_U\big) \le \Pr(\mathcal E) \le \sum_{u \in U} \Pr\big(\diam B(f,f',u) \ge r\big) \le |U| \cdot e^{-cdr} . \qedhere \]
\end{proof}

\begin{proof}[Proof of Lemma~\ref{lem:almost-independence-of-colorings}]
We begin with a simple observation about $\epsilon$-almost independence.
Let $X$ and $Y$ be discrete random variables. Let $\mu_X$ denote the distribution of $X$ and let $\mu_{X|Y}$ denote the conditional distribution of $X$ given $Y$. Assume that $\E[\distTV(\mu_{X|Y},\mu_X)] \le \epsilon$.
Then $X$ and $Y$ are $\epsilon$-almost independent. Indeed, this follows immediately from the fact that
\[ \text{Cov}(X \in A, Y \in B) = \E\big[(\mu_{X|Y}(A) - \mu_X(A)) \1_{\{Y \in B\}} \big] .\]

Let $\mu := \mu^f_{V,\beta}$ be the conditional distribution of $f$ given $f|_{V^c}$. Let $\cE'$ be the event that there exists an odd set $S$ such that $U^{+r} \subset S \subset V$ and $f|_{\extB S} \equiv 0$.
By Lemma~\ref{lem:marginal-distribution-given-agreement}\ref{it:marginal-distribution-given-agreement-single}, conditioned on $\cE'$, $\mu|_U$ is a convex combination of measures $\mu^0_{S,\beta}|_U$, where $S$ is an odd domain containing $U^{+r}$.
For any such $S$, by Lemma~\ref{lem:convergence}, we have \[ \distTV(\mu^0_{S,\beta}|_U,\mu^\tau_{\Lambda,\beta}|_U) \le |U| \cdot e^{-cdr} .\]
Let $\cE$ be the event that $B(f,u)$ intersects $V^c$ for some $u \in U^{+r}$, and observe that $\cE^c \subset \cE'$.
Hence,
\[ \E[\distTV(\mu|_U,\mu^\tau_{\Lambda,\beta}|_U)] \le |U| \cdot e^{-cdr} + \E[\mu(\cE)] = |U| \cdot e^{-cdr} + \Pr(\cE) .\]
By Theorem~\ref{thm:main2},
\[ \Pr(\cE) \le |U^{+r}| \cdot e^{-c'dr} \le |U| \cdot (Cd)^r \cdot e^{-c'dr} \le |U| \cdot e^{-cdr} .\]
Thus, $\E[\distTV(\mu|_U,\mu^\tau_{\Lambda,\beta}|_U)] \le |U| \cdot e^{-cdr}$, and the lemma follows from the above observation.
\end{proof}

The rest of this section is devoted to the proof of Lemma~\ref{lem:prob-of-joint-breakup-core}.
Let us first prove the following generalization of Lemma~\ref{lem:family-of-level-1-approx}.
For $L,M \ge 0$ and a finite $V \subset \Z^d$, denote $\FS_{L,M,V} := \FS_{L,M} \cap \FS_V$.

\begin{lemma}\label{lem:family-of-level-1-approx-many-points}
	For any integers $L,M \ge 0$ and any finite set $V \subset \Z^d$, there exists a family $\cA$ of level-$1$-approximations of size
	\[ |\cA| \le 2^{|V|} \cdot \exp\left( C L d^{-3/2} \log^{3/2} d + CM \log d \right) \]
	such that $\FS_{L,M,V} \subset \FS^1(\cA)$.
\end{lemma}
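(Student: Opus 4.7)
The plan is to follow the strategy of Lemma~\ref{lem:family-of-level-1-approx} with modifications for the multi-point setting. Given $K \in \FS_{L,M,V}$, I apply Lemma~\ref{lem:existence-of-U} to produce $U \subset (\intB K)^+$ such that $(U, K^{\anomaly})$ is a level-$1$-approximation with $|U \cup K^{\anomaly}| \le r := CLd^{-3/2}\sqrt{\log d} + 3M$. The definition of $\FS_V$ ensures that $\intB K$ decomposes into at most $|V|$ connected components (one per connected component of $K_{123}^+$, each of which intersects $V$). Repeating the Lemma~\ref{lem:r-connected-sets} argument from the original proof on each piece separately, I conclude that $U \cup K^{\anomaly}$ has at most $|V|$ connected components in $(\Z^d)^{\otimes 5}$, each naturally associated to a component of $K_{123}^+$.

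Next, I would generalize the ``exit-point'' argument of the original proof. Define $\tilde V := \bigcup_{v \in V}\{v+ke_1 : 0 \le k \le (L+M)^2\}$. For each component $C$ of $K_{123}^+$ and any $v \in C \cap V$, the line from $v$ in direction $e_1$ must exit $K_{123}$ within $|K_{123}| \le (L+M)^2$ steps (using Lemma~\ref{lem:isoperimetry} or the crude bound $|K_{123}| \le |\partial K_{123}|^2$), and the exit vertex lies in $\tilde V \cap \intB K \cap C$. Consequently, each $(\Z^d)^{\otimes 5}$-connected component of $U \cup K^{\anomaly}$ intersects $\tilde V^{++}$. Taking $\cA'$ to be the collection of subsets of $\Z^d$ of size at most $r$ with at most $|V|$ connected components in $(\Z^d)^{\otimes 5}$ each intersecting $\tilde V^{++}$, and $\cA := \{(U,W) : U \cup W \in \cA'\}$, one obtains $\FS_{L,M,V} \subset \FS^1(\cA)$ as in the original.

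The counting proceeds by parametrizing each element of $\cA'$ by a subset $V' \subset V$ encoding which vertices anchor a component (contributing $2^{|V|}$ via the bound $\sum_k \binom{|V|}{k} \le 2^{|V|}$), together with, for each $v \in V'$, a root in $\tilde V_v^{++}$ and a $(\Z^d)^{\otimes 5}$-connected set containing that root; by Lemma~\ref{lem:number-of-connected-graphs}, the number of such connected sets of size $s$ is at most $(Cd^5)^s$, and the constraint $\sum_v s_v \le r$ introduces a harmless $2^r$ factor for the distribution of sizes. The term $3^r$ coming from the freedom in how to split $U \cup W$ into $U$ and $W$ is handled as in the original.

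The main obstacle will be controlling the factor introduced by the choice of root in $\tilde V_v^{++}$, whose size is $O((L+M)^2 d^2)$: raised to the $|V'|$-th power this naively gives $(L+M)^{2|V'|} d^{2|V'|}$, which does not directly fit into $2^{|V|} \cdot \exp(CLd^{-3/2}\log^{3/2}d + CM\log d)$ for all regimes of $L, M, |V|$. I expect this to be resolved by exploiting the fact that each component of $K_{123}^+$ has boundary at least $2d$ (Corollary~\ref{cor:isoperimetry}), forcing $|V'| \le (L+M)/(2d)$, and by using the assumption (analogous to the start of the original proof) that either $L+M \ge d^2$ or $M \ge 2d$, which allows the extraneous $|V'|\log(L+M)$ terms to be absorbed into $CLd^{-3/2}\log^{3/2}d + CM\log d$ via the standard case analysis. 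This balancing of the per-component cost against the isoperimetric constraint is the sole non-routine step.
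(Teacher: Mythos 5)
Your geometric reductions (at most $|V|$ components of $K_{123}^+$, each with connected $\intB K$-piece, Sapozhenko connectivity, exit lines from each $v\in V$) are fine, but the counting step --- which you yourself flag as the one non-routine point --- genuinely fails as proposed, and the fix you sketch does not rescue it. Because you anchor every component on a line of the \emph{global} length $(L+M)^2$, each anchored component costs a factor of order $(L+M)^2d^2$, i.e.\ $\log(L+M)$ per component in the exponent. Even with the correct cap on the number of anchored components (applying \eqref{eq:L+M-is-large} to each component of $K_{123}^+$ gives at most $2L/d^2+2M/d$ of them, which is stronger than your $|V'|\le (L+M)/(2d)$), the total cost is of order $(L/d^2+M/d)\log(L+M)$, and this is not bounded by $CLd^{-3/2}\log^{3/2}d+CM\log d$ once $\log(L+M)\gg \sqrt{d}\,\log^{3/2}d$: for instance with $M=0$, $L\ge e^{d^2}$ and $|V|$ of order $L/d^2$, the family you construct (all sets of size $\le r$ with $\le|V|$ components meeting $\tilde V^{++}$) really has size $e^{cL}$, while the allowed bound is $\exp\bigl(|V|\log 2+CLd^{-3/2}\log^{3/2}d\bigr)$, which is far smaller. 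The global dichotomy ``$L+M\ge d^2$ or $M\ge 2d$'' absorbs a \emph{single} additive $\log(L+M)$ term, as in Lemma~\ref{lem:family-of-level-1-approx}, but not one per component, since $L$ and $M$ are not bounded above in terms of $d$. What is needed is that each component be anchored on a line whose length is governed by that component's own boundary sizes $(\ell_v,m_v)$, so that the per-component cost $\log(\ell_v+m_v)$ can be absorbed into $C\ell_v d^{-3/2}\log^{3/2}d+Cm_v\log d$ (legitimate, because \eqref{eq:L+M-is-large} holds per component); the price is then only an enumeration of the compositions $\sum_v\ell_v\le L$, $\sum_v m_v\le M$, which costs $\binom{L+s}{s}\binom{M+s}{s}\le\exp(Cs\log d)$ with $s\le 2L/d^2+2M/d$.

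This per-component budgeting is exactly how the paper argues, and it also lets one skip re-running the construction of Lemma~\ref{lem:existence-of-U}: any $K\in\FS_V$ splits into single-vertex four-sections $K^v\in\FS_v$, $v\in S\subset V$, with $\{(K^v_{123})^+\}_{v\in S}$ pairwise disjoint, and since $\intB K$ is then the disjoint union of the $\intB K^v$, a union of level-$1$-approximations of the $K^v$ is a level-$1$-approximation of $K$. The paper therefore takes $\cA$ to be all unions of members of the single-point families $\cA_{\ell_v,m_v,v}$ from Lemma~\ref{lem:family-of-level-1-approx}, over subsets $S\subset V$ with $|S|\le s:=\lfloor 2L/d^2+2M/d\rfloor$ and over compositions of $L$ and $M$, giving $|\cA|\le\binom{|V|}{\le s}\binom{L+s}{s}\binom{M+s}{s}\exp\left(CLd^{-3/2}\log^{3/2}d+CM\log d\right)\le 2^{|V|}\exp\left(CLd^{-3/2}\log^{3/2}d+CM\log d\right)$. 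If you want to keep your direct route, the repair is to enumerate, for each anchored component, the pair $(\ell_v,m_v)$ and an exit distance $k_v\le(\ell_v+m_v)^2$ --- but at that point you have reproduced the composition argument, so the paper's decomposition is the cleaner way to finish.
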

\begin{proof}
	Observe that for every $K \in \FS_V$ there exists $S \subset V$ and $(K^v)_{v \in S} \in \prod_{v \in S} \FS_v$ such that $\{ (K^v_{123})^+ \}_{v \in S}$ are pairwise disjoint, $K_0 = \cap_{v \in S} K_0^v$ and $K_i = \cup_{v \in S} K_i^v$ for $i \in \{1,2,3\}$. Observe also that for such a collection of four-sections, $\intB K = \uplus_{v \in S} \intB K^v$, and thus, if $\{(U_v,W_v)\}_{v \in S}$ are level-1-approximations of $\{K^v\}_{v \in S}$, then $(\cup_{v \in S} U_v, \cup_{v \in S} W_v)$ is a level-1-approximation of $K$.
	For integers $\ell,m \ge 0$ and $v \in V$, let $\cA_{\ell,m,v}$ be a family of level-1-approximations obtained from Lemma~\ref{lem:family-of-level-1-approx} such that $\FS_{\ell,m,v} \subset \FS^1(\cA_{\ell,m,v})$. Let $s \le |V|$ and denote
	\[ \cA := \bigcup_{\substack{S \subset V\\|S| \le s}} \bigcup_{\substack{\sum_{v \in S} \ell_v \le L\\\sum_{v \in S} m_v \le M}} \left\{ \left(\bigcup_{v \in S} U_v, \bigcup_{v \in S} W_v\right) : (U_v,W_v)_{v \in S} \in \prod_{v \in S} \cA_{\ell_v,m_v,v} \right\} .\]	
	Then $\FS_{L,M,V} \subset \FS^1(\cA)$ for $s=|V|$. In fact, applying~\eqref{eq:L+M-is-large} to each pair $(\ell_v,m_v)$, we see that this also holds for $s = \lfloor 2L/d^2 + 2M/d \rfloor$. Therefore, since
	\[ |\cA| \le \binom{|V|}{\le s} \binom{L+s}{s} \binom{M+s}{s} \exp\left( C L d^{-3/2} \log^{3/2} d + CM \log d \right) ,\]
	the stated bound on $|\cA|$ easily follows.
\end{proof}

\begin{lemma}\label{lem:prob-of-breakup-many-points}
	Let $d$ be sufficiently large and let $\beta \ge C \log d$.
	Let $(\Lambda,\tau)$ be even-0 boundary conditions and let $f \sim \mu^\tau_{\Lambda,\beta}$.
	Then, for any integers $L,M \ge 0$ and any finite $V \subset \Z^d$, we have
	\[ \Pr\big( \FS(f) \cap \FS_{L,M,V} \neq \emptyset \big) \le 2^{|V|} \cdot \exp(- c L/d - \beta M/2) .\]
	Moreover, for any $r \ge 1$, we have
	\[ \Pr\big(\FS(f) \cap \FS_V \cap \FS_{\diam \ge r} \neq \emptyset \big) \le 2^{|V|} \cdot e^{-cdr} ,\]
	where $\FS_{\diam \ge r}$ is the collection of four-sections $K$ such that $\diam^+ K_{123} \ge r$.
\end{lemma}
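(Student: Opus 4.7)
The strategy is to repeat the approximation-and-flow argument from the proof of Theorem~\ref{thm:main}, but with the many-points version Lemma~\ref{lem:family-of-level-1-approx-many-points} in place of the single-point Lemma~\ref{lem:family-of-level-1-approx}. The key observation is that the upgrade lemmas Lemma~\ref{lem:family-of-level-2-approx}, Lemma~\ref{lem:family-of-level-3-approx}, Lemma~\ref{lem:algorithm-for-level-4-approx} and Lemma~\ref{lem:level-4-approx-gives-FA} make no use of the base vertex $\rho$; they simply refine a given approximation. Consequently, starting from the family guaranteed by Lemma~\ref{lem:family-of-level-1-approx-many-points} and applying these upgrades in sequence (exactly as in Corollary~\ref{lem:family-of-level-4-approx} and the final step of Lemma~\ref{lem:family-of-FA}) yields a family $\cA$ of four-approximations with
\[ |\cA| \le 2^{|V|} \cdot \exp\left( C L d^{-3/2} \log^{3/2} d + CM \log d \right) \]
such that $\FS_{L,M,V} \subset \bigcup_{A \in \cA} \FS_{L,M}(A)$.

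Combining this with Lemma~\ref{lem:prob-of-approx-enhanced} and the union bound, and writing $\beta' := \beta - C'\log d$, we obtain, analogously to~\eqref{eq:prob-of-breakup-with-LM},
\[ \Pr(\FS(f) \cap \FS_{L,M,V} \neq \emptyset) \le 2^{|V|} \cdot \exp(-c'L/d - \beta' M) .\]
The hypothesis $\beta \ge C\log d$ (with $C$ large enough) gives $\beta' \ge \beta/2$, which proves the first assertion. Summing $\ell \ge L$ and $m \ge M$ as in~\eqref{eq:prob-of-breakup-with-LM-or-more} yields, for the ``at least'' events $\FS_{\ge L,\ge M,V}$, the bound
\[ \Pr(\FS(f) \cap \FS_{\ge L,\ge M,V} \neq \emptyset) \le C d \cdot 2^{|V|} \cdot \exp(-c'L/d - \beta' M/2) . \]

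For the second assertion, Lemma~\ref{lem:minimal-breakup-by-diameter} tells us that every $K \in \FS_{L,M} \cap \FS_{\diam \ge r}$ satisfies $L \ge cd^2 r$ or $M \ge cdr$, so
\[ \FS_V \cap \FS_{\diam \ge r} \subset \FS_{\ge cd^2 r,\ge 0,V} \cup \FS_{\ge 0,\ge cdr,V} . \]
Applying the displayed ``at least'' bound to each piece gives
\[ \Pr(\FS(f) \cap \FS_V \cap \FS_{\diam \ge r} \neq \emptyset) \le Cd \cdot 2^{|V|} \cdot \big( e^{-c''dr} + e^{-c''\beta' dr} \big) \le 2^{|V|} \cdot e^{-cdr}, \]
where the last step absorbs the prefactor $Cd$ using $\beta' \ge c\log d$.

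There is no serious obstacle: the work of enlarging the search from a single vertex $\rho$ to a finite set $V$ has already been localized in Lemma~\ref{lem:family-of-level-1-approx-many-points}, where the factor $2^{|V|}$ enters through the enumeration of subsets $S \subset V$ together with a per-vertex decomposition of $K$ into independent four-sections $K^v$. The only care needed in the present lemma is to check that combining the upgrade steps does not reintroduce a dependence on the choice of $V$, which is clear by inspection since each of those lemmas takes as input an approximation and produces a family whose size bound is independent of $V$.
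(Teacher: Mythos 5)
Your proposal is correct and follows essentially the same route as the paper: the paper likewise replaces Lemma~\ref{lem:family-of-level-1-approx} by Lemma~\ref{lem:family-of-level-1-approx-many-points}, runs the upgrade chain of Corollary~\ref{lem:family-of-level-4-approx} and Lemma~\ref{lem:level-4-approx-gives-FA} to get a family of four-approximations of size $2^{|V|}\exp(CLd^{-3/2}\log^{3/2}d+CM\log d)$ covering $\FS_{L,M,V}$, and then applies Lemma~\ref{lem:prob-of-approx-enhanced} as in~\eqref{eq:prob-of-breakup-with-LM}, with the diameter bound obtained by summing over $L,M$ and invoking Lemma~\ref{lem:minimal-breakup-by-diameter}. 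Your observation that the higher-level upgrade lemmas are insensitive to the base point is exactly the point implicit in the paper's short argument.
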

\begin{proof}
	Repeating the proof of Corollary~\ref{lem:family-of-level-4-approx}, using Lemma~\ref{lem:family-of-level-1-approx-many-points} in place of Lemma~\ref{lem:family-of-level-1-approx}, and using Lemma~\ref{lem:level-4-approx-gives-FA}, we obtain a family $\cA$ of four-approximations satisfying $\FS_{L,M,V} \subset \cup_{A \in \cA} \FS_{L,M}(A)$ and of size $|\cA| \le 2^{|V|} \cdot \exp(CLd^{-3/2} \log^{3/2}d + CM \log d)$. Thus, the first part of the lemma follows by applying Lemma~\ref{lem:prob-of-approx-enhanced} and repeating a computation similar to that in~\eqref{eq:prob-of-breakup-with-LM}.
	
	The second part follows by summing over $L$ and $M$ and using Lemma~\ref{lem:minimal-breakup-by-diameter}.
\end{proof}

\begin{lemma}\label{lem:diam-witness}
Let $U,V \subset \Z^d$ be finite and assume that $U \cup V$ is connected. Then for any $u,v \in U \cup V$ there exists a path $p$ from $u$ to $v$ of length at most $\diam^+ U_p + \diam^+ V_p$, where $U_p$ and $V_p$ are the union of connected components of $U$ and $V$ which intersect $p$.
\end{lemma}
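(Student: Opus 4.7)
The plan is to lift the problem to a bipartite auxiliary graph on components and then translate a short path there back into a path in $\Z^d$. Let $C_1,\dots,C_k$ be the connected components of $U$ and $D_1,\dots,D_l$ those of $V$. Form the bipartite graph $H$ whose vertex set is $\{C_1,\dots,C_k\}\cup\{D_1,\dots,D_l\}$, placing an edge between $C_i$ and $D_j$ iff they share a vertex or contain two vertices adjacent in $\Z^d$ (no $C$–$C$ or $D$–$D$ edges can occur by the definition of connected component). Since $U\cup V$ is connected in $\Z^d$, $H$ is connected.

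Next, pick a component $X_0$ containing $u$ and a component $X_m$ containing $v$, and take a shortest path $X_0,X_1,\dots,X_m$ in $H$. Build a corresponding path $p$ in $\Z^d$ greedily: starting at $u\in X_0$, traverse $X_0$ to reach a vertex $w_0\in X_0$ that witnesses the $H$-edge $X_0X_1$; then either $w_0\in X_1$ and no transition edge is needed, or take one edge to a vertex of $X_1$. Continue this way through each $X_t$, ending at $v\in X_m$. Since each $X_t$ is connected in $\Z^d$, the portion of $p$ inside $X_t$ has length at most $\diam X_t$, and there are at most $m$ transition edges between consecutive components. Thus
\[
\text{length}(p)\ \le\ \sum_{t=0}^m \diam X_t + m.
\]

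To finish, observe that every $X_t$ that is a $U$-component is a component of $U_p$ (it intersects $p$), and likewise for $V$-components. Hence writing $k'_p,l'_p$ for the numbers of $X_t$ that are $U$- resp.\ $V$-components (so $k'_p+l'_p=m+1$, $k'_p\le k_p$, $l'_p\le l_p$, where $k_p,l_p$ count the components of $U$ and $V$ that meet $p$), we have $\sum_{X_t\subset U}\diam X_t\le\diam U_p$ and $\sum_{X_t\subset V}\diam X_t\le\diam V_p$, so
\[
\text{length}(p)\ \le\ \diam U_p+\diam V_p+m\ \le\ \diam U_p+\diam V_p+3k_p+3l_p\ =\ \diam^{+} U_p+\diam^{+} V_p,
\]
using $m=k'_p+l'_p-1\le 3(k_p+l_p)$. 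The only delicate point is making sure that $p$ does not accidentally traverse additional unaccounted-for components, but this is harmless: any extra component of $U$ or $V$ that happens to contain a vertex of $p$ simply enlarges $U_p$ or $V_p$, making the right-hand side bigger, so the bound is preserved. The main (mild) obstacle is just bookkeeping the degenerate cases ($u=v$, $U$ or $V$ empty, $m=0$), all of which are immediate because the $+3$ per component in $\diam^+$ absorbs every constant transition cost.
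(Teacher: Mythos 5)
Your proof is correct and follows essentially the same route as the paper: an auxiliary graph on the connected components of $U$ and $V$ (adjacency = sharing or containing adjacent vertices, i.e.\ distance at most $1$), a simple path of components from $u$ to $v$, stitched together by short connecting paths, with the $+3$ per component in $\diam^+$ absorbing the transition edges. One small wording caveat: since $\diam$ here is the \emph{extrinsic} graph-distance diameter in $\Z^d$, the segment through $X_t$ should be taken as a shortest path in $\Z^d$ between the two relevant vertices of $X_t$ (length at most $\diam X_t$) rather than a walk confined to $X_t$, which could be longer; this is harmless because the lemma does not require $p \subset U \cup V$, and it is exactly what the paper's construction does.
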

\begin{proof}
Let $\mathcal{W}$ be the collection of connected components of $U$ and $V$.
Consider the graph $G$ on $\mathcal{W}$ with $W,W' \in \mathcal{W}$ adjacent if and only if $\dist(W,W') \le 1$.
Note that $G$ is connected.
Consider a simple path $q=(W_1,\dots,W_k)$ in $G$, where $u \in W_1$ and $v \in W_k$.
For each $1 \le i \le k-1$, let $u_i \in W_i$ and $v_i \in W_{i+1}$ be such that $\dist(u_i,v_i) \le 1$.
Let $p$ be a path from $u$ to $v$ constructed by connecting $v_{i-1}$ to $u_i$ by a shortest-path for every $1 \le i \le k$ (where we set $v_0 := u$ and $u_k := v$).
Then the length of $p$ is at most $\sum_{i=1}^k (\diam W_i + 1)$. On the other hand, $\diam^+ U_p + \diam^+ V_p \ge \sum_{i=1}^k (\diam W_i + 3)$, and the lemma follows.
\end{proof}

For a four-section $K$ and a set $V \subset \Z^d$, define $K|_V=K'$ by $K'_0 := K_0 \cup V^c$ and $K'_i := K_i \cap V$ for $i \in \{1,2,3\}$. Note that if every connected component of $K_{123}$ is either contained in $V$ or disjoint from $V$, then $K|_V$ is a four-section. Moreover, if $K$ is adapted to $f$, then so is $K|_V$.

\begin{proof}[Proof of Lemma~\ref{lem:prob-of-joint-breakup-core}]
Denote $B := B(f,f',u)$ and assume that $\diam B \ge r$.
Observe that there exists a vertex $v \in B$ such that $\dist(u,v) \ge r/2$.
Denote $U(f) := K_{123}(f,B \cap T(f))$ and note that, since $B$ and $U(f)$ are odd, Lemma~\ref{lem:join-of-breakups} implies that $B \subset U(f) \cup U(f')$ and that $U(f) \cup U(f')$ is connected.
Thus, by Lemma~\ref{lem:diam-witness}, there exists a path $p$ from $u$ to $v$ of length $s \le \diam^+ V(f) + \diam^+ V(f')$, where $V(f)$ is the union of connected components of $U(f)$ which intersect $p$.
In particular, either $\diam^+ V(f)$ or $\diam^+ V(f')$ is at least $s/2$. Assume without loss of generality that $\diam^+ V(f) \ge s/2$.
Denote $K:=K(f,B \cap T(f))|_{V(f)}$ and observe that $K_{123}=V(f)$ and that $K \in \FS(f) \cap \FS_p$ by Lemma~\ref{lem:join-of-breakups}.
Thus, by a union bound on the choices of $p$ and by Lemma~\ref{lem:prob-of-breakup-many-points},
\[ \Pr\big(\diam B(f,f',u) \ge r \big) \le \sum_{s=\lceil r/2 \rceil}^\infty 2 (2d)^s 2^{s+1} e^{-cds/2} \le e^{-c'dr} . \qedhere \]
\end{proof}

\bibliographystyle{amsplain}
\bibliography{biblio}

\end{document}